\def \talpha{\alpha\!\!\!\!\alpha \!\!\!\!\alpha}
\def \tbeta{\beta\!\!\!\!\beta \!\!\!\!\beta}
\newtheorem{theorem}{Theorem}[section]
\newtheorem{corollary}{Corollary}
\newtheorem{lemma}[theorem]{Lemma}
\newtheorem{proposition}{Proposition}
\theoremstyle{definition} 
\newtheorem{example}{Example}
\newtheorem{definition}[theorem]{Definition}
\newcommand{\wg}{\wedge}
\newcommand\xqed[1]{%
  \leavevmode\unskip\penalty9999 \hbox{}\nobreak\hfill
  \quad\hbox{#1}}
\newcommand\demo{\xqed{$\triangle$}}
\numberwithin{equation}{section}
\numberwithin{proposition}{section}
\numberwithin{example}{section}
\numberwithin{corollary}{section}
\begin{document}

\centerline{\Large \bf A Grassmann and graded approach to coboundary Lie bialgebras,}
\vskip 0.25cm
\centerline{\Large \bf their classification, and Yang-Baxter equations}
\vskip 0.20cm
\centerline{J. de Lucas and D. Wysocki}
\vskip 0.20cm
\centerline{Department of Mathematical Methods in Physics, University of Warsaw,}
\centerline{ul. Pasteura 5, 02-093, Warsaw, Poland}

\begin{abstract} We devise geometric, graded algebra, and Grassmann methods to study and to classify finite-dimensional coboundary Lie bialgebras. Several mathematical structures on Lie algebras, e.g. Killing forms, root decompositions, or gradations are extended to their Grassmann algebras. The classification of real three-dimensional coboundary Lie bialgebras is retrieved throughout devised methods. The structure of modified classical Yang-Baxter equations on $\mathfrak{so}(2,2)$ and $\mathfrak{so}(3,2)$ are studied and $r$-matrices are found. Our methods are extensible to other coboundary Lie bialgebras of even higher dimension.
\end{abstract}

{\it Keywords:} algebraic Schouten bracket, $\mathfrak{g}$-invariant metric, gradation, Grassmann algebra, Lie bialgebra, root decomposition, Killing metric.

{\it MSC 2010:} 17B62 (Primary), 17B22, 17B40 (Secondary)

\section{Introduction}
Lie bialgebras \cite{Pressley,Yvette,Le90,Ro91}, defined precisely by Drinfeld \cite{Dr83, Dr87}, emerged in the study of integrable systems \cite{Fa84, Fa87}. A {\it Lie bialgebra} consists of a Lie algebra $\mathfrak{g}$ and a Lie algebra structure on its dual, $\mathfrak{g}^*$, that are compatible in a certain sense. Lie bialgebras occur in quantum gravity, where they lead to non-commutative space-times \cite{BCH00,BGGH17, BHM13,MS11, MS03}, quantum group theory \cite{Pressley}, and other topics \cite{Yvette,Op98}. 

The classification of Lie bialgebras for a fixed $\mathfrak{g}$ is an unfinished task. Lie bialgebras with $\dim \mathfrak{g}=2$ and $\dim \mathfrak{g}=3$ have been classified \cite{Farinati, Gomez}. Specific instances of higher-dimensional Lie bialgebras, e.g. for semi-simple $\mathfrak{g}$, have also been studied \cite{ARH17,BLT16,BLT17,LT17,Op98,Op00,Za97}. So far, employed techniques are mostly algebraic and not much effective to analyse higher-dimensional Lie bialgebras \cite{ARH17, Pressley,Farinati,Gomez}. Hence, new approaches to the study and  determination of such Lie bialgebras are interesting.

{\it Coboundary Lie bialgebras} represent a remarkable type of Lie bialgebras. They are characterised by solutions, the so-called $r$-{\it matrices}, to the {\it modified classical Yang-Baxter equations} (mCYBEs) \cite{Pressley,Gomez}. This work introduces novel geometric, graded algebra, and Grassmann algebra procedures to determine, to help in classifying, and to investigate coboundary Lie bialgebras.  As shown in examples, devised methods can be applied to Lie bialgebras on a relatively high-dimensional, not necessarily semi-simple, Lie algebra $\mathfrak{g}$. Let us survey more carefully the techniques introduced in our work.

First, the hereafter called $\mathfrak{g}$-\textit{invariant multilinear maps} on $\mathfrak{g}$-modules generalise Killing forms on $\mathfrak{g}$ to $\Lambda \mathfrak{g}$, and describe other structures, e.g. types of presymplectic forms \cite{HL09} or Casimir invariants \cite{BLT16}, and other invariants on $\mathfrak{g}$-modules (cf. \cite{RS97}) as particular cases. Note that $\mathfrak{g}$-invariant multilinear maps need not be symmetric.

Second, we endow each Lie algebra $\mathfrak{g}$ with a $G$-graded Lie algebra structure, namely a decomposition $\mathfrak{g}=\bigoplus_{\alpha\in G}\mathfrak{g}^{(\alpha)}$ for a commutative group $(G,\star)$, where $G\subset \mathbb{R}^n$ but the composition law $\star$ need not be the standard addition in $\mathbb{R}^n$, such that $[\mathfrak{g}^{(\alpha)},\mathfrak{g}^{(\beta)}]\subset \mathfrak{g}^{(\alpha\star \beta)}$. We call this structure a {\it $G$-gradation} on the Lie algebra $\mathfrak{g}$. We show that  the space, $\Lambda^k\mathfrak{g}$, of $k$-vectors on $\mathfrak{g}$ has a decomposition $\Lambda^k\mathfrak{g}=\bigoplus_{\alpha\in G}(\Lambda^k\mathfrak{g})^{(\alpha)}$ induced by the $G$-gradation of $\mathfrak{g}$. Then, we prove that the {\it algebraic Schouten bracket} $[\cdot,\cdot]_S:\Lambda \mathfrak{g}\times \Lambda \mathfrak{g}\rightarrow \Lambda\mathfrak{g}$ (see \cite{Yvette,Va94}), is such that $[(\Lambda^m\mathfrak{g})^{(\alpha)},(\Lambda^l\mathfrak{g})^{(\beta)}]_S\subset (\Lambda^{m+l-1}\mathfrak{g})^{(\alpha\star \beta)}$ for every $\alpha,\beta \in G$ and $m,l\in \mathbb{Z}$. Our gradations are applicable to relatively high-dimensional Lie algebras, as illustrated by our study of the Lie algebras $\mathfrak{so}(2,2)$ and $\mathfrak{so}(3,2)$ (see Figures \ref{so22_diags} and  \ref{so32_diags}, Tables \ref{so32_bases_2} and \ref{so32_bases_3}, and Example \ref{ex:decomp_so22}). Our gradations can also be applied to not necessarily semi-simple Lie algebras as witnessed by Table \ref{tabela3w}, where $G$-gradations for all three-dimensional Lie bialgebras and the induced decompositions on their Grassmann algebras are detailed.
 
A type of generalisation of root decompositions for general Lie algebras, the  {\it root gradations}, are suggested and briefly studied so as to study the determination of mCYBEs and CYBEs for three-dimensional Lie algebras. 

Previous structures are applied to studying and classifying coboundary Lie bialgebras up to Lie algebra automorphisms in an algorithmic way. Let us sketch this procedure. Let $(\Lambda \mathfrak{g})^\mathfrak{g}\subset \Lambda \mathfrak{g}$ be the space of elements  commuting with all elements of $\Lambda\mathfrak{g}$ relative to the algebraic Schouten bracket. If we denote $(\Lambda^m\mathfrak{g})^\mathfrak{g}:=\Lambda^m\mathfrak{g}\cap (\Lambda\mathfrak{g})^{\mathfrak{g}}$, then spaces $(\Lambda^2\mathfrak{g})^{\mathfrak{g}}$ and  $(\Lambda^3\mathfrak{g})^\mathfrak{g}$ are analysed through the decomposition in $\Lambda\mathfrak{g}$ induced by a gradation in $\mathfrak{g}$ and other new findings detailed in Section \ref{gInvar} relating the structures of $\mathfrak{g}$, $\Lambda\mathfrak{g}$, and $(\Lambda\mathfrak{g})^\mathfrak{g}$.

A coboundary Lie bialgebra on $\mathfrak{g}$ is determined through an  $r\in \Lambda^2\mathfrak{g}$ satisfying the mCYBE on $\mathfrak{g}$, namely $[r,r]_S\in (\Lambda^3\mathfrak{g})^\mathfrak{g}$. Since $r$-matrices differing in an element of $(\Lambda^2\mathfrak{g})^\mathfrak{g}$ give rise to the same Lie bialgebra (cf. \cite{Farinati}), the space of coboundary Lie bialgebras must be investigated through $\Lambda^2_R\mathfrak{g}:=\Lambda^2\mathfrak{g}/(\Lambda^2\mathfrak{g})^\mathfrak{g}$, whose elements are called {\it reduced multivectors}. We prove how previous $\mathfrak{g}$-invariant multilinear maps, gradations, algebraic Schouten brackets, and other introduced structures can be defined on $\Lambda_R^m\mathfrak{g}$. 

Next, $\mathfrak{g}$-invariant $k$-linear structures are employed to observe the equivalence up to inner automorphisms of the coboundary Lie bialgebras on $\mathfrak{g}$.  This is more general than standard techniques based on Casimir elements \cite{BR86}. It also enables us to describe more geometrically the problem of classification up to automorphisms of Lie bialgebras. The determination of automorphisms of Lie algebras is a complicated problem by itself (cf. \cite{Farinati}), but it will be rather unnecessary in our approach. We generally restricted ourselves to studying the equivalence under inner Lie algebra automorphisms. Then, the determination of very few not inner Lie algebra automorphisms leads to obtaining the classification. 

The classification of real three-dimensional coboundary Lie bialgebras up to Lie algebra automorphisms is approached in an algorithmic way (see \cite{BPS74,Ha79} for related topics). Although this problem has been treated somewhere else in the literature \cite{Farinati,Gomez}, we accomplish such a classification to illustrate our techniques, to fill in some gaps of previous works, and to give a new more geometrical approach.  Our results are written in detail in Table \ref{tabela3w} and sketched in Figure \ref{Sum}, where all equivalent reduced $r$-matrices are coloured in the same way. 

{\small \begin{figure}[h]
$\qquad$
\begin{minipage}{0.2\textwidth}
\begin{center}
\includegraphics[scale=0.20]{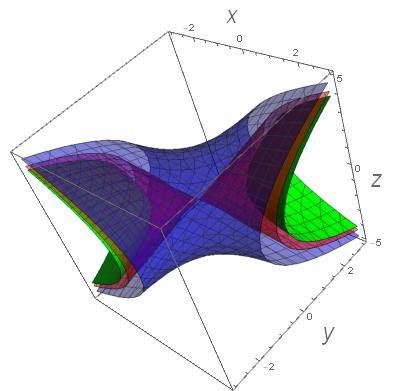}
\captionof*{figure}{$\mathfrak{sl}_2$}
\end{center}
\end{minipage}
$\quad$
\begin{minipage}{0.20\textwidth}
\begin{center}
\includegraphics[scale=0.18]{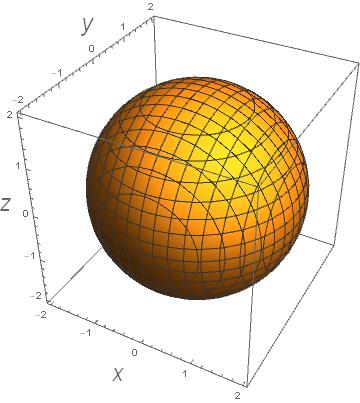}
\captionof*{figure}{$\mathfrak{su}_2$ }
\end{center}
\end{minipage}
$\quad$
\begin{minipage}{0.2\textwidth}
\begin{center}
\includegraphics[scale=0.25]{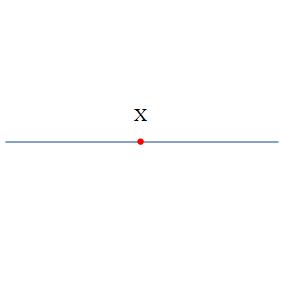}
\captionof*{figure}{$\mathfrak{h}_3$ }
\end{center}
\end{minipage}
$\,\,$
\begin{minipage}{0.2\textwidth}
\begin{center}
\includegraphics[scale=0.15]{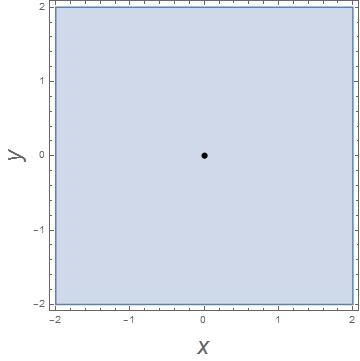}
\captionof*{figure}{$\mathfrak{r}'_{3,0}$}
\end{center}
\end{minipage}
\newline
$\qquad\quad$
\begin{minipage}{0.16\textwidth}
\begin{center}
\includegraphics[scale=0.15]{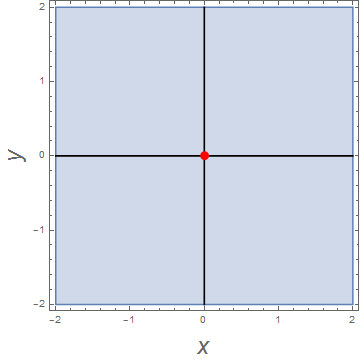}
\captionof*{figure}{$\mathfrak{r}_{3,-1}$}
\end{center}
\end{minipage}
\begin{minipage}{0.18\textwidth}
\begin{center}
\includegraphics[scale=0.15]{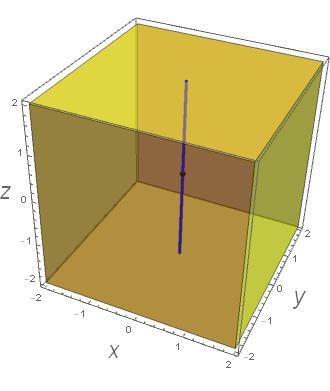}
\captionof*{figure}{$\mathfrak{r}_{3,1}$}
\end{center}
\end{minipage}
\begin{minipage}{0.22\textwidth}
\begin{center}
\includegraphics[scale=0.20]{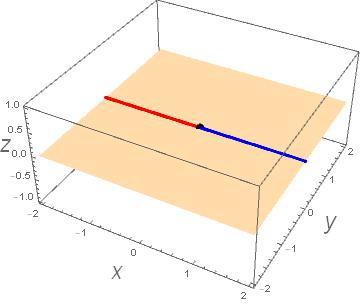}
\captionof*{figure}{$\mathfrak{r}_3$}
\end{center}
\end{minipage}
\begin{minipage}{0.18\textwidth}
\begin{center}
\includegraphics[scale=0.15]{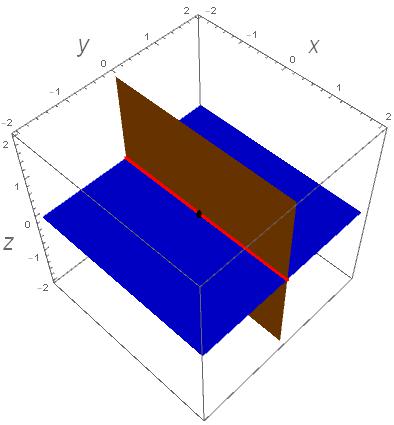}
\captionof*{figure}{$\mathfrak{r}_{3,\lambda}$}
\end{center}
\end{minipage}
\begin{minipage}{0.16\textwidth}
\begin{center}
\includegraphics[scale=0.15]{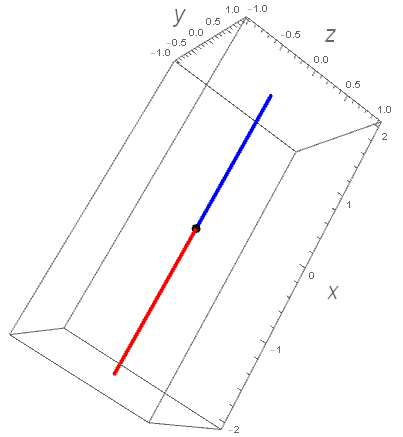}
\captionof*{figure}{$\mathfrak{r'}\!_{3,\lambda\neq 0}$}
\end{center}
\end{minipage}
\caption{The above drawings represent equivalent reduced $r$-matrices in the spaces $\Lambda^2_R\mathfrak{g}=\Lambda^2\mathfrak{g}/(\Lambda^2\mathfrak{g})^\mathfrak{g}$ for all three-dimensional Lie algebras. Families of equivalent reduced $r$-matrices related to a Lie algebra are denoted by  submanifolds of the same color. We assume that $\lambda\in(-1,1)$.}\label{Sum}
\end{figure}}

The structure of the paper goes as follows. Section 2 surveys the main notions on Lie bialgebras and presents the notation to be used. Section 3 introduces $\mathfrak{g}$-modules, proposes new structures related to them, and  gives several examples to be employed. Section 4 defines $\mathfrak{g}$-invariant maps and analyses its applications to Grassmann algebras. In particular, it provides methods to generate such maps on subspaces of Grassmann algebras through ad-invariant maps on Lie algebras. Section 5 studies properties of Killing-type metrics, namely metrics on a Grassmann algebra $\Lambda\mathfrak{g}$ whose definition is a generalization or extension, in the sense given in Section 4, of the standard Killing metric on $\mathfrak{g}$. The existence of $\mathfrak{g}$-invariant bilinear maps in $\mathfrak{g}$-modules is assessed in Section 6. Meanwhile, Section 7 proves that a root decomposition on a Lie algebra induces a new decomposition in its corresponding Grassmann algebra and the algebraic Schouten bracket respects this decomposition. The results of previous sections are employed in Section 8 to investigate the properties of $\mathfrak{g}$-invariant elements in $\Lambda\mathfrak{g}$ and to develop methods for their calculation. The problem of classification of coboundary Lie bialgebras is simplified in Section 9 to a certain quotient of their Grassmann algebras. Section 10 details several results on the existence of automorphisms of Lie algebras. Section 11 applies all previous methods to the classification problem up to Lie algebra automorphisms of three-dimensional coboundary Lie bialgebras. Finally, Section 12 resumes our achievements and sketches future lines of research.

\section{On Lie bialgebras and $r$-matrices}
Let us briefly survey the theory of Lie bialgebras (see \cite{Pressley,Yvette} for details) and establish the notation to be used. We employ a more geometric approach than in standard works, e.g. \cite{Pressley,Yvette}. We hereafter assume that all structures are real. Complex structures can be studied similarly.

Let $\mathcal{V}^m M$ be the space of $m$-vector fields on a manifold $M$. The {\it Schouten-Nijenhuis bracket} \cite{Marle,Va94} on $\mathcal{V}M:=\oplus_{m\in \mathbb{Z}}\mathcal{V}^m M$ is the unique bilinear map $[\cdot, \cdot]_{S}: \mathcal{V}M \times \mathcal{V}M \to \mathcal{V}M$ satisfying that: a) $[f,g]=0$ for arbitrary $f,g\in C^\infty(M)$, b) if $X$ is a vector field on $M$, then $[X,f]_S=Xf=-[f,X]_S$, c) we have 
\vskip -0.6cm
\begin{equation}\label{multi_sn}
[X_1 \wedge \ldots \wedge X_s, Y_1 \wedge \ldots \wedge Y_l]_{S} := \sum_{i=1}^s \sum_{j =1}^l (-1)^{i+j} [X_i, Y_j] \wedge X_1 \wedge \ldots  \wedge\widehat{X}_i \wedge \ldots  X_s \wedge Y_1\wedge \ldots \wedge\widehat{Y}_j \wedge \ldots \wedge Y_l,
\end{equation}
where $X_1,\ldots,X_s,Y_1,\ldots,Y_l$ are vector fields on $M$, the $\widehat{X}_i,\widehat{Y}_j$ are omitted in the exterior products in (\ref{multi_sn}), and $[\cdot, \cdot]$ is the Lie bracket of vector fields. If $X \in \mathcal{V}^k M$, $Y \in \mathcal{V}^l M$, $Z \in \mathcal{V} M$, then \cite{Marle}:
$$\begin{gathered}
[X, Y]_{S} = -(-1)^{(k-1)(l-1)} [Y, X]_{S},\qquad
[X, Y \wedge Z]_{S} = [X, Y]_{S} \wedge Z + (-1)^{(k-1)l} Y \wedge [X, Z]_{S},\\
[X, [Y, Z]_{S}]_{S} = [[X, Y]_{S}, Z]_{S} + (-1)^{(k-1)(l-1)} [Y, [X, Z]_{S}]_{S}.
\end{gathered}$$

Expression (\ref{multi_sn}) yields that the Schouten bracket of  left-invariant elements of $\mathcal{V}G$ for a Lie group $G$ is   left-invariant. Thus,  $\mathcal{V}G$ can be identified with the Grassman algebra $\Lambda\mathfrak{g}$ of the Lie algebra, $\mathfrak{g}$, of $G$. The Schouten-Nijenhuis bracket on $\mathcal{V}G$ can be restricted to left-invariant elements of $\mathcal{V}G$ giving rise to the \textit{algebraic Schouten bracket} on $\Lambda\mathfrak{g}$ \cite{Va94}, which is also denoted by $[\cdot,\cdot]_S$ for simplicity. 

A \textit{Lie bialgebra} is a pair $(\mathfrak{g}, \delta)$, where $\mathfrak{g}$ is a Lie algebra with a Lie bracket $[\cdot, \cdot]_\mathfrak{g}$ and  $\delta: \mathfrak{g} \to \Lambda^2 \mathfrak{g}$ is a linear map, called the \textit{cocommutator}, whose transpose 
$\delta^*: \Lambda^2\mathfrak{g}^* \to \mathfrak{g}^*$ is a Lie bracket on $\mathfrak{g}^*$ and 
\begin{equation}\label{cocycle_cond}
\delta([v_1,v_2]_\mathfrak{g}) = [v_1, \delta(v_2)]_{S} +[\delta(v_1),v_2]_{S}, \qquad \forall v_1,v_2 \in \mathfrak{g}.
\end{equation}

\textit{A Lie bialgebra homomorphism}  is a Lie algebra homomorphism $\phi: \mathfrak{g} \to \mathfrak{h}$ such that $(\phi \otimes \phi) \circ \delta_{\mathfrak{g}} = \delta_{\mathfrak{h}} \circ \phi$ for the cocommutators $\delta_\mathfrak{g}$ and $\delta_\mathfrak{h}$ of $\mathfrak{g}$ and $\mathfrak{h}$, respectively. A \textit{coboundary Lie bialgebra} is a Lie bialgebra $(\mathfrak{g}, \delta_r)$ such that $\delta_r(v) := [v, r]_S$ for an $r \in \Lambda^2 \mathfrak{g}$ and every $v \in \mathfrak{g}$. We call $r$ an \textit{$r$-matrix}.

To characterise those $r \in \Lambda^2 \mathfrak{g}$ giving rise to a cocommutator $\delta_r$, we need the following notions. The identification of $\mathfrak{g}$ with the Lie algebra of left-invariant vector fields on $G$ allows us to understand the tensor algebra $T(\mathfrak{g})$ as the algebra of left-invariant tensor fields on $G$ relative to the tensor product. This gives rise to a Lie algebra representation  $\textrm{ad}: v\in \mathfrak{g} \mapsto {\rm ad}_v\in \mathfrak{gl}\left(T(\mathfrak{g})\right)$, where ${\rm ad}_v(w)=\mathcal{L}_vw$ for every $w\in T(\mathfrak{g})$ and $\mathcal{L}_vw$ is the Lie derivative of the left-invariant tensor field $w$ relative to the left-invariant vector field $v$. This expression is more compact than the algebraic one appearing in standard works (cf. \cite{Pressley}). A $q \in T(\mathfrak{g})$ is called \textit{$\mathfrak{g}$-invariant} if $\mathcal{L}_vq=0$ for all $v \in \mathfrak{g}$. We denote the set of $\mathfrak{g}$-invariant elements of $T(\mathfrak{g})$ by $T(\mathfrak{g})^\mathfrak{g}$. The map ${\rm ad}$ admits a  restriction ${\rm ad}:\mathfrak{g}\rightarrow \mathfrak{gl}(\Lambda^m\mathfrak{g})$. We write  $(\Lambda^m\mathfrak{g})^{\mathfrak{g}}$ for the space of $\mathfrak{g}$-invariant $m$-vectors.

\begin{theorem}\label{thm1} 
The map $\delta_r: v \in \mathfrak{g} \mapsto [v, r]_S \in \Lambda^2\mathfrak{g}$, for $r\in \Lambda^2\mathfrak{g}$, is a cocommutator if and only if 
$[r, r]_{S} \in (\Lambda^3 \mathfrak{g})^{\mathfrak{g}}$.
\end{theorem}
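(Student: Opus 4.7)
My plan is to verify separately the two requirements for $\delta_r$ to define a Lie bialgebra structure: (a) the $1$-cocycle condition (\ref{cocycle_cond}), and (b) that the transpose $\delta_r^\ast:\Lambda^2\mathfrak{g}^\ast\to\mathfrak{g}^\ast$ is a Lie bracket on $\mathfrak{g}^\ast$. The first will turn out to hold for every $r\in \Lambda^2\mathfrak{g}$, and the second will be shown equivalent to $[r,r]_S\in (\Lambda^3\mathfrak{g})^\mathfrak{g}$. Throughout, I use only the three properties of $[\cdot,\cdot]_S$ listed in the excerpt: graded antisymmetry, the graded Leibniz rule, and graded Jacobi.

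For (a): since the algebraic Schouten bracket on $\mathfrak{g}\times \mathfrak{g}$ reduces to the Lie bracket $[\cdot,\cdot]_\mathfrak{g}$, I would write $\delta_r([v_1,v_2]_\mathfrak{g})=[[v_1,v_2]_S,r]_S$ and apply graded Jacobi with $X=v_1$, $Y=v_2$ (both degree $1$), $Z=r$ (degree $2$); this gives $[[v_1,v_2]_S,r]_S=[v_1,[v_2,r]_S]_S-[v_2,[v_1,r]_S]_S$. A single application of graded antisymmetry, valid because $(2-1)(1-1)=0$, converts $-[v_2,\delta_r(v_1)]_S$ into $[\delta_r(v_1),v_2]_S$, and (\ref{cocycle_cond}) drops out. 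This step is independent of the particular $r$.

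For (b): antisymmetry of $\delta_r^\ast$ is automatic since $\delta_r$ lands in $\Lambda^2\mathfrak{g}$, so the only remaining condition is the Jacobi identity for $\delta_r^\ast$. Dualising, this is the vanishing, for every $v\in \mathfrak{g}$, of the antisymmetrization in $\Lambda^3\mathfrak{g}$ of $(\delta_r\otimes \mathrm{id}_\mathfrak{g})\delta_r(v)\in \Lambda^2\mathfrak{g}\otimes \mathfrak{g}$. The key identity I intend to prove is
\begin{equation*}
\mathrm{Alt}_3\bigl((\delta_r\otimes \mathrm{id}_\mathfrak{g})\delta_r(v)\bigr)=\tfrac{1}{2}[v,[r,r]_S]_S,\qquad v\in \mathfrak{g}.
\end{equation*}
Two complementary derivations stand ready. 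Algebraically, graded Jacobi on $X=v$, $Y=Z=r$ together with the graded antisymmetry of $[\cdot,\cdot]_S$ on two degree-$2$ elements (where $(-1)^{(2-1)(2-1)}=-1$ forces $[r,[v,r]_S]_S=[[v,r]_S,r]_S$) yields $[v,[r,r]_S]_S=2[[v,r]_S,r]_S=2[\delta_r(v),r]_S$, which is the right-hand side before alternation. Combinatorially, expanding $(\delta_r\otimes \mathrm{id})\delta_r(v)$ on a decomposable $r=x\wedge y$ via the Leibniz rule and extending by bilinearity reproduces the same expression modulo the antisymmetrization $\mathrm{Alt}_3$. Granting the identity, the co-Jacobi condition becomes $[v,[r,r]_S]_S=0$ for all $v\in \mathfrak{g}$, which by definition of $(\Lambda^3\mathfrak{g})^\mathfrak{g}$ is exactly $[r,r]_S\in (\Lambda^3\mathfrak{g})^\mathfrak{g}$. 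Combined with (a), both directions of the theorem follow.

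The non-routine step is matching signs and combinatorial factors in the identity of stage (b); the cyclic alternation on the tensorial side and the two nested Schouten brackets on the geometric side are both quadratic in the components of $r$ and linear in $v$, so a template computation on $r=x\wedge y$ (extended bilinearly) is the cleanest route, with the Leibniz rule producing exactly the three cyclic terms that assemble into the alternator. Everything else reduces to bookkeeping within the graded Lie algebra $(\Lambda\mathfrak{g},[\cdot,\cdot]_S)$.
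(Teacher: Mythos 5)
The paper does not prove Theorem \ref{thm1}; it is quoted as the classical Drinfeld result (cf.\ \cite{Pressley,Gomez}), so there is no in-paper argument to compare against. Your proposal is the standard proof and its structure is sound. Part (a) is complete and correct: the sign bookkeeping in the graded Jacobi identity with $k=l=1$ and in the antisymmetry step with $(2-1)(1-1)=0$ both check out, and the cocycle condition (\ref{cocycle_cond}) indeed holds for \emph{every} $r\in\Lambda^2\mathfrak{g}$. In part (b) you correctly isolate the only nontrivial condition (co-Jacobi for $\delta_r^*$) and correctly reduce the theorem to the single identity $\mathrm{Alt}_3\bigl((\delta_r\otimes\mathrm{id})\delta_r(v)\bigr)=\tfrac12[v,[r,r]_S]_S$; the Schouten-algebra half of that identity, $[v,[r,r]_S]_S=2[[v,r]_S,r]_S$, is verified with the right signs (the $(-1)^{(2-1)(2-1)}=-1$ in the antisymmetry of two degree-$2$ elements is exactly what makes the two Jacobi terms coincide rather than cancel).

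The one genuine incompleteness is that the tensorial half of the key identity --- that the alternation of $(\delta_r\otimes\mathrm{id}_\mathfrak{g})\delta_r(v)\in\Lambda^2\mathfrak{g}\otimes\mathfrak{g}$ equals $[[v,r]_S,r]_S$ with no extra, non-proportional terms --- is asserted with a plan rather than computed. This is the entire content of the equivalence ``co-Jacobi $\Leftrightarrow$ $\mathrm{ad}$-invariance of $[r,r]_S$'': the exact constant ($\tfrac12$ versus $1$, depending on whether $\mathrm{Alt}_3$ is the cyclic sum or the normalized alternator) is harmless, but the absence of residual terms is precisely what the computation must establish. Note also that since $(\delta_r\otimes\mathrm{id})\delta_r(v)$ is \emph{quadratic} in $r$, checking it on a single decomposable $r=x\wedge y$ and ``extending by bilinearity'' is not enough; you must either polarize (work with $r=\sum_i x_i\wedge y_i$ and track the cross terms $i\neq j$) or verify the identity as an equality of quadratic forms. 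With that computation written out, the proof is complete.
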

The condition $[r, r]_{S} \in (\Lambda^3 \mathfrak{g})^{\mathfrak{g}}$ is called the \textit{modified classical Yang-Baxter equation} (mCYBE). The equation $[r, r]_{S} = 0$ is called the \textit{classical Yang-Baxter equation} (CYBE) and its solutions are called {\it triangular $r$-matrices}. Geometrically, triangular r-matrices amount to left-invariant Poisson bivectors on Lie groups \cite{Va94}. The next proposition establishes when two $r$-matrices induce the same coproduct. 
\begin{proposition}\label{prop:requiv} Two $r$-matrices $r_1,r_2\in \Lambda^2\mathfrak{g}$ satisfy that $\delta_{r_1}=\delta_{r_2}$ if and only if  $r_1-r_2\in(\Lambda^2\mathfrak{g})^{\mathfrak{g}}$.
\end{proposition}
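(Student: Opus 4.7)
The plan is to reduce the statement to a single linear equivalence using bilinearity of the algebraic Schouten bracket, and then to identify the bracket $[v,\cdot]_S$ with the adjoint action on $\Lambda^2\mathfrak{g}$ introduced just before the statement.

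First I would exploit bilinearity of $[\cdot,\cdot]_S$ in its second argument: for every $v\in\mathfrak{g}$,
\begin{equation*}
\delta_{r_1}(v)-\delta_{r_2}(v)=[v,r_1]_S-[v,r_2]_S=[v,r_1-r_2]_S.
\end{equation*}
Hence $\delta_{r_1}=\delta_{r_2}$ holds if and only if $[v,r_1-r_2]_S=0$ for every $v\in\mathfrak{g}$. Setting $r:=r_1-r_2\in\Lambda^2\mathfrak{g}$, the statement reduces to proving that $[v,r]_S=0$ for all $v\in\mathfrak{g}$ is equivalent to $r\in(\Lambda^2\mathfrak{g})^\mathfrak{g}$.

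The second step is to match $[v,\cdot]_S$ on $\Lambda^2\mathfrak{g}$ with the adjoint action $\mathrm{ad}_v=\mathcal{L}_v$ defined in the paragraph preceding Theorem \ref{thm1}. Realising $\mathfrak{g}$ as left-invariant vector fields on $G$ and $\Lambda^2\mathfrak{g}$ as left-invariant bivector fields, the defining properties (a)--(c) of the Schouten-Nijenhuis bracket imply that $[v,\cdot]_S$ is a derivation of the wedge product that coincides with the Lie derivative on functions and on vector fields; thus it agrees with $\mathcal{L}_v$ on all of $\mathcal{V}G$, and in particular on $\Lambda^2\mathfrak{g}$. Consequently $[v,r]_S=\mathcal{L}_v r=\mathrm{ad}_v r$, so $[v,r]_S=0$ for every $v\in\mathfrak{g}$ is exactly the definition of $r\in(\Lambda^2\mathfrak{g})^\mathfrak{g}$ given in the excerpt, closing both implications simultaneously.

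I do not foresee a genuine obstacle: the only point requiring care is the identification $[v,\cdot]_S=\mathcal{L}_v$ on $\Lambda^m\mathfrak{g}$, but this is immediate from formula \eqref{multi_sn} (taking $s=1$) together with the Leibniz property of $[v,\cdot]_S$, and is in any case implicit in the paper's convention that $\mathrm{ad}_v w=\mathcal{L}_v w$ for $w\in T(\mathfrak{g})$. Once this identification is invoked, the proof is a two-line equivalence.
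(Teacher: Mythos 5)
Your proposal is correct and follows essentially the same route as the paper's own (very terse) proof: subtract the two cocommutators, use linearity of $[v,\cdot]_S$ to reduce to $[v,r_1-r_2]_S=0$ for all $v$, and recognise this as the definition of membership in $(\Lambda^2\mathfrak{g})^{\mathfrak{g}}$. The extra care you take in identifying $[v,\cdot]_S$ with $\mathcal{L}_v=\mathrm{ad}_v$ on $\Lambda^2\mathfrak{g}$ is sound and merely makes explicit what the paper leaves implicit.
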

\begin{proof} If $\delta_{r_1}=\delta_{r_2}$, then $[v, r_1]_S=[v, r_2]_S$ for every $v\in \mathfrak{g}$ and $r_1-r_2\in (\Lambda^2\mathfrak{g})^{\mathfrak{g}}$. The converse is immediate.
	\end{proof}

Proposition \ref{prop:requiv} shows that what really matters to the determination of coboundary Lie bialgebras is not $r$-matrices, but their equivalence classes in $\Lambda^2_R\mathfrak{g}=\Lambda^2\mathfrak{g}/(\Lambda^2\mathfrak{g})^\mathfrak{g}$.

\section{Structures on \texorpdfstring{$\mathfrak{g}$}{}-modules}
Let us discuss $\mathfrak{g}$-modules \cite{Pressley} and some new related structures that are necessary to our purposes. Subsequently, $GL(V)$  and $\mathfrak{gl}(V)$ stand for the Lie group of automorphisms and the Lie algebra of endomorphisms on the linear space $V$, respectively.

A {\it $\mathfrak{g}$-module} is a pair $(V,\rho)$, where $V$ is a linear space and $\rho:v\in \mathfrak{g}\mapsto \rho_v\in \mathfrak{gl}(V)$ is a Lie algebra morphism. A $\mathfrak{g}$-module $(V,\rho)$ will be represented just by $V$ and $\rho_v(x)$ will be written simply as $vx$ for any $v\in \mathfrak{g}$ and $x\in V$ if $\rho$ is understood from context. 

\begin{example}\label{ex:Lie_alg_der}
Let ${\rm ad}: v \in \mathfrak{g}\mapsto [v,\cdot]_\mathfrak{g} \in \mathfrak{gl}(\mathfrak{g})$ be the adjoint representation of $\mathfrak{g}$. Then, $(\mathfrak{g},{\rm ad})$ is a $\mathfrak{g}$-module \cite{Fulton}. Since each $[v,\cdot]_\mathfrak{g}$, with $v\in\mathfrak{g}$, is a derivation of the Lie algebra $\mathfrak{g}$ \cite{Fulton}, the map ${\rm ad}$ can be considered as a mapping ${\rm ad}:\mathfrak{g}\rightarrow {\rm Der}(\mathfrak{g})$, where ${\rm Der}(\mathfrak{g})$ is the Lie algebra of derivations on $\mathfrak{g}$. \demo
\end{example}

\begin{example}\label{ex:Lie_alg_aut} The group ${\rm Aut}(\mathfrak{g})$ of Lie algebra automorphisms of $\mathfrak{g}$ is the intersection of the Lie group $GL(\mathfrak{g})$ with $\Phi^{-1}(0)$ for $\Phi: F \in \mathfrak{gl}(\mathfrak{g})\mapsto [F(\cdot),F(\cdot)]_\mathfrak{g}-F[\cdot,\cdot]_\mathfrak{g}\in (\mathfrak{g}\otimes\mathfrak{g})^* \otimes \mathfrak{g}$. Hence, ${\rm Aut}(\mathfrak{g})$ is a closed subgroup of $GL(\mathfrak{g})$ and it becomes a Lie group \cite{DK00,Lee}. Let $\mathfrak{aut}(\mathfrak{g})$ be the Lie algebra of ${\rm Aut}(\mathfrak{g})$. The tangent map at ${\rm id}_\mathfrak{g}\in {\rm Aut}(\mathfrak{g})$ of the injection $\iota:{\rm Aut}(\mathfrak{g})\rightarrow GL(\mathfrak{g})$ induces a Lie algebra morphism $\widehat {\rm ad}:\mathfrak{aut}(\mathfrak{g})\simeq {\rm T}_{{\rm id}_\mathfrak{g}}{\rm Aut}(\mathfrak{g})\rightarrow \mathfrak{gl}(\mathfrak{g})\simeq {\rm T}_{{\rm id}_\mathfrak{g}}GL(\mathfrak{g})$ and $(\mathfrak{g},\widehat{\rm ad})$ becomes an $\mathfrak{aut}(\mathfrak{g}$)-module. 
\demo
\end{example}

\begin{example}\label{GrAl}
Each Grassmann algebra of a Lie algebra $\mathfrak{g}$ admits a $\mathfrak{g}$-module structure $(\Lambda\mathfrak{g},{\rm ad})$, where ${\rm ad}:v\in \mathfrak{g}\mapsto [v,\cdot]_{S}\in \mathfrak{gl}(\Lambda \mathfrak{g})$ is a Lie algebra homomorphism due to the properties of the algebraic Schouten bracket \cite{Va94}.
\demo
\end{example}

Example \ref{GrAl} can be understood as a consequence of the  Proposition \ref{Prop1} to be proved next. To understand this fact,  recall that every $T\in \mathfrak{gl}(V)$ gives rise to the mappings  $\Lambda^mT:\lambda \in \Lambda^mV\mapsto 0\in\Lambda^mV$ for $m\leq 0$, and the maps $\Lambda^mT\in \mathfrak{gl}(\Lambda^mV)$, for $m>0$, of the form
\begin{equation}\label{loc1}
\Lambda^mT:=\stackrel{m\,\, {\rm operators}}{\overbrace{T\otimes {\rm id}\otimes\ldots\otimes {\rm id}}}+\ldots+\stackrel{m\,\, {\rm operators}}{\overbrace{{\rm id}\otimes\ldots\otimes{\rm id}\otimes T}},
\end{equation}
where the tensor products of $m$ operators are restricted to $\Lambda^mV$ and ${\rm Id}$ is the identity on $V$. Moreover,  $\Lambda T:=\bigoplus_{m\in \mathbb{Z}}\Lambda^mT$ belongs to $\mathfrak{gl}(\Lambda V)$. 

\begin{proposition} \label{Prop1}
Let $(V, \rho)$ be a $\mathfrak{g}$-module. For every $m \in \mathbb{N}$, the pair $(\Lambda^m V,\Lambda^m\rho)$, where $\Lambda^m \rho: v \in \mathfrak{g} \mapsto \Lambda^m \rho_v \in \mathfrak{gl}(\Lambda^m V)$ and $(\Lambda V,\Lambda \rho:v\in \mathfrak{g}\mapsto \Lambda\rho_v\in \mathfrak{gl}(\Lambda V))$ are $\mathfrak{g}$-modules.
\end{proposition}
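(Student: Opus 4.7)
The plan is to verify that the map $\Lambda^m\rho:\mathfrak{g}\to\mathfrak{gl}(\Lambda^mV)$ is a Lie algebra homomorphism, since the rest of the $\mathfrak{g}$-module structure is packaged into this requirement. Linearity of $v\mapsto \Lambda^m\rho_v$ is automatic from the defining formula (\ref{loc1}), because every summand in $\Lambda^m\rho_v$ contains exactly one factor $\rho_v$ sitting in a fixed tensor slot, and $v\mapsto \rho_v$ is linear by hypothesis. The content of the statement is thus the bracket-preserving identity
\begin{equation*}
[\Lambda^m\rho_v,\Lambda^m\rho_w]=\Lambda^m\rho_{[v,w]_{\mathfrak{g}}},\qquad \forall v,w\in\mathfrak{g},
\end{equation*}
in $\mathfrak{gl}(\Lambda^mV)$, where the bracket on the left is the commutator of endomorphisms.

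To prove this, I would evaluate both sides on a decomposable element $x_1\wedge\cdots\wedge x_m\in\Lambda^mV$; by linearity this suffices since such elements span $\Lambda^mV$. Applying formula (\ref{loc1}) twice yields
\begin{equation*}
\Lambda^m\rho_v\bigl(\Lambda^m\rho_w(x_1\wedge\cdots\wedge x_m)\bigr)=\sum_{i\neq j}x_1\wedge\cdots\wedge\rho_v(x_i)\wedge\cdots\wedge\rho_w(x_j)\wedge\cdots\wedge x_m+\sum_{i=1}^m x_1\wedge\cdots\wedge\rho_v\rho_w(x_i)\wedge\cdots\wedge x_m,
\end{equation*}
and the analogous expression with $v$ and $w$ swapped. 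The diagonal-slot sums ($i\neq j$ part) are manifestly symmetric in $v,w$, so they cancel in the commutator, leaving
\begin{equation*}
[\Lambda^m\rho_v,\Lambda^m\rho_w](x_1\wedge\cdots\wedge x_m)=\sum_{i=1}^m x_1\wedge\cdots\wedge(\rho_v\rho_w-\rho_w\rho_v)(x_i)\wedge\cdots\wedge x_m.
\end{equation*}
Because $\rho$ is itself a Lie algebra morphism, $\rho_v\rho_w-\rho_w\rho_v=\rho_{[v,w]_{\mathfrak{g}}}$, and the right-hand side becomes $\Lambda^m\rho_{[v,w]_{\mathfrak{g}}}(x_1\wedge\cdots\wedge x_m)$, as required. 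The cases $m\leq 0$ are trivial, since $\Lambda^m\rho_v=0$ there.

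For the global statement on $\Lambda V$, I would observe that $\Lambda\rho_v=\bigoplus_{m\in\mathbb{Z}}\Lambda^m\rho_v$ acts on the graded decomposition $\Lambda V=\bigoplus_{m\in\mathbb{Z}}\Lambda^mV$ slot by slot, so both linearity in $v$ and the bracket identity descend componentwise from the $\Lambda^m\rho$ case just proved. There is really no serious obstacle here: the computation is a standard derivation-of-exterior-algebra argument, and the only place one must be careful is bookkeeping the cancellation of the off-diagonal $i\neq j$ terms, which is the one step where the symmetry in $(v,w)$ has to be used explicitly.
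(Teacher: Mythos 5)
Your proof is correct and follows essentially the same route as the paper: both arguments reduce the claim to the identity $[\Lambda^m\rho_v,\Lambda^m\rho_w]=\Lambda^m\rho_{[v,w]_{\mathfrak{g}}}$, observe that operators acting in distinct tensor slots commute (which is exactly your cancellation of the $i\neq j$ cross terms), and then invoke that $\rho$ is a Lie algebra morphism to identify the surviving diagonal terms. The only cosmetic difference is that the paper phrases this at the level of the operators $T_i={\rm id}\otimes\cdots\otimes T\otimes\cdots\otimes{\rm id}$ rather than by evaluating on decomposable elements.
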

\begin{proof}
Proving that $(\Lambda^mV,\Lambda^m\rho)$ is a $\mathfrak{g}$-module reduces to showing that $\Lambda^m\rho$ is a Lie algebra homomorphism. Let us show that $\Lambda^m \rho_{[v_1,v_2]_{\mathfrak{g}}} = [\Lambda^m \rho_{v_1}, \Lambda^m \rho_{v_2}]_{\mathfrak{gl}(\Lambda^mV)}$ for every $v_1,v_2\in\mathfrak{g}$. As $(V, \rho)$ is a $\mathfrak{g}$-module and $\rho$ is therefore a Lie algebra morphism, it follows that
\begin{equation}\label{loc1}
\begin{split}
\Lambda^m \rho_{[v_1,v_2]_{\mathfrak{g}}} &= \rho_{[v_1, v_2]_{\mathfrak{g}}}\otimes {\rm id}\otimes\ldots\otimes {\rm id}+\ldots+ {\rm id}\otimes\ldots\otimes{\rm id}\otimes \rho_{[v_1,v_2]_{\mathfrak{g}}} \\
&= [\rho_{v_1}, \rho_{v_2}]_{\mathfrak{gl}(V)}\otimes {\rm id}\otimes\ldots\otimes {\rm id}+\ldots+{\rm id}\otimes\ldots\otimes{\rm id}\otimes [\rho_{v_1}, \rho_{v_2}]_{\mathfrak{gl}(V)}.
\end{split}
\end{equation}
If  $T_i:={\rm id} \otimes \ldots \otimes \stackrel{i-\textrm{th}}{\overbrace{T}} \otimes \ldots \otimes \textrm{id}$, for $T\in \mathfrak{gl}(V)$, then $T_i\circ S_j=S_j\circ T_i$ for $i \neq j$ and every $T,S\in \mathfrak{gl}(V)$. Thus,
\begin{equation}\label{loc2}
[\Lambda^m \rho_{v_1}, \Lambda^m \rho_{v_2}]_{\mathfrak{gl}(\Lambda^mV)}= [\rho_{v_1}, \rho_{v_2}]_{\mathfrak{gl}(V)}\otimes {\rm id}\otimes\ldots\otimes {\rm id}+\ldots+{\rm id}\otimes\ldots\otimes{\rm id}\otimes [\rho_{v_1}, \rho_{v_2}]_{\mathfrak{gl}(V)}
\end{equation}
for arbitrary $v_1, v_2 \in \mathfrak{g}$. Comparing (\ref{loc1}) and (\ref{loc2}), we get that $(\Lambda^mV,\Lambda^m\rho)$ is a $\mathfrak{g}$-module. It is immediate that $(\Lambda V,\Lambda\rho)$ is a $\mathfrak{g}$-module also.
\end{proof}

The following lemma is immediate.

\noindent
\begin{minipage}{0.65\textwidth}
\begin{lemma}\label{Lio} Let $(V,\rho)$ be a $\mathfrak{g}$-module and let $G$ be a connected Lie group with Lie algebra $\mathfrak{g}$. If $\Phi:G\rightarrow GL(V)$ is a Lie group morphism making commutative the diagram (\ref{diag}),  where $\exp_G$ and $\exp$ are exponential maps on $\mathfrak{g}$ and $\mathfrak{gl}(V)$ respectively, then $\Phi(G)$ is an immersed Lie subgroup of $GL(V)$ generated by the elements $\exp(\rho(\mathfrak{g}))$.
\end{lemma}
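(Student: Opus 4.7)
The plan is to assemble two standard Lie-theoretic facts; no new calculation is required, which matches the authors' designation of the lemma as immediate. First, I would invoke the general result that the image of a smooth homomorphism of Lie groups is an immersed Lie subgroup of the target. Concretely, $\ker\Phi$ is a closed normal Lie subgroup of $G$, so $G/\ker\Phi$ inherits a unique Lie-group structure for which the projection $G\to G/\ker\Phi$ is a smooth submersion, and $\Phi$ factors as an injective immersion $G/\ker\Phi \hookrightarrow GL(V)$. The image of this immersion is precisely $\Phi(G)$, establishing the first half of the statement. Suitable references are already cited in the paper (e.g. Duistermaat--Kolk and Lee).

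Second, I would identify the generating set. Since $\exp_G$ is a local diffeomorphism at $0\in\mathfrak{g}$, the set $\exp_G(\mathfrak{g})$ contains an open neighbourhood $U$ of the identity $e\in G$. It is a standard consequence of connectedness that any such $U$ generates $G$ algebraically: the subgroup $\langle U\rangle$ is open in $G$, every coset $g\langle U\rangle$ is likewise open, and connectedness of $G$ forces $\langle U\rangle = G$. Hence $G$ is generated by $\exp_G(\mathfrak{g})$. Applying the homomorphism $\Phi$ and using the commutativity of the diagram (\ref{diag}), namely $\Phi\circ\exp_G=\exp\circ\rho$, one obtains
$$\Phi(G)=\Phi\bigl(\langle \exp_G(\mathfrak{g})\rangle\bigr)=\langle \Phi(\exp_G(\mathfrak{g}))\rangle=\langle \exp(\rho(\mathfrak{g}))\rangle,$$
which is the second assertion.

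The only point that requires a brief comment is the precise meaning of \emph{generated by}: the lemma should be read in the algebraic sense of finite products, which is compatible with the immersed Lie subgroup structure because $\exp(\rho(\mathfrak{g}))$ lies inside $\Phi(G)$ and, by the local diffeomorphism property of $\exp$ on $\rho(\mathfrak{g})$, also contains a neighbourhood of the identity in that immersed subgroup. I expect no substantive obstacle, since each ingredient is textbook material and the commutativity of the exponential diagram is exactly the hypothesis that links the two halves of the argument.
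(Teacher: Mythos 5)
Your proof is correct. The paper gives no argument at all for this lemma (it is declared ``immediate''), and the two standard facts you assemble --- that the image of a Lie group homomorphism is an immersed Lie subgroup, and that a connected Lie group is generated by $\exp_G(\mathfrak{g})$, transported through the commutative diagram --- are exactly the justification the authors implicitly rely on.
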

\end{minipage}
\begin{minipage}{0.35\textwidth}
\begin{center}
\vskip-0.5cm
\begin{equation}\label{diag}
\xymatrix{\mathfrak{g}\ar[rr]^{\rho\quad\quad}\ar[d]^{\exp_G}&&\mathfrak{gl}(V)\ar[d]^{\exp}\\G\ar[rr]^{\Phi}&&GL(V)}
\end{equation}
\end{center}
\end{minipage}

A Lie algebra homomorphism $\rho:\mathfrak{g}\rightarrow \mathfrak{gl}(V)$ gives rise to a Lie group morphism $\Phi:G\rightarrow GL(V)$, where $G$ is  connected and simply connected, so that (\ref{diag}) is commutative \cite{DK00}.  Since $\Phi(G)$ is generated by the elements $\exp(\rho(\mathfrak{g}))$, the group $\Phi(G)$ is the smallest group containing $\exp(\rho(\mathfrak{g}))$.  If (\ref{diag}) holds for other connected Lie group $G'$ with Lie algebra $\mathfrak{g}$, then $\Phi(G')$ is still generated by elements $\exp(\rho(\mathfrak{g}))$ and $\Phi(G')=\Phi(G)$. Hence, $\Phi(G')$ depends indeed only on $\rho$. Moreover, $\Phi(G')$ may not be an embedded submanifold of $GL(V)$: it is only an immersed Lie subgroup. These facts justify the following definition.

\begin{definition}
Given a $\mathfrak{g}$-module $(V,\rho)$, the {\it Lie group} of $(V, \rho)$ is the Lie subgroup $GL(\rho)$ of $GL(V)$ generated by $\exp (\rho(\mathfrak{g}))$. 
\end{definition}

\begin{proposition}\label{Inn} Let ${\rm Ad}:g\in G\mapsto {\rm Ad}_g\in GL(\mathfrak{g})$ be the adjoint action of a connected Lie group $G$ on its Lie algebra $\mathfrak{g}$. The Lie group of the $\mathfrak{g}$-module $(\mathfrak{g},{\rm ad})$ is  equal to ${\rm Ad}(G)$.
\end{proposition}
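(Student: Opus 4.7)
The plan is to invoke Lemma \ref{Lio} with $(V,\rho)=(\mathfrak{g},{\rm ad})$ and with $\Phi:={\rm Ad}:G\to GL(\mathfrak{g})$. The whole argument then reduces to verifying that the diagram (\ref{diag}) commutes in this specific case, i.e.\ that
\[
{\rm Ad}_{\exp_G(v)}=\exp({\rm ad}_v), \qquad \forall v\in\mathfrak{g},
\]
and then reading off the conclusion from the lemma.

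First, I would recall that ${\rm Ad}:G\to GL(\mathfrak{g})$ is a Lie group morphism (this is standard: ${\rm Ad}_g$ is the tangent map at the identity of conjugation by $g$, and $g\mapsto {\rm Ad}_g$ is smooth and multiplicative). Its differential at $e\in G$ is precisely the adjoint representation ${\rm ad}:\mathfrak{g}\to\mathfrak{gl}(\mathfrak{g})$ of the Lie algebra, which is the $\rho$ defining the $\mathfrak{g}$-module structure in Example \ref{ex:Lie_alg_der}. The naturality of the exponential map with respect to Lie group morphisms (applied to ${\rm Ad}$) gives the identity displayed above, which is exactly the commutativity of (\ref{diag}) with $\Phi={\rm Ad}$.

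Once this is checked, Lemma \ref{Lio} applies directly: since $G$ is connected by hypothesis, $\Phi(G)={\rm Ad}(G)$ is an immersed Lie subgroup of $GL(\mathfrak{g})$ generated by $\exp({\rm ad}(\mathfrak{g}))$. But the Lie group $GL({\rm ad})$ of the $\mathfrak{g}$-module $(\mathfrak{g},{\rm ad})$ is, by definition, the subgroup of $GL(\mathfrak{g})$ generated by $\exp({\rm ad}(\mathfrak{g}))$. Comparing the two, we conclude ${\rm Ad}(G)=GL({\rm ad})$.

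There is essentially no obstacle here beyond the identity ${\rm Ad}\circ\exp_G=\exp\circ{\rm ad}$, which is a textbook fact. The only thing worth paying attention to is that the statement does not require $G$ to be simply connected: connectedness is enough because ${\rm Ad}(G)$ is already constructed from $G$ itself (so the ambiguity between different connected Lie groups sharing $\mathfrak{g}$, discussed right after Lemma \ref{Lio}, does not cause trouble—both ${\rm Ad}(G)$ and $GL({\rm ad})$ are uniquely determined and equal to the group generated by $\exp({\rm ad}(\mathfrak{g}))$).
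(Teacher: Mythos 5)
Your proof is correct and follows essentially the same route as the paper: both arguments observe that ${\rm ad}$ is the tangent map of ${\rm Ad}$ at the identity, deduce the commutativity of the diagram ${\rm Ad}\circ\exp_G=\exp\circ{\rm ad}$, and then read off $GL({\rm ad})={\rm Ad}(G)$ from Lemma \ref{Lio}. Your version merely spells out the naturality of the exponential and the role of connectedness a bit more explicitly than the paper does.
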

\noindent
\begin{minipage}{0.65\textwidth}
\begin{proof} Every connected Lie group $G$ with a Lie algebra $\mathfrak{g}$ is such that ${\rm ad}$ is the tangent map to ${\rm Ad}$ at the neutral element of $G$. Hence, one obtains the commutative diagram (\ref{diag2}). In view of Lemma \ref{Lio} and previous comments, $GL({\rm ad})={\rm Ad}(G)$. 
\end{proof}
\end{minipage}
\begin{minipage}{0.35\textwidth}
\begin{center}
 \vskip-0.6cm
 \begin{equation}\label{diag2}
 \xymatrix{\mathfrak{g}\ar[rr]^{{\rm ad}\quad\quad}\ar[d]^{\exp_G}&&\mathfrak{gl}(\mathfrak{g})\ar[d]^{\exp}\\{G}\ar[rr]^{{\rm Ad}}&&GL(\mathfrak{g})}
 \end{equation}
 \end{center}
 \end{minipage}

Since $GL({\rm ad})$ depends only on ${\rm ad}(\mathfrak{g})$, it makes sense to denote $GL({\rm ad})$ by Inn$(\mathfrak{g})$. Moreover, the space of inner automorphisms of $\mathfrak{g}$ is also given by ${\rm Ad}(G)$, which is equal to $ {GL}({\rm ad})$.

\begin{proposition}\label{Out} The Lie group of the $\mathfrak{aut}(\mathfrak{g})$-module $(\mathfrak{aut}(\mathfrak{g}),\widehat {\rm ad})$ is given by the connected component, ${\rm Aut}_c(\mathfrak{g})$, of the neutral element of ${\rm Aut}(\mathfrak{g})$.
\end{proposition}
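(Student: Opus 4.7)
The plan is to mimic the proof of Proposition \ref{Inn} by fitting ${\rm Aut}_c(\mathfrak{g})$ and $\widehat{\rm ad}$ into the commutative exponential diagram of Lemma \ref{Lio}. First I would recall that, by Example \ref{ex:Lie_alg_aut}, the group ${\rm Aut}(\mathfrak{g})$ is a closed Lie subgroup of $GL(\mathfrak{g})$, so its identity component ${\rm Aut}_c(\mathfrak{g})$ is a connected Lie group with Lie algebra $\mathfrak{aut}(\mathfrak{g})={\rm T}_{{\rm id}_\mathfrak{g}}{\rm Aut}(\mathfrak{g})$. Let $\iota:{\rm Aut}_c(\mathfrak{g})\hookrightarrow GL(\mathfrak{g})$ denote the inclusion; by definition $\widehat{\rm ad}$ is the tangent map of $\iota$ at ${\rm id}_\mathfrak{g}$, so $\iota$ is a Lie group morphism covering $\widehat{\rm ad}$.

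Second, I would verify that, with $G:={\rm Aut}_c(\mathfrak{g})$ and $\Phi:=\iota$, the diagram (\ref{diag}) is commutative, i.e.\ $\exp\circ \widehat{\rm ad} = \iota\circ \exp_{{\rm Aut}_c(\mathfrak{g})}$. This is the standard naturality of the exponential map with respect to Lie subgroup inclusions: for any $X\in \mathfrak{aut}(\mathfrak{g})$, the curve $t\mapsto \exp_{{\rm Aut}_c(\mathfrak{g})}(tX)$ is a one‑parameter subgroup of ${\rm Aut}_c(\mathfrak{g})\subset GL(\mathfrak{g})$ whose velocity at $t=0$ is $\widehat{\rm ad}(X)$, and such a one‑parameter subgroup in $GL(\mathfrak{g})$ is unique and equals $t\mapsto\exp(t\widehat{\rm ad}(X))$.

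Third, I would invoke Lemma \ref{Lio} applied to the $\mathfrak{aut}(\mathfrak{g})$-module $(\mathfrak{g},\widehat{\rm ad})$ (the statement of Proposition \ref{Out} carries a harmless abuse of notation matching the convention of Example \ref{ex:Lie_alg_aut}). The lemma yields that $\iota({\rm Aut}_c(\mathfrak{g}))$ is an immersed Lie subgroup of $GL(\mathfrak{g})$ generated by $\exp(\widehat{\rm ad}(\mathfrak{aut}(\mathfrak{g})))$. But $\iota$ is an inclusion, so $\iota({\rm Aut}_c(\mathfrak{g}))={\rm Aut}_c(\mathfrak{g})$, and by the very definition of the Lie group of a $\mathfrak{g}$-module, the subgroup of $GL(\mathfrak{g})$ generated by $\exp(\widehat{\rm ad}(\mathfrak{aut}(\mathfrak{g})))$ is exactly $GL(\widehat{\rm ad})$. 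Comparing both descriptions gives $GL(\widehat{\rm ad}) = {\rm Aut}_c(\mathfrak{g})$.

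I do not anticipate a serious obstacle; the whole argument is a direct analogue of Proposition \ref{Inn}, and the only delicate points are (a) checking that ${\rm Aut}_c(\mathfrak{g})$ truly has Lie algebra $\mathfrak{aut}(\mathfrak{g})$ (which is clear because connected components of a Lie group share the same Lie algebra), and (b) being careful with the immersed vs.\ embedded distinction, which is already addressed in the discussion preceding the definition of $GL(\rho)$.
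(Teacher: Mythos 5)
Your proof is correct and follows essentially the same route as the paper: fit the inclusion $\iota:{\rm Aut}_c(\mathfrak{g})\hookrightarrow GL(\mathfrak{g})$ into the commutative exponential diagram of Lemma \ref{Lio} and conclude that $GL(\widehat{\rm ad})=\iota({\rm Aut}_c(\mathfrak{g}))={\rm Aut}_c(\mathfrak{g})$. The only difference is that the paper additionally routes through the simply connected group $\widetilde{\rm Aut}(\mathfrak{g})$ integrating $\mathfrak{aut}(\mathfrak{g})$, a detour your argument rightly shows is unnecessary since Lemma \ref{Lio} only requires a connected group.
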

\begin{proof} 
The inclusion $\iota:{\rm Aut}_c(\mathfrak{g})\hookrightarrow GL(\mathfrak{g})$ has a tangent map $\widehat{\rm ad}:\mathfrak{aut}(\mathfrak{g})\rightarrow \mathfrak{gl}(\mathfrak{g})$ at ${\rm id}_\mathfrak{g}\in {\rm Aut}_c(\mathfrak{g})$. 
\noindent
\begin{minipage}{0.65\textwidth}
This leads to the commutativity between the right and central columns  of the diagram aside. Let $\widetilde{{\rm Aut}}(\mathfrak{g})$ be the simply connected Lie group associated with $\mathfrak{aut}(\mathfrak{g})$. The commutativity of the left and central columns of the diagram aside comes from the properties of ${\rm Ad}$ for $\widetilde{\rm Aut}(\mathfrak{g})$ and $\widehat{{\rm ad}}$. From the commutativity of the\! diagram and using Lemma \ref{Lio},  it follows that $GL(\widehat{\rm ad})={\rm Ad}(\widetilde{{\rm Aut}}(\mathfrak{g}))={\rm Aut}_c(\mathfrak{g})$.
\end{minipage}
\begin{minipage}{0.35\textwidth}
\vskip-0.4cm
\begin{center}
\begin{equation*}\label{prop:diag}
\xymatrix{\mathfrak{aut}(\mathfrak{g})\ar[r]^(0.5){\widehat{\rm ad}}\ar[d]^{\exp_{\widetilde{{\rm Aut}}(\mathfrak{g})}}&\mathfrak{gl}(\mathfrak{g})\ar[d]^{\exp}&\mathfrak{aut}(\mathfrak{g})\ar[l]_{\widehat{{\rm ad}}}\ar[d]_{\exp_{{{\rm Aut}}_c(\mathfrak{g})}}\\\widetilde{{\rm Aut}}(\mathfrak{g})\ar[r]^{{\rm Ad}}&GL(\mathfrak{g})&{\rm Aut}_c(\mathfrak{g})\ar@{->}[l]_{\iota}}
\end{equation*}
\end{center}
\end{minipage}
\noindent \vskip -0.4cm
\end{proof}

Propositions \ref{Inn} and \ref{Out} show that the Lie groups of the $\mathfrak{g}$-modules described in them are ${\rm Inn}(\mathfrak{g})$ and ${\rm Aut}_c(\mathfrak{g})$. These Lie groups play a relevant role in the classification of Lie bialgebras up to Lie algebra automorphisms and previous  constructions will be employed to study this problem.
As a $\mathfrak{g}$-module, $(V,\rho)$ induces new ones $(\Lambda^m V,\Lambda^m\rho)$, the $GL(\rho)$ is related to $GL(\Lambda^m\rho)$ as explained next.
 
\begin{proposition}\label{Prop:invk}If $(V,\rho)$ is a $\mathfrak{g}$-module, $(\Lambda^m V,\Lambda^m\rho)$ satisfies that $GL(\Lambda^m\rho)=\left\{\Lambda^m T\,| \,\,T\in GL(\rho)\right\}.$

\end{proposition}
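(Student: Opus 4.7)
The plan is to exploit the fact that the assignment $T \mapsto \Lambda^m T$ extends to a group homomorphism between $GL(V)$ and $GL(\Lambda^m V)$, and to relate this to the Lie algebra map already defined in the paper via the exponential. Concretely, for $g \in GL(V)$ one may consider $\Lambda^m g := g \otimes \cdots \otimes g$ ($m$ factors) restricted to $\Lambda^m V$, which is an element of $GL(\Lambda^m V)$. The multiplicativity $(g_1 g_2)^{\otimes m} = g_1^{\otimes m} \circ g_2^{\otimes m}$ shows that $\Lambda^m: GL(V) \to GL(\Lambda^m V)$ is a Lie group homomorphism. I would first verify the intertwining identity
\[
\exp(\Lambda^m T) = \Lambda^m(\exp T), \qquad T \in \mathfrak{gl}(V),
\]
by differentiating the curve $t \mapsto \Lambda^m(e^{tT})$ at $t=0$; applying the Leibniz rule recovers precisely the derivation formula used to define $\Lambda^m T$ on $\Lambda^m V$. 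Equivalently, the operator $\Lambda^m$ on $\mathfrak{gl}(V)$ is the differential at the identity of the group homomorphism $\Lambda^m$ on $GL(V)$.

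Granted this, the proof reduces to a direct generation argument. By definition, $GL(\rho)$ is the subgroup of $GL(V)$ generated by $\exp(\rho(\mathfrak{g})) = \{\exp \rho_v : v \in \mathfrak{g}\}$, and similarly $GL(\Lambda^m\rho)$ is the subgroup of $GL(\Lambda^m V)$ generated by $\exp(\Lambda^m \rho(\mathfrak{g})) = \{\exp(\Lambda^m \rho_v) : v \in \mathfrak{g}\}$. The intertwining identity yields $\exp(\Lambda^m \rho_v) = \Lambda^m(\exp \rho_v)$, so the generators of $GL(\Lambda^m \rho)$ are precisely the images under the group homomorphism $\Lambda^m$ of the generators of $GL(\rho)$. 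Since the subgroup generated by the image of a set under a group homomorphism coincides with the image of the subgroup generated by that set, I obtain
\[
GL(\Lambda^m \rho) = \Lambda^m\bigl(GL(\rho)\bigr) = \{\Lambda^m T : T \in GL(\rho)\},
\]
which is the desired claim.

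I do not anticipate any conceptual obstacle; the main care is notational, since $\Lambda^m$ is overloaded in the paper to denote both the derivation on $\mathfrak{gl}(V)$ and the tensor-power morphism on $GL(V)$. The substance of the proposition lies entirely in the interplay between these two uses, mediated by the exponential map, together with the observation that taking images under a group homomorphism commutes with the operation of generating subgroups.
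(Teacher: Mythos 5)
Your proposal is correct and follows essentially the same route as the paper: both establish the key identity $\exp(\Lambda^m\rho_v)=\exp(\rho_v)\otimes\cdots\otimes\exp(\rho_v)$ (the paper by commuting the summands $T_i$ before exponentiating, you by recognising $\Lambda^m$ on $\mathfrak{gl}(V)$ as the differential of the tensor-power group homomorphism) and then conclude by the same generation argument, namely that the subgroup generated by the images of the generators is the image of the generated subgroup. No gap; the difference is only in how the intertwining identity is phrased.
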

\begin{proof}By definition, $GL(\Lambda^m\rho)$ is generated by the composition of elements of $\exp(\Lambda^m\rho(\mathfrak{g}))$, namely
$$
\exp(\Lambda^m \rho_v) = \exp(\stackrel{m\,\,{\rm elements}}{\overbrace{\rho_v\otimes {\rm id}\otimes\ldots\otimes {\rm id}}}+\ldots+ \stackrel{m\,\, {\rm elements}}{\overbrace{{\rm id}\otimes\ldots\otimes{\rm id}\otimes \rho_v}}),\qquad \forall v\in \mathfrak{g}.
$$
As  operators $T_i$ commute for $i\neq j$ and any $T\in \mathfrak{gl}(\mathfrak{g})$, one has that
$\exp(\Lambda^m \rho_v) = \exp(\rho_v) \otimes \ldots \otimes \exp(\rho_v).$ Hence, $GL(\Lambda^m \rho)$ is generated by the composition of  operators $T \otimes \ldots \otimes T$ ($m$-times), where $T$ is a composition of operators $\exp(\rho_v)$ with $v\in \mathfrak{g}$. Since the $\exp(\rho_v)$ generate $GL(\rho)$, then $T$ is any element of $GL(\rho)$, which finishes the proof. 
\end{proof}

The following proposition will be employed to determine equivalent solutions to mCYBEs. 

\begin{proposition}\label{proporb}
	The dimension of the orbit $\mathcal{O}_w$ of the ${\rm Inn}(\mathfrak{g})$-action on $\Lambda^m\mathfrak{g}$ through $w\in \Lambda^m\mathfrak{g}$ is 
	$
	\dim  {\rm Im}\,\Theta^m_w,
	$
	with $\Theta^m_w:v\in \mathfrak{g}\mapsto [v, w]_S \in \Lambda^m\mathfrak{g}$.
\end{proposition}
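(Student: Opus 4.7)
The plan is to exhibit the orbit as the image of a smooth map from the Lie group, compute the rank of its differential at the identity via the exponential, and identify that differential with $\Theta^m_w$.

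\textbf{Step 1: Describe the action.} By Proposition~\ref{Inn}, ${\rm Inn}(\mathfrak{g})={\rm Ad}(G)$ for a connected Lie group $G$ with Lie algebra $\mathfrak{g}$, and by Proposition~\ref{Prop:invk} the induced action on $\Lambda^m\mathfrak{g}$ is $T\cdot w=\Lambda^m(T)\,w$ for $T\in{\rm Inn}(\mathfrak{g})$. Fix $w\in\Lambda^m\mathfrak{g}$ and consider the smooth map
\[
\Psi_w:{\rm Inn}(\mathfrak{g})\longrightarrow\Lambda^m\mathfrak{g},\qquad T\longmapsto\Lambda^m(T)(w),
\]
whose image is $\mathcal{O}_w$.

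\textbf{Step 2: Reduce to the differential at the identity.} Because the ${\rm Inn}(\mathfrak{g})$-action is smooth and transitive on $\mathcal{O}_w$, the orbit is an immersed submanifold and $\dim\mathcal{O}_w$ equals the rank of ${\rm d}\Psi_w$ at the identity. I therefore only need to compute the image of the linear map ${\rm d}\Psi_w|_{\rm id}:\mathfrak{g}\to\Lambda^m\mathfrak{g}$, where I use the surjection $v\in\mathfrak{g}\mapsto{\rm ad}_v\in{\rm T}_{\rm id}{\rm Inn}(\mathfrak{g})$ provided by Proposition~\ref{Inn} (so we parametrise tangent vectors at the identity through $\mathfrak{g}$, not through the Lie algebra ${\rm ad}(\mathfrak{g})$ itself, which only amounts to quotienting by the centre and does not affect the image).

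\textbf{Step 3: Differentiate using the exponential.} For $v\in\mathfrak{g}$, the curve $t\mapsto{\rm Ad}_{\exp_G(tv)}=\exp(t\,{\rm ad}_v)$ in ${\rm Inn}(\mathfrak{g})$ gives, via the identity $\Lambda^m(\exp(tv\cdot{\rm ad}))=\exp(t\,\Lambda^m{\rm ad}_v)$ established in the proof of Proposition~\ref{Prop:invk},
\[
\Psi_w\bigl(\exp(t\,{\rm ad}_v)\bigr)=\Lambda^m\bigl(\exp(t\,{\rm ad}_v)\bigr)(w)=\exp\bigl(t\,\Lambda^m{\rm ad}_v\bigr)(w).
\]
Differentiating at $t=0$ yields ${\rm d}\Psi_w|_{\rm id}(v)=\Lambda^m{\rm ad}_v(w)$.

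\textbf{Step 4: Identify with $\Theta^m_w$ and conclude.} By the derivation property of the algebraic Schouten bracket (Example~\ref{GrAl}) one has, for any $v_1\wedge\cdots\wedge v_m\in\Lambda^m\mathfrak{g}$,
\[
[v,v_1\wedge\cdots\wedge v_m]_S=\sum_{i=1}^{m}v_1\wedge\cdots\wedge[v,v_i]_{\mathfrak{g}}\wedge\cdots\wedge v_m=\Lambda^m{\rm ad}_v(v_1\wedge\cdots\wedge v_m),
\]
so by linearity $\Lambda^m{\rm ad}_v(w)=[v,w]_S=\Theta^m_w(v)$ for every $w\in\Lambda^m\mathfrak{g}$. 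Hence ${\rm Im}\,({\rm d}\Psi_w|_{\rm id})={\rm Im}\,\Theta^m_w$, and by Step~2 we obtain $\dim\mathcal{O}_w=\dim{\rm Im}\,\Theta^m_w$.

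The only delicate point is the compatibility of the two a priori different $\mathfrak{g}$-module structures on $\Lambda^m\mathfrak{g}$ — the one coming from Proposition~\ref{Prop1} via $\Lambda^m{\rm ad}$, used to differentiate the group action, and the one of Example~\ref{GrAl} via $[\cdot,\cdot]_S$, used to define $\Theta^m_w$. This is exactly Step~4, and it is settled by the explicit derivation formula for the Schouten bracket against a single element of $\mathfrak{g}$.
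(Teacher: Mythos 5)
Your proof is correct and follows essentially the same route as the paper's: both differentiate the orbit map $t\mapsto \Lambda^m{\rm Ad}_{\exp_G(tv)}(w)$ at $t=0$ to obtain $[v,w]_S$ and then invoke the standard orbit-dimension formula — the paper phrases this as $\dim\mathcal{O}_w=\dim\mathfrak{g}-\dim\mathfrak{g}_w$ with $\mathfrak{g}_w=\ker\Theta^m_w$, which is your Step 2 combined with rank--nullity. Your explicit verification in Step 4 that $\Lambda^m{\rm ad}_v(w)=[v,w]_S$ is a detail the paper leaves implicit.
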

\begin{proof}
	The adjoint action of $G$ on each $\Lambda^m \mathfrak{g}$ is given by $g \cdot w := \Lambda^m {\rm Ad}_g w$. Define $\exp(tv) =:g_t$, $g_1 :=g$ for $v\in\mathfrak{g}$. 
	Then, $\dim G\cdot w=\dim G-\dim G_w$, where $G_w$ is the isotropy group of $w\in \Lambda^m\mathfrak{g}$. The Lie algebra $\mathfrak{g}_w$ of $G_w$ is given by the $v$ of $\mathfrak{g}$ such that 
	$\frac{d}{dt}\bigg|_{t=0}\Lambda^m{\rm Ad}_{g_t}(w)=[v,w]_S=0$. Ths amounts to the fact that $v \in \ker\,\Theta^{m}_w$.
	Hence, $\dim \mathcal{O}_w = \dim \mathfrak{g}-\dim \mathfrak{g}_w=\dim {\rm Im}\, \Theta_{w}^{m}$.
\end{proof}

\section{The \texorpdfstring{$\mathfrak{g}$}{}-invariant maps on Grassmann algebras \texorpdfstring{$\Lambda\mathfrak{g}$}{}}

This section extends and analyses standard notions on Lie algebras, like the {\rm ad}-invariance, to $\mathfrak{g}$-modules. Special attention is paid to the extension to Grassmann algebras. Our findings will permit us to study  Lie bialgebras  in following sections.

\begin{definition}
A $k$-linear map $b:V^{\otimes k}\rightarrow \mathbb{R}$ is \textit{$GL(\rho)$-invariant} relative to a $\mathfrak{g}$-module $(V,\rho)$ if $T^* b = b$ for every $T \in GL(\rho)$, i.e. $b(Tx_1, \ldots, Tx_k) = b(x_1, \ldots, x_k)$ for every $x_1, \ldots, x_k \in V$.
\end{definition}

To characterise $GL(\rho)$-invariant maps, we will use the following  notion.

\begin{definition}\label{ginv}
 A $k$-linear map $b: V^{\otimes k} \to \mathbb{R}$ is {\it $\mathfrak{g}$-invariant} relative to the $\mathfrak{g}$-module $(V, \rho)$ if 
\begin{equation}\label{ginva}
b(\rho_v(x_1), \ldots, x_k) + \ldots+b(x_1,\ldots,\rho_v(x_k))= 0,\qquad \forall v\in \mathfrak{g},\quad \forall x_1,\ldots,x_k\in V.
\end{equation}
\end{definition}

\begin{example} 
The Killing metric on $\mathfrak{g}$, namely $\kappa_{\mathfrak{g}}(v_1, v_2) := \textrm{tr}(\textrm{ad}_{v_1} \circ \textrm{ad}_{v_2})$ with $v_1,v_2 \in \mathfrak{g}$, satisfies that $\kappa_{\mathfrak{g}}(\textrm{ad}_{v}v_1, v_2) + \kappa_{\mathfrak{g}}(v_1, \textrm{ad}_{v}v_2) = 0$ for all $v, v_1, v_2 \in \mathfrak{g}$ \cite{Fulton} . For this reason Killing metrics are called {\it ad-invariant}. In view of Definition \ref{ginv}, the Killing metric is $\mathfrak{g}$-invariant with respect to $(\mathfrak{g}, \textrm{ad})$. 
\demo
\end{example}

Thus, $\mathfrak{g}$-invariance can be interpreted as an extension of ad-invariance to $\mathfrak{g}$-modules. As shown in Proposition \ref{prop:glrho_ginv}, the invariance of a $k$-linear map on a $\mathfrak{g}$-module $(V,\rho)$ relative to $GL(\rho)$ can be characterised by the $\mathfrak{g}$-invariance of the $k$-linear map. The proof is not detailed as it is quite immediate.

\begin{proposition}\label{prop:glrho_ginv}
A $k$-linear map $b:V^{\otimes k} \to \mathbb{R}$ is $GL(\rho)$-invariant relative to a $\mathfrak{g}$-module $(V,\rho)$ if and only if $b$ is $\mathfrak{g}$-invariant relative to $(V,\rho)$.
\end{proposition}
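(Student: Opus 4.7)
The plan is to exploit the very definition of $GL(\rho)$ as the Lie subgroup generated by $\exp(\rho(\mathfrak{g}))$, which reduces both implications to a differentiation/integration argument along one-parameter subgroups. Since $b$ is $k$-linear on a finite-dimensional space, it is smooth and the Leibniz rule may be applied freely, so no delicate analytic issues arise.

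For the forward implication, assume $b$ is $GL(\rho)$-invariant. Fix $v\in\mathfrak{g}$; the curve $t\mapsto \exp(t\rho_v)$ lies in $GL(\rho)$, so
$$
b(\exp(t\rho_v)x_1, \ldots, \exp(t\rho_v)x_k) = b(x_1,\ldots,x_k),\qquad \forall t\in\mathbb{R}.
$$
Differentiating both sides at $t=0$ via the Leibniz rule, and using $\tfrac{d}{dt}\big|_{t=0}\exp(t\rho_v)x_i = \rho_v(x_i)$, recovers exactly the $\mathfrak{g}$-invariance condition (\ref{ginva}).

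For the converse, assume $b$ is $\mathfrak{g}$-invariant. Fix $v\in\mathfrak{g}$ and $x_1,\ldots,x_k\in V$, and define $f(t) := b(\exp(t\rho_v)x_1, \ldots, \exp(t\rho_v)x_k)$. Differentiating by Leibniz gives
$$
f'(t) = \sum_{i=1}^k b\bigl(\exp(t\rho_v)x_1,\ldots, \rho_v\exp(t\rho_v)x_i, \ldots, \exp(t\rho_v)x_k\bigr),
$$
which vanishes by applying (\ref{ginva}) to the tuple $\bigl(\exp(t\rho_v)x_1,\ldots,\exp(t\rho_v)x_k\bigr)$. Hence $f$ is constant, so $f(1)=f(0)$, meaning $b$ is invariant under every $\exp(\rho_v)$.

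Finally, by construction $GL(\rho)$ is generated as a group by $\exp(\rho(\mathfrak{g}))$, so every $T\in GL(\rho)$ can be written as a finite product $T = \exp(\rho_{v_1})\circ\cdots\circ\exp(\rho_{v_N})$. Iterating the one-parameter invariance just established yields $T^*b = b$, which completes the proof. The only conceptually nontrivial point is the generation property of $GL(\rho)$ by exponentials, which is already supplied by the definition preceding the proposition; the rest is a routine Leibniz calculation.
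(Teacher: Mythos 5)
Your proof is correct and is precisely the argument the paper has in mind when it declares the proposition ``quite immediate'' and omits the proof: differentiate along $t\mapsto\exp(t\rho_v)$ for the forward direction, integrate the vanishing derivative for the converse, and conclude via the generation of $GL(\rho)$ by $\exp(\rho(\mathfrak{g}))$ (noting that $\exp(\rho_v)^{-1}=\exp(\rho_{-v})$, so inverses are already of the required form). No gaps.
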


Subsequently, we assume that $\{v_1,\ldots,v_r\}$ is a basis of $\mathfrak{g}$ and define $v_{J}:=v_{J(1)}\wedge \ldots \wedge v_{J(m)}$, where $J:=(J(1),\ldots,J(m))$ with $J(1), \ldots, J(m) \in {1,\ldots, r}$ represents a multi-index of length $|J|=m$, the $S_m$ is the permutation group of $m$ elements, and ${\rm sg}(\sigma)$ stands for the sign of the permutation $\sigma \in S_m$.  

 \begin{theorem}\label{extension}
Every $\mathfrak{g}$-invariant $k$-linear map $b:V^{\otimes k}\rightarrow \mathbb{R}$ relative to a $\mathfrak{g}$-module $V$ induces a $\mathfrak{g}$-invariant $k$-linear map, $b_{\Lambda V}$, on $\Lambda V$ relative to the induced $\mathfrak{g}$-module on $\Lambda V$ by imposing that
 \begin{enumerate}
 \item the spaces $\Lambda^mV$, with $m\in\mathbb{Z}$, are orthogonal between themselves relative to $b_{\Lambda V}$,
\item $b_{\Lambda V}(1,\ldots,1)=1$, 
 
 \item  the restriction, $b_{\Lambda^mV}$, of $b_{\Lambda V}$ to $\Lambda^m V$, with $m\in\mathbb{N}$, satisfies
\vskip -0.4cm{
\begin{equation*}\label{Lambdag}
b_{\Lambda^{m} V}(v_{J_1},\ldots ,v_{J_k}):=\!\!\!\!\!\!\!\sum_{\sigma_1,\ldots,\sigma_{k} \in S_m}\!\!\!\!\!\!{\rm sg }(\sigma_1\ldots\sigma_k)\frac{1}{m!}\prod_{r=1}^m b\left( v_{J_1(\sigma_1^{-1}(r))},\ldots ,v_{J_k(\sigma_{k}^{-1}(r))}\right).
\end{equation*}}
 \end{enumerate}
\end{theorem}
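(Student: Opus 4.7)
The plan is to verify three things in turn: that the formula in condition 3 defines a well-posed $k$-linear map on $(\Lambda^m V)^k$, that this map is $\mathfrak{g}$-invariant with respect to $\Lambda^m\rho$, and that conditions 1 and 2 then assemble these pieces into a $\mathfrak{g}$-invariant $k$-linear map on all of $\Lambda V$.

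For well-definedness, I would reinterpret the defining formula as the restriction to $\Lambda^m V$ of the natural $k$-linear map $b^{\otimes m}:(V^{\otimes m})^k\to\mathbb{R}$ determined on simple tensors by
\[
b^{\otimes m}(w^1_1\otimes\cdots\otimes w^1_m,\,\ldots,\,w^k_1\otimes\cdots\otimes w^k_m):=\prod_{r=1}^m b(w^1_r,\ldots,w^k_r).
\]
Expanding each $v_{J_i}\in\Lambda^m V$ as an alternating sum over permutations of elementary tensors in $V^{\otimes m}$ and performing the substitution $\sigma_i\mapsto\sigma_i^{-1}$ in each sum identifies $b_{\Lambda^m V}$ with $b^{\otimes m}|_{\Lambda^m V}$ up to an overall combinatorial constant depending only on the chosen convention for the embedding $\Lambda^m V\hookrightarrow V^{\otimes m}$. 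This immediately gives $k$-linearity on $(\Lambda^m V)^k$. As a direct consistency check from the defining formula, replacing $J_i$ by $J_i\circ\tau$ for $\tau\in S_m$ and substituting $\sigma_i\mapsto\sigma_i\tau$ absorbs the sign ${\rm sg}(\tau)$ coming from $v_{J_i\circ\tau}={\rm sg}(\tau)v_{J_i}$, confirming compatibility with the Grassmann relations.

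The crux is $\mathfrak{g}$-invariance, which I would prove first for $b^{\otimes m}$ on $V^{\otimes m}$ and then transfer by restriction. On a simple tensor $T_i=w^i_1\otimes\cdots\otimes w^i_m$, the derivation formula (\ref{loc1}) gives
\[
\Lambda^m\rho_v(T_i)=\sum_{s=1}^m w^i_1\otimes\cdots\otimes\rho_v(w^i_s)\otimes\cdots\otimes w^i_m.
\]
Plugging this into $\sum_{i=1}^k b^{\otimes m}(T_1,\ldots,\Lambda^m\rho_v(T_i),\ldots,T_k)$, swapping the summations over $i$ and $s$, and factoring out the slots $r\neq s$ yields
\[
\sum_{s=1}^m\prod_{r\neq s}b(w^1_r,\ldots,w^k_r)\cdot\sum_{i=1}^k b(w^1_s,\ldots,\rho_v(w^i_s),\ldots,w^k_s),
\]
and the inner sum vanishes by the $\mathfrak{g}$-invariance (\ref{ginva}) of $b$ on $V$. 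Multilinear extension then gives $\mathfrak{g}$-invariance on all of $V^{\otimes m}$, and since $\Lambda^m V$ is stable under $\Lambda^m\rho_v$ by Proposition \ref{Prop1}, the restriction $b_{\Lambda^m V}$ inherits the invariance.

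To finish, conditions 1 and 2 force $b_{\Lambda V}$ to be the block-diagonal sum $\bigoplus_{m\ge 0}b_{\Lambda^m V}$ relative to the grading $\Lambda V=\bigoplus_m\Lambda^m V$. Identity (\ref{ginva}) for $b_{\Lambda V}$ then splits according to the graded components of the inputs: if the $k$ arguments lie in different components, every term vanishes by condition 1 together with the grading-preservation of $\Lambda\rho_v=\bigoplus_m\Lambda^m\rho_v$; if they all sit in a single $\Lambda^m V$, one recovers the invariance of $b_{\Lambda^m V}$ established above (with the $m=0$ case trivial since $\Lambda^0\rho_v=0$). The main obstacle I anticipate is purely combinatorial bookkeeping: carrying the factor $1/m!$ and the inverse permutations of condition 3 cleanly through the identification with $b^{\otimes m}|_{\Lambda^m V}$ without sign or normalisation errors. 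Once that translation is in place, the Leibniz-rule computation above closes the argument in one line.
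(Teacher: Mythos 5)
Your proof is correct, but it reaches the invariance by a genuinely different route from the paper. The paper first reduces everything to group-level invariance via Proposition \ref{prop:glrho_ginv}: it uses the $GL(\rho)$-invariance of $b$, observes that $\Lambda^m e^{\rho_v}$ acts factorwise as $e^{\rho_v}\otimes\cdots\otimes e^{\rho_v}$ on decomposables, and concludes that each factor $b\bigl(e^{\rho_v}v_{J_1(\cdot)},\ldots,e^{\rho_v}v_{J_k(\cdot)}\bigr)$ in condition 3 is unchanged, so $b_{\Lambda^m V}$ is $GL(\Lambda^m\rho)$-invariant and hence $\mathfrak{g}$-invariant. You instead stay entirely at the Lie-algebra level: you lift $b_{\Lambda^m V}$ to $b^{\otimes m}$ on $V^{\otimes m}$, run the Leibniz-rule computation directly against the derivation $\Lambda^m\rho_v$, and kill the inner sum with the infinitesimal identity (\ref{ginva}). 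What your approach buys is self-containedness — you never invoke Proposition \ref{prop:glrho_ginv} (whose proof requires the exponential map and connectedness of $GL(\rho)$), and the identification with $b^{\otimes m}|_{\Lambda^m V}$ gives $k$-linearity and well-definedness for free rather than by the paper's bare-hands permutation bookkeeping (which you also reproduce as a consistency check). What the paper's route buys is brevity once its earlier machinery is in place: the multiplicativity of $\exp$ makes the factorwise cancellation one line, with no need to track the $1/m!$ normalisation or the inverse permutations through an embedding $\Lambda^m V\hookrightarrow V^{\otimes m}$. Your handling of conditions 1 and 2 (splitting the invariance identity along the grading, with the $m=0$ case trivial because $\Lambda^0\rho_v=0$) matches the paper's brief closing remark and is sound.
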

\begin{proof} Since $1\in \Lambda^0V$ and the decomposable elements $v_J=v_{J(1)}\wedge\ldots\wedge v_{J(m)}$ span $\Lambda V$ and $b_{\Lambda V}$ is $k$-linear, then the conditions 1, 2, and 3 establish $b_{\Lambda V}$. The condition 3 establishes a well-defined value of $b_{\Lambda V}$ independently of the representative for each $v_{J_s}$, with $s\in \overline{1,k}$. Indeed, defining $\sigma v_{J}:=v_{J(\sigma^{-1}(1))}\wedge\ldots\wedge v_{J(\sigma^{-1}(m))}$ and $\tilde{\tilde{\sigma}}_j :=\tilde\sigma_j\cdot \sigma_j $, we obtain
	\vskip -0.4cm
$$
\begin{aligned}
b_{\Lambda^m V}(\tilde{\sigma}_1v_{J_1},\ldots, \tilde{\sigma}_kv_{J_k})&=\!\!\!\!\sum_{\sigma_1,\ldots,\sigma_k\in S_m}\!\!\!\!\!\!{\rm sg }(\sigma_1\ldots\sigma_k)\frac{1}{m!}\prod_{r=1}^mb\left(v_{J_1( \sigma_1^{-1}\tilde{\sigma}_1^{-1} (r))},\ldots,v_{J_k( \sigma_k^{-1}\tilde{\sigma}_k^{-1} (r))}\right)\\
&=\!\!\!\!\!\sum_{\tilde{\tilde{\sigma}}_1,\ldots,\tilde{\tilde{\sigma}}_k\in S_m}\!\!\!\!\!{\rm sg }(\tilde{\tilde{\sigma}}_1 \ldots\tilde{\tilde{\sigma}}_k){\rm sg }(\tilde\sigma_1\ldots\tilde\sigma_k)\frac{1}{m!}\prod_{r=1}^mb\left(v_{J_1 (\tilde{\tilde{\sigma}}_1^{-1}(r))},\ldots,v_{J_k (\tilde{\tilde{\sigma}}_k^{-1}(r))}\right)\\&={\rm sg }(\tilde\sigma_1\ldots\tilde \sigma_k)b_{\Lambda^m V}(v_{J_1},\ldots, v_{J_k}).
\end{aligned}
$$

Let us prove that $b_{\Lambda V}$ is $\mathfrak{g}$-invariant relative to the $\mathfrak{g}$-module structure on $\Lambda V$ induced by the $\mathfrak{g}$-module structure on $V$. Proposition \ref{prop:glrho_ginv} yields that the $\mathfrak{g}$-invariance of $b_{\Lambda V}$ is inferred from its $GL(\Lambda\rho)$-invariance. This also reduces to the $GL(\Lambda^m\rho)$-invariance of the restrictions $b_{\Lambda^mV}$ for $m\in \overline{\mathbb{N}}=\mathbb{N}\cup \{0\}$. Using the $GL(\rho)$-invariance of $b$ and defining $e^{\rho_v}:=\exp(\rho_v)$ for every $v\in \mathfrak{g}$ and $m\in \mathbb{N}$, we get
\vskip -0.2cm
\begin{multline*}
b_{\Lambda^mV}(\Lambda^m e^{\rho_v}(v_{J_1}),\ldots,\Lambda^m e^{\rho_v}(v_{J_k}))
= \!\!\!\!\!\!\sum_{\sigma_1,\ldots,\sigma_{k} \in S_m}\!\!\!\!\!\!\!\!\!\!{\rm sg }(\sigma_1\ldots\sigma_k)\frac{1}{m!}\prod_{r=1}^m b\left( e^{\rho_v}v_{J_1(\sigma_1^{-1}(r))},\ldots , e^{\rho_v}v_{J_k(\sigma_{k}^{-1}(r))}\right)\\=\!\!\!\!\!\!\sum_{\sigma_1,\ldots,\sigma_{k} \in S_m}\!\!\!\!\!\!\!\!\!\!{\rm sg }(\sigma_1 \ldots\sigma_k)\frac{1}{m!}\prod_{r=1}^m b\left(v_{J_1(\sigma_1^{-1}(r))},\ldots , v_{J_k(\sigma_{k}^{-1}(r))}\right)= b_{\Lambda^mV}(v_{J_1},\ldots, v_{J_k}).
\end{multline*}
Since the invariance of $b_{\Lambda^0V}$ is obvious, $b_{\Lambda V}$ is $GL(\Lambda\rho)$-invariant and Proposition \ref{prop:glrho_ginv} ensures that is $\mathfrak{g}$-invariant.
\end{proof}

Since each Killing metric is $\mathfrak{g}$-invariant (relative to ($\mathfrak{g},{\rm ad})$), it can be extended to each $\Lambda^m\mathfrak{g}$. Its extensions  to $\Lambda^2\mathfrak{g}$ and $\Lambda^3\mathfrak{g}$ are called the {\it double} and {\it triple Killing metrics} of $\mathfrak{g}$, respectively.

Next corollary gives an immediate consequence of Proposition \ref{prop:glrho_ginv} and the proof of Theorem \ref{extension}. 

\begin{corollary}\label{ExtInvRep} Let $b$ be a $\mathfrak{g}$-invariant $k$-linear map on $V$ and let $T\in GL(\rho)$. Then, $b_{\Lambda^mV}$ is invariant with respect to $GL(\Lambda^m \rho)$, i.e. $b_{\Lambda^mV}(\Lambda^m T\cdot , \ldots, \Lambda^m T\cdot )=b_{\Lambda^mV}( \cdot ,\ldots, \cdot )$.
\end{corollary}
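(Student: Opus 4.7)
The plan is to assemble the corollary from three ingredients already present in the text: Theorem \ref{extension}, Proposition \ref{prop:glrho_ginv}, and Proposition \ref{Prop:invk}. The content is essentially a repackaging of facts established in the proof of Theorem \ref{extension}, together with the concrete description of the group $GL(\Lambda^m\rho)$.

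First, I would invoke Theorem \ref{extension} to know that $b_{\Lambda V}$ is $\mathfrak{g}$-invariant with respect to the induced $\mathfrak{g}$-module structure $(\Lambda V,\Lambda\rho)$. Since this $\mathfrak{g}$-invariance decouples across the orthogonal summands $\Lambda^m V$ (by condition 1 in the statement of Theorem \ref{extension}), each restriction $b_{\Lambda^m V}$ is $\mathfrak{g}$-invariant relative to $(\Lambda^m V,\Lambda^m\rho)$. Then applying Proposition \ref{prop:glrho_ginv} to the $\mathfrak{g}$-module $(\Lambda^m V,\Lambda^m\rho)$ turns $\mathfrak{g}$-invariance into $GL(\Lambda^m\rho)$-invariance of $b_{\Lambda^m V}$.

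Next, I would use Proposition \ref{Prop:invk} to identify the elements of $GL(\Lambda^m\rho)$ explicitly as operators of the form $\Lambda^m T$ with $T\in GL(\rho)$. Combining this identification with the $GL(\Lambda^m\rho)$-invariance obtained in the previous step immediately yields the stated identity
\[
b_{\Lambda^m V}(\Lambda^m T\,\cdot\,,\ldots,\Lambda^m T\,\cdot\,)=b_{\Lambda^m V}(\,\cdot\,,\ldots,\,\cdot\,),\qquad \forall T\in GL(\rho).
\]

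There is essentially no obstacle: every nontrivial calculation has been performed inside the proof of Theorem \ref{extension}, where the invariance under $\exp(\Lambda^m\rho_v)=\exp(\rho_v)\otimes\cdots\otimes\exp(\rho_v)$ was verified explicitly, and Proposition \ref{Prop:invk} just says that arbitrary $T\in GL(\rho)$ are finite products of such exponentials. The only care needed is the bookkeeping remark that $GL(\Lambda^m\rho)$-invariance of $b_{\Lambda V}$ is equivalent to $GL(\Lambda^m\rho)$-invariance of each restriction $b_{\Lambda^m V}$, which follows from the orthogonality of the graded pieces built into the definition of $b_{\Lambda V}$.
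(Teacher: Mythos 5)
Your proposal is correct and follows essentially the route the paper intends: the paper presents this corollary as an immediate consequence of Proposition \ref{prop:glrho_ginv} and the computation in the proof of Theorem \ref{extension}, which is exactly what you assemble. Your explicit appeal to Proposition \ref{Prop:invk} to pass from invariance under the generators $\exp(\Lambda^m\rho_v)=\exp(\rho_v)\otimes\cdots\otimes\exp(\rho_v)$ to invariance under arbitrary $\Lambda^m T$ with $T\in GL(\rho)$ is a reasonable piece of bookkeeping that the paper leaves implicit.
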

As shown next, certain extensions of a $\mathfrak{g}$-invariant metric  are trivial and, therefore, useless.

\begin{proposition} If $b$ is a $\mathfrak{g}$-invariant $k$-linear map on $V$, then $b_{\Lambda^mV}=0$ for $m >1$ and odd $k>1$. 
\end{proposition}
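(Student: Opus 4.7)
The plan is to exploit the extra freedom in the extension formula given in Theorem~\ref{extension}. Concretely, for fixed multi-indices $J_1,\ldots,J_k$ of length $m$, I want to show that the sum
$$
b_{\Lambda^m V}(v_{J_1},\ldots,v_{J_k})=\sum_{\sigma_1,\ldots,\sigma_k\in S_m}{\rm sg}(\sigma_1\cdots\sigma_k)\frac{1}{m!}\prod_{r=1}^{m}b\bigl(v_{J_1(\sigma_1^{-1}(r))},\ldots,v_{J_k(\sigma_k^{-1}(r))}\bigr)
$$
is invariant under multiplying all the $\sigma_i$ on the left by a common $\tau\in S_m$, up to the scalar ${\rm sg}(\tau)^k$. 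For odd $k\geq 1$ and $m\geq 2$, choosing $\tau$ to be any transposition then forces $b_{\Lambda^m V}(v_{J_1},\ldots,v_{J_k})=-b_{\Lambda^m V}(v_{J_1},\ldots,v_{J_k})$, so the value vanishes. Since decomposable elements $v_{J_1},\ldots,v_{J_k}$ span $\Lambda^m V$ and $b_{\Lambda^m V}$ is $k$-linear, this gives $b_{\Lambda^m V}=0$.

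The steps in order are as follows. First I substitute $\sigma_i=\tau\widetilde\sigma_i$ for a fixed $\tau\in S_m$ in the sum; since each $\widetilde\sigma_i$ still ranges over $S_m$ this is a valid reindexing. This substitution produces the global sign factor ${\rm sg}(\tau)^k$ from ${\rm sg}(\tau\widetilde\sigma_1\cdots \tau\widetilde\sigma_k)={\rm sg}(\tau)^k{\rm sg}(\widetilde\sigma_1\cdots\widetilde\sigma_k)$, and replaces each $\sigma_i^{-1}(r)$ by $\widetilde\sigma_i^{-1}(\tau^{-1}(r))$. Second, I reindex the product over $r$ by setting $s=\tau^{-1}(r)$; because this reindexing is common to all $k$ entries of $b$ inside the product, the product over $r$ and the product over $s$ coincide term by term. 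The outcome is the identity
$$
b_{\Lambda^m V}(v_{J_1},\ldots,v_{J_k})={\rm sg}(\tau)^k\,b_{\Lambda^m V}(v_{J_1},\ldots,v_{J_k}),\qquad \forall\tau\in S_m.
$$
Third, I specialise to $k$ odd and to any transposition $\tau$, available because $m>1$, which yields the desired vanishing.

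The only delicate point, and what I expect to be the main thing to check carefully, is step two: the simultaneous reindexing of the product variable $r$. The reason it works is precisely that the same $\tau^{-1}$ appears inside every factor $v_{J_i(\widetilde\sigma_i^{-1}(\tau^{-1}(r)))}$, so relabelling $s=\tau^{-1}(r)$ acts as a bijection of $\{1,\ldots,m\}$ and leaves the product unchanged. If $\tau$ were allowed to differ between the $k$ slots one would not obtain a global factor ${\rm sg}(\tau)^k$; it is crucial that one and the same $\tau$ is applied to all $\sigma_i$, which is exactly the freedom the formula offers. Once this is verified, the proof is essentially a one-line sign argument.
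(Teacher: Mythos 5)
Your proof is correct and rests on the same key observation as the paper's: the product $\prod_{r=1}^m b(\cdots)$ is unchanged when every $\sigma_i$ is replaced by $\tau\sigma_i$ for one common $\tau\in S_m$, while the sign picks up ${\rm sg}(\tau)^k={\rm sg}(\tau)$ for odd $k$. The paper organises this as an orbit-by-orbit cancellation under the diagonal left $S_m$-action (each equivalence class summing to zero because half its signs are even and half odd), whereas you reindex the whole sum at once by a single transposition to get $b_{\Lambda^m V}=-b_{\Lambda^m V}$; this is a slightly more economical packaging of the identical argument.
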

\begin{proof} Let us first prove that we can gather the summands appearing in  $b_{\Lambda^mV}$ into families that sum up to zero. We introduce the equivalence relation on $S_m^k := {S_m \times \stackrel{k\,\,{\rm times}}{\ldots} \times S_m}$ given by
$$
(\sigma_1, \ldots, \sigma_k) \equiv (\tilde\sigma_1, \ldots, \tilde\sigma_k) \iff \exists \sigma \in S_m: (\tilde\sigma_1, \ldots, \tilde\sigma_k) =  (\sigma \sigma_1, \ldots, \sigma \sigma_k).
$$
Let $[(\sigma_1,\ldots,\sigma_k)]$ be the equivalence class of $(\sigma_1,\ldots,\sigma_k)\in S_m^k$ and let $\mathcal{R}$ be the space of equivalence classes. The map $b_{\Lambda^mV}$ can be written as 
\begin{equation*}
b_{\Lambda^{m}V}(v_{J_1},\ldots ,v_{J_k}):=\sum_{[{\bf w}]\in \mathcal{R}}\sum_{(\sigma_1,\ldots,\sigma_{k}) \in [{\bf w}]}\!\!\!\!\!\!\!\!\!\!{\rm sg }(\sigma_1\ldots\sigma_k)\frac{1}{m!}\prod_{r=1}^m b\left( v_{J_1(\sigma_1^{-1}(r))},\ldots ,v_{J_k(\sigma_{k}^{-1}(r))}\right).
\end{equation*}
Since every equivalence class in $\mathcal{R}$ is of the form $[(\sigma_1,\ldots,\sigma_k)]=\{(\sigma\sigma_1,\ldots,\sigma\sigma_k):\sigma \in S_m\}$, one has
\begin{equation*}
b_{\Lambda^{m} V}(v_{J_1},\ldots ,v_{J_k}):=\sum_{[(\sigma_1 ,\ldots,\sigma_k)]\in \mathcal{R}}\sum_{\sigma\in S_m}\!\!{\rm sg }(\sigma\sigma_1 \ldots\sigma\sigma_k)\frac{1}{m!}\prod_{r=1}^m b\left( v_{J_1(\sigma_1^{-1}\sigma^{-1}(r))},\ldots ,v_{J_k(\sigma_{k}^{-1}\sigma^{-1}(r))}\right).
\end{equation*}
Let us show that the above sum vanishes for every equivalence class of $\mathcal{R}$.
First,
$$
\prod_{r=1}^mb\left(v_{J_1({\sigma}_1^{-1} \sigma^{-1} (r))},\ldots,v_{J_k({\sigma}_k^{-1} \sigma^{-1} (r))}\right)=\prod_{r=1}^mb\left(v_{J_1({\sigma}_1^{-1} (r))},\ldots,v_{J_k({\sigma}_k^{-1} (r))}\right).
$$
Let us define ${\rm sg}({\bf w}):={\rm sg}(\sigma_1\ldots\sigma_m)$. Then
$$
\prod_{r=1}^m{\rm sg }(\sigma {\bf w})b\left(v_{J_1({\sigma}_1^{-1} \sigma^{-1} (r))},\ldots,v_{J_k({\sigma}_k^{-1} \sigma^{-1} (r))}\right)=\prod_{r=1}^m{\rm sg }(\sigma {\bf w})b\left(v_{J_1({\sigma}_1^{-1}(r))},\ldots,v_{J_k({\sigma}_k^{-1} (r))}\right)
$$
and ${\rm sg}(\sigma {\bf w})={\rm sg}(\sigma)^{k}{\rm sg}({\bf w})={\rm sg}(\sigma){\rm sg}(\bf w)$ since $k$ is odd. Therefore,
\begin{equation}\label{eq1}
{\rm sg}(\sigma)^{k}{\rm sg }({\bf w})\prod_{r=1}^mb\left(v_{J_1(\sigma_1^{-1}(r))},\ldots,v_{J_k(\sigma_k^{-1}(r))}\right)=
{\rm sg}(\sigma){\rm sg }({\bf w})\prod_{r=1}^mb\left(v_{J_1(\sigma_1^{-1} (r))},\ldots,v_{J_k(\sigma_k^{-1}(r))}\right).
\end{equation}
Every equivalence class of $\mathcal{R}$ has $m!$ elements. All of them have the same absolute value. Half of them is odd and the other half is even. Hence, 
$$
\sum_{\sigma\in S_m}\!\!{\rm sg }(\sigma {\bf w})\frac{1}{m!}\prod_{r=1}^m b\left(v_{J_1(\sigma_1^{-1}(r))},\ldots , v_{J_k(\sigma_{k}^{-1}(r))}\right)=0\,\,\Longrightarrow \,\,b_{\Lambda^mV}=0.
$$\vskip-0.6cm
\end{proof}

\begin{example} 
Consider the Lie algebra $\mathfrak{su}_2$ and its Killing form $\kappa_{\mathfrak{su}_2}$, which is a $\mathfrak{su}_2$-invariant, bilinear, symmetric map on $\mathfrak{su}_2$. Take the basis $\{e_1, e_2, e_3\}$ of $\mathfrak{su}_2$ given in Table \ref{tabela3w}. Theorem \ref{extension}  extends $\kappa_{\mathfrak{su}_2}$ to the double- and triple-Killing metrics $\kappa_{\Lambda^2 \mathfrak{su}_2}$, $\kappa_{\Lambda^3 \mathfrak{su}_2}$ on $\Lambda^2\mathfrak{su}_2$ and $\Lambda^3\mathfrak{su}_2$, respectively. In the bases $\{e_{12}, e_{13}, e_{23}\}$, $\{e_{123}\}$ for the spaces $\Lambda^2\mathfrak{su}_2, \Lambda^3 \mathfrak{su}_2$ (see Table \ref{tabela3w}), we obtain
	$$
	[\kappa_{\mathfrak{su}_2}]:=\left(\begin{array}{ccc}
	-2&0&0\\
	0&-2&0\\
	0&0&-2\\
	\end{array}\right),\qquad 
	[\kappa_{\Lambda^2\mathfrak{su}_2}]:=\left(\begin{array}{ccc}
	4&0&0\\
	0&4&0\\
	0&0&4\\
	\end{array}\right),\qquad \qquad 
	[\kappa_{\Lambda^3\mathfrak{su}_2}]:=\left(-8\right).
	$$\vskip-0.6cm\demo
\end{example}

The previous example shows that the Killing metric and its extensions to $\Lambda^2\mathfrak{su}_2$ and $\Lambda^3\mathfrak{su}_2$ are simultaneously diagonal and non-degenerate. The corollary below provides an explanation of this fact.

 \begin{corollary}\label{Cor:Diatwoform} If $b$ is a symmetric $\mathfrak{g}$-invariant $k$-linear mapping on an $r$-dimensional $\mathfrak{g}$-module $V$, then $b_{\Lambda V}$ is symmetric. If $b$ diagonalizes in the basis $\{e_i\}_{i\in \overline{1,r}}$, then $b_{\Lambda^mV}$ diagonalizes in the basis $\{e_J\}_{|J|=m}$. Additionally, $b$ is non-degenerate if and only if $b_{\Lambda V}$ is so. 
\end{corollary}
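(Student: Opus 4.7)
The plan is to work directly from the explicit formula for $b_{\Lambda^m V}$ given in Theorem \ref{extension}, since each of the three assertions is visible by inspection of how that sum behaves on the canonical basis elements $e_J$ of $\Lambda^m V$.

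For the symmetry of $b_{\Lambda V}$, since distinct homogeneous degrees are orthogonal by condition 1, it suffices to check symmetry on each $\Lambda^m V$. Swapping the $i$-th and $j$-th arguments in $b_{\Lambda^m V}(v_{J_1},\ldots,v_{J_k})$ can be matched, inside the sum over $S_m^k$, by the relabelling $(\sigma_i,\sigma_j)\mapsto(\sigma_j,\sigma_i)$, which is a bijection of $S_m^k$ preserving the sign factor $\mathrm{sg}(\sigma_1\cdots\sigma_k)$. The symmetry of $b$ then makes each factor of the product $\prod_r b(v_{J_1(\sigma_1^{-1}(r))},\ldots,v_{J_k(\sigma_k^{-1}(r))})$ invariant under this relabelling.

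For the diagonalisation claim, assume $b(e_{i_1},\ldots,e_{i_k})=0$ whenever the $i_\ell$ are not all equal. Fix increasing multi-indices $J_1,\ldots,J_k$ of length $m$ viewed as subsets of $\{1,\ldots,r\}$. If two of them, say $J_\ell\neq J_{\ell'}$, differ as sets, choose $a\in J_\ell\setminus J_{\ell'}$; for any $\sigma_1,\ldots,\sigma_k\in S_m$ there is a unique $r$ with $J_\ell(\sigma_\ell^{-1}(r))=a$, and since $a\notin J_{\ell'}$ the $r$-th factor of the product vanishes by the diagonality of $b$. Hence every summand vanishes and $b_{\Lambda^m V}(e_{J_1},\ldots,e_{J_k})=0$.

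For non-degeneracy, which for the principal case of symmetric bilinear metrics can always be reduced to the diagonal setup via Sylvester's theorem, I would compute the diagonal entries $b_{\Lambda^m V}(e_J,\ldots,e_J)$. The diagonality of $b$ together with the injectivity of $J$ as a sequence forces $\sigma_1=\cdots=\sigma_k$ in every nonvanishing summand; the sign factor $\mathrm{sg}(\sigma)^k$ is $+1$ (this is the case of interest, as the preceding proposition already disposed of odd $k>1$), and the $m!$ equal terms compensate the $1/m!$, yielding $b_{\Lambda^m V}(e_J,\ldots,e_J)=\prod_{i=1}^{m} b(e_{J(i)},\ldots,e_{J(i)})$. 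Hence all diagonal entries of $b_{\Lambda V}$ are nonzero if and only if all diagonal entries of $b$ are, which is precisely non-degeneracy of $b$. The only bookkeeping obstacle is this last sign-and-permutation collapse; everything else is routine inspection of the defining formula.
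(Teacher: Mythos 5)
Your proposal is correct and follows essentially the same route as the paper: direct inspection of the defining sum from Theorem \ref{extension}, giving symmetry by a relabelling of the permutation tuple, off-diagonal vanishing from the diagonality of $b$, and diagonal entries $\prod_{j=1}^{|J|} b(e_{J(j)},e_{J(j)})$, from which the non-degeneracy equivalence follows. The only difference is that you spell out the combinatorial steps (the bijection $(\sigma_i,\sigma_j)\mapsto(\sigma_j,\sigma_i)$ and the collapse to $\sigma_1=\cdots=\sigma_k$ on diagonal entries) that the paper leaves as ``using the expression for $b_{\Lambda V}$''.
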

\begin{proof} If $b$ is a symmetric $\mathfrak{g}$-invariant $k$-linear mapping on $V$, then Theorem \ref{extension} ensures that $b_{\Lambda V}$ is symmetric on the elements of a basis $v_J$ of $\Lambda V$. Indeed, the condition 3 guarantees the symmetry of $b_{\Lambda^m V}$ on decomposable elements of $\Lambda^m V$, $m\in \mathbb{N}$, whereas the condition 2 ensures the same for $m=0$.  Since $b_{\Lambda V}$ is additionally multilinear, it becomes symmetric on the whole $\Lambda V$.

If $b$ is bilinear and symmetric, it can always be put into diagonal form in a certain basis $\{e_1,\ldots, e_r\}$ for $V$. This gives rise to a basis $\{e_J\}$ of $\Lambda V$. Using the expression for $b_{\Lambda V}$, we see that this metric also becomes diagonal. The elements on the diagonal read
	$
	\prod_{j=1}^{|J|} b(e_{J(j)},e_{J(j)})$ for every multi-index $J$. Thus, $b$ is non-degenerate if and only if the induced symmetric metric $b_{\Lambda^mV}$ on each $\Lambda^mV$ is so as well. 
	\end{proof}
    
\begin{example}\label{ExExsl2}Consider the Lie algebra $\mathfrak{sl}_2$ with a basis $\{e_1,e_2,e_3\}$ satisfying the commutation relations in Table \ref{tabela3w}. In the induced bases $\{e_{12},e_{13},e_{23}\}$ and $\{e_{123}\}$ in $\Lambda^2\mathfrak{sl}_2$ and $\Lambda^3\mathfrak{sl}_{2}$, respectively, one has
\begin{equation}\label{sl2A}
[\kappa_{\mathfrak{sl}_{2}}] \!=\! \left(\begin{array}{ccc}
	2&0&0\\
	0&0&2\\
	0&2&0\\
	\end{array}\right),\qquad
[\kappa_{\Lambda^2 \mathfrak{sl}_{2}}]\! =\!\left(\begin{array}{ccc}
	0&4&0\\
	4&0&0\\
	0&0&-4\\
	\end{array}\right),\qquad\, [\kappa_{\Lambda^3\mathfrak{sl}_{2}}]\!=\!(-8).
\end{equation}
Since $\mathfrak{sl}_{2}$ is simple, the Cartan criterion states that $\kappa_{\mathfrak{sl}_2}$ is non-degenerate. Then, Corollary \ref{Cor:Diatwoform} ensures that $\kappa_{\Lambda^2 \mathfrak{sl}_{2}}$ and $\kappa_{\Lambda^3 \mathfrak{sl}_{2}}$ must be non-degenerate. This agrees with their expressions showed in (\ref{sl2A}).
\demo
\end{example}

\section{Killing-type metrics}
This section describes the invariance properties of certain multilinear metrics on the spaces $\Lambda^m\mathfrak{g}$ induced by Killing metrics. Our methods give rise to metrics invariant under the action of ${\rm Aut}(\mathfrak{g})$, which will be of interest in the description of coboundary cocommutators in Sections 7, 9, and 11.

\begin{proposition} The Killing metric $\kappa_\mathfrak{g}$ on $\mathfrak{g}$ is $\mathfrak{aut}(\mathfrak{g})$-invariant.
\end{proposition}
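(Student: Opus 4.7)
The plan is to reduce the claim to the standard fact that the Killing metric is invariant under derivations of $\mathfrak{g}$. First I would identify the image $\widehat{\rm ad}(\mathfrak{aut}(\mathfrak{g}))\subset \mathfrak{gl}(\mathfrak{g})$ with the derivation algebra ${\rm Der}(\mathfrak{g})$. Since Example \ref{ex:Lie_alg_aut} describes ${\rm Aut}(\mathfrak{g})$ as the zero locus $\Phi^{-1}(0)\cap GL(\mathfrak{g})$ of the bracket-preservation map $\Phi$, differentiating $\Phi\circ c=0$ along any smooth curve $c$ in ${\rm Aut}(\mathfrak{g})$ with $c(0)={\rm id}_\mathfrak{g}$ yields, for $D:=\widehat{\rm ad}_{c'(0)}$, the Leibniz rule $D[v_1,v_2]_\mathfrak{g}=[Dv_1,v_2]_\mathfrak{g}+[v_1,Dv_2]_\mathfrak{g}$. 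Hence every operator $\widehat{\rm ad}_D$ is a derivation of $\mathfrak{g}$.

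The second step is the usual trace manipulation. The Leibniz rule for a derivation $D$ of $\mathfrak{g}$ gives the commutator identity $[D,{\rm ad}_v]={\rm ad}_{Dv}$ for every $v\in\mathfrak{g}$, so that
\begin{align*}
\kappa_\mathfrak{g}(Dv_1,v_2)+\kappa_\mathfrak{g}(v_1,Dv_2)
&={\rm tr}\bigl([D,{\rm ad}_{v_1}]\circ{\rm ad}_{v_2}\bigr)+{\rm tr}\bigl({\rm ad}_{v_1}\circ[D,{\rm ad}_{v_2}]\bigr)\\
&={\rm tr}\bigl(D\circ{\rm ad}_{v_1}\circ{\rm ad}_{v_2}-{\rm ad}_{v_1}\circ{\rm ad}_{v_2}\circ D\bigr)=0
\end{align*}
by cyclicity of the trace. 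This is precisely the $\mathfrak{aut}(\mathfrak{g})$-invariance condition of Definition \ref{ginv} applied to the module $(\mathfrak{g},\widehat{\rm ad})$.

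No significant obstacle is anticipated; the only subtle point is the identification $\widehat{\rm ad}(\mathfrak{aut}(\mathfrak{g}))\subseteq{\rm Der}(\mathfrak{g})$, which is a direct consequence of the construction of ${\rm Aut}(\mathfrak{g})$ as a closed subgroup of $GL(\mathfrak{g})$. An equivalent approach would work at the group level: Proposition \ref{Out} identifies $GL(\widehat{\rm ad})$ with ${\rm Aut}_c(\mathfrak{g})$, and Proposition \ref{prop:glrho_ginv} then reduces the task to verifying $\kappa_\mathfrak{g}(\phi v_1,\phi v_2)=\kappa_\mathfrak{g}(v_1,v_2)$ for every $\phi\in{\rm Aut}_c(\mathfrak{g})$, which follows from ${\rm ad}_{\phi v}=\phi\circ{\rm ad}_v\circ\phi^{-1}$ together with the cyclicity of the trace.
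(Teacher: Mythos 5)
Your proof is correct, but your primary route differs from the paper's. You argue infinitesimally: you identify $\widehat{\rm ad}(\mathfrak{aut}(\mathfrak{g}))$ with derivations of $\mathfrak{g}$ (which the paper only establishes later, in Proposition \ref{aut}), then verify the invariance condition of Definition \ref{ginv} directly from the commutator identity $[D,{\rm ad}_v]={\rm ad}_{Dv}$ and cyclicity of the trace. The paper instead works at the group level: it invokes Proposition \ref{prop:glrho_ginv} to reduce the claim to invariance of $\kappa_\mathfrak{g}$ under ${\rm Aut}_c(\mathfrak{g})=GL(\widehat{\rm ad})$, cites the standard fact that the Killing metric is ${\rm Aut}(\mathfrak{g})$-invariant, and then devotes most of the written proof to showing that the extension $\kappa_{\Lambda\mathfrak{g}}$ inherits this invariance on each $\Lambda^m\mathfrak{g}$ (a stronger conclusion than the bare statement, used later in the classification). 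Your ``equivalent approach'' in the last paragraph is essentially the paper's argument restricted to $\mathfrak{g}$ itself. What your main argument buys is self-containedness — no appeal to the literature for ${\rm Aut}(\mathfrak{g})$-invariance and no detour through $\Lambda\mathfrak{g}$; what the paper's argument buys is the invariance of the double and triple Killing metrics in the same stroke, which is what is actually needed downstream. Neither route has a gap for the proposition as stated.
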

\begin{proof}In view of Proposition \ref{prop:glrho_ginv}, this proposition amounts to proving that  $\kappa_\mathfrak{g}$ is $GL(\widehat{{\rm ad}})$-invariant, which in turn means that 
$
\kappa_{\Lambda\mathfrak{g}}(\Lambda T\cdot ,\Lambda T\cdot )=\kappa_{\Lambda\mathfrak{g}}(\cdot ,\cdot)$ for every $T\in {\rm Aut}_c(\mathfrak{g})$, where ${\rm Aut}_c(\mathfrak{g})$ stands for the connected part of the neutral element of ${\rm Aut}(\mathfrak{g})$. The Killing metric is invariant relative to the action of ${\rm Aut}(\mathfrak{g})$ \cite{Ha15}. If $v_{J_1},v_{J_2}$ are decomposable elements of $\Lambda^m \mathfrak{g}$, then
{
\begin{multline*}
\kappa_{\Lambda \mathfrak{g}}(\Lambda^m Tv_{J_1},\Lambda^m Tv_{J_2}):=\!\!\!\!\!\!\sum_{\sigma_1, \sigma_2\in S_m}\!\!\!\!{\rm sg }(\sigma_1\sigma_2)\frac1{m!}\prod_{r=1}^m \kappa_\mathfrak{g}\left( Tv_{J_1(\sigma_1^{-1}(r))},Tv_{J_2(\sigma_{2}^{-1}(r))}\right)\\
=\!\!\!\!\!\!\sum_{\sigma_1, \sigma_2\in S_m}\!\!\!\!{\rm sg }(\sigma_1\sigma_2)\frac 1{m!}\prod_{r=1}^m \kappa_\mathfrak{g}\left( v_{J_1(\sigma_1^{-1}(r))},v_{J_2(\sigma_{2}^{-1}(r))}\right)=\kappa_{\Lambda \mathfrak{g}}(v_{J_1},v_{J_2}).
\end{multline*}}Since $\kappa_{\Lambda\mathfrak{g}}$ is bilinear and the above is satisfied for decomposable elements of $\Lambda^m\mathfrak{g}$, which span $\Lambda^m\mathfrak{g}$, the mapping $\kappa_{\Lambda^m \mathfrak{g}}$ is invariant relative to the action of ${\rm Aut}(\mathfrak{g})$ on $\Lambda^m \mathfrak{g}$. Since this fact is true for every $m$ and the spaces $\Lambda^m\mathfrak{g}$ for different $m$ are orthogonal relative to $\kappa_{\Lambda\mathfrak{g}}$, the proposition follows.
\end{proof}

Since $\kappa_{\Lambda\mathfrak{g}}$ is invariant under the maps $\Lambda T$ with $T\in {\rm Aut}(\mathfrak{g})$, it is therefore invariant under $\Lambda T$ with $T\in {\rm Inn}(\mathfrak{g})$. In view of Proposition \ref{prop:glrho_ginv}, the $\kappa_{\Lambda\mathfrak{g}}$ is also $\mathfrak{g}$-invariant.

\begin{proposition}
The $k$-linear  symmetric map $b(v_1, \ldots, v_k) := \sum_{\sigma \in S_m} {\rm Tr}\left({\rm ad}_{v_{\sigma (1)}} \circ \ldots \circ {\rm ad}_{v_{\sigma (k)}}\right)$ for every $v_1, \ldots, v_k \in \mathfrak{g}$ is $\mathfrak{aut}(\mathfrak{g})$-invariant.
\end{proposition}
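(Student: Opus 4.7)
The plan is to reduce $\mathfrak{aut}(\mathfrak{g})$-invariance to invariance under the full group $\mathrm{Aut}_c(\mathfrak{g})$, and then exploit the fact that Lie algebra automorphisms conjugate the adjoint representation.

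First, by Proposition \ref{prop:glrho_ginv}, a $k$-linear map is $\mathfrak{aut}(\mathfrak{g})$-invariant relative to $(\mathfrak{g},\widehat{\mathrm{ad}})$ if and only if it is $GL(\widehat{\mathrm{ad}})$-invariant. By Proposition \ref{Out}, this Lie group is precisely $\mathrm{Aut}_c(\mathfrak{g})$. Hence it suffices to show that
\[
b(Tv_1,\ldots,Tv_k)=b(v_1,\ldots,v_k), \qquad \forall T\in \mathrm{Aut}_c(\mathfrak{g}),\quad \forall v_1,\ldots,v_k\in\mathfrak{g}.
\]

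Next, I would use the standard identity $\mathrm{ad}_{Tv}=T\circ \mathrm{ad}_v\circ T^{-1}$ valid for every Lie algebra automorphism $T$ and every $v\in\mathfrak{g}$. This follows directly from $[Tv,w]_\mathfrak{g}=T[v,T^{-1}w]_\mathfrak{g}$, which in turn expresses the defining property of $T$ being a morphism of $[\cdot,\cdot]_\mathfrak{g}$. Substituting into the composition inside the trace, all intermediate $T^{-1}\circ T$ factors collapse, yielding
\[
\mathrm{ad}_{Tv_{\sigma(1)}}\circ\ldots\circ \mathrm{ad}_{Tv_{\sigma(k)}}=T\circ \bigl(\mathrm{ad}_{v_{\sigma(1)}}\circ\ldots\circ \mathrm{ad}_{v_{\sigma(k)}}\bigr)\circ T^{-1}.
\]
By the conjugation invariance (cyclicity) of the trace,
\[
\mathrm{Tr}\bigl(\mathrm{ad}_{Tv_{\sigma(1)}}\circ\ldots\circ \mathrm{ad}_{Tv_{\sigma(k)}}\bigr)=\mathrm{Tr}\bigl(\mathrm{ad}_{v_{\sigma(1)}}\circ\ldots\circ \mathrm{ad}_{v_{\sigma(k)}}\bigr).
\]
Summing over $\sigma\in S_k$ gives $b(Tv_1,\ldots,Tv_k)=b(v_1,\ldots,v_k)$, completing the argument.

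There is essentially no substantive obstacle here: the symmetrization over $S_k$ is irrelevant for the invariance itself (each individual trace term is already $\mathrm{Aut}(\mathfrak{g})$-invariant by the above calculation) and serves only to make $b$ totally symmetric as a $k$-linear map. The only delicate point worth flagging is the appeal to Proposition \ref{Out}, which identifies $GL(\widehat{\mathrm{ad}})$ with the connected component $\mathrm{Aut}_c(\mathfrak{g})$; if one preferred, one could instead verify the $\mathfrak{g}$-invariance condition (\ref{ginva}) directly by differentiating $t\mapsto b(e^{t\,\widehat{\mathrm{ad}}_D}v_1,\ldots,e^{t\,\widehat{\mathrm{ad}}_D}v_k)$ at $t=0$ for $D\in\mathfrak{aut}(\mathfrak{g})\subset\mathrm{Der}(\mathfrak{g})$, obtaining $\sum_{i}b(v_1,\ldots,Dv_i,\ldots,v_k)=0$ by the same conjugation/trace identity applied infinitesimally.
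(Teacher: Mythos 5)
Your argument is correct and follows essentially the same route as the paper: reduce $\mathfrak{aut}(\mathfrak{g})$-invariance to invariance under $\mathrm{Aut}_c(\mathfrak{g})$ via Proposition \ref{prop:glrho_ginv}, use the identity $\mathrm{ad}_{Tv}=T\circ\mathrm{ad}_v\circ T^{-1}$ to conjugate the composition, and conclude by the conjugation invariance of the trace. No gaps.
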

\begin{proof}
From Proposition \ref{prop:glrho_ginv}, the map $b$ is $\mathfrak{aut}(\mathfrak{g})$-invariant if and only if $T^* b = b$ for every $T\in {\rm Aut}_c(\mathfrak{g})$. If $T \in {\rm Aut}(\mathfrak{g})$, then
\begin{equation*}
{\rm ad}_{Tv_1}v_2 = [Tv_1, v_2] = [Tv_1, TT^{-1}v_2] = T[v_1, T^{-1}v_2] = T \circ {\rm ad}_{v_1} \circ T^{-1} v_2,\qquad \forall v_1,v_2\in \mathfrak{g}.
\end{equation*}
For arbitrary $v_1,\ldots, v_k \in \mathfrak{g}$, one gets
\begin{equation*}
T^* b(v_1, \ldots, v_k)\!\! =\!\!\! \sum_{\sigma \in S_k} {\rm Tr}\left({\rm ad}_{Tv_{\sigma (1)}} \circ \ldots \circ {\rm ad}_{Tv_{\sigma (k)}}\right)\!
=\!\!\!  \sum_{\sigma \in S_k} {\rm Tr}(T \circ {\rm ad}_{v_{\sigma (1)}} \circ \ldots \circ {\rm ad}_{v_{\sigma (k)}}\circ T^{-1}) \!\!=\!\! b(v_1, \ldots, v_k).
\end{equation*}\vskip-0.6cm
\end{proof}

Following the idea of the proof of the latest proposition, we obtain the following corollary.

\begin{corollary}
The $k$-linear totally anti-symmetric map $b:\mathfrak{g}^k\rightarrow \mathbb{R}$ given by
$
b(v_1, \ldots, v_k) := \sum_{\sigma \in S_k} {\rm sg}(\sigma){\rm Tr}({\rm ad}_{v_{\sigma (1)}} \circ \ldots \circ {\rm ad}_{v_{\sigma (k)}})$, for all $v_1, \ldots, v_k \in \mathfrak{g}
$ 
is $\mathfrak{aut}(\mathfrak{g})$-invariant with respect to $(\mathfrak{aut}(\mathfrak{g}), \widehat{{\rm ad}})$.
\end{corollary}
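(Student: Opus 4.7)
The plan is to follow the exact blueprint of the preceding proposition, since the new map $b$ differs from the symmetric one only by the insertion of the sign character ${\rm sg}(\sigma)$, which is a constant with respect to the action of any $T\in\mathrm{Aut}(\mathfrak{g})$ and therefore plays no role in the invariance argument. First, I would invoke Proposition \ref{prop:glrho_ginv} to reduce the claim of $\mathfrak{aut}(\mathfrak{g})$-invariance to showing $T^{*}b=b$ for every $T\in GL(\widehat{\mathrm{ad}})=\mathrm{Aut}_{c}(\mathfrak{g})$.

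The core observation is the conjugation identity $\mathrm{ad}_{Tv}=T\circ\mathrm{ad}_{v}\circ T^{-1}$ valid for every $T\in\mathrm{Aut}(\mathfrak{g})$ and $v\in\mathfrak{g}$, which was already derived in the preceding proposition. Applying it termwise to the composition $\mathrm{ad}_{Tv_{\sigma(1)}}\circ\ldots\circ\mathrm{ad}_{Tv_{\sigma(k)}}$, the intermediate factors $T^{-1}\circ T$ collapse to the identity and leave the single outer conjugation $T\circ(\mathrm{ad}_{v_{\sigma(1)}}\circ\ldots\circ\mathrm{ad}_{v_{\sigma(k)}})\circ T^{-1}$. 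The cyclic invariance of the trace then erases $T$ and $T^{-1}$, so that each summand ${\rm sg}(\sigma)\,\mathrm{Tr}(\mathrm{ad}_{Tv_{\sigma(1)}}\circ\ldots\circ\mathrm{ad}_{Tv_{\sigma(k)}})$ is equal to ${\rm sg}(\sigma)\,\mathrm{Tr}(\mathrm{ad}_{v_{\sigma(1)}}\circ\ldots\circ\mathrm{ad}_{v_{\sigma(k)}})$. Summing over $\sigma\in S_{k}$ yields $T^{*}b=b$.

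There is essentially no obstacle here: total antisymmetry of $b$ is guaranteed by construction (the sum over $S_{k}$ with sign), and the $\mathfrak{aut}(\mathfrak{g})$-invariance is obtained mechanically from the preceding corollary's argument. The only bookkeeping point worth highlighting is that the sign factor ${\rm sg}(\sigma)$ is independent of $T$, so it can be pulled outside the trace manipulation and does not interfere with the conjugation cancellation; this is precisely why the symmetric and antisymmetric variants admit parallel proofs. A final appeal to Proposition \ref{prop:glrho_ginv} converts the $GL(\widehat{\mathrm{ad}})$-invariance back to $\mathfrak{aut}(\mathfrak{g})$-invariance with respect to $(\mathfrak{aut}(\mathfrak{g}),\widehat{\mathrm{ad}})$, completing the argument.
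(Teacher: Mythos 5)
Your proposal is correct and follows exactly the route the paper intends: the paper proves the corollary by remarking that it follows from the argument of the preceding proposition, which is precisely the conjugation-plus-cyclic-trace computation you reproduce, with the observation that ${\rm sg}(\sigma)$ is unaffected by $T$. Nothing further is needed.
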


Let us  prove that a polynomial Casimir element of order $k$, i.e. an element $C \in S(\mathfrak{g}^{\otimes k})$, satisfying that $\mathcal{L}_vC = 0$ for every $v\in \mathfrak{g}$, gives rise to a $\mathfrak{g}$-invariant $k$-linear symmetric map on $\mathfrak{g}$. Recall that $\kappa_{\mathfrak{g}}$ leads to a map $\widetilde{\kappa}_\mathfrak{g}: v\in \mathfrak{g} \mapsto \kappa_{\mathfrak{g}}(v,\cdot)\in \mathfrak{g}^*$ and there exists a natural isomorphism $\mathfrak{g}^{\otimes k} \simeq \left[(\mathfrak{g}^*)^{\otimes k}\right]^{*}$.

\begin{theorem}\label{PolPol}
Every polynomial Casimir element $C$ of order $k$ on a Lie algebra $\mathfrak{g}$ induces a $\mathfrak{g}$-invariant $k$-linear symmetric map on $\mathfrak{g}$ given by $b(v_1,\ldots,v_k):=C(\widetilde{\kappa}_\mathfrak{g}(v_1),\ldots,\widetilde{\kappa}_\mathfrak{g}(v_k))$ for every $v_1, \ldots, v_k \in \mathfrak{g}$.
\end{theorem}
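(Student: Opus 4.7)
The plan is to unpack the given definition, translate everything into the language of the coadjoint representation on $\mathfrak{g}^*$, and then transport $\mathfrak{g}$-invariance from $\mathfrak{g}^*$ back to $\mathfrak{g}$ via $\widetilde{\kappa}_\mathfrak{g}$. First, using the isomorphism $\mathfrak{g}^{\otimes k} \simeq [(\mathfrak{g}^*)^{\otimes k}]^*$, I would view $C$ as a totally symmetric $k$-linear map $\widehat{C}:(\mathfrak{g}^*)^{\otimes k}\to\mathbb{R}$. Symmetry and $k$-linearity of $b$ then drop out immediately: $b$ is the composition of the linear map $\widetilde{\kappa}_\mathfrak{g}$ applied in each slot with the symmetric multilinear form $\widehat{C}$.

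The substantive step is $\mathfrak{g}$-invariance in the sense of Definition \ref{ginv}. I would first rewrite the Casimir condition $\mathcal{L}_v C=0$ in terms of $\widehat{C}$. Recall that the Lie derivative on $\mathfrak{g}^{\otimes k}$ acts as a derivation extending $\mathrm{ad}$; a short computation in coordinates shows that, for arbitrary $\alpha_1,\ldots,\alpha_k\in\mathfrak{g}^*$ and $v\in\mathfrak{g}$,
\begin{equation*}
\langle \mathcal{L}_v C,\alpha_1\otimes\cdots\otimes \alpha_k\rangle = -\sum_{j=1}^{k}\widehat{C}(\alpha_1,\ldots,\mathrm{ad}^*_v\alpha_j,\ldots,\alpha_k),
\end{equation*}
where $\mathrm{ad}^*_v\alpha := -\alpha\circ \mathrm{ad}_v$ is the coadjoint representation. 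Hence $\mathcal{L}_v C=0$ is equivalent to $\widehat{C}$ being $\mathfrak{g}$-invariant with respect to the $\mathfrak{g}$-module $(\mathfrak{g}^*,\mathrm{ad}^*)$.

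The second ingredient is that $\widetilde{\kappa}_\mathfrak{g}$ is equivariant between $(\mathfrak{g},\mathrm{ad})$ and $(\mathfrak{g}^*,\mathrm{ad}^*)$: this is precisely the ad-invariance of the Killing form, since for any $v,v_1,w\in\mathfrak{g}$,
\begin{equation*}
\widetilde{\kappa}_\mathfrak{g}(\mathrm{ad}_v v_1)(w)=\kappa_\mathfrak{g}(\mathrm{ad}_v v_1,w)=-\kappa_\mathfrak{g}(v_1,\mathrm{ad}_v w)=(\mathrm{ad}^*_v\widetilde{\kappa}_\mathfrak{g}(v_1))(w).
\end{equation*}
Combining the two, I would compute
\begin{equation*}
\sum_{j=1}^{k} b(v_1,\ldots,\mathrm{ad}_v v_j,\ldots,v_k)=\sum_{j=1}^{k}\widehat{C}\bigl(\widetilde{\kappa}_\mathfrak{g}(v_1),\ldots,\mathrm{ad}^*_v\widetilde{\kappa}_\mathfrak{g}(v_j),\ldots,\widetilde{\kappa}_\mathfrak{g}(v_k)\bigr)=0,
\end{equation*}
where the last equality is the coadjoint $\mathfrak{g}$-invariance of $\widehat{C}$ extracted in the previous paragraph. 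This is exactly the condition in Definition \ref{ginv}, so $b$ is $\mathfrak{g}$-invariant with respect to $(\mathfrak{g},\mathrm{ad})$.

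I do not expect any serious obstacle; the only minor pitfall is bookkeeping of signs when translating $\mathcal{L}_v C=0$ to a statement about $\widehat{C}$ under the coadjoint action, so I would make sure that computation is done explicitly on decomposable tensors. Everything else—symmetry, $k$-linearity, and the final identity—is a formal consequence.
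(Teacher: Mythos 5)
Your proposal is correct and follows essentially the same route as the paper's proof: both rest on the equivariance $\widetilde{\kappa}_\mathfrak{g}\circ\mathrm{ad}_v=\mathrm{ad}_v^*\circ\widetilde{\kappa}_\mathfrak{g}$ (up to the sign convention chosen for the coadjoint action) derived from the ad-invariance of the Killing form, combined with the reformulation of $\mathcal{L}_vC=0$ as coadjoint invariance of $C$ viewed as a multilinear form on $\mathfrak{g}^*$. The only differences are cosmetic: you use the signed coadjoint convention $\mathrm{ad}^*_v\alpha=-\alpha\circ\mathrm{ad}_v$ where the paper uses the plain transpose together with $\mathcal{L}_v\theta=-\mathrm{ad}^*_v\theta$, and you explicitly note symmetry and multilinearity, which the paper leaves implicit.
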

\begin{proof} We have
$
\sum_{j=1}^{k} b(\textrm{ad}_v v_j,v_1, \ldots,\widehat{v_j},\ldots, v_k) = \sum_{j=1}^{k} C(\widetilde \kappa_\mathfrak{g}(\textrm{ad}_v v_j),\widetilde \kappa_\mathfrak{g}(v_1), \ldots, \widehat{\kappa_\mathfrak{g}(v_j)},  \ldots, \widetilde \kappa_\mathfrak{g}(v_k)).
$
 Since $\kappa_\mathfrak{g}$ is $\mathfrak{g}$-invariant, one gets that 
 $$
[{\rm ad}_v^*\circ\widetilde{\kappa}_\mathfrak{g}(v_1)](v_2)\!=\!\widetilde{\kappa}_\mathfrak{g}(v_1)({\rm ad}_vv_2)\!=\!\kappa_{\mathfrak{g}}(v_1,{\rm ad}_vv_2)\!=\!-\kappa_{\mathfrak{g}}({\rm ad}_vv_1,v_2)\!=\!-[\widetilde{\kappa}_\mathfrak{g}\circ {\rm ad}_v(v_1)](v_2),\quad \!\!\!\forall v,v_1,v_2\in\mathfrak{g}.
 $$
 Hence,
 $\widetilde\kappa_\mathfrak{g}\circ {\rm ad}_v=-{\rm ad}_v^*\circ \widetilde\kappa_\mathfrak{g}$ for every $v\in \mathfrak{g}$. As $C$ is a Casimir element, $\mathcal{L}_vC=0$, which along with the above expression and the fact that $\mathcal{L}_v\theta=-{\rm ad}^*_v\theta$ for every $\theta \in \mathfrak{g}^*$ (where $\theta$ can be understood also as a left-invariant one-form on a Lie group $G$ with Lie algebra $\mathfrak{g}$) gives us, for all $ v_1,\ldots,v_k,v\in\mathfrak{g}$, that
 \vskip -0.7cm
\begin{multline*}
\!\sum_{j=1}^{k} b( \textrm{ad}_v v_j,v_1, \ldots,\widehat{v_j}, \ldots, v_k)\! =\!-\!\sum_{j=1}^{k} \!C({\rm ad}^*_v\widetilde\kappa_\mathfrak{g}(v_j),\widetilde\kappa_\mathfrak{g}(v_1), \ldots, \widehat{\widetilde{\kappa}_\mathfrak{g}(v_j)}, \ldots, \widetilde\kappa_\mathfrak{g}(v_k)) 
\!\\=\!(\mathcal{L}_v C)(\widetilde\kappa_\mathfrak{g}(v_1), \ldots, \widetilde\kappa_\mathfrak{g}(v_k)) \!=\! 0.
\end{multline*}
 \vskip -0.7cm
\end{proof}

If $\mathfrak{g}$ is semi-simple, then the proof of Theorem \ref{PolPol} can be reversed and a $\mathfrak{g}$-invariant $k$-linear symmetric amounts to a Casimir element. If $\mathfrak{g}$ is not semi-simple, $\widetilde\kappa_\mathfrak{g}$ is not invertible and $\mathfrak{g}$-invariant multilinear symmetric maps may be more versatile, as they not need to come from Casimir elements.

\section{On the existence of \texorpdfstring{$\mathfrak{g}$}{}-invariant bilinear maps}\label{Invgbil}

It may be difficult to derive $\mathfrak{g}$-invariant maps  when $\mathfrak{g}$ is not a low-dimensional Lie algebra. Next, a series of observations simplify their calculation. Our results will enable us to easily determine $\mathfrak{g}$-invariant metrics three-dimensional Lie algebras in Section \ref{Classification}.
\begin{proposition}\label{prop:sym_form}
Let $b$ be $\mathfrak{g}$-invariant bilinear and symmetric  on a $\mathfrak{g}$-module $V$, then 
$
b(vx,x)=0
$ and $
b({\rm Im}\, \rho_v, {\rm ker}\, \rho_v) = 0$ for every $v\in \mathfrak{g}$ and $x\in V$. 
Let $\omega$ be a $\mathfrak{g}$-invariant bilinear anti-symmetric map on $\mathfrak{g}$ relative to the $\mathfrak{g}$-module $(\mathfrak{g},{\rm ad})$. Then, $\omega({\rm ad}_v(w),w)=0$ for every $v, w\in \mathfrak{g}$.
\end{proposition}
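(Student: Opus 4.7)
The plan is to derive all three assertions directly from the defining identity of $\mathfrak{g}$-invariance in \eqref{ginva} specialized to $k=2$, namely
\[
b(\rho_v x_1, x_2) + b(x_1, \rho_v x_2) = 0,\qquad \forall v\in\mathfrak{g},\ x_1,x_2\in V,
\]
with an analogous identity for $\omega$ with respect to $(\mathfrak{g},{\rm ad})$. The entire proof reduces to choosing the two module arguments, and in the third case the Lie-algebra element as well, judiciously.

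For the first claim, I would set $x_1 = x_2 = x$, obtaining $b(\rho_v x, x) + b(x, \rho_v x) = 0$; symmetry of $b$ then collapses this to $2b(\rho_v x, x) = 0$, hence $b(vx,x) = 0$. For the second claim, I would fix any $x \in V$ and any $y \in \ker \rho_v$ and apply the invariance identity with $x_1 = x$, $x_2 = y$: the second summand $b(x, \rho_v y)$ vanishes outright because $\rho_v y = 0$, leaving $b(\rho_v x, y) = 0$ for arbitrary $x$, which is exactly the orthogonality $b({\rm Im}\,\rho_v, \ker\rho_v) = 0$.

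The anti-symmetric claim is the only subtle step, and it is where I expect the main obstacle. The direct analogue of the symmetric argument---plugging $w_1 = w_2 = w$ into $\omega({\rm ad}_v w_1, w_2) + \omega(w_1, {\rm ad}_v w_2) = 0$ and invoking anti-symmetry of $\omega$---collapses to the tautology $0=0$ and yields no information. The fix is to exchange which variable plays the role of the Lie-algebra element: I would apply the invariance identity with $w$ in the $\mathfrak{g}$-slot and $(v, w)$ in the two module slots, obtaining $\omega({\rm ad}_w v, w) + \omega(v, {\rm ad}_w w) = 0$. The second term vanishes because ${\rm ad}_w w = [w,w]=0$, and the anti-symmetry identity ${\rm ad}_w v = -{\rm ad}_v w$ converts the remaining term into $-\omega({\rm ad}_v w, w) = 0$, as required.
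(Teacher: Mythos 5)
Your proposal is correct and follows essentially the same route as the paper's proof: the first two claims via the $k=2$ invariance identity with symmetry (resp.\ a kernel element in the second slot), and the anti-symmetric claim by placing $w$ in the Lie-algebra slot so that $[w,w]=0$ kills one term and bracket anti-symmetry converts the other. (Your observation that the third claim really uses anti-symmetry of the Lie bracket rather than of $\omega$ is accurate and slightly sharper than the paper's phrasing.)
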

\begin{proof}
Using the $\mathfrak{g}$-invariance and symmetricity of $b$, we get
$b(vx, x) = -b(x, vx) = -b(vx,x)$ for all $v \in \mathfrak{g}$ and $x\in V$. Therefore, $b(vx, x) = 0$ for every $x\in V$ and $v\in \mathfrak{g}$. Meanwhile, every $v_2\in {\rm Im}\rho_{v_1}$ can be written as $v_2 := \rho_{v_1}(v_3)$ for some $v_3\in \mathfrak{g}$. Assume that $v_4 \in {\rm ker}\, \rho_{v_1}$. As $b$ is $\mathfrak{g}$-invariant and symmetric, $b(v_2, v_4) = b(\rho_{v_1}(v_3), v_4) = -b(v_3, \rho_{v_1}(v_4)) = 0$ and $b({\rm Im}\,\rho_v,\ker \rho_v)=0$.

From the $\mathfrak{g}$-invariance and anti-symmetricity of $\omega$, one gets $\omega(\textrm{ad}_{v_1}(v_2), v_2) = -\omega(\textrm{ad}_{v_2}(v_1), v_2) = \omega(v_1, \textrm{ad}_{v_2}(v_2)) = 0$ for every $v_1,v_2 \in \mathfrak{g}$.
\end{proof}

The following proposition is a generalization of the one above.
\begin{proposition}\label{prop:form_cond}
Let $b:V\otimes V\rightarrow \mathbb{R}$ be a $\mathfrak{g}$-invariant bilinear map relative to the $\mathfrak{g}$-module $V$. If $W$ is a two-dimensional linear subspace of $V$ satisfying that $vW\subset W$ for any $v \in \mathfrak{g}$, then for any linearly independent $f_s, f_t \in V$ one has
$$
{\rm Tr}(\rho_v|_W)b(f_s, f_t)f_t\wedge f_s =b(f_s, f_s)(vf_t)\wedge f_t + b(f_t, f_t) f_s\wedge(vf_s).
$$\end{proposition}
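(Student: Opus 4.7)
The plan is to reduce the identity to a direct computation in the two-dimensional subspace $W$. Since $f_s,f_t$ are linearly independent and (as the formula requires $vf_s,vf_t$ to lie in $\mathrm{span}\{f_s,f_t\}$) must be read as lying in $W$, they form a basis of $W$. The $\rho_v$-invariance of $W$ then lets me write
\begin{equation*}
vf_s = \alpha f_s + \beta f_t,\qquad vf_t = \gamma f_s + \delta f_t,
\end{equation*}
for scalars $\alpha,\beta,\gamma,\delta$ depending on $v$, so that $\mathrm{Tr}(\rho_v|_W) = \alpha+\delta$.

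Next, I would expand both sides of the claimed identity in the basis $\{f_s\wedge f_t\}$ of $\Lambda^2 W$. Using $f_s\wedge f_s = f_t\wedge f_t = 0$, the right-hand side collapses to
\begin{equation*}
b(f_s,f_s)(vf_t)\wedge f_t + b(f_t,f_t)f_s\wedge(vf_s) = \bigl(\gamma\,b(f_s,f_s)+\beta\,b(f_t,f_t)\bigr)\,f_s\wedge f_t,
\end{equation*}
while the left-hand side equals $-(\alpha+\delta)\,b(f_s,f_t)\,f_s\wedge f_t$. Thus the proposition is equivalent to the scalar equation
\begin{equation*}
(\alpha+\delta)\,b(f_s,f_t) + \gamma\,b(f_s,f_s) + \beta\,b(f_t,f_t) = 0.
\end{equation*}

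Finally, I would invoke the $\mathfrak{g}$-invariance of $b$ applied to the pair $(f_s,f_t)$, namely $b(vf_s,f_t)+b(f_s,vf_t)=0$. Substituting the decompositions of $vf_s$ and $vf_t$ from the first step and expanding by bilinearity yields exactly the scalar identity above, concluding the argument. No step presents a real obstacle; the only care needed is the observation that the hypothesis $vW\subset W$ is what makes the expansion meaningful and is what forces the cross-terms in the wedge products to vanish so cleanly. This same pattern will be re-used later whenever one needs to test $\mathfrak{g}$-invariance of a bilinear form against a two-dimensional stable subspace, which is why formulating it as a standalone proposition is useful.
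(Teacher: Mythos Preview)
Your proof is correct and follows essentially the same approach as the paper: write $vf_s$ and $vf_t$ in the basis $\{f_s,f_t\}$ of $W$, expand both sides of the identity in terms of $f_s\wedge f_t$, and recognise the resulting scalar equation as precisely the $\mathfrak{g}$-invariance condition $b(vf_s,f_t)+b(f_s,vf_t)=0$. The paper's version is slightly terser (it omits the explicit wedge-product expansion and just says ``after rearranging''), and your remark that the statement tacitly requires $f_s,f_t\in W$ is a fair observation.
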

\begin{proof} Since $vW\subset W$, there exists constants $\alpha_1,\alpha_2,\beta_1,\beta_2\in \mathbb{K}$ such that
$vf_s=\alpha_1f_s+\alpha_2f_t$ and $vf_t=\beta_1f_s+\beta_2f_t$.
The $\mathfrak{g}$-invariance of $b$ ensures that
\begin{equation*}
\alpha_1 b(f_s, f_t) + \alpha_2 b(f_t, f_t) = b(vf_s, f_t)
= -b(f_s, vf_t) = -\beta_1 b(f_s, f_s) - \beta_2 b(f_s, f_t).
\end{equation*}
After rearranging, the above expression gives the stated formula.
\end{proof}
\begin{example}\label{ex:heisenberg} 
Consider the three-dimensional Heisenberg Lie algebra $\mathfrak{h}$. Take a basis $\{e_1,e_2,e_3\}$ of $\mathfrak{h}$ as in Table \ref{tabela3w}. Since $\mathfrak{h}$ is nilpotent, its Killing form vanishes \cite[pg. 480]{Ha15}. In the bases $\{e_{12}, e_{13}, e_{23}\}$ and $\{e_{123}\}$ of $\Lambda^2 \mathfrak{h}$ and $\Lambda^3 \mathfrak{h}$, respectively, a $\mathfrak{h}$-invariant $b$ and its extensions to $\Lambda^2 \mathfrak{h}$ and $\Lambda^3 \mathfrak{h}$ given by Propositions \ref{extension} and \ref{prop:sym_form} read
$$
	[b]:=\left(\begin{array}{ccc}
	\alpha_1&\alpha_2&0\\
	\alpha_3&\alpha_4&0\\
    0&0&0
	\end{array}\right), \quad
[b_{\Lambda^2\mathfrak{h}}]:=\left(\begin{array}{ccc}
	\alpha_1 \alpha_4 - \alpha_2 \alpha_3&0&0\\
	0&0&0\\
	0&0&0\\
	\end{array}\right),\quad 
	[b_{\Lambda^3\mathfrak{g}}]:=\left(0\right), \quad \alpha_i \in \mathbb{R}.
	$$
Using again Proposition \ref{prop:sym_form}, we can compute the general $\mathfrak{h}$-invariant bilinear map $\widetilde{b}$ on $\Lambda^2\mathfrak{h}$
$$
[\widetilde{b}]:=\left(\begin{array}{ccc}
	\beta_3&\beta_2&\beta_1\\
	\beta_4&0&0\\
	\beta_5&0&0\\
	\end{array}\right), \quad \forall \beta_i \in \mathbb{R}.
$$
In the symmetric case, Proposition \ref{prop:sym_form} yields $\beta_1 = \beta_5 = 0$ and $\beta_2 = \beta_4 = 0$. In the anti-symmetric case, one has the $\mathfrak{h}$-invariant maps given by $\beta_3=\beta_1=\beta_5=0$ and $\beta_2=-\beta_4$. Previous bilinear forms are $\mathfrak{h}$-invariant. Contrary to symmetric forms, antisymetric ones are not exploited to classify Lie bialgebras (cf. \cite{Farinati,Gomez}).
\demo
\end{example}
\begin{example}
Let us consider the Lie algebra $\mathfrak{r}_{3,1} := \langle e_1, e_2, e_3 \rangle$ with commutation relations given in Table \ref{tabela3w}. Proposition \ref{prop:sym_form} yields necessary conditions for a bilinear form $\omega_{\Lambda^2\mathfrak{r}_{3,1}}$ on $\Lambda^2\mathfrak{r}_{3,1}$ to be $\mathfrak{r}_{3,1}$-invariant. In particular,
\vskip-0.5cm$$
[\omega_{\Lambda^2\mathfrak{r}_{3,1}}]:=\left(\begin{array}{ccc}
	0&a&0\\
	b&0&0\\
	0&0&0\\
	\end{array}\right), \quad \forall a,b \in \mathbb{R},
$$
in the basis $\{e_{12}, e_{13}, e_{23}\}$ of $\Lambda^2 \mathfrak{r}_{3,1}$. From Proposition \ref{prop:form_cond} and assuming $v:=e_1$, one gets that $a=b=0$ and there are no $\mathfrak{r}_{3,1}$-invariant forms on $\Lambda^2\mathfrak{r}_{3,1}$.
\demo
\end{example}

\section{Graded Lie algebras, their Grassmann algebras, and CYBEs}\label{sect:gradation}

Let us show that a particular type of graded Lie algebra $\mathfrak{g}$ induces a decomposition in $\Lambda \mathfrak{g}$ compatible with its algebraic Schouten bracket in a specific manner to be detailed next. This will be employed to study the structure and solutions to the mCYBE for a very general class of Lie algebras. 

\begin{definition} We say that $\mathfrak{g}$ admits a $G$-{\it gradation} if
$
\mathfrak{g}= \bigoplus_{\alpha \in G\subset \mathbb{R}^n} \mathfrak{g}^{(\alpha)},$
where $(G\subset \mathbb{R}^n,\star)$ is a commutative group, the $\mathfrak{g}^{(\alpha)}$ are subspaces  of $\mathfrak{g}$, and $[\mathfrak{g}^{(\alpha)},\mathfrak{g}^{(\beta)}]\subset \mathfrak{g}^{(\alpha\star\beta)}$ for all $\alpha, \beta \in G$. The spaces $\mathfrak{g}^{(\alpha)}$, for $\alpha\in G$, are called the {\it homogeneous spaces} of the gradation and $\alpha$ is the degree of the space.

\end{definition}

Although every $\mathfrak{g}$ admits a $\mathbb{Z}$-gradation with $\mathfrak{g}^{(0)}=\mathfrak{g}$ and  $\mathfrak{g}^{(\alpha)}=\langle 0\rangle$ with $\alpha\neq 0$, this gradation will not be useful to our purposes, as follows from  posterior considerations. If $\mathfrak{g}$ admits a $G$-gradation and $G$ is known from context or of minor importance, we will simply say that $\mathfrak{g}$ admits a gradation.  

\begin{example} \label{Gradsu2} It stems from  the commutation relations in Table \ref{tabela3w} that $\mathfrak{su}_2$ is isomorphic to the Lie algebra on $\mathbb{R}^3$ with the Lie bracket given by the vector product $\times$. Consider a basis $\{v, v_1^{\perp}, v_2^{\perp}\}$ of $\mathbb{R}^3$, where $v$ is a unit vector, $v_1^\perp$ is perpendicular to $v$, and $v_2^\perp :=v\times v_1^\perp$. Then, $\mathfrak{su}_2$ admits a $\mathbb{Z}_2$-gradation $\mathfrak{g}^{(0)}=\langle v\rangle$ and $\mathfrak{g}^{(1)}=\langle v_1^\perp,v_2^\perp\rangle$. Since $v$ is far from being established canonically, $\mathfrak{su}(2)$ admits several $\mathbb{Z}_2$-gradations.
\end{example}

Note that a $G$-gradation is a particular type of graded Lie algebra and  $\star$ need not be the addition in $\mathbb{R}^n$. Although some of our results can be extended for $G$ being a semigroup (under eventual mild assumptions), this will not be necessary in examples of this work and this possibility will be  skipped.

Gradations can be understood as a generalisation of root decompositions which can be applied to general Lie algebras. Indeed, if $\mathfrak{g}$ admits a root decomposition, it admits a  $\mathbb{Z}^k$-gradation relative to the group structure $(\mathbb{Z}^k,+)$. Moreover, Table \ref{tabela3w} shows that nontrivial gradations can be found for all three-dimensional Lie algebras. Figures \ref{so22_diags} and \ref{so32_diags} give $\mathbb{Z}^2$-gradations for the special pseudo-orthogonal Lie algebras $\mathfrak{so}(3,2)$ and $\mathfrak{so}(2,2)$ (see \cite{CO13,HWZ92}).  

A particular type of gradation whose properties are very close to standard root decompositions, but applicable to general Lie algebras, is given in the next definition. This will be used to easily obtain $\mathfrak{g}$-invariant subspaces in $\Lambda\mathfrak{g}$ for Lie algebras.

\begin{definition}
	We say that $\mathfrak{g}$ admits a {\it root $\mathbb{Z}^k$-gradation} if  it admits a $\mathbb{Z}^k$-gradation such that $\dim \mathfrak{g}^{(0)}=k$ and there exists an injective group morphism $T:\alpha\in \mathbb{Z}^k \mapsto \talpha\in \mathfrak{g}^{(0)*}$ such that $[e,e^{(\alpha)}]=\talpha(e)e^{(\alpha)}$ for every $e\in \mathfrak{g}^{(0)}$ and $e^{(\alpha)}\in \mathfrak{g}^{(\alpha)}$.
\end{definition}

It is immediate that $\mathfrak{g}^{(0)}$ is an abelian Lie algebra and every root decomposition gives rise to a root $\mathbb{Z}^k$-gradation. For instance, Figure \ref{ex:decomp_so22} shows a root decomposition for $\mathfrak{so}(2,2)$ that gives rise to a root $\mathbb{Z}^2$-gradation for $T:(i,j)\in \mathbb{Z}^2\mapsto ie^0+jh^0\in (\mathfrak{so}(2,2)^{(0)})^*$, where $\{e^0,f^0\}$
form a dual basis to the basis $\{e_0,f_0\}$ of $\mathfrak{so}(2,2)^{(0)}$.

The following theorem shows that a $G$-gradation on $\mathfrak{g}$ gives rise to a decomposition of each $\Lambda^m\mathfrak{g}$ into the so-called {\it homogeneous spaces} in such a way that the Schouten bracket maps homogeneous spaces onto homogeneous spaces in a manner determined by the group structure in $G$. 

\begin{theorem}\label{thm:root}
If $\mathfrak{g}$ admits a $G$-gradation  $
\mathfrak{g}=\bigoplus_{\alpha \in G\subset \mathbb{R}^n} \mathfrak{g}^{(\alpha)},
$ then each $\Lambda^m\mathfrak{g}$ admits a decomposition into so-called {\it homogeneous spaces} of the form
\begin{equation}\label{dec}
\Lambda^m\mathfrak{g}=\bigoplus_{\alpha\in G}(\Lambda^m\mathfrak{g})^{(\alpha)},\qquad (\Lambda^m\mathfrak{g})^{(\alpha)}:=\!\!\!\!\!\!\!\!\bigoplus_{\stackrel{\alpha_1,\ldots,\alpha_m\in G}{\alpha_1\star\ldots\star\alpha_m=\alpha}}\!\!\!\!\!\!\!\!\mathfrak{g}^{(\alpha_1)}\wedge\ldots\wedge \mathfrak{g}^{(\alpha_m)},
\end{equation}
Moreover, $[(\Lambda^p\mathfrak{g})^{(\alpha)},(\Lambda^q\mathfrak{g})^{(\beta)}]_{S} \subset \Lambda^{p+q-1}\mathfrak{g}^{(\alpha\star\beta)}$, for all $p,q\in \mathbb{Z}$, and $\alpha,\beta\in G.$
\end{theorem}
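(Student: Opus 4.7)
The plan is to handle the two assertions separately: the direct-sum decomposition (\ref{dec}) comes from bookkeeping with a gradation-adapted basis, and the Schouten-bracket compatibility reduces to a one-line computation on decomposable elements via formula (\ref{multi_sn}).

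First I would establish (\ref{dec}). Choose a basis $B_\alpha$ of each homogeneous subspace $\mathfrak{g}^{(\alpha)}$; then $B:=\bigcup_{\alpha\in G} B_\alpha$ is a basis of $\mathfrak{g}$, and the wedge products of $m$ distinct elements of $B$, taken in a fixed total order on $B$, form a basis of $\Lambda^m\mathfrak{g}$. Each such basis element $e_{i_1}\wedge\ldots\wedge e_{i_m}$ with $e_{i_r}\in B_{\alpha_r}$ carries a well-defined total degree $\alpha_1\star\ldots\star\alpha_m\in G$; well-definedness relies on $(G,\star)$ being commutative, so the $\star$-product is invariant under any reordering of the factors. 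Grouping the basis of $\Lambda^m\mathfrak{g}$ by total degree partitions it according to the subspaces $(\Lambda^m\mathfrak{g})^{(\alpha)}$ of the statement, which shows at once that the sum is direct and equals $\Lambda^m\mathfrak{g}$.

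For the Schouten-bracket inclusion, by bilinearity of $[\cdot,\cdot]_S$ it suffices to test it on decomposable vectors $X=x_1\wedge\ldots\wedge x_p$ with $x_i\in\mathfrak{g}^{(\alpha_i)}$ and $\alpha_1\star\ldots\star\alpha_p=\alpha$, and $Y=y_1\wedge\ldots\wedge y_q$ with $y_j\in\mathfrak{g}^{(\beta_j)}$ and $\beta_1\star\ldots\star\beta_q=\beta$. Formula (\ref{multi_sn}) expresses $[X,Y]_S$ as a signed sum of terms of the form
$$[x_i,y_j]\wedge x_1\wedge\ldots\wedge\widehat{x_i}\wedge\ldots\wedge x_p\wedge y_1\wedge\ldots\wedge\widehat{y_j}\wedge\ldots\wedge y_q,$$
and the gradation hypothesis on $\mathfrak{g}$ gives $[x_i,y_j]\in\mathfrak{g}^{(\alpha_i\star\beta_j)}$. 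The factors of such a summand then have degrees whose $\star$-product equals $\alpha_i\star\beta_j\star(\alpha_1\star\ldots\star\widehat{\alpha_i}\star\ldots\star\alpha_p)\star(\beta_1\star\ldots\star\widehat{\beta_j}\star\ldots\star\beta_q)=\alpha\star\beta$, again by commutativity of $\star$. Hence each summand lies in $(\Lambda^{p+q-1}\mathfrak{g})^{(\alpha\star\beta)}$, and so does $[X,Y]_S$. The entire argument is just bookkeeping of degrees in the commutative group $(G,\star)$; the only subtle point to flag is that commutativity of $\star$ is what makes the total degree of a wedge product well defined and makes the degree arithmetic in the Schouten bracket independent of the order of the remaining factors.
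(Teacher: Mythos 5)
Your proof is correct and follows essentially the same route as the paper: a gradation-adapted basis of $\mathfrak{g}$ yields a basis of $\Lambda^m\mathfrak{g}$ graded by total degree (using commutativity of $\star$), and the bracket inclusion is checked termwise on decomposable elements via formula (\ref{multi_sn}). Your write-up is in fact slightly more explicit than the paper's about why commutativity makes the total degree well defined.
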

\begin{proof} 
The $m$ exterior products among the elements of a basis  $\mathfrak{g}$ adapted to its $G$-gradation give rise to a basis of $\Lambda^m\mathfrak{g}$. Since $G$ is commutative, the exterior product of every exterior product $e^{(\alpha_1)}\wedge\ldots\wedge e^{(\alpha_m)}$, where $e^{(\alpha_i)}\in \mathfrak{g}^{(\alpha_i)}$ for every $i=1,\ldots,m$, belongs to $(\Lambda^m\mathfrak{g})^{(\alpha)}$ for $\alpha=\alpha_1\star\ldots\star\alpha_m$. Elements $e^{(\alpha_1)}\wedge\ldots\wedge e^{(\alpha_m)}$ with $\alpha=\alpha_1\star\ldots\star\alpha_m$ span a basis of $(\Lambda^m\mathfrak{g})^{(\alpha)}$. Repeating the previous process for every $\alpha\in G$, we obtain a basis of $\Lambda^m\mathfrak{g}$ and the decomposition in (\ref{dec}). 

Let us analyse the behaviour of the homogeneous spaces $(\Lambda^m\mathfrak{g})^{(\alpha)}$, with $\alpha\in G$ and $m\in \mathbb{Z}$, relative to the algebraic Schouten bracket. The algebraic Schouten bracket of elements of $(\Lambda^p\mathfrak{g})^{(\alpha)}$ and $(\Lambda^q\mathfrak{g})^{(\beta)}$ follows from (\ref{multi_sn}). Due to the gradation in $\mathfrak{g}$, each term of the sum (\ref{multi_sn}) belongs to $(\Lambda^{p+q-1}\mathfrak{g})^{(\alpha\star \beta)}$. 
\end{proof}

\begin{example}\label{ex:decomp_so22} 
Consider the basis $\{e_{-}, e_0, e_{+}, f_{-}, f_0, f_{+}\}$ of $\mathfrak{so}(2,2)$, where $\{e_{-}, e_0, e_{+}\}$ and $\{f_{-}, f_0, f_{+}\}$ are bases of each copy of $\mathfrak{sl}_2$ within $\mathfrak{so}(2,2)$. Then, $\mathfrak{so}(2,2)$ admits a root $\mathbb{Z}^2$-gradation, given in the first diagram of Figure \ref{so22_diags}. This gives rise to $\mathbb{Z}^2$-gradations on $\Lambda^2\mathfrak{so}(2,2)$ and $\Lambda^3(\mathfrak{so}(2,2))$, whose non-zero homogeneous spaces are indicated by blue points in  Figure \ref{so22_diags}.  Hence, $\mathfrak{so}(2,2)^{(0)}=\langle f_0,e_0\rangle$. We write $\{e^0,f^0\}$ for the dual basis of $\{e_0,f_0\}$.  
\begin{figure}[h!]
\begin{center}
\begin{minipage}{\textwidth}
\begin{center}
\begin{tikzpicture}[scale=0.7]
{\tiny 
\draw [help lines,dotted] (-2,-2) grid (2,2);
\draw [-] (-2,0)--(2,0) node[right] {$e^0$};
\draw [-] (0,-2)--(0,2) node[above] {$f^0$};
\color{blue}
\filldraw (1,0) circle (2pt) node[above] {$e_{+}$};
\filldraw (0,1) circle (2pt) node[left] {$f_{+}$};
\filldraw (-1,0) circle (2pt) node[below] {$e_{-}$};
\filldraw (0,-1) circle (2pt) node[right] {$f_{-}$};
\filldraw (0,0) circle (2pt) node[below right] {$e_0$} node[above right] {$f_0$};\color{black}}
\end{tikzpicture}$\quad\quad$
\begin{tikzpicture}[scale=0.7]
{\tiny
\draw [help lines,dotted] (-2,-2) grid (2,2);
\draw [-] (-2,0)--(2,0) node[right] {$e^0$};
\draw [-] (0,-2)--(0,2) node[above] {$f^0$};
\color{blue}
\filldraw (1,0) circle (2pt) node[below] {\color{black}$1$\color{blue}};
\filldraw (1,1) circle (2pt);
\filldraw (0,1) circle (2pt) node[above right] {\color{black}$1$\color{blue}};
\filldraw (-1,0) circle (2pt) node[below] {\color{black}$-1$\color{blue}};
\filldraw (-1,-1) circle (2pt);
\filldraw (0,-1) circle (2pt) node[below] {\color{black}$-1$\color{blue}};
\filldraw (1,-1) circle (2pt);
\filldraw (-1,1) circle (2pt);
\filldraw (0,0) circle (2pt) node[below right] {\color{black}$0$\color{blue}};
\draw[red, dashed] (-1,-1)--(-1,1);
\draw[red, dashed] (-1,1)--(1,1);
\draw[red, dashed] (1,1)--(1,-1);
\draw[red, dashed] (1,-1)--(-1,-1);
\color{black}}
\end{tikzpicture}$\quad\quad$
\begin{tikzpicture}[scale=0.7]
{\tiny 
\draw [help lines,dotted] (-2,-2) grid (2,2);
\draw [-] (-2,0)--(2,0) node[right] {$e^0$};
\draw [-] (0,-2)--(0,2) node[above] {$f^0$};
\color{blue}
\filldraw (1,0) circle (2pt) node[below] {\color{black}$1$\color{blue}};
\filldraw (1,1) circle (2pt);
\filldraw (0,1) circle (2pt) node[above right] {\color{black}$1$\color{blue}};
\filldraw (-1,0) circle (2pt) node[below] {\color{black}$-1$\color{blue}};
\filldraw (-1,-1) circle (2pt);
\filldraw (0,-1) circle (2pt) node[below] {\color{black}$-1$\color{blue}};
\filldraw (1,-1) circle (2pt);
\filldraw (-1,1) circle (2pt);
\filldraw (0,0) circle (2pt) node[below right] {\color{black}$0$\color{blue}};

\color{black}}
\end{tikzpicture}
\end{center}
\caption{Root decomposition of $\mathfrak{so}(2,2)$ (left), and the induced decompositions on $\Lambda^2 \mathfrak{so}(2,2)$ (center), $\Lambda^3 \mathfrak{so}(2,2)$ (right) given by Theorem \ref{thm:root}. Non-zero homogeneous spaces are indicated by blue points. A basis for each homogeneous subspace of $\mathfrak{so}(2,2)$ is detailed in the first diagram. Limit homogeneous spaces are represented by blue points over a red dashed line.}\label{so22_diags}
\end{minipage}
\end{center}
\end{figure}
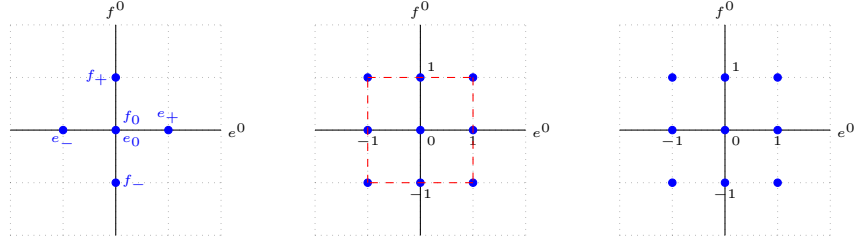

Homogeneous spaces are denoted by their degrees $(i,j)\in \mathbb{Z}^2$. 
The bases for the decompositions (\ref{dec}) of $\Lambda^2 \mathfrak{so}(2,2)$ and $\Lambda^3 \mathfrak{so}(2,2)$ are given in Table \ref{so22_bases}.

\begin{table}[h!]
\centering
\begin{tabular}{|c|c|c|c||c|c|c|}
\hline
$j\backslash i$ & -1 & 0 & 1 & -1 & 0 & 1 \\ \hline
-1 & $e_{-} \wedge f_{-}$ & \parbox[c]{1.3cm}{\vspace{3pt}$e_0 \wedge f_{-},\\ f_{-} \wedge f_0$\vspace{3pt}} & $e_{+} \wedge f_{-}$  & \parbox[c]{2cm}{$e_{-} \!\wedge\! e_0  \!\wedge\!f_{-}\\ e_{-} \!\wedge\! f_{-} \!\wedge\! f_0$} & \parbox[c]{2cm}{\vspace{3pt}$e_0 \!\wedge\! f_{-}  \!\!\wedge\!\! f_0,\\ e_{-} \!\wedge\! e_{+}  \!\!\wedge\!\! f_{-}$\vspace{3pt}} & \parbox[c]{2cm}{$e_0 \!\wedge\! e_{+} \!\wedge\! f_{-}\\ e_{+} \!\wedge\! f_0 \!\wedge\! f_{-}$} \\\hline
0 & \parbox[c]{1.3cm}{$e_{-} \wedge e_0,\\ e_{-} \wedge f_0$} & \parbox[c]{1.3cm}{\vspace{3pt}$ e_0 \wedge f_0,\\ e_{-} \wedge e_{+},\\ f_{-} \wedge f_{+}$\vspace{3pt}} & \parbox[c]{1.3cm}{$e_0 \wedge e_{+},\\ e_{+} \wedge f_0$} & \parbox[c]{2.0cm}{$e_{-} \!\wedge\! e_0 \!\wedge\! f_0,\\ e_{-} \!\wedge\! f_{-} \!\wedge\! f_{+} $} & \parbox[c]{2cm}{\vspace{3pt}$ e_{-} \!\wedge\! e_0 \!\wedge\! e_{+},\\ e_{-} \!\wedge\! e_{+} \!\wedge\! f_0,\\ e_0 \!\wedge\! f_{-} \!\wedge\! f_{+},\\ f_{-} \!\wedge\! f_0 \!\wedge\! f_{+}$\vspace{3pt}} & \parbox[c]{2cm}{$e_0 \!\wedge\! e_{+} \!\wedge\! f_0,\\ e_{+} \!\wedge\! f_{-} \!\wedge\! f_{+}$} \\\hline
1 & $e_{-} \wedge f_{+}$ & \parbox[c]{1.3cm}{$e_0 \wedge f_{+},\\ f_0 \wedge f_{+}$} & $e_{+} \wedge f_{+}$  & \parbox[c]{2cm}{$e_{-} \!\wedge\! e_0 \!\wedge\! f_{+},\\ e_{-} \!\wedge\! f_0 \!\wedge\! f_{+}$} & \parbox[c]{2cm}{\vspace{3pt}$e_0 \!\wedge\! f_0 \!\wedge\! f_{+},\\ e_{-} \!\wedge\! e_{+} \!\wedge\! f_{+}$\vspace{3pt}} & \parbox[c]{2cm}{$e_0 \!\wedge\! e_{+} \!\wedge\! f_+,\\ e_{+} \!\wedge\! f_0 \!\wedge\! f_{+}$} \\[1em]
\hline
\end{tabular}
\caption{Bases for the subspaces $\Lambda^2 \mathfrak{so}(2,2)^{(i,j)}$ (left side) and $\Lambda^3 \mathfrak{so}(2,2)^{(i,j)}$ (right side)}\label{so22_bases}
\end{table}
\end{example}
The following lemma and proposition show that a gradation in $\Lambda \mathfrak{g}$ gives rise to a similar decomposition on $\Lambda_R \mathfrak{g}$.
\begin{lemma}\label{LemgDe} If $\mathfrak{g}$ is a $G$-graded Lie algebra, then $(\Lambda^m\mathfrak{g})^\mathfrak{g}=\bigoplus_{\alpha\in G}(\Lambda^m\mathfrak{g})^{(\alpha)}\cap (\Lambda^m\mathfrak{g})^\mathfrak{g}$
\end{lemma}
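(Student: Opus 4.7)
The plan is to prove two inclusions, where the inclusion $\supseteq$ is trivial because the right-hand side is manifestly contained in both $(\Lambda^m\mathfrak{g})^\mathfrak{g}$ and (by Theorem \ref{thm:root}) in $\Lambda^m\mathfrak{g}$. The real content is the inclusion $\subseteq$: any $\mathfrak{g}$-invariant $m$-vector decomposes into homogeneous pieces that are themselves $\mathfrak{g}$-invariant.

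To establish $\subseteq$, I would take an arbitrary $w \in (\Lambda^m\mathfrak{g})^\mathfrak{g}$ and apply Theorem \ref{thm:root} to write uniquely
\[
w = \sum_{\alpha \in G} w_\alpha, \qquad w_\alpha \in (\Lambda^m\mathfrak{g})^{(\alpha)},
\]
with only finitely many $w_\alpha$ nonzero. The goal is then to show each $w_\alpha$ lies in $(\Lambda^m\mathfrak{g})^\mathfrak{g}$. Since $\mathfrak{g} = \bigoplus_{\beta \in G} \mathfrak{g}^{(\beta)}$ and the bracket $[\cdot, w_\alpha]_S$ is linear in its first argument, it suffices to prove that $[v, w_\alpha]_S = 0$ for every homogeneous $v \in \mathfrak{g}^{(\beta)}$ and every $\beta \in G$.

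Fix such a $v \in \mathfrak{g}^{(\beta)}$. By the second part of Theorem \ref{thm:root}, $[v, w_\alpha]_S \in (\Lambda^m\mathfrak{g})^{(\beta \star \alpha)}$ for every $\alpha \in G$. Hence the identity
\[
0 = [v, w]_S = \sum_{\alpha \in G} [v, w_\alpha]_S
\]
expresses $0$ as a sum of elements lying in the homogeneous subspaces $(\Lambda^m\mathfrak{g})^{(\beta \star \alpha)}$. The key point is that, because $(G,\star)$ is a group, the map $\alpha \mapsto \beta \star \alpha$ is a bijection of $G$; therefore as $\alpha$ ranges over the (finite) support of $w$, the indices $\beta \star \alpha$ are pairwise distinct. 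The directness of the decomposition $\Lambda^m\mathfrak{g} = \bigoplus_{\gamma\in G}(\Lambda^m\mathfrak{g})^{(\gamma)}$ then forces $[v, w_\alpha]_S = 0$ for every $\alpha$.

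Since this holds for every homogeneous $v \in \mathfrak{g}^{(\beta)}$ and every $\beta \in G$, linearity gives $[v, w_\alpha]_S = 0$ for every $v \in \mathfrak{g}$, i.e.\ $w_\alpha \in (\Lambda^m\mathfrak{g})^\mathfrak{g} \cap (\Lambda^m\mathfrak{g})^{(\alpha)}$. This proves the nontrivial inclusion and completes the proof. There is no serious obstacle here: the argument is a direct combination of Theorem \ref{thm:root} with the bijectivity of group translations, and the only thing one must be careful about is invoking the $G$-group structure (not merely a semigroup structure) to ensure the indices $\beta \star \alpha$ remain distinct.
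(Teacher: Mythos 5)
Your proof is correct and follows essentially the same route as the paper: decompose $w$ into homogeneous components, use that $[\,\mathfrak{g}^{(\beta)}, (\Lambda^m\mathfrak{g})^{(\alpha)}]_S \subset (\Lambda^m\mathfrak{g})^{(\beta\star\alpha)}$ together with the bijectivity of $\alpha\mapsto\beta\star\alpha$ (the group property of $G$) to force each bracket to vanish separately. Your write-up is in fact slightly more explicit than the paper's about why the group (rather than semigroup) structure is needed.
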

\begin{proof}  Due to the gradation of $\mathfrak{g}$, one has that every $w\in \Lambda^m\mathfrak{g}$ can be written in a unique way as $w=\sum_{\alpha\in G}w^{(\alpha)}$ for some $w^{(\alpha)}\in (\Lambda^m\mathfrak{g})^{(\alpha)}$. Since $w$ is $\mathfrak{g}$-invariant, 
$0=\sum_{\alpha\in G}[v^{(\beta)},w^{(\alpha)}]$ for every $v^{(\beta)}\in\mathfrak{g}^{(\beta)}$.
Since $G$ is a group, the elements $\beta+\alpha$, for a fixed $\beta$ and different values of $\alpha$, are different and the $[v^{(\beta)},w^{(\alpha)}]$ belong to different subspaces of the decomposition of $\Lambda^m\mathfrak{g}$. Hence, they must vanish separately and $w^{(\alpha)}\in ((\Lambda^m\mathfrak{g})^{(\alpha)})^{\mathfrak{g}}$. Hence, 
$
(\Lambda^m\mathfrak{g})^\mathfrak{g}\subset \bigoplus_{\alpha\in G}(\Lambda^m\mathfrak{g})^{(\alpha)}\cap (\Lambda^m\mathfrak{g})^{\mathfrak{g}}$. 
 The converse inclusion is immediate.
\end{proof}

We get the following trivial but useful result.

\begin{proposition} 
If $\mathfrak{g}$ admits a $G$-gradation such that $(\Lambda^m\mathfrak{g})^\mathfrak{g}=\bigoplus_{\alpha\in G}(\Lambda \mathfrak{g})^{\mathfrak{g}}\cap \Lambda^m\mathfrak{g}^{(\alpha)}$, then $\Lambda_R\mathfrak{g}$ admits a $G$-gradation such that
$
\Lambda^{m}_R\mathfrak{g}=\bigoplus_{\alpha\in G}(\Lambda^{m}_R\mathfrak{g})^{(\alpha)}$ and $ (\Lambda^m_R\mathfrak{g})^{(\alpha)}= (\Lambda^m\mathfrak{g})^{(\alpha)}/((\Lambda^m\mathfrak{g})^\mathfrak{g}\cap (\Lambda^m\mathfrak{g})^{(\alpha)}).
$
\end{proposition}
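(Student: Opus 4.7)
The plan is to run a short quotient argument combining Theorem \ref{thm:root} with the hypothesis. Both $\Lambda^m\mathfrak{g}$ and its subspace $(\Lambda^m\mathfrak{g})^\mathfrak{g}$ are given as direct sums indexed by the same group $G$, with the pieces of the subspace sitting inside the corresponding homogeneous components of the total space. Such a pair of compatible decompositions descends mechanically to the quotient $\Lambda^m_R\mathfrak{g} = \Lambda^m\mathfrak{g}/(\Lambda^m\mathfrak{g})^\mathfrak{g}$.

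Concretely, I would set $W^{(\alpha)} := (\Lambda^m\mathfrak{g})^{(\alpha)} \cap (\Lambda^m\mathfrak{g})^\mathfrak{g}$. Theorem \ref{thm:root} gives $\Lambda^m\mathfrak{g} = \bigoplus_{\alpha \in G}(\Lambda^m\mathfrak{g})^{(\alpha)}$, while the hypothesis is precisely $(\Lambda^m\mathfrak{g})^\mathfrak{g} = \bigoplus_{\alpha \in G} W^{(\alpha)}$ with $W^{(\alpha)} \subset (\Lambda^m\mathfrak{g})^{(\alpha)}$ for every $\alpha \in G$ (and Lemma \ref{LemgDe} shows this hypothesis actually holds automatically). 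The elementary linear-algebra fact, valid whenever a subspace splits along the homogeneous pieces of the ambient space, then yields the canonical isomorphism
$$
\Lambda^m_R\mathfrak{g} \;\cong\; \bigoplus_{\alpha \in G} (\Lambda^m\mathfrak{g})^{(\alpha)}/W^{(\alpha)}.
$$
Defining $(\Lambda^m_R\mathfrak{g})^{(\alpha)} := (\Lambda^m\mathfrak{g})^{(\alpha)}/W^{(\alpha)}$ reproduces exactly the formula in the statement.

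To interpret the resulting decomposition as a bona fide $G$-gradation compatible with the inherited structure, I would invoke the Schouten-bracket compatibility $[(\Lambda^p\mathfrak{g})^{(\alpha)}, (\Lambda^q\mathfrak{g})^{(\beta)}]_S \subset (\Lambda^{p+q-1}\mathfrak{g})^{(\alpha \star \beta)}$ of Theorem \ref{thm:root}, and observe that the bracket operations relevant in this paper (in particular $[v,\cdot]_S$ for $v \in \mathfrak{g}$, as formalised in Proposition \ref{prop:requiv}) pass to the quotient by $(\Lambda\mathfrak{g})^\mathfrak{g}$. I do not expect any real obstacle beyond this book-keeping: the only point meriting a bit of care is the inclusion $W^{(\alpha)} \subset (\Lambda^m\mathfrak{g})^{(\alpha)}$, which is what guarantees that each quotient $(\Lambda^m\mathfrak{g})^{(\alpha)}/W^{(\alpha)}$ embeds with the expected image in $\Lambda^m_R\mathfrak{g}$ and that the direct sum so obtained is canonical rather than an arbitrary choice of complement.
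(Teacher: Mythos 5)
Your argument is correct and is exactly the elementary quotient argument the paper has in mind: the paper offers no proof at all, labelling the result ``trivial'', and your splitting $\Lambda^m_R\mathfrak{g}\cong\bigoplus_{\alpha}(\Lambda^m\mathfrak{g})^{(\alpha)}/W^{(\alpha)}$ together with Lemma~\ref{LemgDe} and the bracket compatibility from Theorem~\ref{thm:root} supplies precisely the missing book-keeping. (Only a minor slip: the well-definedness of the bracket on the quotient is the content of Proposition~\ref{prop:reduced}, not Proposition~\ref{prop:requiv}.)
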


Let us now describe a few hints on how gradations and their induced decompositions on Grassmann algebras allow us to obtain certain solutions to CYBEs.

\begin{definition}\label{limitspace}
A {\it limit} homogeneous space of a graded decomposition of $\Lambda^2\mathfrak{g}$ is a homogeneous subspace $\Lambda^2 \mathfrak{g}^{(\alpha)} \subset \Lambda^2\mathfrak{g}$ such that $2\alpha$ is a zero homogeneous space of $\Lambda^3\mathfrak{g}$.
\end{definition}

As a consequence of Definition \ref{limitspace}, elements of limit homogeneous spaces are solutions of the CYBE.
Moreover, if $r_1\in (\Lambda^2\mathfrak{g})^{(\alpha)}$ and $r_2\in (\Lambda^2\mathfrak{g})^{(\beta)}$  are limit $r$-matrices and $(\Lambda^3\mathfrak{g})^{(\alpha+\beta)}=\{0\}$, then any linear combination of elements $r_1,r_2$ is an $r$-matrix. 

\begin{example} Consider the diagram of $\Lambda^2\mathfrak{so}(2,2)$ in Figure \ref{so22_diags}, showing its limit homogeneous subspaces. By direct computation, one can verify that all the elements of these subspaces are solutions of the CYBE.
\demo
\end{example}

A short calculation shows that $f_0\wedge e_0$ is an $r$-matrix for $\mathfrak{so}(2,2)$. This is a particular case of the following more general result.

\begin{proposition} If $\mathfrak{g}$ admits a root gradation, then $\Lambda^2(\mathfrak{g}^{(0)})$ is a subspace of solutions of the CYBE. 
\end{proposition}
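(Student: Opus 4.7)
The plan is to reduce to decomposable elements and exploit the fact that $\mathfrak{g}^{(0)}$ is abelian, which is a direct consequence of the definition of a root $\mathbb{Z}^k$-gradation (if $e,e' \in \mathfrak{g}^{(0)}$, the defining property $[e,e']=\widetilde{\mathbf{0}}(e)e'$ together with the injectivity of $T$ and the fact that $T(0)=\widetilde{\mathbf{0}}=0$ force $[e,e']=0$; this was in fact already observed in the text right after the definition).

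Take an arbitrary $r\in\Lambda^2(\mathfrak{g}^{(0)})$ and fix a basis $\{e_1,\ldots,e_k\}$ of $\mathfrak{g}^{(0)}$. Write $r=\sum_{i<j}r_{ij}\,e_i\wedge e_j$, so that by bilinearity of the algebraic Schouten bracket,
\[
[r,r]_S \;=\; \sum_{i<j,\;k<l} r_{ij}r_{kl}\,[e_i\wedge e_j,\,e_k\wedge e_l]_S.
\]
Applying formula (\ref{multi_sn}) to each summand, one obtains
\[
[e_i\wedge e_j,\,e_k\wedge e_l]_S = [e_i,e_k]\wedge e_j\wedge e_l - [e_i,e_l]\wedge e_j\wedge e_k - [e_j,e_k]\wedge e_i\wedge e_l + [e_j,e_l]\wedge e_i\wedge e_k,
\]
and every bracket appearing on the right-hand side is a bracket of two elements of $\mathfrak{g}^{(0)}$, hence vanishes by the abelianness just noted. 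Consequently each $[e_i\wedge e_j,e_k\wedge e_l]_S=0$, and thus $[r,r]_S=0$, i.e. $r$ solves the CYBE.

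There is really no obstacle here: the whole argument is a one-line combinatorial consequence of the Schouten formula (\ref{multi_sn}) combined with $[\mathfrak{g}^{(0)},\mathfrak{g}^{(0)}]=0$. Alternatively, one could package the same reasoning more abstractly by noting that $\Lambda^2(\mathfrak{g}^{(0)})$ can be identified with $\Lambda^2\mathfrak{g}^{(0)}$ for the abelian subalgebra $\mathfrak{g}^{(0)}$, and that the algebraic Schouten bracket on the Grassmann algebra of an abelian Lie algebra is identically zero; but the direct computation above is shorter and does not require justifying the naturality of the Schouten bracket under inclusions of subalgebras.
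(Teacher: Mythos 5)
Your argument is correct: the paper states this proposition without proof, and the intended justification is exactly the one you give, namely that $\mathfrak{g}^{(0)}$ is abelian (as the paper notes immediately after the definition of a root gradation) so every bracket $[e_i,e_j]$ appearing in the expansion (\ref{multi_sn}) of $[r,r]_S$ vanishes. One small remark: the injectivity of $T$ is not needed for the abelianness of $\mathfrak{g}^{(0)}$ — the group-morphism property alone gives $T(0)=0$ and hence $[e,e']=\widetilde{0}(e)e'=0$.
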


When $\dim \mathfrak{g}=5$ or larger, the spaces $\Lambda^2\mathfrak{g},\Lambda^3\mathfrak{g}$ are so large that it is difficult to determine their $\mathfrak{g}$-invariant elements, homogeneous subspaces, and other of their properties related to mCYBEs. To help in analysing these topics, the use of ${\rm Aut}(\mathfrak{g})$ can be useful. If ${\rm Aut}(\mathfrak{g})$ maps homogeneous spaces into homogeneous spaces of a gradation of $\mathfrak{g}$, e.g. when ${\rm Aut}(\mathfrak{g})$ preserves the Cartan subalgebra of a root decomposition of $\mathfrak{g}$, then ${\rm Aut}(\mathfrak{g})$ is useful to obtain decompositions in $\Lambda\mathfrak{g}$ induced by gradations in $\mathfrak{g}$ when $\dim \mathfrak{g}$ is large. To illustrate our claims, we will now study $\mathfrak{so}(3,2)$.

Let $\{J_{\pm},J_3,K_{\pm},K_3,S_{\pm},R_{\pm}\}$ be a basis of $\mathfrak{so}(3,2)$ satisfying the commutation relations \cite{CO13}
\begin{align*}
&[J_{\pm},K_{\pm}]=\pm R_{\pm},&[J_{\mp},K_{\pm}]&=\pm S_{\pm},&[J_3,K_{\pm}]&=0,&[J_3,K_3]&=0,\\
&[J_{\pm},R_{\pm}]=0,&[J_{\mp},R_{\pm}]&=\pm 2K_{\pm},&[J_3,R_{\pm}]&=\pm R_{\pm},&[R_{\pm},S_{\pm}]&=0,\\
&[J_{\pm},S_{\pm}]=\pm 2K_{\pm},&[J_{\mp},S_{\pm}]&=0,&[J_3,S_{\pm}]&=\mp S_{\pm},&[R_{\mp},S_{\pm}]&=0,\\
&[K_{\pm},R_{\pm}]=0,&[K_{\mp},R_{\pm}]&=\pm 2J_{\pm},& [K_3,R_{\pm}]&=\pm R_{\pm},&[S_+,S_-]&=-4(K_3-J_3),\\
&[K_{\pm},S_{\pm}]=0,&[K_{\mp},S_{\pm}]&=\pm 2J_{\mp},&[K_3,S_{\pm}]&=\pm S_{\pm},&[R_+,R_-]&=-4(K_3+J_3),\\
&&[K_-,K_+]&=2K_3,&[J_-,J_+]&=-2J_3.
\end{align*}

The diagrams for the induced homogeneous spaces of $\Lambda \mathfrak{so}(3,2)$ appear in Figure \ref{so32_diags}. Consider  $T_1,T_2\in {\rm Aut}(\mathfrak{so}(3,2))$ that act on the diagram for $\mathfrak{so}(3)$ as reflections on the $OY$ and $OX$ axis in such a way that
$
T_1(J_3)=-J_3, T_1(K_3)=K_3, T_2(J_3)=J_3, T_2(K_3)=-K_3,
$
and they act as a permutation in the rest of elements of the chosen basis of $\mathfrak{so}(3,2)$. Evidently, $T_1,T_2$ do not preserve the subspaces of the root $\mathbb{Z}^2$-gradation, but they map homogeneous subspaces of $\mathfrak{so}(3,2)$ onto homogeneous subspaces. Then, they map homogeneous spaces of the induced decomposition in $\Lambda\mathfrak{so}(3,2)$ into homogeneous spaces of the decomposition. This allows us to obtain a basis adapted to the decomposition of any $\Lambda^m\mathfrak{so}(3,2)$ from the bases of homogeneous spaces with $(i\geq 0,j\geq 0)$ (see Table \ref{so32_bases_3}). This basis of $\Lambda^2\mathfrak{so}(3,2)$ can be used to simplify mCYBEs, for instance, by searching for $r$-matrices belonging to certain subfamilies of homogeneous subspaces.
\noindent
\begin{figure}[h!]
\centering
\begin{minipage}{0.3\textwidth}
\begin{center}
\begin{tikzpicture}[scale=0.5]
{\tiny 
\draw [help lines,dotted] (-1,-1) grid (1,1);
\draw [-] (-2,0)--(2,0) node[right] {$K_3$};
\draw [-] (0,-2)--(0,2) node[above] {$J_3$};
\color{blue}
\filldraw (1,0) circle (1pt) node[right] {$K_{+}$};
\filldraw (1,1) circle (1pt) node[above] {$R_{+}$};
\filldraw (0,1) circle (1pt) node[above] {$J_{+}$};
\filldraw (-1,0) circle (1pt) node[below] {$K_{-}$};
\filldraw (-1,-1) circle (1pt) node[below] {$R_{-}$};
\filldraw (0,-1) circle (1pt) node[below] {$J_{-}$};
\filldraw (-1,1) circle (1pt) node[above] {$S_{-}$};
\filldraw (1,-1) circle (1pt) node[below] {$S_{+}$};
\filldraw (0,0) circle (2pt) node[below right] {$K_3$} node[above right] {$J_3$};\color{black}}
\end{tikzpicture}
\end{center}
\end{minipage}
\begin{minipage}{0.3\textwidth}
\begin{center}
\begin{tikzpicture}[scale=0.5]
{\tiny
\draw [help lines,dotted] (-2,-2) grid (2,2);
\draw [-] (-3,0)--(3,0) node[right] {$K_3$};
\draw [-] (0,-3)--(0,3) node[above] {$J_3$};
\color{blue}
\draw[red, dashed] (-1,2)--(1,2);
\draw[red, dashed] (1,2)--(1,1);
\draw[red, dashed] (1,1)--(2,1);
\draw[red, dashed] (2,1)--(2,-1);
\draw[red, dashed] (2,-1)--(1,-1);
\draw[red, dashed] (1,-1)--(1,-2);
\draw[red, dashed] (1,-2)--(-1,-2);
\draw[red, dashed] (-1,-2)--(-1,-1);
\draw[red, dashed] (-1,-1)--(-2,-1);
\draw[red, dashed] (-2,-1)--(-2,1);
\draw[red, dashed] (-2,1)--(-1,1);
\draw[red, dashed] (-1,1)--(-1,2);
\filldraw (1,0) circle (2pt) node[below] {$1$};
\filldraw (2,0) circle (2pt) node[below] {$2$};
\filldraw (2,1) circle (2pt);
\filldraw (1,1) circle (2pt);
\filldraw (1,2) circle (2pt);
\filldraw (0,1) circle (2pt) node[right] {$1$};
\filldraw (0,2) circle (2pt) node[right] {$2$};
\filldraw (-1,1) circle (2pt);
\filldraw (-1,2) circle (2pt);
\filldraw (-1,0) circle (2pt) node[below] {$-1$};
\filldraw (-2,0) circle (2pt) node[below] {$-2$};
\filldraw (-1,-2) circle (2pt);
\filldraw (-1,-1) circle (2pt);
\filldraw (-2,-1) circle (2pt);
\filldraw (0,-1) circle (2pt) node[right] {$-1$};
\filldraw (0,-2) circle (2pt) node[right] {$-2$};
\filldraw (-2,1) circle (2pt);
\filldraw (1,0-2) circle (2pt);
\filldraw (2,-1) circle (2pt);
\filldraw (1,-1) circle (2pt);
\filldraw (0,0) circle (2pt);\color{black}}
\end{tikzpicture}
\end{center}
\end{minipage}
\begin{minipage}{0.3\textwidth}
\begin{center}
\begin{tikzpicture}[scale=0.5]
{\tiny 
\draw [help lines,dotted] (-3,-3) grid (3,3);
\draw [-] (-4,0)--(4,0) node[right] {$K_3$};
\draw [-] (0,-4)--(0,4) node[above] {$J_3$};
\color{blue}
\filldraw (1,0) circle (2pt) node[below] {$1$};
\filldraw (2,0) circle (2pt) node[below] {$2$};
\filldraw (3,0) circle (2pt) node[below] {$3$};
\filldraw (1,1) circle (2pt);
\filldraw (2,2) circle (2pt);
\filldraw (2,1) circle (2pt);
\filldraw (1,2) circle (2pt);
\filldraw (0,1) circle (2pt) node[right] {$1$};
\filldraw (0,2) circle (2pt) node[right] {$2$};
\filldraw (0,3) circle (2pt) node[right] {$3$};
\filldraw (-1,1) circle (2pt);
\filldraw (-1,2) circle (2pt);
\filldraw (-2,1) circle (2pt);
\filldraw (-2,2) circle (2pt);
\filldraw (-1,0) circle (2pt) node[below] {$-1$};
\filldraw (-2,0) circle (2pt) node[below] {$-2$};
\filldraw (-3,0) circle (2pt) node[below] {$-3$};
\filldraw (-1,-1) circle (2pt);
\filldraw (-2,-1) circle (2pt);
\filldraw (-1,-2) circle (2pt);
\filldraw (-2,-2) circle (2pt);
\filldraw (0,-1) circle (2pt) node[right] {$-1$};
\filldraw (0,-2) circle (2pt) node[right] {$-2$};
\filldraw (0,-3) circle (2pt) node[right] {$-3$};
\filldraw (1,-1) circle (2pt);
\filldraw (2,-1) circle (2pt);
\filldraw (1,-2) circle (2pt);
\filldraw (2,-2) circle (2pt);
\filldraw (0,0) circle (2pt);\color{black}}
\end{tikzpicture}
\end{center}
\end{minipage}
\caption{Graded decompositions of $\mathfrak{so}(3,2)$ (left), $\Lambda^2\mathfrak{so}(3,2)$ (center), $\Lambda^3\mathfrak{so}(3,2)$ (right) and limit homogeneous spaces (subspaces over the red dashed line).}\label{so32_diags}
\end{figure}
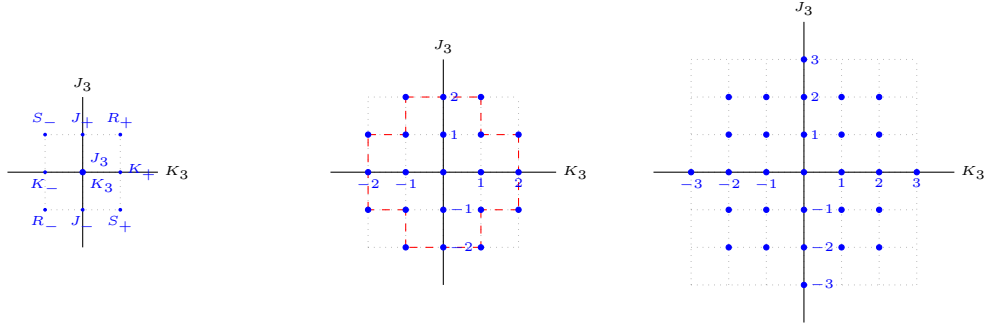
\begin{table}[!h]
\centering
\begin{tabular}{|c|c|c|c|c|c|}
\hline
$j\backslash i$ & -2 & -1 & 0 & 1 & 2 \\
\hline 
-2 &
&
$J_{-} \wg R_{-}$&
$R_{-} \wg S_{+}$&
$J_{-} \wg S_{+}$&
\\\hline
-1 &
$K_{-} \wg R_{-}$& 
\parbox[c]{2.9cm}{\vspace{3pt}$J_3 \wg R_{-}$, $K_3 \wg R_{-}$,\\ $J_{-} \wg K_{-}$\vspace{3pt}} &
\parbox[c]{2.9cm}{\vspace{3pt}$J_3 \wg J_{-}$, $K_3 \wg J_{-}$,\\ $K_{+} \wg R_{-}$, $K_{-} \wg S_{+}$\vspace{3pt}} &
\parbox[c]{2.9cm}{\vspace{3pt}$J_3 \wg S_{+}$, $K_3 \wg S_{+}$,\\ $J_{-} \wg K_{+}$\vspace{3pt}} &
$K_{+} \wg S_{+}$\\\hline
0 &
$R_{-} \wg S_{-}$&
\parbox[c]{2.9cm}{\vspace{3pt}$J_3 \wg K_{-}$, $K_3 \wg K_{-}$,\\ $J_{+} \wg R_{-}$, $J_{-} \wg S_{-}$\vspace{3pt}} &
\parbox[c]{2.9cm}{\vspace{3pt}$J_3 \wg K_3$, $J_{+} \wg J_{-}$,\\ $K_{+} \wg K_{-}$, $R_{+} \wg R_{-}$,\\ $S_{+} \wg S_{-}$\vspace{3pt}} &
\parbox[c]{2.9cm}{\vspace{3pt}$J_3 \wg K_{+}$, $K_3 \wg K_{+}$,\\ $J_{+} \wg S_{+}$, $J_{-} \wg R_{+}$\vspace{3pt}} &
$R_{+} \wg S_{+}$\\\hline
1 &
$K_{-} \wg S_{-}$&
\parbox[c]{2.9cm}{\vspace{3pt}$J_3 \wg S_{-}$, $K_3 \wg S_{-}$,\\ $J_{+} \wg K_{-}$\vspace{3pt}} &
\parbox[c]{2.9cm}{\vspace{3pt}$J_3 \wg J_{+}$, $K_3 \wg J_{+}$,\\ $K_{+} \wg S_{-}$, $K_{-} \wg R_{+}$\vspace{3pt}} &
\parbox[c]{2.9cm}{\vspace{3pt}$J_3 \wg R_{+}$, $K_3 \wg R_{+}$,\\ $J_{+} \wg K_{+}$\vspace{3pt}} &
$K_{+} \wg R_{+}$\\\hline
2 &
&
$J_{+} \wg S_{-}$&
$R_{+} \wg S_{-}$&
$J_{+} \wg R_{+}$&
\\
\hline
\end{tabular}
\caption{Bases for the homogeneous subspaces of $\Lambda^2 \mathfrak{so}(3,2)^{(i,j)}$.}\label{so32_bases_2}
\end{table}

\begin{table}[!hp]
\centering
\small
\rotatebox{90}{
\begin{tabular}{|c|c|c|c|c|c|c|c|}
\hline
$j\backslash i$ & -3 & -2 & -1 & 0 & 1 & 2 & 3 \\
\hline &&&&&&\\[-1em]
-3 &
&
&
&
$J_{-} \wg R_{-} \wg S_{+}$&
&
&
\\\hline
-2 &
&
$J_{-} \wg K_{-} \wg R_{-}$&
\parbox[c]{2.2cm}{$J_3 \wg J_{-} \wg R_{-}$,\\ $K_3 \wg J_{-} \wg R_{-}$,\\ $K_{-} \wg R_{-} \wg S_{+}$}&
\parbox[c]{2.2cm}{\vspace{3pt}$J_3 \wg R_{-} \wg S_{+}$,\\ $K_3 \wg R_{-} \wg S_{+}$,\\ $J_{-} \wg K_{+} \wg R_{-}$,\\ $J_{-} \wg K_{-} \wg S_{+}$\vspace{3pt}}&
\parbox[c]{2.2cm}{$J_3 \wg J_{-} \wg S_{+}$,\\ $K_3 \wg J_{-} \wg S_{+}$,\\ $K_{+} \wg R_{-} \wg S_{+}$}&
$J_{-} \wg K_{+} \wg S_{+}$&
\\\hline
-1 &
&
\parbox[c]{2.2cm}{$J_3 \wg K_{-} \wg R_{-}$,\\ $K_3 \wg K_{-} \wg R_{-}$,\\ $J_{-} \wg R_{-} \wg S_{-}$}&
\parbox[c]{2.2cm}{$J_3 \wg K_3 \wg R_{-}$,\\ $J_3 \wg J_{-} \wg K_{-}$,\\ $K_3 \wg J_{-} \wg K_{-}$,\\ $J_{+} \wg J_{-} \wg R_{-}$,\\ $K_{+} \wg K_{-} \wg R_{-}$,\\ $R_{-} \wg S_{+} \wg S_{-}$}&
\parbox[c]{2.2cm}{\vspace{3pt}$J_3 \wg K_3 \wg J_{-}$,\\ $J_3 \wg K_{+} \wg R_{-}$,\\ $J_3 \wg K_{-} \wg S_{+}$,\\ $K_3 \wg K_{+} \wg R_{-}$,\\ $K_3 \wg K_{-} \wg S_{+}$,\\ $J_{+} \wg R_{-} \wg S_{+}$,\\ $J_{-} \wg K_{+} \wg K_{-}$,\\ $J_{-} \wg R_{+} \wg R_{-}$,\\ $J_{-} \wg S_{+} \wg S_{-}$\vspace{3pt}}&
\parbox[c]{2.2cm}{$J_3 \wg K_3 \wg S_{+}$,\\ $J_3 \wg J_{-} \wg K_{+}$,\\ $K_3 \wg J_{-} \wg K_{+}$,\\ $J_{+} \wg J_{-} \wg S_{+}$,\\ $K_{+} \wg K_{-} \wg S_{+}$,\\ $R_{+} \wg R_{-} \wg S_{+}$}&
\parbox[c]{2.2cm}{$J_3 \wg K_{+} \wg S_{+}$,\\ $K_3 \wg K_{+} \wg S_{+}$,\\ $J_{-} \wg R_{+} \wg S_{+}$}&
\\\hline
0 &
$K_{-} \wg R_{-} \wg S_{-}$&
\parbox[c]{2.2cm}{$J_3 \wg R_{-} \wg S_{-}$,\\ $K_3 \wg R_{-} \wg S_{-}$,\\ $J_{+} \wg K_{-} \wg R_{-}$,\\ $J_{-} \wg K_{-} \wg S_{-}$}&
\parbox[c]{2.2cm}{$J_3 \wg K_3 \wg K_{-}$,\\ $J_3 \wg J_{+} \wg R_{-}$,\\ $J_3 \wg J_{-} \wg S_{-}$,\\ $K_3 \wg J_{+} \wg R_{-}$,\\ $K_3 \wg J_{-} \wg S_{-}$,\\ $J_{+} \wg J_{-} \wg K_{-}$,\\ $K_{+} \wg R_{-} \wg S_{-}$,\\ $K_{-} \wg R_{+} \wg R_{-}$,\\ $K_{-} \wg S_{+} \wg S_{-}$}&
\parbox[c]{2.2cm}{\vspace{3pt}$J_3 \wg J_{+} \wg J_{-}$,\\ $J_3 \wg K_{+} \wg K_{-}$,\\ $J_3 \wg R_{+} \wg R_{-}$,\\ $J_3 \wg S_{+} \wg S_{-}$,\\ $K_3 \wg J_{+} \wg J_{-}$,\\ $K_3 \wg K_{+} \wg K_{-}$,\\ $K_3 \wg R_{+} \wg R_{-}$,\\ $K_3 \wg S_{+} \wg S_{-}$,\\ $J_{+} \wg K_{+} \wg R_{-}$,\\ $J_{+} \wg K_{-} \wg S_{+}$,\\ $J_{-} \wg K_{+} \wg S_{-}$,\\ $J_{-} \wg K_{-} \wg R_{+}$\vspace{3pt}}&
\parbox[c]{2.2cm}{$J_3 \wg K_3 \wg K_{+}$,\\ $J_3 \wg J_{+} \wg S_{+}$,\\ $J_3 \wg J_{-} \wg R_{+}$,\\ $K_3 \wg J_{+} \wg S_{+}$,\\ $K_3 \wg J_{-} \wg R_{+}$,\\ $J_{+} \wg J_{-} \wg K_{+}$,\\ $K_{+} \wg R_{+} \wg R_{-}$,\\ $K_{+} \wg S_{+} \wg S_{-}$,\\ $K_{-} \wg R_{+} \wg S_{+}$}&
\parbox[c]{2.2cm}{$J_3 \wg R_{+} \wg S_{+}$,\\ $K_3 \wg R_{+} \wg S_{+}$,\\ $J_{+} \wg K_{+} \wg S_{+}$,\\ $J_{-} \wg K_{+} \wg R_{+}$}&
$K_{+} \wg R_{+} \wg S_{+}$\\\hline
1 &
&
\parbox[c]{2.2cm}{$J_3 \wg K_{-} \wg S_{-}$,\\ $K_3 \wg K_{-} \wg S_{-}$,\\ $J_{+} \wg R_{-} \wg S_{-}$}&
\parbox[c]{2.2cm}{$J_3 \wg K_3 \wg S_{-}$,\\ $J_3 \wg J_{+} \wg K_{-}$,\\ $K_3 \wg J_{+} \wg K_{-}$,\\ $J_{+} \wg J_{-} \wg S_{-}$,\\ $K_{+} \wg K_{-} \wg S_{-}$,\\ $R_{+} \wg R_{-} \wg S_{-}$}&
\parbox[c]{2.2cm}{\vspace{3pt}$J_3 \wg K_3 \wg J_{+}$,\\ $J_3 \wg K_{+} \wg S_{-}$,\\ $J_3 \wg K_{-} \wg R_{+}$,\\ $K_3 \wg K_{+} \wg S_{-}$,\\ $K_3 \wg K_{-} \wg R_{+}$,\\ $J_{+} \wg K_{+} \wg K_{-}$,\\ $J_{+} \wg R_{+} \wg R_{-}$,\\ $J_{+} \wg S_{+} \wg S_{-}$,\\ $J_{-} \wg R_{+} \wg S_{-}$\vspace{3pt}}&
\parbox[c]{2.2cm}{$J_3 \wg K_3 \wg R_{+}$,\\ $J_3 \wg J_{+} \wg K_{+}$,\\ $K_3 \wg J_{+} \wg K_{+}$,\\ $J_{+} \wg J_{-} \wg R_{+}$,\\ $K_{+} \wg K_{-} \wg R_{+}$,\\ $R_{+} \wg S_{+} \wg S_{-}$}&
\parbox[c]{2.2cm}{$J_3 \wg K_{+} \wg R_{+}$,\\ $K_3 \wg K_{+} \wg R_{+}$,\\ $J_{+} \wg R_{+} \wg S_{+}$}&
\\\hline
2 &
&
$J_{+} \wg K_{-} \wg S_{-}$&
\parbox[c]{2.2cm}{$J_3 \wg J_{+} \wg S_{-}$,\\ $K_3 \wg J_{+} \wg S_{-}$,\\ $K_{-} \wg R_{+} \wg S_{-}$}&
\parbox[c]{2.2cm}{\vspace{3pt}$J_3 \wg R_{+} \wg S_{-}$,\\ $K_3 \wg R_{+} \wg S_{-}$,\\ $J_{+} \wg K_{+} \wg S_{-}$,\\ $J_{+} \wg K_{-} \wg R_{+}$\vspace{3pt}}&
\parbox[c]{2.2cm}{$J_3 \wg J_{+} \wg R_{+}$,\\ $K_3 \wg J_{+} \wg R_{+}$,\\ $K_{+} \wg R_{+} \wg S_{-}$}&
$J_{+} \wg K_{+} \wg R_{+}$&
\\\hline
3 &
&
&
&
$J_{+} \wg R_{+} \wg S_{-}$&
&
&
\\\hline
\end{tabular}
\normalsize}
\caption{Bases for the homogeneous subspaces $\Lambda^3 \mathfrak{so}(3,2)^{(i,j)}$.}\label{so32_bases_3}
\end{table}

\section{Geometry of \texorpdfstring{$\mathfrak{g}$}{}-invariant elements\label{gInvar}}

This section addresses the study and characterisation of the spaces $(\Lambda^m \mathfrak{g})^\mathfrak{g}$ of $\mathfrak{g}$-invariant $m$-vectors for Lie algebras with root gradations and nilpotent Lie algebras. The interest of the spaces $(\Lambda^m\mathfrak{g})^{\mathfrak{g}}$ is due to their occurrence in the analysis of Lie bialgebras, mCYBEs, and ${\rm Aut}(\mathfrak{g})$ \cite{Pressley}.  The first part of this section concerns the study of ($\Lambda \mathfrak{g})^{\mathfrak{g}}$ for  general Lie algebras and, then, Lie algebras with a root gradation. The second part of this section is focused on $(\Lambda \mathfrak{g})^{\mathfrak{g}}$ for nilpotent Lie algebras $\mathfrak{g}$. Although our results do not characterise completely $(\Lambda^m\mathfrak{g})^{\mathfrak{g}}$, they are general enough to obtain many of its elements and, in several cases, to determine the whole $(\Lambda^m\mathfrak{g})^{\mathfrak{g}}$.
 Let us begin with a simple interesting fact.

\begin{proposition}\label{prop:lgg}
The space $(\Lambda \mathfrak{g})^\mathfrak{g}$ is an $\mathbb{R}$-algebra relative to the exterior product. Moreover, each space $(\Lambda^m \mathfrak{g})^\mathfrak{g}$ is $\widehat{{\rm ad}}$-invariant.
\end{proposition}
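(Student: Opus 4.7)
The proof splits naturally along the two claims and both amount to rather direct bookkeeping with the algebraic Schouten bracket and the derivation property of elements of $\mathfrak{aut}(\mathfrak{g})$.

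For the first assertion, the plan is to show that $(\Lambda\mathfrak{g})^\mathfrak{g}$ is closed under sums, scalar multiples, and $\wedge$. Linearity is immediate from the bilinearity of $[\cdot,\cdot]_S$. To get closure under the exterior product, take $w_1,w_2\in(\Lambda\mathfrak{g})^\mathfrak{g}$ and any $v\in\mathfrak{g}$; applying the Leibniz-type identity $[X,Y\wedge Z]_S=[X,Y]_S\wedge Z+(-1)^{(k-1)l}Y\wedge[X,Z]_S$ recalled in Section~2 with $X=v$ (so $k-1=0$), one gets
$$[v,w_1\wedge w_2]_S=[v,w_1]_S\wedge w_2+w_1\wedge[v,w_2]_S=0,$$
which proves $w_1\wedge w_2\in(\Lambda\mathfrak{g})^\mathfrak{g}$. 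Nothing subtle happens here; the sign $(-1)^{(k-1)l}$ conveniently disappears because $v\in\Lambda^1\mathfrak{g}$.

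For the second assertion, recall that $\mathfrak{aut}(\mathfrak{g})$ is identified with $\mathrm{Der}(\mathfrak{g})$ and each $D\in\mathrm{Der}(\mathfrak{g})$ acts on $\Lambda^m\mathfrak{g}$ through $\Lambda^m\widehat{\mathrm{ad}}_D$, i.e.\ as a graded derivation of $\Lambda\mathfrak{g}$. The key algebraic fact I will use is the commutation relation
$$[\,\Lambda\widehat{\mathrm{ad}}_D,\ \mathrm{ad}_v\,]_{\mathfrak{gl}(\Lambda\mathfrak{g})}=\mathrm{ad}_{D(v)},\qquad \forall\,v\in\mathfrak{g},$$
on $\Lambda\mathfrak{g}$, which follows from the derivation property $D[v,u]_\mathfrak{g}=[Dv,u]_\mathfrak{g}+[v,Du]_\mathfrak{g}$ extended inductively along $\wedge$ using the Leibniz rule for $[v,\cdot]_S$. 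Granted this, for $w\in(\Lambda^m\mathfrak{g})^\mathfrak{g}$ and any $v\in\mathfrak{g}$ one computes
$$[v,(\Lambda^m\widehat{\mathrm{ad}}_D)(w)]_S=(\Lambda\widehat{\mathrm{ad}}_D)\bigl([v,w]_S\bigr)-[D(v),w]_S=0-0=0,$$
so $(\Lambda^m\widehat{\mathrm{ad}}_D)(w)\in(\Lambda^m\mathfrak{g})^\mathfrak{g}$, which is exactly the $\widehat{\mathrm{ad}}$-invariance claim.

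The only step that requires care is the verification of the commutator identity above; all the rest is a one-line application of invariance. I expect this to be the main (mild) obstacle: one must make sure the derivation rule for $D$ extends consistently to $\Lambda\mathfrak{g}$ in a way compatible with $[\cdot,\cdot]_S$, which ultimately reduces to the fact that both $\Lambda\widehat{\mathrm{ad}}_D$ and $\mathrm{ad}_v=[v,\cdot]_S$ are graded derivations on $\Lambda\mathfrak{g}$ and that their graded commutator on the generating subspace $\mathfrak{g}$ equals $\mathrm{ad}_{D(v)}$, hence agrees globally by the uniqueness of graded-derivation extensions.
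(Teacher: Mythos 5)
Your proof is correct. For the first claim your argument is the same as the paper's: $(\Lambda\mathfrak{g})^\mathfrak{g}=\bigcap_{v\in\mathfrak{g}}\ker[v,\cdot]_S$ is a linear subspace, and closure under $\wedge$ follows because each $[v,\cdot]_S$ is a derivation of the exterior product. For the second claim you take a genuinely different, infinitesimal route: you identify $\mathfrak{aut}(\mathfrak{g})$ with $\mathrm{Der}(\mathfrak{g})$ and verify the commutator identity $[\Lambda\widehat{\mathrm{ad}}_D,\mathrm{ad}_v]=\mathrm{ad}_{D(v)}$ on $\Lambda\mathfrak{g}$, which is legitimate (both sides are degree-zero derivations agreeing on the generating subspace $\mathfrak{g}$ and vanishing on $\Lambda^0\mathfrak{g}$), and then conclude directly that $\Lambda^m\widehat{\mathrm{ad}}_D$ preserves $(\Lambda^m\mathfrak{g})^\mathfrak{g}$. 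The paper instead works at the group level: it invokes Proposition \ref{prop:glrho_ginv} to reduce the statement to invariance under $\Lambda^m T$ for $T\in\mathrm{Aut}_c(\mathfrak{g})$, and then computes $[v,\Lambda^m Tw]_S=\Lambda^m T[T^{-1}v,w]_S=0$. Your version proves exactly the stated infinitesimal ($\widehat{\mathrm{ad}}$-) invariance without passing through the group, at the cost of relying on $\mathfrak{aut}(\mathfrak{g})\simeq\mathrm{Der}(\mathfrak{g})$, which the paper only establishes later (Proposition \ref{aut}); the paper's version gives slightly more, namely invariance under the whole of $\mathrm{Aut}(\mathfrak{g})$ rather than only its connected component, a fact that is actually used later (e.g.\ in Lemma \ref{reduced_action}). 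Both arguments are sound.
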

\begin{proof}
We can write $(\Lambda \mathfrak{g})^\mathfrak{g}=\cap_{v\in \mathfrak{g}}\ker [v,\cdot]_S$. Then, $(\Lambda \mathfrak{g})^{\mathfrak{g}}$ is a linear space and, since $[v,\cdot]_S$ is a derivation relative to the exterior product, the exterior product of elements of $(\Lambda \mathfrak{g})^{\mathfrak{g}}$ belongs to it. Hence, $(\Lambda \mathfrak{g})^{\mathfrak{g}}$ is a subalgebra of $\Lambda \mathfrak{g}$ relative to the exterior product.

By Proposition \ref{prop:glrho_ginv}, the second part of our proposition amounts to the fact that $(\Lambda^m\mathfrak{g})^\mathfrak{g}$ is invariant under relative to $\Lambda^mT$ for every  $T\in {\rm Aut}_c(\mathfrak{g})$. But if $w\in (\Lambda^m\mathfrak{g})^{\mathfrak{g}}$ and $T\in {\rm Aut}(\mathfrak{g})$, then
$$
[v,\Lambda^mTw]_S=[TT^{-1}v,\Lambda^mTw]_S=\Lambda^mT[T^{-1}v,w]_S=0, \quad \forall v \in \mathfrak{g} \,\, \xRightarrow{} \,\, \Lambda^m Tw \in (\Lambda^m\mathfrak{g})^{\mathfrak{g}}.
$$
Hence, $(\Lambda^m\mathfrak{g})^{\mathfrak{g}}$ is invariant relative to the action of ${\rm Aut}_c(\mathfrak{g})$ on it.
\end{proof}

In order to prove our following results, it is appropriate to introduce the next notion.

\begin{definition}A {\it traceless ideal} of $\mathfrak{g}$ is an ideal $\mathfrak{h}\subset \mathfrak{g}$ satisfying that the restriction of each ${\rm ad}_v$, with $v\in\mathfrak{g}$, to $\mathfrak{h}$, say ${\rm ad}_v|_{\mathfrak{h}}$, is traceless. If the elements of ${\rm ad}(\mathfrak{g})$ are traceless, then $\mathfrak{g}$ is called {\it unimodular} .
\end{definition}

Unimodular Lie algebras have been applied to several different mathematical and physical problems \cite{ACK99,Ov06}, which  motivate their study. There exist several conditions ensuring that a Lie algebra is unimodular \cite{ACK99,GP10,Ha15}, e.g. the Lie algebras of abelian, compact, semi-simple, or nilpotent groups are unimodular \cite{BR86,Ha15}. One of our reasons to study unimodular Lie algebras and their traceless ideals is given by the following proposition.

\begin{proposition}\label{prop:inv_trlsub}
Every traceless ideal $\mathfrak{h}$ of a solvable Lie algebra $\mathfrak{g}$ is such that $\Lambda^{\dim\mathfrak{h}}\mathfrak{h}\subset(\Lambda^{\dim\mathfrak{h}}\mathfrak{g})^{\mathfrak{g}}$.
\end{proposition}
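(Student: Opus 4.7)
The plan is to reduce everything to a one-line computation: since $\Lambda^{\dim\mathfrak{h}}\mathfrak{h}$ is one-dimensional, it suffices to check $\mathfrak{g}$-invariance on a single generator $h_1\wedge\ldots\wedge h_n$, where $n:=\dim\mathfrak{h}$ and $\{h_1,\ldots,h_n\}$ is a basis of $\mathfrak{h}$. So the whole proposition boils down to showing $[v,h_1\wedge\ldots\wedge h_n]_S=0$ for every $v\in\mathfrak{g}$.

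First I would invoke the derivation identity for the algebraic Schouten bracket recalled in Section 2. For $v\in\mathfrak{g}=\Lambda^1\mathfrak{g}$ (so $k=1$), the sign factor $(-1)^{(k-1)l}$ is trivial and the identity $[v,Y\wedge Z]_S=[v,Y]_S\wedge Z+Y\wedge[v,Z]_S$ tells us that $[v,\cdot]_S$ is an ordinary derivation on $\Lambda\mathfrak{g}$. Iterating this, together with the fact that $[v,h_i]_S=[v,h_i]_\mathfrak{g}$ for $h_i\in\mathfrak{g}$, gives the Leibniz expansion
$$[v,h_1\wedge\ldots\wedge h_n]_S=\sum_{i=1}^{n}h_1\wedge\ldots\wedge[v,h_i]_\mathfrak{g}\wedge\ldots\wedge h_n.$$

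Second, because $\mathfrak{h}$ is an ideal of $\mathfrak{g}$, each $[v,h_i]_\mathfrak{g}$ lies in $\mathfrak{h}$, so I can write $[v,h_i]_\mathfrak{g}=\sum_{j=1}^{n}a_{ij}(v)h_j$ where $(a_{ij}(v))_{i,j}$ is the matrix of $\mathrm{ad}_v|_\mathfrak{h}$ in the basis $\{h_1,\ldots,h_n\}$. Substituting into the sum and noting that every summand of the form $h_1\wedge\ldots\wedge h_j\wedge\ldots\wedge h_n$ with $j\neq i$ contains a repeated factor and hence vanishes, only the diagonal terms survive and the whole expression collapses to $\left(\sum_i a_{ii}(v)\right)h_1\wedge\ldots\wedge h_n=\mathrm{tr}(\mathrm{ad}_v|_\mathfrak{h})\,h_1\wedge\ldots\wedge h_n$. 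The tracelessness hypothesis on $\mathfrak{h}$ forces this scalar to be $0$ for every $v\in\mathfrak{g}$, which gives $h_1\wedge\ldots\wedge h_n\in(\Lambda^{n}\mathfrak{g})^\mathfrak{g}$ and therefore $\Lambda^{n}\mathfrak{h}\subset(\Lambda^{n}\mathfrak{g})^\mathfrak{g}$.

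There is essentially no obstacle here; the only thing to watch is the book-keeping in the derivation step and the observation that the off-diagonal terms of $\mathrm{ad}_v|_\mathfrak{h}$ annihilate the top form because of the antisymmetry of $\wedge$. It is worth flagging that the argument actually never uses the solvability of $\mathfrak{g}$: the Leibniz expansion, the ideal property, and tracelessness are what is really needed. Thus the hypothesis ``solvable'' in the statement is either a convenience for the intended applications or a strengthening that guarantees the existence of interesting traceless ideals (e.g.\ via Lie's theorem, the derived ideal of a solvable Lie algebra acts nilpotently and is therefore traceless), rather than something used in the proof proper.
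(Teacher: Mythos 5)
Your proof is correct and follows essentially the same route as the paper, which simply asserts the identity $[v,r]_S=\mathrm{Tr}(\mathrm{ad}_v|_{\mathfrak{h}})\,r$ for $r\in\Lambda^{\dim\mathfrak{h}}\mathfrak{h}$ and concludes; your Leibniz expansion just makes that identity explicit. Your observation that solvability is never used is also accurate: the paper itself reapplies the same computation to arbitrary Lie algebras in Theorem \ref{Theo:InvLk}.
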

\begin{proof} The ideal $\mathfrak{h}$ induces a one-dimensional space $\Lambda^{\dim \mathfrak{h}}\mathfrak{h}$. Let $r\in \Lambda^{\dim \mathfrak{h}}\mathfrak{h}$. Since $\mathfrak{h}$ is a traceless ideal by assumption, $[v,r]_S={\rm Tr} ({\rm ad}_v|_{\mathfrak{h}})r=0$ for every $v\in \mathfrak{g}$. Hence, $r\in (\Lambda^{\dim \mathfrak{h}}\mathfrak{g})^\mathfrak{g}$. 
\end{proof}

Proposition \ref{prop:inv_trlsub} says that aforesaid ideals give rise to decomposable elements of $(\Lambda^{\dim\mathfrak{h}}\mathfrak{g})^\mathfrak{g}$, which in turn can be obtained by a family of equations involving the algebraic Schouten bracket. This approach gives a manner to determine traceless ideals in $\mathfrak{g}$ via $(\Lambda \mathfrak{g})^\mathfrak{g}$. The following theorem gives a method to determine the $\mathfrak{g}$-invariant elements in $\Lambda\mathfrak{g}$.

\begin{theorem}\label{Theo:InvLk} Every decomposable $ \Omega\in (\Lambda\mathfrak{g})^\mathfrak{g} \backslash \{0\}$ amounts to a unique ideal $\mathfrak{h}\subset \mathfrak{g}$ such that $\langle\Omega\rangle=\Lambda^{\dim\mathfrak{h}}\mathfrak{h}$ and ${\rm ad}_v \vert_{\mathfrak{h}}$ for every $v\in \mathfrak{g}$ is traceless. In turn, if $\mathfrak{h} \neq \{0\}$, then $\Lambda^{\dim\mathfrak{h}}\mathfrak{h}\subset (\Lambda \mathfrak{g})^\mathfrak{g}$. 
\end{theorem}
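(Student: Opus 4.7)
My plan is: to each nonzero decomposable $\Omega=v_1\wedge\ldots\wedge v_k\in(\Lambda^k\mathfrak{g})^\mathfrak{g}$, associate the subspace
\[
\mathfrak{h}_\Omega:=\{v\in\mathfrak{g}\,:\,v\wedge\Omega=0\},
\]
and then verify in turn that (i) $\mathfrak{h}_\Omega=\langle v_1,\ldots,v_k\rangle$, so $\dim\mathfrak{h}_\Omega=k$ and $\langle\Omega\rangle=\Lambda^k\mathfrak{h}_\Omega$; (ii) $\mathfrak{h}_\Omega$ is an ideal of $\mathfrak{g}$; (iii) $\mathrm{Tr}\,(\mathrm{ad}_v|_{\mathfrak{h}_\Omega})=0$ for every $v\in\mathfrak{g}$. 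Step (i) is the standard fact that a vector $v$ satisfies $v\wedge v_1\wedge\ldots\wedge v_k=0$ precisely when $v$ lies in the span of an independent tuple $v_1,\ldots,v_k$, and independence of the $v_i$'s follows from $\Omega\neq 0$. Uniqueness of $\mathfrak{h}$ is then immediate: any ideal $\mathfrak{h}'$ with $\Lambda^{\dim\mathfrak{h}'}\mathfrak{h}'=\langle\Omega\rangle$ has dimension $k$ and contains every factor of every decomposition of $\Omega$, so it must coincide with $\mathfrak{h}_\Omega$.

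The main obstacle is step (ii), but I expect it to reduce to a single application of the Leibniz rule for the algebraic Schouten bracket. For $w\in\mathfrak{h}_\Omega$ and $v\in\mathfrak{g}$, the identity $w\wedge\Omega=0$ together with $[v,\Omega]_S=0$ gives
\[
0=[v,\,w\wedge\Omega]_S=[v,w]\wedge\Omega+w\wedge[v,\Omega]_S=[v,w]\wedge\Omega,
\]
so $[v,w]\in\mathfrak{h}_\Omega$, and $\mathfrak{h}_\Omega$ is an ideal. Once (ii) is secured, step (iii) is a direct trace computation: using the ideal property and expanding the Schouten bracket against a basis of $\mathfrak{h}_\Omega$ as in the proof of Proposition \ref{prop:inv_trlsub}, one obtains $[v,\Omega]_S=\mathrm{Tr}(\mathrm{ad}_v|_{\mathfrak{h}_\Omega})\,\Omega$, which must vanish because $\Omega\neq 0$.

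The final assertion $\Lambda^{\dim\mathfrak{h}}\mathfrak{h}\subset(\Lambda\mathfrak{g})^\mathfrak{g}$ for any nonzero ideal $\mathfrak{h}$ whose every $\mathrm{ad}_v|_\mathfrak{h}$ is traceless is the converse form of this same trace identity, and is essentially the content of Proposition \ref{prop:inv_trlsub} (whose proof, despite the solvability hypothesis in its statement, uses only the ideal and tracelessness properties). Thus once step (ii) is handled by the short Leibniz argument above, both the direct and converse parts of the theorem follow from the same one-line trace computation, and the only genuinely nontrivial point of the proof is that the subspace $\mathfrak{h}_\Omega$ attached to a decomposable $\mathfrak{g}$-invariant multivector is automatically closed under the $\mathrm{ad}$-action of the ambient Lie algebra.
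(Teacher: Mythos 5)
Your proposal is correct, and it differs from the paper's argument precisely at the one step you identify as the crux: showing that the subspace attached to $\Omega$ is an ideal. The paper works dually: it introduces $\widehat\Omega:\beta\in\mathfrak{g}^*\mapsto\iota_\beta\Omega$, identifies $\ker\widehat\Omega$ with the annihilator $\mathfrak{h}^\circ$, and proves the identity $\iota_{\mathrm{ad}_v^*\theta}\Omega=\iota_\theta[v,\Omega]_S$, so that $\mathfrak{g}$-invariance of $\Omega$ forces $\mathrm{ad}_v^*\mathfrak{h}^\circ\subset\mathfrak{h}^\circ$ and hence $\mathrm{ad}_v\mathfrak{h}\subset\mathfrak{h}$. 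You instead characterise $\mathfrak{h}$ primally as $\{w:\,w\wedge\Omega=0\}$ and apply the graded Leibniz rule
\[
0=[v,\,w\wedge\Omega]_S=[v,w]\wedge\Omega+w\wedge[v,\Omega]_S=[v,w]\wedge\Omega,
\]
where the sign $(-1)^{(k-1)l}=1$ because $v$ and $w$ both have degree one. Your route is slightly more elementary (no dualization, no interior products) and makes transparent that the only input is that $[v,\cdot]_S$ is a degree-zero derivation of $(\Lambda\mathfrak{g},\wedge)$, the same fact underlying Proposition \ref{prop:lgg}; the paper's route yields as a by-product the dual description $\mathfrak{h}^\circ=\ker\widehat\Omega$. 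The remaining steps — the identification $\mathfrak{h}_\Omega=\langle v_1,\ldots,v_m\rangle$, uniqueness, the trace identity $[v,\Omega]_S=\mathrm{Tr}(\mathrm{ad}_v|_{\mathfrak{h}})\,\Omega$, and the converse via Proposition \ref{prop:inv_trlsub} — coincide with the paper's, and your observation that the proof of Proposition \ref{prop:inv_trlsub} never uses solvability is accurate.
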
 
\begin{proof}
A non-zero decomposable $\Omega \in (\Lambda\mathfrak{g})^{\mathfrak{g}}$ takes the form $\Omega = v_1 \wedge \ldots \wedge v_m$ for some linearly independent $v_1,\ldots,v_m\in \mathfrak{g}$. This defines a unique  $\mathfrak{h} := \langle v_1, \ldots, v_m\rangle\subset\mathfrak{g}$ that is independent of the $v_1,\ldots,v_m$, and $\langle \Omega\rangle= \Lambda^{\dim\mathfrak{h}}\mathfrak{h}$. Since $[v, \Omega]_S = 0$ for every $v \in \mathfrak{g}$, one has that $\mathfrak{h}$ is an ideal of $\mathfrak{g}$. Let us prove this fact. If $\widehat \Omega: \beta \in \mathfrak{g^*} \mapsto \iota_{\beta} \Omega \in \Lambda \mathfrak{g}$, then $\ker \widehat\Omega=\mathfrak{h}^\circ$. Assuming $v\in \mathfrak{g}$, $\theta\in \mathfrak{h}^\circ$, we obtain
\begin{multline*}
\iota_{{\rm ad}_v^*\theta}\Omega
={\rm ad}_v^*\theta(v_1)\wedge\ldots\wedge v_m+\ldots+(-1)^mv_1\wedge\ldots\wedge {\rm ad}_v^*\theta(v_m)\\
=\iota_\theta([v,v_1]\wedge  \ldots \wedge v_m+\ldots + (-1)^mv_1\wedge \ldots\wedge [v,v_m])=\iota_\theta[v,\Omega]_S.
\end{multline*}
Then, ${\rm ad}^*_v\theta\in \mathfrak{h}^\circ$ for every $\theta\in \mathfrak{h}^\circ$. Consequently, ${\rm ad}_v\mathfrak{h}\subset\mathfrak{h}$ for every $v\in \mathfrak{g}$ and $\mathfrak{h}$ is an ideal of $\mathfrak{g}$. Since $[v,\Omega]_S=({\rm Tr} {\rm ad}_v|_\mathfrak{h})\Omega=0$, one gets that ${\rm ad}_v|_{\mathfrak{h}}$ is traceless and, by Proposition \ref{prop:inv_trlsub}, it follows that $\Lambda^{\dim \mathfrak{h}}\mathfrak{h}\subset (\Lambda\mathfrak{g})^\mathfrak{g}$.

Conversely, if $\mathfrak{h}$ is a non-zero ideal, then  $ \Lambda^{\dim \mathfrak{h}}\mathfrak{h}$ is one-dimensional and it admits a basis, $\Omega$, given by the exterior product of the elements of a basis of $\mathfrak{h}$. Since every ${\rm ad}_v$, with $v\in\mathfrak{g}$, acts on $\mathfrak{h}$ tracelessly by assumption, $[v, \Omega]_S = (\textrm{Tr ad}_v|_\mathfrak{h})\Omega = 0$ and  $\Omega \in (\Lambda^{\dim \mathfrak{h}} \mathfrak{g})^{\mathfrak{g}}$.\vskip-0.4cm
\end{proof}

\begin{corollary} There exists a one-to-one correspondence between one-dimensional subspaces of decomposable $\mathfrak{g}$-invariant elements of $\Lambda\mathfrak{g}$ and ideals of $\mathfrak{g}$ where $\mathfrak{g}$ acts traceless.
\end{corollary}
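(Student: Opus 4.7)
The plan is to show that Theorem \ref{Theo:InvLk} already packages both halves of the bijection, so the proof amounts to defining the two maps explicitly, verifying they are well-defined, and checking they are mutual inverses.

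First I would define the forward map $\Phi$ sending a one-dimensional subspace $\langle\Omega\rangle\subset(\Lambda\mathfrak{g})^\mathfrak{g}$ of decomposable invariant multivectors to the associated ideal: by Theorem \ref{Theo:InvLk}, writing $\Omega = v_1\wedge\ldots\wedge v_m$ with $v_1,\ldots,v_m$ linearly independent yields a unique ideal $\mathfrak{h}=\langle v_1,\ldots,v_m\rangle$ on which each ${\rm ad}_v$ with $v\in\mathfrak{g}$ is traceless, and $\mathfrak{h}$ does not depend on the particular decomposition of $\Omega$. The well-definedness on the level of the \emph{subspace} $\langle\Omega\rangle$ is immediate: replacing $\Omega$ by $\lambda\Omega$ with $\lambda\in\mathbb{R}\setminus\{0\}$ preserves the span $\langle v_1,\ldots,v_m\rangle$, and any other decomposable representative of $\langle\Omega\rangle$ differs by a nonzero scalar.

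Next I would define the inverse map $\Psi$ sending an ideal $\mathfrak{h}\subset\mathfrak{g}$ on which $\mathfrak{g}$ acts tracelessly to the one-dimensional subspace $\Lambda^{\dim\mathfrak{h}}\mathfrak{h}\subset\Lambda\mathfrak{g}$. This subspace is one-dimensional (because $\mathfrak{h}$ has dimension $\dim\mathfrak{h}$) and by the converse direction of Theorem \ref{Theo:InvLk} it lies in $(\Lambda\mathfrak{g})^\mathfrak{g}$; moreover any nonzero element in it is decomposable, being the exterior product of a basis of $\mathfrak{h}$.

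Finally, to check the compositions: for $\Psi\circ\Phi$, start with $\langle\Omega\rangle$, pass to $\mathfrak{h}=\langle v_1,\ldots,v_m\rangle$, and observe that Theorem \ref{Theo:InvLk} precisely asserts $\langle\Omega\rangle=\Lambda^{\dim\mathfrak{h}}\mathfrak{h}$, which is $\Psi(\mathfrak{h})$. For $\Phi\circ\Psi$, start with an ideal $\mathfrak{h}$, take a basis $v_1,\ldots,v_m$ of $\mathfrak{h}$, form $\Omega=v_1\wedge\ldots\wedge v_m$, and note that the ideal associated to $\langle\Omega\rangle$ under $\Phi$ is by construction $\langle v_1,\ldots,v_m\rangle=\mathfrak{h}$. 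Hence $\Phi$ and $\Psi$ are mutually inverse bijections.

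There is no serious obstacle here: the only point deserving care is the independence of $\mathfrak{h}$ from the chosen decomposable representative of $\langle\Omega\rangle$, and this is exactly the uniqueness clause of Theorem \ref{Theo:InvLk}. Everything else is formal bookkeeping, so the corollary follows as essentially a restatement of the theorem.
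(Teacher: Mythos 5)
Your proposal is correct and follows essentially the same route as the paper: both arguments construct the two explicit maps $\langle\Omega\rangle\mapsto\mathfrak{h}$ and $\mathfrak{h}\mapsto\Lambda^{\dim\mathfrak{h}}\mathfrak{h}$, verify well-definedness via the uniqueness clause of Theorem \ref{Theo:InvLk}, and check that they are mutual inverses. No gaps; the corollary is indeed a direct repackaging of that theorem.
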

\begin{proof} Let $D_\mathfrak{g}$ be the set of one-dimensional subspaces of decomposable elements of $\Lambda \mathfrak{g}$ and let ${\rm Tr}(\mathfrak{g})$ be the space of non-zero ideals of $\mathfrak{g}$ where $\mathfrak{g}$ acts tracelessly. Define the mapping $\phi: \langle\Omega\rangle \in D_\mathfrak{g} \mapsto \mathfrak{h}_\Omega \in {\rm Tr}(\mathfrak{g})$,
where $\Omega$ is a $\mathfrak{g}$-invariant decomposable element of $\Lambda\mathfrak{g}$ and $\mathfrak{h}_\Omega$ is the unique element of Tr$(\mathfrak{g})$ induced by $\langle \Omega\rangle $ in virtue of
Theorem \ref{Theo:InvLk}. We want to prove that $\phi$  is a well-defined bijection. The map $\phi$ is well defined as it does not depend on the element $\Omega$ spanning the space $\langle \Omega\rangle$. In turn, $\Lambda^{\dim\mathfrak{h}_\Omega}\mathfrak{h}_\Omega=\langle \Omega\rangle$. Hence, $\phi$ has a right inverse. Additionally, an ideal $\mathfrak{h}$ gives rise in view of Theorem \ref{Theo:InvLk} to an element of $\Lambda^{\dim \mathfrak{h}}\mathfrak{h}$ that is $\mathfrak{g}$-invariant. In turn this element is related to $\mathfrak{h}$. Therefore $\phi$ has a left inverse. This gives the searched bijection.
\end{proof}

The aforesaid ideal $\mathfrak{h}$ in Theorem \ref{Theo:InvLk} induced by a non-zero $\mathfrak{g}$-invariant decomposable multivector acts on itself by the adjoint action tracelessly, i.e. it is  {\it unimodular}.

\subsection{Lie algebras with a root gradation}
 
 Let us study the relation between induced decompositions on the spaces $\Lambda^m\mathfrak{g}$ by root gradations on Lie algebras and their $\mathfrak{g}$-invariant metrics.
\begin{proposition}\label{gin0}
If $\mathfrak{g}$ admits a root gradation, then  $(\Lambda^m\mathfrak{g})^\mathfrak{g}\subset (\Lambda^m\mathfrak{g})^{(0)}$.
\end{proposition}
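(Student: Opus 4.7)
The plan is to combine Lemma \ref{LemgDe} with a direct eigenvalue computation using the root $\mathbb{Z}^k$-gradation structure. Take $w\in(\Lambda^m\mathfrak{g})^\mathfrak{g}$. Lemma \ref{LemgDe} applied to the induced $\mathbb{Z}^k$-gradation on $\Lambda^m\mathfrak{g}$ decomposes $w$ as $w=\sum_{\alpha\in\mathbb{Z}^k}w^{(\alpha)}$, where each $w^{(\alpha)}\in(\Lambda^m\mathfrak{g})^{(\alpha)}$ is itself $\mathfrak{g}$-invariant. The goal thus reduces to showing that if $w^{(\alpha)}$ is $\mathfrak{g}$-invariant and $\alpha\neq 0$, then $w^{(\alpha)}=0$.

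The key computation is to determine how $\mathfrak{g}^{(0)}$ acts, via the algebraic Schouten bracket, on a decomposable homogeneous vector $e^{(\alpha_1)}\wedge\ldots\wedge e^{(\alpha_m)}$ with $\alpha_1\star\ldots\star\alpha_m=\alpha$. Since $[\cdot,\cdot]_S$ restricted to $\mathfrak{g}\times\Lambda^m\mathfrak{g}$ acts as a derivation on the exterior product (this follows from (\ref{multi_sn}) with $s=1$), for $e\in\mathfrak{g}^{(0)}$ one obtains
\[
[e,e^{(\alpha_1)}\wedge\ldots\wedge e^{(\alpha_m)}]_S=\sum_{i=1}^m e^{(\alpha_1)}\wedge\ldots\wedge[e,e^{(\alpha_i)}]\wedge\ldots\wedge e^{(\alpha_m)}.
\]
The defining property of a root gradation gives $[e,e^{(\alpha_i)}]=\widetilde{\alpha_i}(e)\,e^{(\alpha_i)}$, and because $T:\alpha\mapsto\talpha$ is a group morphism, $\sum_i\widetilde{\alpha_i}=\widetilde{\alpha_1\star\ldots\star\alpha_m}=\talpha$. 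Consequently, every basis vector of $(\Lambda^m\mathfrak{g})^{(\alpha)}$ is a $\talpha(e)$-eigenvector of $[e,\cdot]_S$, and by linearity
\[
[e,w^{(\alpha)}]_S=\talpha(e)\,w^{(\alpha)},\qquad \forall e\in\mathfrak{g}^{(0)}.
\]

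Since $w^{(\alpha)}\in(\Lambda^m\mathfrak{g})^\mathfrak{g}$, the left-hand side vanishes for every $e\in\mathfrak{g}^{(0)}$, so $\talpha(e)w^{(\alpha)}=0$ for all such $e$. If $\alpha\neq 0$, then the injectivity of $T$ forces $\talpha\neq 0$ in $\mathfrak{g}^{(0)*}$, so some $e\in\mathfrak{g}^{(0)}$ yields $\talpha(e)\neq 0$, and therefore $w^{(\alpha)}=0$. Hence only $w^{(0)}$ can be non-zero, i.e. $w\in(\Lambda^m\mathfrak{g})^{(0)}$.

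No step is particularly delicate: the only place requiring care is the eigenvalue calculation, where one must notice that the contributions from components with $\alpha_i=0$ vanish automatically (since $\widetilde{0}=0$ as $T$ is a group morphism, which also recovers the fact that $\mathfrak{g}^{(0)}$ is abelian), and that the remaining contributions add up precisely to $\talpha(e)$ thanks to the group morphism property of $T$. Once this is in hand, injectivity of $T$ does the rest.
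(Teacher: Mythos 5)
Your proof is correct and follows essentially the same route as the paper's: decompose $w$ into homogeneous components, invoke Lemma \ref{LemgDe} to make each component $\mathfrak{g}$-invariant, and use the eigenvalue relation $[e,w^{(\alpha)}]_S=\talpha(e)\,w^{(\alpha)}$ together with the injectivity of $T$ to kill the components with $\alpha\neq 0$. The only difference is that you explicitly justify the eigenvalue relation on $(\Lambda^m\mathfrak{g})^{(\alpha)}$ via the derivation property and the group-morphism identity $\sum_i\widetilde{\alpha_i}=\talpha$, a step the paper asserts without comment.
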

\begin{proof} If $w\in (\Lambda^m\mathfrak{g})^{\mathfrak{g}}$, then $w=\sum_{\alpha \in G}w^{(\alpha)}$ for a uniquely determined family of elements $w^{(\alpha)}\in (\Lambda^m\mathfrak{g})^{(\alpha)}$ for every $\alpha\in G$. Lemma \ref{LemgDe} yields that each $w^{(\alpha)}$ belongs to $(\Lambda^m\mathfrak{g})^{\mathfrak{g}}$. For every $e\in \mathfrak{g}^{(0)}$, one has that $[e,w^{(\alpha)}]=\talpha(e)w^{(\alpha)}=0$. Hence, $\talpha=0$. Since the mapping $T$ of the root gradation is an injection, one gets $\alpha=0$. Hence $w\subset (\Lambda^m\mathfrak{g})^{(0)}$.
	\end{proof}
Proposition \ref{gin0} allows us to restrict the search for elements of $(\Lambda^m\mathfrak{g})^\mathfrak{g}$ to $(\Lambda^m\mathfrak{g})^{(0)}$, which restricts the form of the elements of $(\Lambda^m\mathfrak{g})^\mathfrak{g}$. Moreover, $(\Lambda^m\mathfrak{g})^{(0)}$ can also be obtained via the root gradation of $\mathfrak{g}$ as shown in Theorem \ref{thm:root} and other parts of Section \ref{sect:gradation}.

Let us now analyse the behaviour of the $(\Lambda^m\mathfrak{g})^{(\alpha)}$ relative to the $\mathfrak{g}$-invariant metrics on $\Lambda\mathfrak{g}$. This illustrates the structure of each $\Lambda^m\mathfrak{g}$ and facilitates finding the elements of $(\Lambda^m\mathfrak{g})^\mathfrak{g}$. To show these points, we start by proving the following result.

\begin{theorem}\label{PerpL} 
If $b$ is a $\mathfrak{g}$-invariant bilinear symmetric map on $\mathfrak{g}$, then $b_{\Lambda^m\mathfrak{g}}(v_{(\alpha)}, v_{(\beta)})=0$ for every $v_{(\alpha)} \in \Lambda^m\mathfrak{g}^{(\alpha)}$ and $v_{(\beta)} \in \Lambda^m\mathfrak{g}^{(\beta)}$ with $\alpha+\beta\neq 0$. 
\end{theorem}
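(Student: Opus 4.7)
The plan is to exploit the fact that, in a root gradation, there exist elements of $\mathfrak{g}^{(0)}$ acting diagonally on every homogeneous subspace of $\Lambda\mathfrak{g}$, combined with the $\mathfrak{g}$-invariance of $b_{\Lambda^m\mathfrak{g}}$ inherited from the $\mathfrak{g}$-invariance of $b$ via Theorem \ref{extension}. Since the statement sits in the subsection on root gradations, I would make explicit from the outset that we work with a root $\mathbb{Z}^k$-gradation on $\mathfrak{g}$ and use its injective morphism $T:\alpha\mapsto \talpha\in\mathfrak{g}^{(0)*}$.

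First I would show that, for every $e\in\mathfrak{g}^{(0)}$, the operator ${\rm ad}_e=[e,\cdot]_S$ acts on $(\Lambda^m\mathfrak{g})^{(\alpha)}$ as multiplication by $\talpha(e)$. Pick a decomposable generator $w=e^{(\alpha_1)}\wedge\cdots\wedge e^{(\alpha_m)}$ of $(\Lambda^m\mathfrak{g})^{(\alpha)}$ with $\alpha_1\star\cdots\star\alpha_m=\alpha$. Using the Leibniz rule for the algebraic Schouten bracket and the defining property $[e,e^{(\alpha_i)}]=\widetilde{\alpha_i}(e)e^{(\alpha_i)}$, one obtains
$$
[e,w]_S=\bigl(\widetilde{\alpha_1}(e)+\cdots+\widetilde{\alpha_m}(e)\bigr)w=\talpha(e)w,
$$
where the last equality uses that $T$ is a group morphism, so $\widetilde{\alpha_1\star\cdots\star\alpha_m}=\widetilde{\alpha_1}+\cdots+\widetilde{\alpha_m}$. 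By linearity this extends to all of $(\Lambda^m\mathfrak{g})^{(\alpha)}$.

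Next I would invoke Theorem \ref{extension} to guarantee that $b_{\Lambda^m\mathfrak{g}}$ is $\mathfrak{g}$-invariant on $(\Lambda^m\mathfrak{g},\Lambda^m{\rm ad})$. Applied to $v_{(\alpha)}\in(\Lambda^m\mathfrak{g})^{(\alpha)}$, $v_{(\beta)}\in(\Lambda^m\mathfrak{g})^{(\beta)}$ and any $e\in\mathfrak{g}^{(0)}$, this yields
$$
0=b_{\Lambda^m\mathfrak{g}}([e,v_{(\alpha)}]_S,v_{(\beta)})+b_{\Lambda^m\mathfrak{g}}(v_{(\alpha)},[e,v_{(\beta)}]_S)=\bigl(\talpha(e)+\widetilde{\beta}(e)\bigr)b_{\Lambda^m\mathfrak{g}}(v_{(\alpha)},v_{(\beta)}).
$$
Since $T$ is a group morphism, $\talpha+\widetilde{\beta}=\widetilde{\alpha\star\beta}$, and since $T$ is injective, the hypothesis $\alpha\star\beta\neq 0$ forces $\widetilde{\alpha\star\beta}\neq 0$ in $\mathfrak{g}^{(0)*}$. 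Hence there exists $e\in\mathfrak{g}^{(0)}$ with $\widetilde{\alpha\star\beta}(e)\neq 0$, and from the displayed equality one concludes $b_{\Lambda^m\mathfrak{g}}(v_{(\alpha)},v_{(\beta)})=0$.

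The only mildly delicate point, and the place I would be most careful, is the first step: verifying that the additive bookkeeping of degrees really passes through $T$ correctly, i.e.\ that the Schouten-Leibniz expansion produces exactly the scalar $\talpha(e)$ rather than a sum of several eigenvalues belonging to different homogeneous pieces. Once that is nailed down, the rest is a two-line consequence of injectivity of $T$ and $\mathfrak{g}$-invariance of $b_{\Lambda^m\mathfrak{g}}$; I would remark in passing that the same argument simultaneously reproves Proposition \ref{gin0} (take one argument to lie in $(\Lambda^m\mathfrak{g})^\mathfrak{g}$ and the conclusion forces its homogeneous components of nonzero degree to vanish whenever paired nontrivially).
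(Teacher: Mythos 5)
Your proposal is correct and follows essentially the same route as the paper's proof: apply the $\mathfrak{g}$-invariance of $b_{\Lambda^m\mathfrak{g}}$ with an element $h\in\mathfrak{g}^{(0)}$ to obtain $(\talpha+\tbeta)(h)\,b_{\Lambda^m\mathfrak{g}}(v_{(\alpha)},v_{(\beta)})=0$, then use injectivity of $T$ to find $h$ with $(\talpha+\tbeta)(h)\neq 0$. The only difference is that you spell out the preliminary eigenvalue computation $[h,w]_S=\talpha(h)w$ on $(\Lambda^m\mathfrak{g})^{(\alpha)}$, which the paper leaves implicit; this is a worthwhile clarification but not a different argument.
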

\begin{proof} Since $b_{\Lambda \mathfrak{g}}$ is $\mathfrak{g}$-invariant, one gets  that
$
b_{\Lambda^m\mathfrak{g}}([h,v_{(\alpha)}]_{S}, v_{(\beta)})=-b_{\Lambda^m\mathfrak{g}}(v_{(\alpha)},[h,v_{(\beta)}]_{S})$, for every $h\in \mathfrak{g}^{(0)}$,  $v_{(\alpha)} \in \Lambda^p\mathfrak{g}^{(\alpha)}$ and $\forall v_{(\beta)} \in \Lambda^p\mathfrak{g}^{(\beta)}.
$
Hence,
$
(\talpha+\tbeta)(h)b_{\Lambda^m\mathfrak{g}}(v_{(\alpha)},v_{(\beta)})=0.
$
Since $\alpha+\beta\neq 0$ by assumption, then the injectivity of the map  $T$ of the root gradation gives that $\talpha+\tbeta\neq 0$. Hence, there exists $h\in \mathfrak{g}^{(0)}$ such that $(\talpha+\tbeta)(h)\neq 0$ and the theorem follows.
\end{proof}

\begin{example} Let us illustrate Theorem \ref{PerpL} for $\Lambda^2\mathfrak{so}(2,2)$. Using the basis $\{e_-,e_0,e_+,f_-,f_0,f_+\}$ of $\mathfrak{so}(2,2)$ used in Example \ref{ex:decomp_so22}, we obtain that
$$
\kappa_{\mathfrak{so}(2,2)}(e_0, e_0) = \kappa_{\mathfrak{so}(2,2)}(f_0, f_0) = 2, \quad \kappa_{\mathfrak{so}(2,2)}(e_{-}, e_{+}) = \kappa_{\mathfrak{so}(2,2)}(f_{-}, f_{+}) = 2.
$$
The previous calculation enables us to determine an orthogonal basis of $\Lambda^2\mathfrak{so}(2,2)$ relative to the induced form $\kappa_{\Lambda^2\mathfrak{so}(2,2)}$:
\begin{gather*}
\{e_{-} \!\wedge\! f_{-} \pm e_{+} \!\wedge\! f_{+},\quad
e_{+} \!\wedge\! f_{-} \pm e_{-} \!\wedge\! f_{+},\quad
e_{0} \!\wedge\! f_{-} \pm e_{0} \!\wedge\! f_{+},\quad
f_{-} \!\wedge\! f_{0} \pm f_{0} \!\wedge\! f_{+},\\
 e_{-} \!\wedge\! e_{0} \pm e_{0} \!\wedge\! e_{+},\quad
 e_{-} \!\wedge\! f_{0} \pm e_{+} \!\wedge\! f_{0}, \quad
 e_{0} \!\wedge\! f_{0},\quad
 e_{-} \!\wedge\! e_{+}, \quad
 f_{-} \!\wedge\! f_{+}\}.
\end{gather*}
One easily sees that this basis satisfies the orthogonality relations relative to $b_{\Lambda^2\mathfrak{so}(2,2)}$ determined by Theorem \ref{PerpL}.
\demo
\end{example}

\begin{corollary}\label{cor:nondeg_subsp} 
If $\mathfrak{g}$ is semi-simple and $b$ is a non-degenerate bilinear symmetric form on $\mathfrak{g}$, then the restrictions of $b_{\Lambda^m\mathfrak{g}}$ to $(\Lambda^m\mathfrak{g})^{(0)}$ and $\Lambda^m \mathfrak{g}^{(\alpha)} \oplus \Lambda^m \mathfrak{g}^{(-\alpha)}$ are non-degenerate. If $b$ is positive-definite or negative-definite, then the restriction of $b_{\Lambda^m\mathfrak{g}}$ to $(\Lambda^m\mathfrak{g})^{\mathfrak{g}}$ is also non-degenerate.
\end{corollary}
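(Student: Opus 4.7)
The plan is to combine Theorem \ref{PerpL}, which forces orthogonality of homogeneous subspaces with non-opposite degrees, with Corollary \ref{Cor:Diatwoform}, which transports non-degeneracy of $b$ up to $b_{\Lambda^m\mathfrak{g}}$, and then to close by an elementary orthogonal-summand argument.

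First I would set up the orthogonal decomposition. By Theorem \ref{PerpL}, $b_{\Lambda^m\mathfrak{g}}(v_{(\alpha)}, v_{(\beta)}) = 0$ whenever $\alpha + \beta \neq 0$. Picking a representative $\alpha$ from each pair $\{\alpha, -\alpha\}$ with $\alpha \neq 0$, this yields the $b_{\Lambda^m\mathfrak{g}}$-orthogonal splitting
\begin{equation*}
\Lambda^m \mathfrak{g} \;=\; (\Lambda^m\mathfrak{g})^{(0)} \;\oplus\; \bigoplus_{[\alpha]}\bigl(\Lambda^m\mathfrak{g}^{(\alpha)} \oplus \Lambda^m\mathfrak{g}^{(-\alpha)}\bigr),
\end{equation*}
in which the zero-degree summand is orthogonal to everything else (since $0 + \beta = \beta \neq 0$ for $\beta \neq 0$) and each $\Lambda^m\mathfrak{g}^{(\alpha)} \oplus \Lambda^m\mathfrak{g}^{(-\alpha)}$ is orthogonal to its complement in the decomposition.

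Next I would invoke Corollary \ref{Cor:Diatwoform}: since $b$ is non-degenerate on $\mathfrak{g}$, the extension $b_{\Lambda^m\mathfrak{g}}$ is non-degenerate on $\Lambda^m\mathfrak{g}$. I would then use the elementary fact that a bilinear form non-degenerate on $V = U \oplus W$ with $U \perp W$ restricts to a non-degenerate form on each summand: if $u \in U$ paired trivially with every $u' \in U$, then by orthogonality it would also pair trivially with every $w \in W$, contradicting non-degeneracy on $V$. Applying this to the decomposition above yields the claimed non-degeneracy of $b_{\Lambda^m\mathfrak{g}}$ on $(\Lambda^m\mathfrak{g})^{(0)}$ and on each $\Lambda^m\mathfrak{g}^{(\alpha)} \oplus \Lambda^m\mathfrak{g}^{(-\alpha)}$.

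For the last statement, note that a non-degenerate form on $V$ need not restrict to a non-degenerate form on an arbitrary subspace, so the hypothesis of definiteness is essential. If $b$ is positive- or negative-definite, I would choose a basis $\{e_1, \ldots, e_r\}$ in which $b$ diagonalizes with all entries of the same sign; the proof of Corollary \ref{Cor:Diatwoform} shows $b_{\Lambda^m\mathfrak{g}}$ then diagonalizes in $\{e_J\}_{|J|=m}$ with diagonal entries $\prod_{j=1}^{m} b(e_{J(j)}, e_{J(j)})$, all of sign $\mathrm{sgn}(b(e_1,e_1))^m$. Hence $b_{\Lambda^m\mathfrak{g}}$ is itself definite, and the restriction of a definite form to any subspace—in particular to $(\Lambda^m\mathfrak{g})^{\mathfrak{g}}$—remains definite, thus non-degenerate. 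The main obstacle is only cosmetic: ensuring the orthogonal decomposition used in the first part is correctly set up so that $(\Lambda^m\mathfrak{g})^{(0)}$ and each pair $\Lambda^m\mathfrak{g}^{(\alpha)} \oplus \Lambda^m\mathfrak{g}^{(-\alpha)}$ appear as genuine orthogonal summands rather than merely mutually orthogonal subspaces.
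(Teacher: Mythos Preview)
Your proposal is correct and follows essentially the same approach as the paper: both combine Theorem~\ref{PerpL} with the non-degeneracy of $b_{\Lambda^m\mathfrak{g}}$ (from Corollary~\ref{Cor:Diatwoform}) and argue that an element in one of the summands which is orthogonal to that summand must be orthogonal to all of $\Lambda^m\mathfrak{g}$. Your treatment of the last claim is in fact slightly cleaner than the paper's: the paper routes the argument through $(\Lambda^m\mathfrak{g})^{(0)}$ (implicitly using $(\Lambda^m\mathfrak{g})^{\mathfrak{g}}\subset(\Lambda^m\mathfrak{g})^{(0)}$ from Proposition~\ref{gin0}), whereas you observe directly that definiteness of $b$ forces $b_{\Lambda^m\mathfrak{g}}$ to be definite on all of $\Lambda^m\mathfrak{g}$, so its restriction to any subspace---including $(\Lambda^m\mathfrak{g})^{\mathfrak{g}}$---is automatically non-degenerate.
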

\begin{proof} Let us prove both results by reduction to contradiction. 

If the restriction of $b_{\Lambda^m\mathfrak{g}}$ to $(\Lambda^m\mathfrak{g})^{(0)}$ is degenerate, then there exists an element of $(\Lambda^m\mathfrak{g})^{(0)}$  perpendicular (with respect to $b_{\Lambda^m\mathfrak{g}}$) to every element of this space. Theorem \ref{PerpL} yields that this element is perpendicular to $\Lambda^m\mathfrak{g}$ and $b_{\Lambda^m\mathfrak{g}}$ is degenerate, which goes against our initial assumption. Hence, $b_{\Lambda^m\mathfrak{g}}$ is non-degenerate on $(\Lambda^m\mathfrak{g})^{(0)}$.

Similarly, if $w \in \Lambda^m\mathfrak{g}^{(\alpha)}\oplus \Lambda^m\mathfrak{g}^{(-\alpha)}$ is orthogonal to $\Lambda^m\mathfrak{g}^{(\alpha)}\oplus \Lambda^m\mathfrak{g}^{(-\alpha)}$, then it stems from Theorem \ref{PerpL} that $w$ is orthogonal to $\Lambda^m\mathfrak{g}$, which is a contradiction to our initial hypothesis concerning the non-degeneracy of $b_{\Lambda^m\mathfrak{g}}$. 

Finally, suppose that the restriction of $b_{\Lambda^m\mathfrak{g}}$ to $(\Lambda^m\mathfrak{g})^{\mathfrak{g}}$ is degenerate.  In view of previous paragraphs,  $b_{\Lambda^m\mathfrak{g}}$ is non-degenerate on $(\Lambda^m\mathfrak{g})^{(0)}$. Since $b$ is definite, then $b_{\Lambda^m\mathfrak{g}}$ is definite (see Proposition \ref{Cor:Diatwoform}) on $(\Lambda^m\mathfrak{g})^{(0)}$ and the orthogonal to $(\Lambda^m\mathfrak{g})^{\mathfrak{g}}$ within this space is also a complementary subspace. Hence, if an element is perpendicular to the whole $(\Lambda^m\mathfrak{g})^{\mathfrak{g}}$, it will also be perpendicular to the whole  $(\Lambda^m\mathfrak{g})^{(0)}$ and hence to $\Lambda^m \mathfrak{g}$. This is a contradiction and $b_{\Lambda^m\mathfrak{g}}$ must be non-degenerate on $(\Lambda^m\mathfrak{g})^{\mathfrak{g}}$.
\end{proof}

\begin{example} Corollary \ref{cor:nondeg_subsp} can be illustrated by varifying, after a short calculation, that the basis of $(\Lambda^3\mathfrak{so}(2,2))^{(0)}\mathfrak{so}(2,2)$ given by (see Table \ref{so22_bases})
$
 e_{-} \wedge e_0 \wedge e_{+},\,\, e_{-} \wedge e_{+} \wedge f_0,\,\, e_0 \wedge f_{-} \wedge f_{+},\,\, f_{-} \wedge f_0 \wedge f_{+}.
$
is orthogonal relative to $\kappa_{\Lambda^3\mathfrak{so}(2,2)}$ as claimed in Corollary \ref{cor:nondeg_subsp}.
\end{example}

Theorem \ref{PerpL} and Corollary \ref{cor:nondeg_subsp} simplify the determination of elements of $(\Lambda^3\mathfrak{g})^\mathfrak{g}$. Assume for instance the case of $b$ being positive (or negative) definite. If $w\in (\Lambda^3\mathfrak{g})^\mathfrak{g}$, then the orthogonal $W$ to $w$ within $(\Lambda^3\mathfrak{g})^{(0)}$ contains a set of elements that along with $w$ span the whole $ (\Lambda^3\mathfrak{g})^\mathfrak{g}$. In other words, it is enough to look for the remaining elements of $ (\Lambda^3\mathfrak{g})^\mathfrak{g}$ within $W$.

\subsection{Nilpotent Lie algebras}\label{NilMeth}

Every nilpotent Lie algebra possesses a flag of ideals, called the \textit{lower central series} of $\mathfrak{g}$, defined recurrently as $\mathfrak{g}_{s)} :=[\mathfrak{g},\mathfrak{g}_{s-1)}]$ for $s\in \mathbb{N}$ with $\mathfrak{g}_{0)}:=\mathfrak{g}$. Then, $\mathfrak{g}\supset \mathfrak{g}_{1)} \supset \ldots\supset \mathfrak{g}_{p-1)}\supset\mathfrak{g}_{p)}=\{0\}$. Let us use this fact to study $(\Lambda \mathfrak{g})^{\mathfrak{g}}$. First, the nilpotency allows for the characterisation of certain decomposable elements of $(\Lambda^m\mathfrak{g})^{\mathfrak{g}}$ via Lie subalgebras of $\mathfrak{g}$. This is done in the next proposition.
\begin{proposition} 
If $\mathfrak{g}$ is nilpotent, then every non-zero decomposable element of $(\Lambda\mathfrak{g})^\mathfrak{g}$ expands the space $\Lambda^{{\rm dim }\mathfrak{h}}\mathfrak{h}$ of a non-zero nilpotent ideal $\mathfrak{h}$ of $\mathfrak{g}$.
\end{proposition}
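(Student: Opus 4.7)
The plan is to reduce the statement to Theorem \ref{Theo:InvLk} and then invoke the elementary fact that every Lie subalgebra of a nilpotent Lie algebra is itself nilpotent. Concretely, starting with a non-zero decomposable $\Omega\in (\Lambda\mathfrak{g})^\mathfrak{g}$, I would write $\Omega = v_1\wedge\ldots\wedge v_m$ for linearly independent $v_1,\ldots,v_m\in\mathfrak{g}$ and let $\mathfrak{h}:=\langle v_1,\ldots,v_m\rangle$, exactly as in the statement of Theorem \ref{Theo:InvLk}. By that theorem, $\mathfrak{h}$ is a well-defined (independent of the choice of decomposition) ideal of $\mathfrak{g}$ satisfying $\langle\Omega\rangle = \Lambda^{\dim\mathfrak{h}}\mathfrak{h}$, and the restriction ${\rm ad}_v|_\mathfrak{h}$ is traceless for every $v\in\mathfrak{g}$. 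Since $\Omega\neq 0$, one has $\dim\mathfrak{h}\geq 1$, so $\mathfrak{h}$ is indeed non-zero.

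It remains to show that this ideal $\mathfrak{h}$ is nilpotent. This step is the only genuinely new content beyond Theorem \ref{Theo:InvLk}, but it is standard: the lower central series of $\mathfrak{h}$ satisfies $\mathfrak{h}_{s)}\subset \mathfrak{g}_{s)}$ for all $s\in\mathbb{N}$ by an immediate induction using $\mathfrak{h}\subset\mathfrak{g}$ together with the bilinearity of the bracket. Since $\mathfrak{g}$ is nilpotent, there exists $p\in\mathbb{N}$ with $\mathfrak{g}_{p)}=\{0\}$, whence $\mathfrak{h}_{p)}=\{0\}$ and $\mathfrak{h}$ is nilpotent.

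I do not anticipate a real obstacle here: the proposition is essentially a corollary of Theorem \ref{Theo:InvLk} in the nilpotent setting, since the tracelessness condition on ${\rm ad}_v|_\mathfrak{h}$ is automatic (nilpotent Lie algebras are unimodular, as already recalled in the paragraph preceding Proposition \ref{prop:inv_trlsub}), and the only extra input is that subalgebras of nilpotent Lie algebras inherit nilpotency. The proof should therefore be short, consisting of the invocation of Theorem \ref{Theo:InvLk} to produce $\mathfrak{h}$, followed by the inclusion $\mathfrak{h}_{s)}\subset\mathfrak{g}_{s)}$ to conclude nilpotency of $\mathfrak{h}$.
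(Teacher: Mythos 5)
Your proposal is correct and follows essentially the same route as the paper: invoke Theorem \ref{Theo:InvLk} to produce the ideal $\mathfrak{h}$ with $\langle\Omega\rangle=\Lambda^{\dim\mathfrak{h}}\mathfrak{h}$, then observe that $\mathfrak{h}$ inherits nilpotency from $\mathfrak{g}$. The paper additionally records the converse (every ideal of a nilpotent $\mathfrak{g}$ yields a decomposable invariant, since each ${\rm ad}_v|_{\mathfrak{h}}$ is nilpotent and hence traceless), but for the statement as phrased your argument suffices.
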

\begin{proof}
Theorem \ref{Theo:InvLk} shows that every non-zero decomposable element of $(\Lambda\mathfrak{g})^\mathfrak{g}$ gives rise to a non-zero ideal $\mathfrak{h}\subset \mathfrak{g}$.
Since $\mathfrak{g}$ is nilpotent, $\mathfrak{h}$ is nilpotent as well. Conversely, if $\mathfrak{g}$ is nilpotent, then ${\rm ad}_v$, for every $v \in \mathfrak{g}$, is a nilpotent map on $\mathfrak{g}$. If $\mathfrak{h}$ is an ideal, then ${\rm ad}_v \vert_{\mathfrak{h}}$ is also nilpotent. Then, Theorem \ref{Theo:InvLk} yields that $\Lambda^{\dim\mathfrak{h}}\mathfrak{h}\subset (\Lambda\mathfrak{g})^\mathfrak{g}$ is generated by a decomposable element.
\end{proof}

\begin{proposition}\label{lambda2nil}
If $\dim\mathfrak{z}(\mathfrak{g})=1$, then $\mathfrak{z}(\mathfrak{g})\wedge \mathfrak{g}_{p-2)} \subset (\Lambda^2 \mathfrak{g})^{\mathfrak{g}}$.
\end{proposition}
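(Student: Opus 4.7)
The plan is to unwind the definition of $(\Lambda^2\mathfrak{g})^\mathfrak{g}$ on a decomposable element $z\wedge w$ with $z\in \mathfrak{z}(\mathfrak{g})$ and $w\in \mathfrak{g}_{p-2)}$, and then use two simple facts: central elements kill brackets, and the hypothesis $\dim\mathfrak{z}(\mathfrak{g})=1$ forces $\mathfrak{g}_{p-1)}$ to coincide with $\mathfrak{z}(\mathfrak{g})$. By bilinearity of the algebraic Schouten bracket and of the wedge product, proving membership on such decomposable generators will suffice.

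First, I would observe that the defining condition $[\mathfrak{g},\mathfrak{g}_{p-1)}]=\mathfrak{g}_{p)}=\{0\}$ yields the inclusion $\mathfrak{g}_{p-1)}\subset \mathfrak{z}(\mathfrak{g})$. Because $\mathfrak{g}$ is nilpotent and $p$ is chosen minimal, $\mathfrak{g}_{p-1)}\neq\{0\}$, so the one-dimensionality assumption on the centre gives $\mathfrak{g}_{p-1)}=\mathfrak{z}(\mathfrak{g})$. In particular, for every $v\in\mathfrak{g}$ and every $w\in\mathfrak{g}_{p-2)}$ we have $[v,w]\in[\mathfrak{g},\mathfrak{g}_{p-2)}]=\mathfrak{g}_{p-1)}=\mathfrak{z}(\mathfrak{g})=\langle z\rangle$.

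Next, I would apply the Leibniz rule for $[\cdot,\cdot]_S$ stated at the beginning of Section~2, with $k=l=1$:
$$
[v,z\wedge w]_S=[v,z]_S\wedge w + z\wedge[v,w]_S,\qquad \forall v\in\mathfrak{g}.
$$
The first summand vanishes since $z\in\mathfrak{z}(\mathfrak{g})$ implies $[v,z]_S=[v,z]_\mathfrak{g}=0$. For the second summand, the previous paragraph shows $[v,w]_S=[v,w]_\mathfrak{g}\in \langle z\rangle$, hence $z\wedge[v,w]_S=0$. Therefore $[v,z\wedge w]_S=0$ for every $v\in\mathfrak{g}$, which is precisely the statement that $z\wedge w\in(\Lambda^2\mathfrak{g})^\mathfrak{g}$. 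Extending linearly over all decomposable elements of $\mathfrak{z}(\mathfrak{g})\wedge\mathfrak{g}_{p-2)}$ gives the claimed inclusion.

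The argument is essentially mechanical; the only genuinely substantive step is the identification $\mathfrak{g}_{p-1)}=\mathfrak{z}(\mathfrak{g})$, which depends critically on $\dim\mathfrak{z}(\mathfrak{g})=1$. Without that hypothesis one would only control $[v,w]\in\mathfrak{z}(\mathfrak{g})$ but not that $[v,w]$ lies in the specific one-dimensional line $\langle z\rangle$, and the term $z\wedge[v,w]_S$ would not necessarily vanish. So I expect no real obstacle beyond recording this observation carefully.
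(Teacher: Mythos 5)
Your proof is correct and follows essentially the same route as the paper: apply the Leibniz property of the algebraic Schouten bracket, kill the first term by centrality, and kill the second via $[v,\mathfrak{g}_{p-2)}]\subset\mathfrak{g}_{p-1)}\subset\mathfrak{z}(\mathfrak{g})$ together with $\Lambda^2\mathfrak{z}(\mathfrak{g})=\{0\}$ when the centre is one-dimensional. The only difference is that you upgrade the inclusion $\mathfrak{g}_{p-1)}\subset\mathfrak{z}(\mathfrak{g})$ to an equality, which is true but not needed.
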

\begin{proof}
Since $\mathfrak{g}_{p-1)} \subset \mathfrak{z}(\mathfrak{g})$ and using the properties of the algebraic Schouten bracket,
$$
[v, \mathfrak{z}(\mathfrak{g}) \wedge \mathfrak{g}_{p-2)}] = \mathfrak{z}(\mathfrak{g}) \wedge [v, \mathfrak{g}_{p-2)}] \subset \mathfrak{z}(\mathfrak{g}) \wedge \mathfrak{g}_{p-1)} \subset \mathfrak{z}(\mathfrak{g}) \wedge\mathfrak{z}(\mathfrak{g}) = \{0\}, \quad \forall v \in \mathfrak{g}.
$$
\vskip-0.6cm
\end{proof}

\begin{proposition}\label{lambda3nil}
For a nilpotent Lie algebra $\mathfrak{g}$ such that $\mathfrak{g}_{p)}=0$ and $\mathfrak{g}_{p-1)}\neq 0$, one has:
\begin{enumerate}
\item If $\dim\mathfrak{z}(\mathfrak{g})=2$, then $\Lambda^2\mathfrak{z}(\mathfrak{g})\wedge \mathfrak{g}_{p-2)} \subset (\Lambda^3 \mathfrak{g})^{\mathfrak{g}}$. 
\item If $\dim\mathfrak{z}(\mathfrak{g})=1$ and $\dim\mathfrak{g}_{p-2)}>1$, then $\mathfrak{z}(\mathfrak{g})\wedge\Lambda^2\mathfrak{g}_{p-2)} \subset (\Lambda^3 \mathfrak{g})^{\mathfrak{g}}$.
\item If $\dim\mathfrak{z}(\mathfrak{g})=1$ and $\dim\mathfrak{g}_{p-2)}=1$, then $\mathfrak{z}(\mathfrak{g})\wedge\mathfrak{g}_{p-2)}\wedge \mathfrak{g}_{p-3)} \subset (\Lambda^3 \mathfrak{g})^{\mathfrak{g}}$.\end{enumerate}
\end{proposition}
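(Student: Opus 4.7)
The plan is to mimic the proof of Proposition \ref{lambda2nil}, treating each item separately but using the same two basic facts: (a) since $[\mathfrak{g},\mathfrak{g}_{p-1)}]=\mathfrak{g}_{p)}=\{0\}$ one has $\mathfrak{g}_{p-1)}\subset \mathfrak{z}(\mathfrak{g})$ and, more generally, $[v,\mathfrak{g}_{k)}]\subset \mathfrak{g}_{k+1)}$ for every $v\in\mathfrak{g}$; (b) for $v\in\mathfrak{g}$ the operator $[v,\cdot]_S$ acts as a derivation of degree zero on the Grassmann algebra $\Lambda\mathfrak{g}$, so $[v,\omega_1\wedge\omega_2\wedge\omega_3]_S=[v,\omega_1]_S\wedge\omega_2\wedge\omega_3+\omega_1\wedge[v,\omega_2]_S\wedge\omega_3+\omega_1\wedge\omega_2\wedge[v,\omega_3]_S$. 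Since each of the three spaces listed in the statement is spanned by decomposable elements of a specific form, it suffices to check $\mathfrak{g}$-invariance on such a generator; linearity in $v$ then handles an arbitrary element of $\mathfrak{g}$.

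For item 1, I would take a generator $z_1\wedge z_2\wedge x$ with $z_1,z_2\in\mathfrak{z}(\mathfrak{g})$ and $x\in\mathfrak{g}_{p-2)}$; the derivation rule collapses the expansion to $z_1\wedge z_2\wedge [v,x]$, and because $[v,x]\in\mathfrak{g}_{p-1)}\subset\mathfrak{z}(\mathfrak{g})$, this last wedge lies in $\Lambda^3\mathfrak{z}(\mathfrak{g})$, which vanishes since $\dim\mathfrak{z}(\mathfrak{g})=2$. For item 2, with generator $z\wedge x_1\wedge x_2$, $z\in\mathfrak{z}(\mathfrak{g})$, $x_1,x_2\in\mathfrak{g}_{p-2)}$, the first summand of the derivation expansion is zero (centrality of $z$) and the remaining two are of the form $z\wedge[v,x_i]\wedge\cdot$ with $[v,x_i]\in\mathfrak{g}_{p-1)}\subset\mathfrak{z}(\mathfrak{g})=\langle z\rangle$; hence each term contains a factor $z\wedge z=0$.

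For item 3 I would take a generator $z\wedge x\wedge y$ with $z\in\mathfrak{z}(\mathfrak{g})$, $x\in\mathfrak{g}_{p-2)}$, $y\in\mathfrak{g}_{p-3)}$. Again centrality kills the $z$-term; for the $x$-term one has $[v,x]\in\mathfrak{g}_{p-1)}\subset\mathfrak{z}(\mathfrak{g})=\langle z\rangle$, so $z\wedge[v,x]\wedge y$ vanishes; and for the $y$-term one has $[v,y]\in\mathfrak{g}_{p-2)}=\langle x\rangle$ by the hypothesis $\dim\mathfrak{g}_{p-2)}=1$, so $z\wedge x\wedge[v,y]$ vanishes.

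There is no real obstacle here; the main point to get right is the bookkeeping of which lower-central-series level each factor sits in, and then to observe that under the stated dimension hypotheses every surviving term in the derivation expansion acquires a repeated factor living in a one-dimensional subspace (either $\mathfrak{z}(\mathfrak{g})$ or $\mathfrak{g}_{p-2)}$) or lands in the top exterior power of $\mathfrak{z}(\mathfrak{g})$, and is therefore zero. Since each generator is annihilated by every $[v,\cdot]_S$ with $v\in\mathfrak{g}$, linearity yields the three desired inclusions in $(\Lambda^3\mathfrak{g})^{\mathfrak{g}}$.
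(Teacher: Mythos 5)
Your proof is correct and follows essentially the same route as the paper: check invariance on decomposable generators, expand $[v,\cdot]_S$ as a derivation, and observe that each surviving term lands in $\Lambda^3\mathfrak{z}(\mathfrak{g})$, $\Lambda^2\mathfrak{z}(\mathfrak{g})\wedge(\cdot)$, or $(\cdot)\wedge\Lambda^2\mathfrak{g}_{p-2)}$, which vanish under the stated dimension hypotheses. The bookkeeping of lower-central-series levels matches the paper's argument exactly.
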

\begin{proof}
Let us prove 1), 2), and 3) by verifying the given inclusions on decomposable elements. The general case follows from it. 
To prove the first case consider $a,b\in \mathfrak{z}(\mathfrak{g})$ and $c\in \mathfrak{g}_{p-2)}$. Then $[v,a\wedge b\wedge c]_S=a\wedge b\wedge [v,c]$. Since $[v,\mathfrak{g}_{p-2)}]\subset \mathfrak{g}_{p-1)}\subset \mathfrak{z}(\mathfrak{g})$ and $\dim \mathfrak{z}(\mathfrak{g})=2$, it follows that $a\wedge b\wedge[v,c]\in \Lambda^3\mathfrak{z}(\mathfrak{g})=\{0\}$ and $a\wedge b\wedge c \in (\Lambda^3\mathfrak{g})^\mathfrak{g}$.

To prove the second formula, assume $a\in \mathfrak{z}(\mathfrak{g})$ and $b,c\in \mathfrak{g}_{p-2)}$. Then, $[v,a\wedge b\wedge c]_S=a\wedge([v,b]\wedge c+b\wedge [v,c])$. Obviously $[v,b],[v,c]\in \mathfrak{z}(\mathfrak{g})$ and $[v,a\wedge b\wedge c]_S=a\wedge([v,b]\wedge c+b\wedge [v,c])\in \Lambda^2\mathfrak{z}(\mathfrak{g})\wedge \mathfrak{g}_{p-2)}$. Since $\dim\mathfrak{z}(\mathfrak{g})=1$,  one has that $\Lambda^2 \mathfrak{z}(\mathfrak{g}) = \{0\}$ and $a\wedge b\wedge c \in (\Lambda^3\mathfrak{g})^\mathfrak{g}$. 

The third case is similar to previous ones. Assume $a\in \mathfrak{g}_{p-1)}=\mathfrak{z}(\mathfrak{g}), b\in\mathfrak{g}_{p-2)},c\in \mathfrak{g}_{p-3)}$. Then, using the assumptions on the dimensions of $\mathfrak{z}(\mathfrak{g}),\mathfrak{g}_{p-2)},\mathfrak{g}_{p-3)}$, we obtain
\vskip -0.5cm
$$
[v,a]\wedge b\wedge c=0,\,a\wedge [v,b]\wedge c \in \Lambda^2\mathfrak{z}(\mathfrak{g})\wedge \mathfrak{g}_{p-3)}=\{0\},\,a\wedge b\wedge [v,c]\in \mathfrak{z}(\mathfrak{g})\wedge \Lambda^2\mathfrak{g}_{p-2)}=\{0\}.
$$
\vskip -0.25cm
In view of the expression for $[v,a\wedge b \wedge c]_S$ and previous relations, it follows that $a\wedge b\wedge c\in (\Lambda^3\mathfrak{g})^\mathfrak{g}$. 
\end{proof}

\section{Reduced mCYBEs}
If $\dim \mathfrak{g}\geq 4$, mCYBEs are frequently very complicated to solve. This section shows a simplification of the mCYBE concerning, mainly, not semi-simple $\mathfrak{g}$ obtained by mapping structures of  $\Lambda^m\mathfrak{g}$ onto $\Lambda^m_R\mathfrak{g}:=\Lambda^m\mathfrak{g}/(\Lambda^m\mathfrak{g})^\mathfrak{g}$.

\begin{definition}
The elements of $\Lambda^m_R\mathfrak{g}$, for $m \in \mathbb{Z}$, are called {\it reduced $m$-vectors}.
\end{definition}

\begin{proposition}\label{prop:reduced}
Let $\pi_p:w_p\in \Lambda^p\mathfrak{g}\mapsto [w_p]\in\Lambda^p_R\mathfrak{g}$, with $p \in \mathbb{Z}$. The algebraic Schouten bracket induces a new bracket, called the reduced Schouten bracket, on $
\Lambda_R\mathfrak{g}:=\bigoplus_{p\in \mathbb{Z}}\Lambda^p_R\mathfrak{g},$ 
of the form
\begin{equation}\label{RSN}
{[[w_p],[w_q]]_R}:=[[w_p,w_q]_{S}],\qquad \forall w_p\in \Lambda^p\mathfrak{g},\quad \forall w_q\in\Lambda^q\mathfrak{g}.
\end{equation}
This new bracket induces a decomposition on $\Lambda_R\mathfrak{g}$ compatible with $[\cdot, \cdot]_R$ in such a way that $\pi=\bigoplus_{p\in \mathbb{Z}}\pi_p$ satisfies that $[\pi(a), \pi(b)]_R = \pi([a,b]_S)$ for arbitrary $a,b \in \Lambda\mathfrak{g}$ and $\pi(\Lambda^p\mathfrak{g})\subseteq \Lambda^p_R\mathfrak{g}$ for any $p \in \mathbb{Z}$.
Then, $r\in \Lambda^2\mathfrak{g}$ is an $r$-matrix if and only if $[\pi(r),\pi(r)]_{R}=0$.
\end{proposition}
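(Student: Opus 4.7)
The entire proof reduces to a single key lemma: for any $c \in (\Lambda^p\mathfrak{g})^\mathfrak{g}$ and any $w \in \Lambda\mathfrak{g}$, one has $[c, w]_S = 0$, not merely the weaker condition $[v, c]_S = 0$ for $v \in \mathfrak{g}$. I would prove this by induction on the exterior degree of $w$. The base case $w = v \in \mathfrak{g}$ follows from graded skew-symmetry of the Schouten bracket: because $(p-1)(1-1) = 0$, one has $[c, v]_S = -[v, c]_S$, which vanishes by $\mathfrak{g}$-invariance of $c$; the case $w \in \Lambda^0\mathfrak{g} = \mathbb{R}$ is automatic. For the inductive step, writing a decomposable $w = v_1 \wedge w'$ with $v_1 \in \mathfrak{g}$ and $w' \in \Lambda^{k-1}\mathfrak{g}$ and applying the Leibniz rule
\[
[c, v_1 \wedge w']_S = [c, v_1]_S \wedge w' + (-1)^{p-1}\, v_1 \wedge [c, w']_S,
\]
both summands vanish by the inductive hypothesis, and bilinearity extends the conclusion to arbitrary $w$. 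This is the only delicate step.

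Granted the lemma, well-definedness of (\ref{RSN}) is a brief bilinear check: writing $w_p' = w_p + c_p$ and $w_q' = w_q + c_q$ with $c_p \in (\Lambda^p\mathfrak{g})^\mathfrak{g}$ and $c_q \in (\Lambda^q\mathfrak{g})^\mathfrak{g}$, all three correction terms $[c_p, w_q]_S$, $[w_p, c_q]_S$ and $[c_p, c_q]_S$ vanish on the nose, so $[w_p', w_q']_S = [w_p, w_q]_S$ and in particular $[[w_p', w_q']_S] = [[w_p, w_q]_S]$; hence (\ref{RSN}) depends only on the classes $[w_p]$ and $[w_q]$. The inclusion $\pi(\Lambda^p\mathfrak{g}) \subseteq \Lambda^p_R\mathfrak{g}$ is tautological, and the compatibility of the grading with $[\cdot, \cdot]_R$ follows by descending $[\Lambda^p\mathfrak{g}, \Lambda^q\mathfrak{g}]_S \subseteq \Lambda^{p+q-1}\mathfrak{g}$ through $\pi$. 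The identity $[\pi(a), \pi(b)]_R = \pi([a, b]_S)$ is, by construction, just (\ref{RSN}) applied to each pair of homogeneous components of $a$ and $b$ and extended bilinearly.

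The $r$-matrix characterisation is then a direct consequence of Theorem \ref{thm1}: $r \in \Lambda^2\mathfrak{g}$ is an $r$-matrix precisely when $[r, r]_S \in (\Lambda^3\mathfrak{g})^\mathfrak{g}$. Since $[\pi(r), \pi(r)]_R = \pi([r, r]_S) = [[r, r]_S]$ in $\Lambda^3_R\mathfrak{g} = \Lambda^3\mathfrak{g}/(\Lambda^3\mathfrak{g})^\mathfrak{g}$, the class $[\pi(r), \pi(r)]_R$ vanishes if and only if $[r, r]_S \in (\Lambda^3\mathfrak{g})^\mathfrak{g}$, giving the stated equivalence. Thus the entire proposition rests on the key lemma, and that lemma is itself almost immediate from the graded derivation property and the fortunate vanishing of the sign $(-1)^{(p-1)(1-1)}$.
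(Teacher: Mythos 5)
Your proof is correct and follows essentially the same route as the paper's: well-definedness of $[\cdot,\cdot]_R$ via the vanishing of $[(\Lambda\mathfrak{g})^\mathfrak{g},\Lambda\mathfrak{g}]_S$, descent of the grading and of $[\pi(a),\pi(b)]_R=\pi([a,b]_S)$ through $\pi$, and Theorem \ref{thm1} for the $r$-matrix characterisation. The only difference is that you actually establish the key identity $[(\Lambda\mathfrak{g})^\mathfrak{g},\Lambda\mathfrak{g}]_S=0$ by induction on degree using the Leibniz rule and graded skew-symmetry, whereas the paper simply asserts it; your argument for that step is sound.
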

\begin{proof} 
Let us show that (\ref{RSN}) is well defined. If $[w_p] = [\bar{w}_p]$ and $[w_q] = [\bar{w}_q]$ for $w_p, \bar{w}_p\in \Lambda^p\mathfrak{g}$ and $w_q, \bar{w}_q\in \Lambda^q\mathfrak{g}$, then $w_p - \bar{w}_p,w_q-\bar w_q \in (\Lambda\mathfrak{g})^{\mathfrak{g}}$ and $[(\Lambda\mathfrak{g})^\mathfrak{g},\Lambda \mathfrak{g}]_{S}=0$. Hence,
$$
[[w_p],[w_q]]_R :=[[w_p,w_q]_{S}]=[[w_p-\bar{w}_p+\bar{w}_p,w_q-\bar w_q+\bar w_q]_{S}]=[[\bar w_p],[\bar w_q]]_R.
$$

To prove that $\Lambda_R\mathfrak{g}$ is a graded algebra relative to the reduced  bracket, it is enough and immediate to see that the reduced bracket (\ref{RSN}) satisfies that $[\Lambda^p_R\mathfrak{g},\Lambda^q_R\mathfrak{g}]_R\subset \Lambda^{p+q-1}_R\mathfrak{g}$ and
\begin{enumerate}
\item $[[w_p], [w_q]]_R= -(-1)^{(p-1)(q-1)} [[w_q], [w_p]]_R$,
\item $(-1)^{(p-1)(s-1)} [[w_p], [[w_q], [w_s]]_{R}] _R+ (-1)^{(q-1)(p-1)} [[w_q], [[w_s], [w_p]]_{R}]_R \\+ (-1)^{(s-1)(q-1)} [[w_s], [[w_p], [w_q]]_{R}]_R = 0,$
\end{enumerate}
for all $w_p \in \Lambda^p \mathfrak{g},w_q \in \Lambda^q \mathfrak{g}, w_s \in \Lambda^s \mathfrak{g}$.

The relation $[\pi(a),\pi(b)]_R=\pi([a,b]_S)$ is an immediate consequence of (\ref{RSN}). Meanwhile, if $r$ is an $r$-matrix, then  $[\pi(r),\pi(r)]_R=\pi([r,r]_S)\in \pi((\Lambda^3\mathfrak{g})^{\mathfrak{g}})=0$. The converse is trivial.
\end{proof}

Our aim now is to prove Proposition \ref{Coho}, which shows that there exists a cohomology on $\Lambda \mathfrak{g}^*\otimes \Lambda_R^2\mathfrak{g}$ which characterises cocommutators on $\mathfrak{g}$. In fact, $\Lambda^2_R \mathfrak{g}$ is a $\mathfrak{g}$-module as shown in the following lemma.

\begin{lemma}\label{reduced_action}
The pair  $(\Lambda^m_R\mathfrak{g},\, \sigma: v\in \mathfrak{g} \mapsto [[v], \cdot]_R \in \mathfrak{gl}(\Lambda^m_R \mathfrak{g}))$ is a $\mathfrak{g}$-module and $\Psi:T\in {\rm Aut}(\mathfrak{g})\mapsto [\Lambda^mT]\in { GL}(\Lambda^m_R\mathfrak{g})$, with $[\Lambda^mT]([w]):=[\Lambda^mT(w)]$ for every $w\in \Lambda^m\mathfrak{g}$, is a Lie group action.
\end{lemma}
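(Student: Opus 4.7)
The plan is to split the proof into two independent parts, handling the $\mathfrak{g}$-module structure first and then the Lie group action.

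For the first claim, I would begin by checking that $\sigma_v:[w]\mapsto[[v,w]_S]$ is well-defined on $\Lambda^m_R\mathfrak{g}$. If $w_1-w_2\in(\Lambda^m\mathfrak{g})^\mathfrak{g}$, then $[v,w_1]_S-[v,w_2]_S=[v,w_1-w_2]_S=0$, so the classes coincide. Linearity is clear, so $\sigma_v\in\mathfrak{gl}(\Lambda^m_R\mathfrak{g})$. It remains to show that $\sigma$ is a Lie algebra homomorphism, i.e.\ $\sigma_{[v_1,v_2]_\mathfrak{g}}=[\sigma_{v_1},\sigma_{v_2}]_{\mathfrak{gl}(\Lambda^m_R\mathfrak{g})}$. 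This is immediate from Example \ref{GrAl} together with Proposition \ref{prop:reduced}: for any $w\in\Lambda^m\mathfrak{g}$,
$$
\sigma_{[v_1,v_2]_\mathfrak{g}}[w]=\pi\bigl([[v_1,v_2]_\mathfrak{g},w]_S\bigr)=\pi\bigl([v_1,[v_2,w]_S]_S-[v_2,[v_1,w]_S]_S\bigr)=(\sigma_{v_1}\sigma_{v_2}-\sigma_{v_2}\sigma_{v_1})[w],
$$
where the middle equality uses that $(\Lambda^m\mathfrak{g},\Lambda^m{\rm ad})$ is already a $\mathfrak{g}$-module. Alternatively, one can invoke (\ref{RSN}) to derive it directly from the graded Jacobi identity for $[\cdot,\cdot]_R$ proved in Proposition \ref{prop:reduced}.

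For the second claim, the decisive point is to verify that $[\Lambda^mT]$ is well-defined on $\Lambda^m_R\mathfrak{g}$, which is equivalent to checking that $\Lambda^mT$ preserves $(\Lambda^m\mathfrak{g})^\mathfrak{g}$ for every $T\in{\rm Aut}(\mathfrak{g})$, not merely for $T\in{\rm Aut}_c(\mathfrak{g})$. The proof of Proposition \ref{prop:lgg} actually establishes precisely this: for $w\in(\Lambda^m\mathfrak{g})^\mathfrak{g}$, $T\in{\rm Aut}(\mathfrak{g})$ and any $v\in\mathfrak{g}$, we have
$$
[v,\Lambda^mT(w)]_S=[TT^{-1}v,\Lambda^mT(w)]_S=\Lambda^mT\bigl([T^{-1}v,w]_S\bigr)=0,
$$
because $T$ is a Lie algebra automorphism (so $\Lambda^mT$ intertwines the adjoint action via $T^{-1}$) and $w$ is $\mathfrak{g}$-invariant. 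Hence $\Lambda^mT(w)\in(\Lambda^m\mathfrak{g})^\mathfrak{g}$, and $[\Lambda^mT]:[w]\mapsto[\Lambda^mT(w)]$ descends to a well-defined linear map on the quotient.

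To finish, I would verify the group-action axioms. The identity $[\Lambda^m{\rm id}_\mathfrak{g}]=\mathrm{id}_{\Lambda^m_R\mathfrak{g}}$ is trivial, and for $T_1,T_2\in{\rm Aut}(\mathfrak{g})$ one has $\Lambda^m(T_1\circ T_2)=\Lambda^mT_1\circ\Lambda^mT_2$ on $\Lambda^m\mathfrak{g}$ (which is built into the definition (\ref{loc1}) of $\Lambda^mT$), so
$$
\Psi(T_1\circ T_2)[w]=[\Lambda^mT_1(\Lambda^mT_2(w))]=\Psi(T_1)\Psi(T_2)[w].
$$
Moreover, $\Psi(T)$ is invertible because $\Psi(T^{-1})$ is its inverse, so $\Psi(T)\in GL(\Lambda^m_R\mathfrak{g})$. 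Smoothness of the action $(T,[w])\mapsto\Psi(T)[w]$ follows because $T\mapsto\Lambda^mT$ depends polynomially on $T$ and the canonical projection $\pi_m:\Lambda^m\mathfrak{g}\to\Lambda^m_R\mathfrak{g}$ is linear; since $\pi_m\circ\Lambda^mT=\Psi(T)\circ\pi_m$, the map $\Psi$ descends smoothly to the quotient. The only delicate step in the whole argument is the invariance of $(\Lambda^m\mathfrak{g})^\mathfrak{g}$ under all of ${\rm Aut}(\mathfrak{g})$, which is already contained in the computation used to prove Proposition \ref{prop:lgg}; everything else is essentially formal.
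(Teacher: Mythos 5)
Your proposal is correct and follows essentially the same route as the paper: well-definedness of $\sigma$ from the well-definedness of the reduced bracket, the homomorphism property from the (graded Jacobi identity underlying the) $\mathfrak{g}$-module structure on $\Lambda^m\mathfrak{g}$, and well-definedness of $[\Lambda^m T]$ from the invariance $\Lambda^m T\bigl((\Lambda^m\mathfrak{g})^{\mathfrak{g}}\bigr)\subset(\Lambda^m\mathfrak{g})^{\mathfrak{g}}$ established in Proposition \ref{prop:lgg}. You correctly flag that this invariance must hold for all of ${\rm Aut}(\mathfrak{g})$ and not just ${\rm Aut}_c(\mathfrak{g})$, and that the computation in Proposition \ref{prop:lgg} already delivers this; the extra verification of the group-action axioms is routine and consistent with the paper.
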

\begin{proof}
Let us show that $(\Lambda^m_R\mathfrak{g},\sigma)$ is a $\mathfrak{g}$-module. The mapping $\sigma$ is well defined since the reduced bracket is well defined. Moreover, $\sigma$ is a Lie algebra homomorphism because the property 2) of the reduced bracket in the proof of Proposition \ref{prop:reduced} for $p,q=1$, and arbitrary $s$ reduces to
$$
[[w_1],[[w_1'],[w_s]]_R]_R+[[w_1'],[[w_s],[w_1]]_R]_R+[[w_s],[[w_1],[w_1']]_R]_R=0
$$
and therefore
$$
\sigma(w_1)\sigma(w_1')([w_s])-\sigma(w_1')\sigma(w_1)([w_s])=\sigma([w_1,w_1'])([w_s]).
$$

Let us now prove that ${\rm Aut}(\mathfrak{g})$ acts on $\Lambda^m_R\mathfrak{g}$. In this respect, it is only necessary to verify that $\Psi(T)$ must be unambiguous. This amounts to proving that if $w,w'\in [w]$, then $[\Lambda^mT]([w])=[\Lambda^mT]([w'])$. Note that $\Lambda^mT(w)=\Lambda^mT(w-w'+w')=\Lambda^mT(w-w')+\Lambda^mT(w')$. Using that $\Lambda^mT(\Lambda^m\mathfrak{g})^\mathfrak{g}\subset (\Lambda^m\mathfrak{g})^\mathfrak{g}$
in virtue of Proposition \ref{prop:lgg}, we obtain that $[\Lambda^mT(w)]=[\Lambda^mT(w')]$. Therefore, $[\Lambda^mT]([w])=[\Lambda^mT(w)]=[\Lambda^mT(w')]=[\Lambda^mT]([w'])$ and $[\Lambda^mT]$ is well-defined.
\end{proof}

Using Lemma \ref{reduced_action}, we now obtain $\mathfrak{g}$-invariant maps on $\Lambda^m_R\mathfrak{g}$ out of certain $\mathfrak{g}$-invariant maps on $\Lambda^m\mathfrak{g}$. 

\begin{proposition}\label{RedgInv} If $b:\Lambda^m\mathfrak{g}\rightarrow \mathbb{R}$ is a symmetric or anti-symmetric  $\mathfrak{g}$-invariant $k$-linear map and its kernel contains $(\Lambda^m\mathfrak{g})^\mathfrak{g}$, then we obtain a $\mathfrak{g}$-invariant $k$-linear map $b_R$ on $\Lambda_R^m\mathfrak{g}$ given by
\begin{equation}\label{gInvRed}
b_R([w_1],\ldots,[w_k]):=b(w_1,\ldots,w_k),\qquad \forall w_1,\ldots,w_k\in \Lambda^m\mathfrak{g}.
\end{equation}
\end{proposition}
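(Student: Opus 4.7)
The plan is to verify three things about the formula (\ref{gInvRed}): (a) it is well defined on equivalence classes, (b) it inherits $k$-linearity from $b$, and (c) the resulting map is $\mathfrak{g}$-invariant relative to the $\mathfrak{g}$-module structure on $\Lambda_R^m\mathfrak{g}$ given by Lemma \ref{reduced_action}. The substantive content is (a); both (b) and (c) will be essentially formal once (a) is in place. I read the hypothesis ``the kernel contains $(\Lambda^m\mathfrak{g})^\mathfrak{g}$'' as saying that $b(u,w_2,\ldots,w_k)=0$ whenever $u\in(\Lambda^m\mathfrak{g})^\mathfrak{g}$ and $w_2,\ldots,w_k\in\Lambda^m\mathfrak{g}$ are arbitrary, and I would state this interpretation at the beginning of the proof.

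For (a), pick representatives $w_i,w_i'\in\Lambda^m\mathfrak{g}$ with $[w_i]=[w_i']$, so that $u_i:=w_i-w_i'\in(\Lambda^m\mathfrak{g})^\mathfrak{g}$ for each $i\in\overline{1,k}$. Using $k$-linearity of $b$ one gets a telescoping identity
$$
b(w_1,\ldots,w_k)-b(w_1',\ldots,w_k')=\sum_{i=1}^k b(w_1',\ldots,w_{i-1}',u_i,w_{i+1},\ldots,w_k).
$$
The symmetry or antisymmetry of $b$ permits each summand to be rewritten, up to a sign, as $\pm b(u_i,\cdot,\ldots,\cdot)$, and the kernel hypothesis on $b$ forces every such term to vanish. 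Hence $b_R$ does not depend on the choice of representatives. Part (b) is then immediate: each equivalence class $[\lambda w+\mu w']=\lambda[w]+\mu[w']$, and $k$-linearity transfers through the quotient map without further work.

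For (c), recall from Lemma \ref{reduced_action} that the $\mathfrak{g}$-module structure on $\Lambda^m_R\mathfrak{g}$ is $\sigma_v([w])=[[v,w]_S]$. The $\mathfrak{g}$-invariance of $b_R$ then amounts to
$$
\sum_{i=1}^k b_R\!\left([w_1],\ldots,\sigma_v([w_i]),\ldots,[w_k]\right)
=\sum_{i=1}^k b\!\left(w_1,\ldots,[v,w_i]_S,\ldots,w_k\right),
$$
where the equality uses the definition of $b_R$ and of $\sigma_v$. The right-hand side is precisely the $\mathfrak{g}$-invariance condition for $b$ with respect to the $\mathfrak{g}$-module $(\Lambda^m\mathfrak{g},\mathrm{ad})$ of Example \ref{GrAl}, which holds by hypothesis; consequently the sum vanishes and $b_R$ is $\mathfrak{g}$-invariant. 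The only genuine obstacle is isolating the correct interpretation of ``kernel'' for a multilinear $b$ and exploiting symmetry/antisymmetry to reduce vanishing in any slot to vanishing in the first one; after that, everything else reduces to transcribing the corresponding statements for $b$ through the quotient projection $\pi$.
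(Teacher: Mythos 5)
Your proof is correct and follows essentially the same route as the paper's: a telescoping replacement of representatives one slot at a time, killed by the hypothesis that $(\Lambda^m\mathfrak{g})^\mathfrak{g}\subset\ker b$, followed by reducing the $\mathfrak{g}$-invariance of $b_R$ directly to that of $b$ via the definition of $\sigma$. You are in fact slightly more explicit than the paper about the one place the symmetry/antisymmetry hypothesis is actually used, namely to transport the kernel condition from the first slot to an arbitrary slot.
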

\begin{proof} The map (\ref{gInvRed}) is well defined because if $w'_i\in [w_i]$ for $i\in \overline{1,k}$, then $w_i-w'_i\in (\Lambda^m\mathfrak{g})^{\mathfrak{g}}$.  Since the kernel of $b$ contains $(\Lambda^m\mathfrak{g})^{\mathfrak{g}}$, one has 
$$
b_R([w_1],\ldots,[w_k])\!=\!b(w_1',\ldots,w_{k-1}',w_k)\!=\!b(w'_1, \ldots,w_k-w_k'+w_k')\!=\!b(w_1',\ldots,w'_k)\!=\!b_R([w_1'],\ldots,[w_k'])
$$
and the value of $b_R$ does not depend on the representative of each particular equivalence class of $\Lambda^m_R\mathfrak{g}$.

The $\mathfrak{g}$-invariance of $b_R$ stems from the following relations
$$
b_R(\sigma(v)[w_1],\ldots,[w_k])+\ldots+b_R([w_1],\ldots,\sigma(v)[w_k])=b([v,w_1]_S,\ldots,w_k)+\ldots+b(w_1,\ldots,[v,w_k]_S)=0.
$$

\end{proof}
\begin{proposition} \label{Coho}
	There is a natural cohomology complex on the spaces $\Lambda^q\mathfrak{g}^*\otimes \Lambda^2_R\mathfrak{g}$ making the following diagram commutative:
	\begin{displaymath}
	\xymatrix{
		\mathbb{R}\otimes \Lambda^2\mathfrak{g}\ar[d]^{{\rm id}\otimes \pi_2}\ar[r]^{{\rm d}} & \mathfrak{g}^* \otimes \Lambda^2\mathfrak{g}\ar[r]^{{\rm d}}\ar[d]^{{\rm id}\otimes\pi_2} & \Lambda^2\mathfrak{g}^*\otimes \Lambda^2\mathfrak{g} \ar[r]^{{\rm d}} \ar[d]^{{\rm id}\otimes\pi_2} & \Lambda^3\mathfrak{g}^*\otimes \Lambda^2\mathfrak{g} \ar[r]^-{{\rm d}} \ar[d]^{{\rm id}\otimes\pi_2} & \ldots \\
		\mathbb{R}\otimes \Lambda^2_R\mathfrak{g}\ar[r]^{{\rm d}_R} &\mathfrak{g}^* \otimes \Lambda^2_R\mathfrak{g}\ar[r]^{{\rm d}_R} & \Lambda^2\mathfrak{g}^*\otimes \Lambda^2_R\mathfrak{g} \ar[r]^{{\rm d}_R}& \Lambda^3\mathfrak{g}^*\otimes \Lambda_R^2\mathfrak{g} \ar[r]^-{{\rm d}_R} & \ldots \\
	}
	\end{displaymath}
\end{proposition}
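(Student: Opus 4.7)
My plan is to recognise the top row as the Chevalley--Eilenberg complex of $\mathfrak{g}$ with coefficients in the $\mathfrak{g}$-module $(\Lambda^2\mathfrak{g},{\rm ad})$ whose action is $v\cdot w=[v,w]_S$, and the bottom row as the same complex with coefficients in the reduced module $(\Lambda^2_R\mathfrak{g},\sigma)$ provided by Lemma \ref{reduced_action}. Concretely, for $\omega\in \Lambda^q\mathfrak{g}^*\otimes \Lambda^2\mathfrak{g}$ I would define
\vskip-0.2cm
\begin{multline*}
({\rm d}\omega)(v_0,\ldots,v_q):=\sum_{i=0}^{q}(-1)^{i}[v_i,\omega(v_0,\ldots,\widehat{v_i},\ldots,v_q)]_S\\+\sum_{0\leq i<j\leq q}(-1)^{i+j}\omega([v_i,v_j]_\mathfrak{g},v_0,\ldots,\widehat{v_i},\ldots,\widehat{v_j},\ldots,v_q),
\end{multline*}
\vskip-0.2cm
and similarly define ${\rm d}_R$ on $\Lambda^q\mathfrak{g}^*\otimes \Lambda^2_R\mathfrak{g}$ by replacing $[v_i,\cdot]_S$ by $\sigma(v_i)=[[v_i],\cdot]_R$ and the coefficient values by their classes in $\Lambda^2_R\mathfrak{g}$.

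The first task is to show that ${\rm d}_R$ is well defined and satisfies ${\rm d}_R^2=0$. Well-definedness is immediate from Lemma \ref{reduced_action}, which ensures $\sigma$ is a Lie algebra morphism into $\mathfrak{gl}(\Lambda^2_R\mathfrak{g})$, so the standard Chevalley--Eilenberg construction applies verbatim with $(\Lambda^2_R\mathfrak{g},\sigma)$ playing the role of the coefficient module. The identity ${\rm d}_R^2=0$ then follows from the textbook computation, which only uses that $\sigma$ is a representation and the Jacobi identity in $\mathfrak{g}$; I would not reproduce this computation but merely invoke it. The fact that ${\rm d}^2=0$ on the top row is analogous, using that ${\rm ad}:\mathfrak{g}\to \mathfrak{gl}(\Lambda^2\mathfrak{g})$, $v\mapsto [v,\cdot]_S$, is a Lie algebra morphism (Example \ref{GrAl}).

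The second task is to verify the commutativity of every square in the diagram, i.e.\ $({\rm id}\otimes\pi_2)\circ {\rm d}={\rm d}_R\circ({\rm id}\otimes\pi_2)$. This is a direct consequence of the $\mathfrak{g}$-equivariance of $\pi_2$, which in turn is just the definition of the reduced bracket: for every $v\in \mathfrak{g}$ and every $w\in \Lambda^2\mathfrak{g}$,
\vskip-0.2cm
$$
\pi_2([v,w]_S)=[[v],[w]]_R=\sigma(v)\pi_2(w).
$$
\vskip-0.1cm
Applying $({\rm id}\otimes\pi_2)$ termwise to the defining formula of ${\rm d}$ and using this equivariance in the first sum, together with the obvious fact that $({\rm id}\otimes\pi_2)$ commutes with evaluation of $\omega$ on tuples of vectors, yields the formula for ${\rm d}_R$ applied to $({\rm id}\otimes\pi_2)\omega$, which is exactly what is required.

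The only mildly subtle point, which I would flag but not belabour, is that ${\rm id}\otimes \pi_2$ is surjective with kernel $\Lambda^q\mathfrak{g}^*\otimes (\Lambda^2\mathfrak{g})^\mathfrak{g}$, and that $(\Lambda^2\mathfrak{g})^\mathfrak{g}$ is stable under the action of ${\rm ad}$ (trivially, since ${\rm ad}_v$ vanishes there) and under the Chevalley--Eilenberg differential; this is what guarantees that ${\rm d}_R$ may alternatively be introduced as the quotient differential induced by ${\rm d}$, and the two constructions agree. No genuine obstacle arises: once the Chevalley--Eilenberg formalism is identified and Lemma \ref{reduced_action} is in hand, the proposition reduces to bookkeeping.
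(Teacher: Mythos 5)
Your proposal is correct and follows essentially the same route as the paper: both write down the Chevalley--Eilenberg-type differential with coefficients in $\Lambda^2_R\mathfrak{g}$, use the well-definedness of the reduced bracket (Lemma \ref{reduced_action}) to make $\mathrm{d}_R$ legitimate, and obtain commutativity from $\pi_2([v,w]_S)=[[v],[w]]_R$. The only cosmetic difference is that the paper deduces $\mathrm{d}_R^2=0$ directly from $\mathrm{d}^2=0$ via the quotient, whereas you invoke the standard computation for the representation $\sigma$ --- both are fine, and you even note the quotient viewpoint yourself.
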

\begin{proof}
	It is enough to see that if $w\in \Lambda^2\mathfrak{g}$ and $\theta\in \Lambda^m\mathfrak{g}^*$, then
	\vskip -0.7cm
	\begin{multline*}
	\textrm{d}_R(\theta \otimes [w])(v_1,\ldots,v_{k+1}) := \sum_{i=1}^{k+1} (-1)^{i+1} \theta(v_1, \ldots , \hat{v_i},\ldots,v_{k+1}) \otimes [[v_i], [w]]_{R} \\+ \sum_{\substack{p,q=1\\p<q}}^{k+1} (-1)^{p+q} \theta([v_p, v_q] ,v_1 ,\ldots,\hat v_p,\ldots,\hat v_q,\ldots,v_{k+1}) \otimes [w],
	\end{multline*}
	\vskip -0.4cm \noindent where the hatted elements are dropped, is $(m+1)$-linear and anti-symmetric. Moreover, ${\rm d}_R(\theta \otimes [r]) = [{\rm d}(\theta \otimes r)]$, the commutativity of the diagram is straightforward and ${\rm d}_R^2=0$ stems immediately from the fact that ${\rm d}^2=0$.
\end{proof}

\section{On the automorphisms of a Lie algebra}\label{sect:auto_alg}

This section investigates $\textrm{Aut}(\mathfrak{g})$ and its relations to ${\rm Aut}(\mathfrak{g})$-invariant metrics on $\Lambda\mathfrak{g}$. This will be used to classify solvable Lie bialgebras in Section \ref{Classification}.

\begin{proposition}\label{aut} 
Let $\mathfrak{der}(\mathfrak{g})$ be a Lie algebra of derivations of $\mathfrak{g}$. Then, $\mathfrak{der}(\mathfrak{g}) \simeq \mathfrak{aut}(\mathfrak{g})$.
	\end{proposition}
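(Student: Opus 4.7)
The plan is to realise both $\mathfrak{aut}(\mathfrak{g})$ and $\mathfrak{der}(\mathfrak{g})$ as Lie subalgebras of $\mathfrak{gl}(\mathfrak{g})$ with the standard commutator bracket and verify that they coincide as subsets. Once this set equality is established, the fact that both carry the commutator bracket inherited from $\mathfrak{gl}(\mathfrak{g})$ makes the identity map a Lie algebra isomorphism automatically, so the main content reduces to a two-sided inclusion.

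For the inclusion $\mathfrak{aut}(\mathfrak{g}) \subseteq \mathfrak{der}(\mathfrak{g})$, I would exploit that $\mathrm{Aut}(\mathfrak{g}) = GL(\mathfrak{g}) \cap \Phi^{-1}(0)$ from Example \ref{ex:Lie_alg_aut}, so $\mathfrak{aut}(\mathfrak{g})$ is the kernel of $d\Phi_{\mathrm{id}_\mathfrak{g}}$ restricted to $\mathfrak{gl}(\mathfrak{g})$. Taking the curve $\gamma(t) = \mathrm{id}_\mathfrak{g} + tF$ for $F \in \mathfrak{gl}(\mathfrak{g})$ and differentiating $\Phi(\gamma(t))(v_1,v_2) = [\gamma(t)v_1, \gamma(t)v_2]_\mathfrak{g} - \gamma(t)[v_1,v_2]_\mathfrak{g}$ at $t=0$ yields
$$
d\Phi_{\mathrm{id}_\mathfrak{g}}(F)(v_1, v_2) = [F(v_1), v_2]_\mathfrak{g} + [v_1, F(v_2)]_\mathfrak{g} - F[v_1, v_2]_\mathfrak{g},
$$
so $F \in \mathfrak{aut}(\mathfrak{g})$ forces $F$ to satisfy the Leibniz rule for $[\cdot,\cdot]_\mathfrak{g}$, i.e. $F \in \mathfrak{der}(\mathfrak{g})$.

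For the reverse inclusion $\mathfrak{der}(\mathfrak{g}) \subseteq \mathfrak{aut}(\mathfrak{g})$, I would take $D \in \mathfrak{der}(\mathfrak{g})$ and show that the one-parameter subgroup $\exp(tD) \subset GL(\mathfrak{g})$ lands in $\mathrm{Aut}(\mathfrak{g})$ for every $t \in \mathbb{R}$. The standard trick is to define, for fixed $v_1, v_2 \in \mathfrak{g}$, the two curves $\alpha(t) := \exp(tD)[v_1, v_2]_\mathfrak{g}$ and $\beta(t) := [\exp(tD) v_1, \exp(tD) v_2]_\mathfrak{g}$. Clearly $\alpha(0) = \beta(0)$ and $\dot\alpha(t) = D\alpha(t)$; using the derivation property of $D$ one obtains $\dot\beta(t) = D\beta(t)$ as well. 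Uniqueness for the linear ODE $\dot x = Dx$ then gives $\alpha \equiv \beta$, so $\exp(tD) \in \mathrm{Aut}(\mathfrak{g})$ for all $t$. Differentiating at $t=0$ places $D$ into $T_{\mathrm{id}_\mathfrak{g}}\mathrm{Aut}(\mathfrak{g}) \simeq \mathfrak{aut}(\mathfrak{g})$.

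I do not anticipate a serious obstacle. The only subtlety is making sure the two identifications—$\mathfrak{aut}(\mathfrak{g})$ as the image of $\widehat{\mathrm{ad}}$ inside $\mathfrak{gl}(\mathfrak{g})$, and $\mathfrak{der}(\mathfrak{g})$ as a subalgebra of $\mathfrak{gl}(\mathfrak{g})$—use compatible bracket structures; this follows because $\mathrm{Aut}(\mathfrak{g})$ is a Lie subgroup of $GL(\mathfrak{g})$, so its Lie bracket on $T_{\mathrm{id}_\mathfrak{g}}\mathrm{Aut}(\mathfrak{g})$ is just the restriction of the commutator bracket on $\mathfrak{gl}(\mathfrak{g})$, which is also the natural bracket on $\mathfrak{der}(\mathfrak{g})$. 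The ODE step needs the derivation identity applied to the arguments $\exp(tD)v_1$ and $\exp(tD)v_2$, which is immediate once stated carefully.
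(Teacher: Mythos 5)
Your proposal is correct and follows essentially the same route as the paper: differentiating the automorphism identity at the identity to land in $\mathfrak{der}(\mathfrak{g})$, and exponentiating a derivation to a one-parameter group of automorphisms for the converse. Your ODE-uniqueness argument for why $\exp(tD)\in{\rm Aut}(\mathfrak{g})$ simply spells out a step the paper leaves implicit, and your computation via $d\Phi_{{\rm id}_\mathfrak{g}}$ is an equivalent phrasing of the paper's curve differentiation.
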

\begin{proof} 
The $\mathfrak{aut}(\mathfrak{g})$ is spanned by the tangent vectors to curves $\gamma:t\in \mathbb{R}\mapsto T_t \in {\rm Aut}(\mathfrak{g})$ such that $\gamma(0) = \textrm{Id}$. Since $\{T_t\}_{t\in\mathbb{R}} \subset{\rm Aut}(\mathfrak{g}) $ and defining $D(v):=\frac{{\rm d}}{{\rm d}t}\big|_{t=0}T_t(v)$ for every $v\in \mathfrak{g}$, one has $[T_t(v_1),T_t(v_2)]-T_t[v_1,v_2]=0$ and therefore
	\begin{equation}\label{Det:D}
\frac{{\rm d}}{{\rm d}t}\bigg|_{t=0}([T_t(v_1),T_t(v_2)]-T_t[v_1,v_2])=0,\,\, \forall v_1,v_2\in \mathfrak{g}
\Rightarrow 	D([v_1,v_2])=[D(v_1),v_2]+[v_1,D(v_2)],\,\, \forall v_1,v_2\in \mathfrak{g}.
	\end{equation}
	In other words, $D$ is a derivation of $\mathfrak{g}$. 
	Conversely, every $D \in \mathfrak{der}(\mathfrak{g})$ gives rise to a curve $\gamma:t\in\mathbb{R}\mapsto T_t:=\exp(tD)\in GL(\mathfrak{g})$. Since $T_0={\rm Id}$ and $D \in \mathfrak{der}(\mathfrak{g})$, one has that $T_t\in {\rm Aut}(\mathfrak{g})$ for every $t\in \mathbb{R}$.
	\end{proof}

Derivations of $\mathfrak{g}$ can be obtained by determining those  $T\in \mathfrak{gl}(\mathfrak{g})$ satisfying the right-hand side of (\ref{Det:D}), which can be solved via computer programs even for relatively high-dimensional Lie algebras.

Proposition \ref{aut} also provides information about the connected part of the neutral element of ${\rm Aut}(\mathfrak{g})$, namely ${\rm Aut}_c(\mathfrak{g})$. Unfortunately, ${\rm Aut}(\mathfrak{g})$ need not be connected and the determination of its different connected parts can be tricky. 

To illustrate our above claim, consider $\mathfrak{sl}_2$. The Killing metric on $\mathfrak{sl}_2$, given by (\ref{sl2A}) in the basis $\{e_1,e_2,e_3\}$ indicated in Table \ref{tabela3w}, is indefinite and non-degenerate with signature $(2,1)$. The quadratic function on $\mathfrak{sl}_2$ induced by this Killing metric is given by $f(xe_1+ye_2+ze_3)=2x^2+4yz$. The surfaces, $S_k$, consists of points $(x, y, z)$, where $2x^2+4yz=k$. If $k<0$ such surfaces are two-sheeted hyperboloids contained in the region of $\mathfrak{sl}_2$ with $z>0$ or in the region of $\mathfrak{sl}_2$ with $z<0$.
The space ${\rm Aut}(\mathfrak{sl}_2)$ consists of isometries of the Killing metric. The connected part of $\textrm{Id}$ in ${\rm Aut}(\mathfrak{sl}_2)$ leaves invariant the elements of each component of a two-sheeted hyperboloid. The element $T \in {\rm Aut}(\mathfrak{sl}_2)$ such that $T(e_1)=-1$, $T(e_2)=-e_3$, and $T(e_3)=-2$ does not preserve the sign of the coordinate $z$. Consequently, $T\notin  {\rm Aut}_c(\mathfrak{sl}_2)$ and ${\rm Aut}(\mathfrak{sl}_2)$ is not connected. Since ${\rm Inn}(\mathfrak{g})$ is connected, ${\rm Inn}(\mathfrak{sl}_2)\neq {\rm Aut}(\mathfrak{sl}_2)$ and the assumption ${\rm Inn}(\mathfrak{sl}_2)={\rm Aut}(\mathfrak{sl}_2)$, made in \cite{Farinati}, is incorrect.

The ${\rm Aut}(\mathfrak{g})$-invariant metrics are easier to obtain than ${\rm Aut}(\mathfrak{g})$, e.g. by Proposition \ref{prop:glrho_ginv} they are a subclass of $\mathfrak{der}(\mathfrak{g})$-invariant metrics . It will be shown in Section \ref{Classification} that this will frequently be enough to characterize coboundary real three-dimensional Lie bialgebras.

To use the above fact in practical applications is convenient to enunciate the following result.

\begin{theorem} If $b$ is a $k$-linear map on $\mathfrak{g}$ invariant under ${\rm Aut}(\mathfrak{g})$, then its extension $b_{\Lambda^m\mathfrak{g}}$ is invariant under the action of ${\rm Aut}(\mathfrak{g})$.  
	\end{theorem}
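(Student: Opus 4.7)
The plan is to mimic the calculation at the core of the proof of Theorem \ref{extension}, but replace the appeal to $\mathfrak{g}$-invariance (which via Proposition \ref{prop:glrho_ginv} and the exponential generation of $GL(\rho)$ yields invariance only under the connected component ${\rm Aut}_c(\mathfrak{g})$) by a direct use of the stronger hypothesis that $b$ is invariant under the full group ${\rm Aut}(\mathfrak{g})$.

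First, I would reduce the problem to checking invariance on decomposable elements. Since $b_{\Lambda^m\mathfrak{g}}$ is $k$-multilinear and decomposable elements $v_J = v_{J(1)}\wedge\ldots\wedge v_{J(m)}$ span $\Lambda^m\mathfrak{g}$, it suffices to verify
\[
b_{\Lambda^m\mathfrak{g}}(\Lambda^m T v_{J_1},\ldots,\Lambda^m T v_{J_k}) = b_{\Lambda^m\mathfrak{g}}(v_{J_1},\ldots,v_{J_k})
\]
for every $T\in {\rm Aut}(\mathfrak{g})$ and all multi-indices $J_1,\ldots,J_k$ of length $m$. Here $\Lambda^m T$ denotes the induced linear isomorphism of $\Lambda^m \mathfrak{g}$ acting on a decomposable element by $\Lambda^m T(v_{J(1)}\wedge\ldots\wedge v_{J(m)}) = Tv_{J(1)}\wedge\ldots\wedge Tv_{J(m)}$, which is well defined for any $T\in GL(\mathfrak{g})$ and therefore for any $T\in{\rm Aut}(\mathfrak{g})$.

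Next, I would substitute this distribution of $T$ over the wedge factors into the explicit formula from condition 3 of Theorem \ref{extension}, obtaining
\[
b_{\Lambda^m\mathfrak{g}}(\Lambda^m T v_{J_1},\ldots,\Lambda^m T v_{J_k}) = \sum_{\sigma_1,\ldots,\sigma_k\in S_m}{\rm sg}(\sigma_1\ldots\sigma_k)\frac{1}{m!}\prod_{r=1}^m b\bigl(Tv_{J_1(\sigma_1^{-1}(r))},\ldots,Tv_{J_k(\sigma_k^{-1}(r))}\bigr).
\]
By the ${\rm Aut}(\mathfrak{g})$-invariance of $b$, each inner factor satisfies $b(Tx_1,\ldots,Tx_k)=b(x_1,\ldots,x_k)$, so every occurrence of $T$ is stripped away and the sum collapses to $b_{\Lambda^m\mathfrak{g}}(v_{J_1},\ldots,v_{J_k})$, as desired.

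Finally, the case $m=0$ is trivial because $\Lambda^0 T$ is the identity on $\mathbb{R}$ and $b_{\Lambda^0\mathfrak{g}}(1,\ldots,1)=1$ by the normalization condition 2 of Theorem \ref{extension}. Orthogonality between $\Lambda^m\mathfrak{g}$ and $\Lambda^{m'}\mathfrak{g}$ for $m\neq m'$ is automatically preserved, since $\Lambda T$ respects the grading of $\Lambda\mathfrak{g}$, so mixed arguments never appear in $b_{\Lambda\mathfrak{g}}(\Lambda T\,\cdot,\ldots,\Lambda T\,\cdot)$. The main and essentially only obstacle is notational: one must keep track that the symbol $\Lambda^m T$ used here refers to the group-theoretic induced map $T\otimes\ldots\otimes T$ restricted to $\Lambda^m\mathfrak{g}$ (as in the proof of Proposition \ref{Prop:invk}) rather than the Lie-algebraic extension of (\ref{loc1}). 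Everything else is a transcription of the computation already carried out in Theorem \ref{extension}, with the ${\rm Aut}(\mathfrak{g})$-invariance of $b$ replacing the exponential-based $GL(\rho)$-invariance used there.
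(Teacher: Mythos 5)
Your proof is correct and follows essentially the same route the paper takes: the theorem is stated without an explicit proof, but the computation you carry out is precisely the one the authors perform for the analogous statement about the Killing metric in Section 5 (and in the proof of Theorem \ref{extension}), namely expanding $b_{\Lambda^m\mathfrak{g}}$ on decomposable elements via condition 3, distributing $T$ over the wedge factors, and stripping each $T$ using the invariance of $b$. Your remark distinguishing the multiplicative induced map $T\otimes\ldots\otimes T$ from the derivation-type extension in (\ref{loc1}) is a worthwhile clarification of a notational ambiguity the paper itself glosses over.
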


The crux now is that if $b$ is a $k$-linear symmetric metric on $\mathfrak{g}$ invariant relative to ${\rm Aut}(\mathfrak{g})$, then the spaces $S_k$ where the polynomial on $\Lambda^m\mathfrak{g}$ of the form
$
p(v):=b_{\Lambda^m\mathfrak{g}}(v,\ldots, v)$, for all $v\in \Lambda^m\mathfrak{g},
$
takes a constant value $k$ are invariant under the action of ${\rm Aut}(\mathfrak{g})$ on $\Lambda^m\mathfrak{g}$. The orbits of ${\rm Aut}(\mathfrak{g})$ on $\Lambda^m\mathfrak{g}$ need not be connected, but they must be contained in a single $S_k$. Using that Inn$(\mathfrak{g})$ can be relatively easily obtained and it gives information on the connected components of ${\rm Aut}(\mathfrak{g})$, we can investigate the action of the whole ${\rm Aut}(\mathfrak{g})$ by searching elements connecting the different orbits of {\rm Inn}$(\mathfrak{g})$ within the same $S_k$. This process will be illustrated in Section \ref{Classification}.

Let us now provide hints to characterize automorphisms for Lie algebras. More specifically, let us analyse properties of $\Lambda^2T$ for every $T\in {\rm Aut}(\mathfrak{g})$.

In the case of complex simple or semi-simple Lie algebras, the space of Lie algebra automorphisms is determined by the inner automorphisms of the Lie algebra, which already had a characterization in this work, and the Dynkin diagram \cite{Ha15,Jacobson}. Meanwhile, automorphisms of general Lie algebras cannot be determined so easily. In particular, we focus upon automorphisms of solvable and nilpotent Lie algebras.

Consider for instance a solvable or nilpotent Lie algebra $\mathfrak{g}$. The \textit{derived} and \textit{lower central series} are defined recurrently as a the sequence of ideals given by \cite{Ha15}:
$$
\mathfrak{g}^{p)}:=[\mathfrak{g}^{p-1)},\mathfrak{g}^{p-1)}],\qquad\mathfrak{g}_{p)}:=[\mathfrak{g},\mathfrak{g}_{p-1)}],\qquad \mathfrak{g}^{0)}:=\mathfrak{g}_{0)}:=\mathfrak{g},\qquad \forall p\in \mathbb{N}.
$$ 
Moreover, if $T\in {\rm Aut}(\mathfrak{g})$, then $T\mathfrak{g}=\mathfrak{g}$ and $T\mathfrak{g}_{p)}=[T\mathfrak{g},T\mathfrak{g}_{p-1)}]=[\mathfrak{g},T\mathfrak{g}_{p-1)}]$. By induction, $T\mathfrak{g}_{p)}=\mathfrak{g}_{p)}$ for $p\in \mathbb{N} \cup \{0\}$. A similar result applies to  derived series. 

Given a solvable Lie algebra $\mathfrak{g}$, the elements of ${\rm Aut}(\mathfrak{g})$ leave invariant the elementary sequence
$$
\mathfrak{s}_{pq}:=\mathfrak{g}_{p)}\wedge \mathfrak{g}_{q)}, \quad p\leq q,\quad p,q\in \mathbb{N}\cup \{0\}.
$$
If $\mathfrak{s}_{pq}\neq 0$, then $\mathfrak{s}_{pq} \supset \mathfrak{s}_{lm}$ if and only if $p\leq l$ and $q\leq m$. Similar results apply to the spaces 
$$
\mathfrak{s}^{pq}:=\mathfrak{g}^{p)}\wedge \mathfrak{g}^{q)}, \quad p\leq q,
$$
for $p,q\in \mathbb{N}\cup \{0\}$. Above relations allow to estimate the form of $\Lambda^2T$.

\section{Study of real three-dimensional coboundary Lie bialgebras}\label{Classification}

This section exploits previous techniques to analyse and to classify, up to Lie algebra automorphisms, coboundary real three-dimensional Lie bialgebras. The use of gradations allows us to obtain $\mathfrak{g}$-invariant elements of Lie bialgebras and to obtain, relatively easily, solutions to CYBEs. Instead of using all automorphisms in the classification problem of Lie bialgebras, which is complicated (cf. \cite{Farinati}), we focus on the classification up to inner Lie algebra automorphisms, which is easier. Next, the derivation of a few not inner automorphisms leads to the final classification. Our results retrieve geometrically findings in \cite{Farinati, Gomez}, solve minor gaps in these works, and provide a new approach. 

\begin{landscape}
\begin{table}[ht]
\centering
\begin{tabular}{|c|c|c|c|c|c|c|c|c|c|}
\hline
 & $[e_1,e_2]$ &$[e_1,e_3]$&$[e_3,e_2]$ & $\mathcal{S}_k$& $G$ & $\mathfrak{g}$ &$\Lambda^2 \mathfrak{g}$&$\Lambda^3 \mathfrak{g}$ &Root\\
\hline 
$\mathfrak{sl}_2$& $e_2$ &$-e_3,$&$ -e_1$& \parbox[c]{4.6cm}{\vspace{3pt}$2xy-z^2 = k$, $k \in \mathbb{R}_+$\\$2xy-z^2=0, x^2+y^2\neq 0,k=0$\\$2xy-z^2 = k$, $k \in \mathbb{R}_-$\vspace{3pt}}&
$\mathbb{Z}$&
\parbox[c]{2.9cm}{
\centering
\begin{tikzpicture}[scale = 0.5]
{
\filldraw (0,0) circle (1pt) node[above]{$e_1$} node[below]{$(0)$};
\draw [-] (0,0)--(2,0) node[above]{$e_2$};
\filldraw (2,0) circle (1pt) node[below]{$(1)$};
\draw [-] (0,0)--(-2,0) node[above]{$e_3$};
\filldraw (-2,0) circle (1pt) node[below]{$(-1)$};
}
\end{tikzpicture}
}
&
\parbox[c]{2.9cm}{
\centering
\begin{tikzpicture}[scale = 0.5]
{
\filldraw (0,0) circle (1pt) node[above]{$e_{23}$} node[below]{$(0)$};
\draw [-] (0,0)--(2,0) node[above]{$\mathbf{e_{12}}$};
\filldraw (2,0) circle (1pt) node[below]{$(1)$};
\draw [-] (0,0)--(-2,0) node[above]{$\mathbf{e_{13}}$};
\filldraw (-2,0) circle (1pt) node[below]{$(-1)$};
}
\end{tikzpicture}
}
&
\parbox[c]{1.5cm}{
\centering
\begin{tikzpicture}[scale = 0.5]
{
\filldraw (0,0) circle (1pt) node[above]{$e_{123}$} node[below]{$(0)$};
\draw [-] (0,0)--(1,0);
\draw [-] (0,0)--(-1,0);
}
\end{tikzpicture}
}&Yes
\\\hline
$\mathfrak{su}_2$ &
$e_3$&$-e_2$&$-e_1$& \parbox[c]{4.6cm}{\vspace{3pt}$x^2 + y^2 + z^2 = k$, $k \in \mathbb{R}_{+}$\vspace{3pt}}&
$\mathbb{Z}_2$&
\parbox[c]{2.9cm}{
\centering
\begin{tikzpicture}[scale = 0.5]
{
\filldraw (0,0) circle (1pt) node[above]{$e_a$} node[below]{$(0)$};
\draw [-] (0,0)--(3,0) node[above]{$e_b, e_c$};
\filldraw (3,0) circle (1pt) node[below]{$(1)$};
}
\end{tikzpicture}
}
&
\parbox[c]{2.9cm}{
\centering
\begin{tikzpicture}[scale = 0.5]
{
\filldraw (0,0) circle (1pt) node[above]{$e_{bc}$} node[below]{$(0)$};
\draw [-] (0,0)--(3,0) node[above]{${ e_{ba}, e_{ac}}$};
\filldraw (3,0) circle (1pt) node[below]{$(1)$};
}
\end{tikzpicture}
}
&
\parbox[c]{1.5cm}{
\centering
\begin{tikzpicture}[scale = 0.5]
{
\filldraw (0,0) circle (1pt) node[above]{$e_{abc}$} node[below]{$(0)$};
\draw [-] (0,0)--(1,0);
\draw [-] (0,0)--(-1,0);
}
\end{tikzpicture}
}&No
\\ \hline
$\mathfrak{h}$&
$e_3$& $0$ &$0$& \parbox[c]{4.6cm}{\vspace{3pt}$z\neq 0$, $k = 1$\vspace{3pt}} &$\mathbb{Z}$&
\parbox[c]{2.9cm}{
\centering
\begin{tikzpicture}[scale = 0.5]
{
\filldraw (0,0) circle (1pt) node[above]{$e_{2}$} node[below]{$(2)$};
\draw [-] (0,0)--(2,0) node[above]{$e_{3}$};
\filldraw (2,0) circle (1pt) node[below]{$(3)$};
\draw [-] (0,0)--(-2,0) node[above]{$e_{1}$};
\filldraw (-2,0) circle (1pt) node[below]{$(1)$};
}
\end{tikzpicture}
}
&
\parbox[c]{2.9cm}{
\centering
\begin{tikzpicture}[scale = 0.5]
{
\filldraw (0,0) circle (1pt) node[above]{$\mathbf{e_{13}}$} node[below]{$(4)$};
\draw [-] (0,0)--(2,0) node[above]{$\mathbf{e_{23}}$};
\filldraw (2,0) circle (1pt) node[below]{$(5)$};
\draw [-] (0,0)--(-2,0) node[above]{$e_{12}$};
\filldraw (-2,0) circle (1pt) node[below]{$(3)$};
}
\end{tikzpicture}
}
&
\parbox[c]{1.5cm}{
\centering
\begin{tikzpicture}[scale = 0.5]
{
\filldraw (0,0) circle (1pt) node[above]{$e_{123}$} node[below]{$(6)$};
\draw [-] (0,0)--(1,0);
\draw [-] (0,0)--(-1,0);
}
\end{tikzpicture}
}&No
\\ \hline
$\mathfrak{r}'_{3,0}$ &$
-e_3$&$e_2$&$0$& \parbox[c]{4.6cm}{\vspace{3pt}$x^2 + y^2 > 0$, $k = 1$\vspace{3pt}}&
$\mathbb{Z}$&
\parbox[c]{2.9cm}{
\centering
\begin{tikzpicture}[scale = 0.5]
{
\filldraw (0,0) circle (1pt) node[above]{$e_{1}$} node[below]{$(0)$};
\draw [-] (0,0)--(3,0);
\filldraw (3,0) circle (1pt) node[above]{$e_{2}, e_3$} node[below]{$(1)$};
}
\end{tikzpicture}
}
&
\parbox[c]{2.9cm}{
\centering
\begin{tikzpicture}[scale = 0.5]
{
\filldraw (0,0) circle (1pt) node[above]{$e_{12}, e_{13}$} node[below]{$(1)$};
\draw [-] (0,0)--(3,0);
\filldraw (3,0) circle (1pt) node[above]{$\mathbf{e_{23}}$} node[below]{$(2)$};
}
\end{tikzpicture}
}
&
\parbox[c]{1.5cm}{
\centering
\begin{tikzpicture}[scale = 0.5]
{
\filldraw (0,0) circle (1pt) node[above]{$e_{123}$} node[below]{$(2)$};
\draw [-] (0,0)--(1,0);
\draw [-] (0,0)--(-1,0);
}
\end{tikzpicture}
}&No
\\\hline $\mathfrak{r}_{3,-1}$&$e_2$ &$-e_3$ &$0$& \parbox[c]{4.6cm}{\vspace{3pt}$xy=0$, $x^2+y^2\neq 0$, $k = 1$, \\ $xy \neq 0 \neq 0$, $k = 2$\vspace{3pt}}&
$\mathbb{Z}$&
\parbox[c]{2.9cm}{
\centering
\begin{tikzpicture}[scale = 0.5]
{
\draw [-] (0,0)--(-2,0);
\filldraw (-2,0) circle (1pt) node[above]{$e_{3}$} node[below]{$(-1)$};
\filldraw (0,0) circle (1pt) node[above]{$e_{1}$} node[below]{$(0)$};
\draw [-] (0,0)--(2,0);
\filldraw (2,0) circle (1pt) node[above]{$e_{2}$} node[below]{$(1)$};
}
\end{tikzpicture}
}
&
\parbox[c]{2.9cm}{
\centering
\begin{tikzpicture}[scale = 0.5]
{
\draw [-] (0,0)--(-2,0);
\filldraw (-2,0) circle (1pt) node[above]{$\mathbf{e_{13}}$} node[below]{$(-1)$};
\filldraw (0,0) circle (1pt) node[above]{$e_{23}$} node[below]{$(0)$};
\draw [-] (0,0)--(2,0);
\filldraw (2,0) circle (1pt) node[above]{$\mathbf{e_{12}}$} node[below]{$(1)$};
}
\end{tikzpicture}
}
&
\parbox[c]{1.5cm}{
\centering
\begin{tikzpicture}[scale = 0.5]
{
\filldraw (0,0) circle (1pt) node[above]{$e_{123}$} node[below]{$(0)$};
\draw [-] (0,0)--(1,0);
\draw [-] (0,0)--(-1,0);
}
\end{tikzpicture}
}&Yes
\\\hline
\multirow{2}{*}{\vspace{-0.7cm}$\mathfrak{r}_{3,1}$}&\multirow{2}{*}{\vspace{-0.7cm}$e_2$} &\multirow{2}{*}{\vspace{-0.7cm}$e_3$} &\multirow{2}{*}{\vspace{-0.7cm}$0$}&\multirow{2}{*}{\vspace{-0.7cm}\parbox[c]{4.6cm}{\vspace{3pt}$x^2+y^2\neq 0, z \in \mathbb{R}$, $k = 1$,\\ $x = y = 0, z \neq 0$, $k = 2$\vspace{3pt}}}&
$\mathbb{Z}$&
\parbox[c]{2.9cm}{
\centering
\begin{tikzpicture}[scale = 0.5]
{
\filldraw (0,0) circle (1pt) node[above]{$e_{1}$} node[below]{$(0)$};
\draw [-] (0,0)--(3,0);
\filldraw (3,0) circle (1pt) node[above]{$e_{2}, e_{3}$} node[below]{$(1)$};
}
\end{tikzpicture}
}
&
\parbox[c]{2.9cm}{
\centering
\begin{tikzpicture}[scale = 0.5]
{
\filldraw (0,0) circle (1pt) node[above]{$e_{12}, e_{13}$} node[below]{$(1)$};
\draw [-] (0,0)--(3,0);
\filldraw (3,0) circle (1pt) node[above]{$\mathbf{e_{23}}$} node[below]{$(2)$};
}
\end{tikzpicture}
}
&
\parbox[c]{1.5cm}{
\centering
\begin{tikzpicture}[scale = 0.5]
{
\filldraw (0,0) circle (1pt) node[above]{$e_{123}$} node[below]{$(2)$};
\draw [-] (0,0)--(1,0);
\draw [-] (0,0)--(-1,0);
}
\end{tikzpicture}
}&Yes
\\\cline{6-10}
& & &&&
$\mathbb{Z}^2$&
\parbox[c]{2.9cm}{
\centering
\begin{tikzpicture}[scale = 0.5]
{
\filldraw (0,0) circle (1pt) node[left]{$e_{1}$} node[below]{$(0,0)$};
\draw [-] (0,0)--(3,0);
\draw [-] (0,0)--(0,1);
\filldraw (3,0) circle (1pt) node[right]{$e_{2}$} node[below]{$(1,0)$};
\filldraw (0,1) circle (1pt) node[right]{$e_{3}$} node[left]{$(0,1)$};
}
\end{tikzpicture}
}
&
\parbox[c]{2.9cm}{
\centering
\begin{tikzpicture}[scale = 0.5]
{
\filldraw (3,0) circle (1pt) node[right]{${\bf e_{12}}$} node[left]{$(1,0)$};
\filldraw (0,1) circle (1pt) node[above]{${\bf e_{13}}$} node[below]{$(0,1)$};
\draw [-] (0,1)--(3,1);
\draw [-] (3,0)--(3,1);
\filldraw (3,1) circle (1pt) node[above]{${\bf e_{23}}$} node[right]{$(1,1)$};
}
\end{tikzpicture}
}
&
\parbox[c]{1.5cm}{
\centering
\begin{tikzpicture}[scale = 0.5]
{
\filldraw (0,0) circle (1pt) node[above]{$e_{123}$} node[below]{$(1,1)$};
\draw [-] (0,0)--(1,0);
\draw [-] (0,0)--(-1,0);
}
\end{tikzpicture}
}&No
\\\hline
$\mathfrak{r}_{3}$&
$0$ &$-e_1$ &$e_1+e_2$&\parbox[c]{4.6cm}{\vspace{3pt}$x>0, y=0, z = 0$ $k = 1$,\\ $x < 0, y = 0, z = 0$, $k = 2$, \\ $ y \neq  0, z = 0$, $k = 3$\vspace{3pt}}&
$\mathbb{Z}$&
\parbox[c]{2.9cm}{
\centering
\begin{tikzpicture}[scale = 0.5]
{
\filldraw (0,0) circle (1pt) node[above]{$e_{3}$} node[below]{$(0)$};
\draw [-] (0,0)--(3,0);
\filldraw (3,0) circle (1pt) node[above]{$e_{1}, e_{2}$} node[below]{$(1)$};
}
\end{tikzpicture}
}
&
\parbox[c]{2.9cm}{
\centering
\begin{tikzpicture}[scale = 0.5]
{
\filldraw (0,0) circle (1pt) node[above]{$e_{13}, e_{23}$} node[below]{$(1)$};
\draw [-] (0,0)--(3,0);
\filldraw (3,0) circle (1pt) node[above]{$\mathbf{e_{12}}$} node[below]{$(2)$};
}
\end{tikzpicture}
}
&
\parbox[c]{1.5cm}{
\centering
\begin{tikzpicture}[scale = 0.5]
{
\filldraw (0,0) circle (1pt) node[above]{$e_{123}$} node[below]{$(2)$};
\draw [-] (0,0)--(1,0);
\draw [-] (0,0)--(-1,0);
}
\end{tikzpicture}
}&No
\\\hline
$\mathfrak{r}_{3,\lambda}$ &
$0$ &$-e_1$ &$\lambda e_2$ &\parbox[c]{4.6cm}{\vspace{3pt}$y =0, z \neq 0, x \in \mathbb{R}$, $k = 1$, \\ $z = 0, y \neq 0, x \in \mathbb{R}$, $k = 2$, \\ $y = 0, z = 0, x\in \mathbb{R}$, $k = 3$\vspace{3pt}}&
$\mathbb{R}$&
\parbox[c]{2.9cm}{
\centering
\begin{tikzpicture}[scale = 0.5]
{
\draw [-] (0,0)--(-2,0);
\filldraw (-2,0) circle (1pt) node[above]{$e_{3}$} node[below]{$(0)$};
\filldraw (0,0) circle (1pt) node[above]{$e_{1}$} node[below]{$(1)$};
\draw [-] (0,0)--(2,0);
\filldraw (2,0) circle (1pt) node[above]{$e_{2}$} node[below]{$(\lambda)$};
}
\end{tikzpicture}
}
&
\parbox[c]{2.9cm}{
\centering
\begin{tikzpicture}[scale = 0.5]
{
\draw [-] (0,0)--(-2,0);
\filldraw (-2,0) circle (1pt) node[above]{${\bf e_{13}}$} node[below]{$(1)$};
\filldraw (0,0) circle (1pt) node[above]{${\bf e_{23}}$} node[below]{$(\lambda)$};
\draw [-] (0,0)--(2,0);
\filldraw (2,0) circle (1pt) node[above]{${\bf e_{12}}$} node[below]{$(1+\lambda)$};
}
\end{tikzpicture}
}
&
\parbox[c]{1.5cm}{
\centering
\begin{tikzpicture}[scale = 0.5]
{
\draw [-] (0,0)--(-1,0);
\filldraw (0,0) circle (1pt) node[above]{$e_{123}$} node[below]{$(1+\lambda)$};
\draw [-] (0,0)--(1,0);
}
\end{tikzpicture}
}&Yes
\\\hline
$\mathfrak{r}'_{3,\lambda\neq 0}$&
$0$ &$e_2-\lambda e_1$ &$\lambda e_2+e_1$&\parbox[c]{4.6cm}{\vspace{3pt}$x > 0, y = 0, z = 0$, $k = 1$, \\ $x < 0, y = 0, z = 0$, $k = 2$\vspace{3pt}}&
$\mathbb{Z}$&
\parbox[c]{2.9cm}{
\centering
\begin{tikzpicture}[scale = 0.5]
{
\filldraw (0,0) circle (1pt) node[above]{$e_{3}$} node[below]{$(0)$};
\draw [-] (0,0)--(3,0);
\filldraw (3,0) circle (1pt) node[above]{$e_{1}, e_{2}$} node[below]{$(1)$};
}
\end{tikzpicture}
}
&
\parbox[c]{2.9cm}{
\centering
\begin{tikzpicture}[scale = 0.5]
{
\filldraw (0,0) circle (1pt) node[above]{$e_{13}, e_{23}$} node[below]{$(1)$};
\draw [-] (0,0)--(3,0);
\filldraw (3,0) circle (1pt) node[above]{$\mathbf{e_{12}}$} node[below]{$(2)$};
}
\end{tikzpicture}
}
&
\parbox[c]{1.5cm}{
\centering
\begin{tikzpicture}[scale = 0.5]
{
\filldraw (0,0) circle (1pt) node[above]{$ e_{123}$} node[below]{$(2)$};
\draw [-] (0,0)--(1,0);
\draw [-] (0,0)--(-1,0);
}
\end{tikzpicture}
}&No
\\\hline
\end{tabular}
\caption{Commutation relations, non-zero orbits of equivalent $r$-matrices, and $G$-gradations for three-dimensional Lie algebras and induced descompositions on their Grassmann algebras (only nonzero subspaces written). The letters $a,b,c$ stand for arbitrary different values within $\{1,2,3\}$. As standard in this work, we denote $e_{i_1\ldots i_r}:=e_{i_1}\wedge\ldots\wedge e_{i_r}$, with $i_1,\ldots,i_r\in \overline{1,r}$. Solutions of CYBEs obtained due to the structure of gradations and induced decompositions in $\Lambda^2\mathfrak{g}$ and $\Lambda^3\mathfrak{g}$ are written in bold. We assume $\lambda\in(-1,1)$. The spaces $\mathcal{S}_k$ are the equivalence classes of reduced $r$-matrices related to non-zero cocommutators up to ${\rm Aut}(\mathfrak{g})$.}\label{tabela3w}
\end{table}\end{landscape}

\subsection{General properties}
Let us prove a few results concerning the characterisation of the subspaces $(\Lambda^m\mathfrak{g})^\mathfrak{g}$ and ${\rm Aut}(\mathfrak{g})$. 

\begin{proposition}\label{Aut3} 
Let $\mathfrak{g}$ be such that $\kappa_{\mathfrak{g}}\neq 0$ and $\mathfrak{g}_{1)}\subset \ker \kappa_{\mathfrak{g}}$ is a two-dimensional abelian Lie subalgebra. If $v\notin \mathfrak{g}_{1)}$, then every $T \in {\rm Aut}(\mathfrak{g})$ leaves invariant the set of eigenvectors of ${\rm ad}_v|_{\mathfrak{g}_{1)}}$.
\end{proposition}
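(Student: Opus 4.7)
The plan is to exploit two facts: automorphisms of $\mathfrak{g}$ preserve the derived series (so $T\mathfrak{g}_{1)} = \mathfrak{g}_{1)}$), and $\mathfrak{g}_{1)}$ is abelian. In the three-dimensional setting of this section, $\dim \mathfrak{g}_{1)} = 2$ together with $v \notin \mathfrak{g}_{1)}$ gives the decomposition $\mathfrak{g} = \langle v\rangle \oplus \mathfrak{g}_{1)}$, so any $T \in {\rm Aut}(\mathfrak{g})$ can be written on $v$ as $Tv = \lambda v + w$ with $w \in \mathfrak{g}_{1)}$ and $\lambda \in \mathbb{R}$.

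First I would check that $\lambda \neq 0$. Since $T$ is a bijection of $\mathfrak{g}$ that preserves $\mathfrak{g}_{1)}$, it must also preserve the complement $\mathfrak{g}\setminus\mathfrak{g}_{1)}$; in particular $Tv \notin \mathfrak{g}_{1)}$, which forces $\lambda \neq 0$. Then, using that $\mathfrak{g}_{1)}$ is abelian, ${\rm ad}_w|_{\mathfrak{g}_{1)}} = 0$, so
\[
{\rm ad}_{Tv}|_{\mathfrak{g}_{1)}} = \lambda\, {\rm ad}_v|_{\mathfrak{g}_{1)}} + {\rm ad}_w|_{\mathfrak{g}_{1)}} = \lambda\, {\rm ad}_v|_{\mathfrak{g}_{1)}}.
\]
Thus ${\rm ad}_v|_{\mathfrak{g}_{1)}}$ and ${\rm ad}_{Tv}|_{\mathfrak{g}_{1)}}$ differ only by a nonzero scalar, so they share the same eigenvectors.

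Now take any eigenvector $u \in \mathfrak{g}_{1)}$ of ${\rm ad}_v|_{\mathfrak{g}_{1)}}$, say $[v,u]_\mathfrak{g} = \mu u$. Applying the Lie algebra morphism $T$ yields $[Tv, Tu]_\mathfrak{g} = \mu Tu$, and since $Tu \in \mathfrak{g}_{1)}$ the left-hand side equals $\lambda [v, Tu]_\mathfrak{g}$ by the previous display. Dividing by $\lambda$ gives $[v, Tu]_\mathfrak{g} = (\mu/\lambda) Tu$, so $Tu$ is again an eigenvector of ${\rm ad}_v|_{\mathfrak{g}_{1)}}$ (with possibly rescaled eigenvalue). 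This shows the set of eigenvectors of ${\rm ad}_v|_{\mathfrak{g}_{1)}}$ is mapped into itself by $T$, and applying the same argument to $T^{-1}$ gives the reverse inclusion.

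The only subtle step is verifying $\lambda \neq 0$, which depends on the preservation of the derived series by automorphisms combined with the dimension count coming from the three-dimensional context; the Killing form hypothesis $\kappa_\mathfrak{g} \neq 0$ and $\mathfrak{g}_{1)} \subset \ker\kappa_\mathfrak{g}$ is not used in the argument itself but identifies the class of Lie algebras to which the proposition applies. No real computation is needed beyond the single use of the morphism property of $T$.
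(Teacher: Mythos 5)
Your proof is correct, and it takes a genuinely different route from the paper's. The paper pins down the scalar $\lambda$ in $Tv=\lambda v+h$, $h\in\mathfrak{g}_{1)}$, by evaluating the Killing form: $T$-invariance of $\kappa_{\mathfrak{g}}$ together with $\mathfrak{g}_{1)}\subset\ker\kappa_{\mathfrak{g}}$ and $\kappa_{\mathfrak{g}}\neq 0$ forces $\lambda^2=1$, whence ${\rm ad}_v|_{\mathfrak{g}_{1)}}=\pm T|_{\mathfrak{g}_{1)}}\circ{\rm ad}_v|_{\mathfrak{g}_{1)}}\circ T^{-1}|_{\mathfrak{g}_{1)}}$ and the eigenvector set is preserved. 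You instead observe that $\lambda\neq 0$ already follows from $T\mathfrak{g}_{1)}=\mathfrak{g}_{1)}$ and bijectivity, and that a nonzero rescaling of ${\rm ad}_v|_{\mathfrak{g}_{1)}}$ suffices to preserve eigenvectors (with eigenvalues rescaled by $1/\lambda$). Your version is more elementary and shows, as you note, that the Killing-form hypotheses are not needed for the statement as literally phrased. What the paper's argument buys is the sharper conclusion $Tv\in\pm v+\mathfrak{g}_{1)}$, i.e.\ $\lambda\in\{\pm 1\}$, which is precisely what is exploited in the remark following the proposition and in the later classification (e.g.\ for $\mathfrak{r}_{3,-1}$, where the explicit two-parameter families $T_{\alpha,\beta}$, $T'_{\alpha,\beta}$ are derived from $Tv\in\pm v+\mathfrak{g}_{1)}$). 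Both arguments implicitly use $\dim\mathfrak{g}=3$ to write $\mathfrak{g}=\langle v\rangle\oplus\mathfrak{g}_{1)}$, which is legitimate in the context of this section.
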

\begin{proof} 
Let us prove that if $T\in {\rm Aut}(\mathfrak{g})$, then $Tv\in v+\mathfrak{g}_{1)}$ or $Tv\in -v+\mathfrak{g}_{1)}$ for every $v\notin \mathfrak{g}_{1)}$. Since $T\in {\rm Aut}(\mathfrak{g})$, one has that $\kappa_{\mathfrak{g}}(Tv,Tv)=\kappa_{\mathfrak{g}}(v,v)$. Since $v$ and $\mathfrak{g}_{1)}$ generate $\mathfrak{g}$ and $T$ is an injection, we can write $Tv=\lambda v+h$ for an $h\in \mathfrak{g}_{1)}$ and $\lambda\in \mathbb{K}\backslash\{0\}$. As  $\mathfrak{g}_{1)}\subset \ker\kappa_\mathfrak{g}$, then $\kappa_\mathfrak{g}(Tv,Tv)=\lambda^2\kappa_\mathfrak{g}(v,v)$ for every $v\in \mathfrak{g}$. Since $\kappa_\mathfrak{g}\neq 0$ , one has that $\lambda\in \{\pm 1\}$. Hence, $Tv\in v+\mathfrak{g}_{1)}$ or $Tv\in -v+\mathfrak{g}_{1)}$. Since $\dim\mathfrak{g}_{1)}=2$, $\mathfrak{g}_{1)}$ is an abelian ideal of $\mathfrak{g}$ invariant under automorphisms of $\mathfrak{g}$, one obtains that 
$$
{\rm ad}_{v}|_{\mathfrak{g}_{1)}}=\pm{\rm ad}_{Tv}|_{\mathfrak{g}_{1)}}\Longrightarrow 
{\rm ad}_{v}|_{\mathfrak{g}_{1)}}=\pm T|_{\mathfrak{g}_{1)}}\circ{\rm ad}_{v}|_{\mathfrak{g}_{1)}}\circ T^{-1}|_{\mathfrak{g}_{1)}}.
$$
In consequence, if $e$ is an eigenvector of ${\rm ad}_v|_{\mathfrak{g}_{1)}}$, then $Te$ is a new eigenvector of ${\rm ad}_v|_{\mathfrak{g}_{1)}}$.  
\end{proof}

Proposition \ref{Aut3} can be modified to give a very accurate form of $T|_{\mathfrak{g}^{(1)}}$. For instance, if ${\rm ad}_v|_{\mathfrak{g}^{1)}}$ has two eigenvectors $e_1,e_2$ with different eigenvalues $\lambda_1,\lambda_2$ satisfying that $\lambda_1 +\lambda_2 = 0$ and $Tv \in -v + \mathfrak{g}_{1)}$, then $T|_{\mathfrak{g}^{1)}}$ is an anti-diagonal matrix in the basis $\{e_1,e_2\}$. If $\lambda_1+\lambda_2 \neq 0$ and $\lambda_1\neq\lambda_2$, then $Tv \in v + \mathfrak{g}_{1)}$ and $T|_{\mathfrak{g}^{1)}}$ is diagonal. Several variations of this reasoning can be applied, e.g. when ${\rm ad}_v|_{\mathfrak{g}^{1)}}$ is triangular.

\begin{proposition}\label{Aut2} 
Let $\Omega\in (\Lambda^3\mathfrak{g}^*)\backslash \{0\}$ and assume that $\Upsilon:\Lambda^2\mathfrak{g}\ni r\mapsto \Omega ([r,r]_S)\in \mathbb{R}$ is a semi-definite function different than zero. Then, every automorphism of $\mathfrak{g}$ has positive determinant.
\end{proposition}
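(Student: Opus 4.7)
The plan rests on the naturality of the algebraic Schouten bracket with respect to Lie algebra automorphisms, together with the fact that in this section $\mathfrak{g}$ is three-dimensional, so $\Lambda^3\mathfrak{g}$ is one-dimensional. Given $T\in{\rm Aut}(\mathfrak{g})$, the identity $T[v,w]=[Tv,Tw]$ for $v,w\in\mathfrak{g}$ extends, via formula (\ref{multi_sn}) and the fact that $T$ commutes with the exterior product (in the sense that $\Lambda^m T$ is an algebra morphism), to the identity $\Lambda^{p+q-1}T\,[a,b]_S=[\Lambda^p T(a),\Lambda^q T(b)]_S$ for $a\in\Lambda^p\mathfrak{g}$ and $b\in\Lambda^q\mathfrak{g}$. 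Specialising to $p=q=2$ and $a=b=r\in\Lambda^2\mathfrak{g}$, I would first establish
\begin{equation*}
\Lambda^3 T\,[r,r]_S=[\Lambda^2T(r),\Lambda^2T(r)]_S.
\end{equation*}

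Next, since $\dim\mathfrak{g}=3$, the space $\Lambda^3\mathfrak{g}$ is one-dimensional and $\Lambda^3 T$ acts on it as multiplication by $\det T$. Consequently, applying $\Omega$ to both sides of the previous identity yields
\begin{equation*}
\Upsilon(\Lambda^2T(r))=\Omega\bigl([\Lambda^2T(r),\Lambda^2T(r)]_S\bigr)=\Omega\bigl((\det T)[r,r]_S\bigr)=(\det T)\,\Upsilon(r),\qquad\forall r\in\Lambda^2\mathfrak{g}.
\end{equation*}
Thus the quadratic form $\Upsilon$ is homogeneous of scale $\det T$ under the pullback by $\Lambda^2 T$.

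Finally, I would invoke the semi-definiteness hypothesis. Since $\Upsilon\not\equiv 0$ and $\Upsilon$ is, say, non-negative on $\Lambda^2\mathfrak{g}$ (the non-positive case is analogous, multiplying $\Omega$ by $-1$), pick some $r_0\in\Lambda^2\mathfrak{g}$ with $\Upsilon(r_0)>0$. The relation above gives $(\det T)\,\Upsilon(r_0)=\Upsilon(\Lambda^2T(r_0))\geq 0$, so $\det T\geq 0$. Because $T\in GL(\mathfrak{g})$ forces $\det T\neq 0$, one concludes $\det T>0$.

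The only non-routine point is confirming the naturality identity $\Lambda^3T[r,r]_S=[\Lambda^2T(r),\Lambda^2T(r)]_S$, and I expect this to follow at once by expanding $r$ as a sum of decomposables $v\wedge w$, applying (\ref{multi_sn}), and using that $T$ respects both the Lie bracket of $\mathfrak{g}$ and the wedge product. Everything else is a one-line consequence of $\dim\Lambda^3\mathfrak{g}=1$ and of basic properties of semi-definite quadratic forms.
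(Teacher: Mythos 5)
Your proposal is correct and follows essentially the same route as the paper: both hinge on the naturality identity $[\Lambda^2Tr,\Lambda^2Tr]_S=\Lambda^3T[r,r]_S=(\det T)[r,r]_S$ on the one-dimensional space $\Lambda^3\mathfrak{g}$, yielding $\Upsilon(\Lambda^2Tr)=(\det T)\Upsilon(r)$, and then conclude from semi-definiteness and the invertibility of $T$. The only cosmetic difference is that the paper phrases the computation through the dual basis $\theta$ of $\Omega$ rather than applying $\Omega$ directly, and you spell out the naturality step that the paper takes for granted.
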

\begin{proof} Since $\mathfrak{g}$ is three-dimensional, $\Omega$ is a basis of $\Lambda^3\mathfrak{g}^*$ and there exists a one-element dual basis $\theta\in \Lambda^3\mathfrak{g}$. As $\Omega([r,r]_S)=\Upsilon(r)$, then $[r,r]_S= \Upsilon(r)\theta$. Since $\Upsilon$ is not identically zero, there exists an $r\in \Lambda^2\mathfrak{g}$ such that
$[r,r]_S=\Upsilon(r)\theta\neq 0$. If $T\in {\rm Aut}(\mathfrak{g})$, then
$$
\Upsilon(r) \det (T)\,\theta=\det (T)\, [r, r]_S = \Lambda^3T[r,r]_S=[\Lambda^2Tr,\Lambda^2Tr]_S=\Upsilon(\Lambda^2Tr)\theta.
$$
Hence, $\Upsilon(r)\det (T)=\Upsilon(\Lambda^2Tr)\neq 0$. The semi-definiteness of $\Upsilon$ yields that both sides of the equality must have the same sign and then $\det (T)>0$.
\end{proof}

In the following subsections, we assume that every $\mathfrak{g}$ has a basis $\{e_1,e_2,e_3\}$ satisfying the corresponding commutation relations given in Table \ref{tabela3w}. We choose also the induced bases $\{e_{12},e_{13},e_{23}\}$ and $\{e_{123}\}$ in $\Lambda^2\mathfrak{sl}_2$ and $\Lambda^3\mathfrak{sl}_{2}$, respectively. For each $\mathfrak{g}$, we first analyse $\mathfrak{g}$-invariant elements through gradations, which allows us to determine the shape of mCYBE and reduced $r$-matrices. 

Recall that if $\mathfrak{g}$ admits a $G$-gradation and  $G$ is a group (which happens for all gradations of three-dimensional Lie algebras in this work (cf. Table \ref{tabela3w}),  then $(\Lambda^2\mathfrak{g})^\mathfrak{g}$ is the direct sum of the subspaces of $\mathfrak{g}$-invariant within each homogeneous subspace of $\Lambda^2\mathfrak{g}$. This simplifies the search of $(\Lambda^2\mathfrak{g})^{\mathfrak{g}}$. Meanwhile, Proposition \ref{Prop:Sym} simplifies the derivation of $(\Lambda^3\mathfrak{g})^\mathfrak{g}$ for three-dimensional Lie algebras

\begin{proposition}\label{Prop:Sym}
Each $G$-graded three-dimensional Lie algebra $\mathfrak{g}$ has a unique homogeneous subspace  $(\Lambda^3\mathfrak{g})^{(\alpha)}\neq 0$. Moreover, $[\mathfrak{g}^{(\beta)},(\Lambda^3\mathfrak{g})^{(\alpha)}]=0$ for $\beta\neq 0$. If $\mathfrak{g}$ has a root gradation, then $\Lambda^3\mathfrak{g}=(\Lambda^3\mathfrak{g})^\mathfrak{g}$.\end{proposition}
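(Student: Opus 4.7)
For the first two statements, $\dim\mathfrak{g}=3$ forces $\dim\Lambda^3\mathfrak{g}=1$. Theorem \ref{thm:root} provides a direct-sum decomposition $\Lambda^3\mathfrak{g}=\bigoplus_{\alpha\in G}(\Lambda^3\mathfrak{g})^{(\alpha)}$, so at most one summand can be non-zero; the wedge of any basis of $\mathfrak{g}$ adapted to the $G$-gradation yields a non-zero element lying in a single summand, so exactly one of them, say $(\Lambda^3\mathfrak{g})^{(\alpha)}$, is non-zero. For the second assertion, Theorem \ref{thm:root} also yields $[\mathfrak{g}^{(\beta)},(\Lambda^3\mathfrak{g})^{(\alpha)}]_S\subseteq (\Lambda^3\mathfrak{g})^{(\alpha\star\beta)}$; because $G$ is a group, $\beta\neq 0$ forces $\alpha\star\beta\neq\alpha$ by cancellation, and the uniqueness just proved shows the right-hand side vanishes.

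For the third claim I would choose a basis $\{e_1,e_2,e_3\}$ adapted to the gradation with $e_i\in\mathfrak{g}^{(\alpha_i)}$ and set $\Omega:=e_1\wedge e_2\wedge e_3$; this spans $\Lambda^3\mathfrak{g}=(\Lambda^3\mathfrak{g})^{(\alpha)}$ with $\alpha=\alpha_1\star\alpha_2\star\alpha_3$. Applying the Leibniz-type expansion \eqref{multi_sn} of the algebraic Schouten bracket, together with the root condition $[e,e^{(\gamma)}]=T(\gamma)(e)\,e^{(\gamma)}$ for every $e\in\mathfrak{g}^{(0)}$ and the fact that $T$ is a group morphism, one obtains $[e,\Omega]_S=T(\alpha)(e)\,\Omega$ for every $e\in\mathfrak{g}^{(0)}$. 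Combined with the second statement applied to the homogeneous pieces with $\beta\neq 0$, the full $\mathfrak{g}$-invariance of $\Omega$ amounts to the vanishing of the character $T(\alpha)$ on $\mathfrak{g}^{(0)}$. By the injectivity of $T$ this is equivalent to $\alpha=0$, and then $\Omega\in(\Lambda^3\mathfrak{g})^{(0)}\subseteq(\Lambda^3\mathfrak{g})^\mathfrak{g}$, so indeed $\Lambda^3\mathfrak{g}=(\Lambda^3\mathfrak{g})^\mathfrak{g}$.

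The main technical point I expect to have to nail down carefully is this last reduction: one converts $\mathfrak{g}$-invariance of $\Omega$ to the vanishing of $T(\alpha_1)+T(\alpha_2)+T(\alpha_3)$ on $\mathfrak{g}^{(0)}$, i.e., to the tracelessness of the adjoint action of $\mathfrak{g}^{(0)}$ on $\mathfrak{g}$, and verifying that this is forced by the root-gradation hypothesis in dimension three is where the real content of the third part lies; the first two assertions, by contrast, are essentially pure bookkeeping using the one-dimensionality of $\Lambda^3\mathfrak{g}$ together with the group structure on $G$.
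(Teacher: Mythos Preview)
Your arguments for the first two assertions are correct and are essentially what any proof must do: the one-dimensionality of $\Lambda^3\mathfrak{g}$ forces a unique non-zero summand in the decomposition of Theorem~\ref{thm:root}, and group cancellation in $G$ together with that theorem kills brackets with $\beta\neq 0$. The paper offers no proof of its own here, so there is nothing further to compare for these parts.

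For the third claim your reduction is also correct, and your closing hesitation is well founded: the statement as written does not hold in general. You correctly show that for a root gradation $[e,\Omega]_S=T(\alpha)(e)\,\Omega$ for all $e\in\mathfrak{g}^{(0)}$, so (combining with part two for the homogeneous pieces of non-zero degree) $\mathfrak{g}$-invariance of $\Omega$ is \emph{equivalent} to $\alpha=0$. But $\alpha=0$ is not forced by the root-gradation hypothesis in dimension three. The paper's own Table~\ref{tabela3w} supplies counterexamples: $\mathfrak{r}_{3,1}$ carries a root $\mathbb{Z}$-gradation with $e_1\in\mathfrak{g}^{(0)}$ and $e_2,e_3\in\mathfrak{g}^{(1)}$, so $\alpha=2\neq 0$, and indeed the text in Section~\ref{Classification} explicitly concludes $(\Lambda^3\mathfrak{r}_{3,1})^{\mathfrak{r}_{3,1}}=\{0\}$; similarly $\mathfrak{r}_{3,\lambda}$ has $\alpha=1+\lambda\neq 0$.

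The paper gives no proof of the proposition, and its subsequent applications actually use the biconditional you have established (together with Proposition~\ref{gin0} for the implication $\alpha\neq 0\Rightarrow(\Lambda^3\mathfrak{g})^\mathfrak{g}=\{0\}$), not the literal third claim. What you have proved is therefore the correct and useful statement: under a root gradation, $\Lambda^3\mathfrak{g}=(\Lambda^3\mathfrak{g})^\mathfrak{g}$ if and only if the unique non-zero homogeneous degree $\alpha$ equals $0$, and $(\Lambda^3\mathfrak{g})^\mathfrak{g}=\{0\}$ otherwise. There is no missing idea on your side; the gap lies in the proposition's wording.
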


\subsection{Lie bialgebras on semi-simple Lie algebras} 

There exists only two semi-simple three-dimensional Lie algebras: $\mathfrak{su}_2$ and $\mathfrak{sl}_2$ \cite{SW14}. 

$\bullet$ {\bf Lie bialgebras on \texorpdfstring{$\mathfrak{sl}_2$}{}}

 Since $\mathfrak{sl}_2$ admits a root decomposition giving rise to a root $\mathbb{Z}$-gradation, Proposition \ref{Prop:Sym} yields that $\Lambda^3\mathfrak{sl}_2=(\Lambda^3\mathfrak{sl}_2)^{\mathfrak{sl}_2}$ and every element of $\Lambda^2\mathfrak{sl}_2$ satisfies the mCYBE. Since $\mathfrak{sl}_2$ has a root gradation, Proposition \ref{gin0} gives that  $(\Lambda^2\mathfrak{sl}_2)^{\mathfrak{sl}_2}\subset (\Lambda^2\mathfrak{sl}_2)^{(0)}$. It is then immediate that $(\Lambda^2\mathfrak{sl}_2)^{\mathfrak{sl}_2}=0$. By Proposition \ref{prop:requiv}, every $r\in \Lambda^2\mathfrak{sl}_2$ induces a different cocommutator $\delta_r(\cdot):=[\cdot, r]_{S}$.

A simple calculation and Proposition \ref{proporb} ensure that the dimension of any orbit, $\mathcal{O}_w$, of the action of ${\rm Inn}(\mathfrak{sl}_2)$ on $\Lambda^2\mathfrak{sl}_2$ is  $\dim \Theta_w^2=2$ for $w\in \Lambda^2\mathfrak{sl}_2\backslash\{0\}$ and $\dim \Theta_w^2=0$ otherwise. 
Since ${\rm Inn}(\mathfrak{sl}_{2})$ is connected, the $\mathcal{O}_w$ are two- or zero-dimensional connected immersed submanifolds. Each $\mathcal{O}_w$ must be contained in a connected submanifold of a level set, $S_k$, of the quadratic function $f_{\Lambda^2\mathfrak{sl}_2}:r\in \Lambda^2\mathfrak{sl}_2\mapsto \kappa_{\Lambda^2\mathfrak{sl}_{2}}(r,r)\in \mathbb{R}$. 
If $r=x e_{12}+ y e_{13} + z e_{23}$, then  $f_{\Lambda^2\mathfrak{sl}_2}(r):=8xy - 4z^2$ and $f_{\Lambda^2\mathfrak{sl}_2}$ admits three types of  $S_k$ according to the sign of $k$. If $k<0$, then $S_k$ is a one-sheeted hyperboloid; $S_0$ consists of two cones, one opposite to the other, and the origin of $\Lambda^2\mathfrak{sl}_2$; meanwhile $S_k$ for $k>0$ is a two-sheeted hyperboloid  with two parts contained within the region $x>0,y>0$ and $x<0,y<0$, respectively (see Figure \ref{Sum}).

Each $S_k$ is the union of different orbits $\mathcal{O}_w$, which are two-dimensional except for $\mathcal{O}_0$. Then, each $S_k$, for $k\neq 0$, is an orbit $\mathcal{O}_w$ while $S_0$ has three orbits given by two cones for points with $z>0$ or $z<0$, and $(0,0,0)$. Consequently, there are five inequivalent classes of $r$-matrices on $\mathfrak{sl}_{2}$ relative to the action of ${\rm Inn}(\mathfrak{sl}_{2})$ (cf. \cite{Farinati}). The representatives of each class are $r_0 =0, r=ae_{23}$, with $a>0$ (one-sheeted hyperboloids), $r = a (e_{12}+e_{13})$, with $a\in \mathbb{R}\backslash\{0\}$, (two-sheeted hyperboloids),  and $r = \pm e_{12}$ (cones). 

Now the orbits of the action of ${\rm Aut}(\mathfrak{sl}_2)$ on $\Lambda^2\mathfrak{sl}_2$ can easily be derived. Derivations of $\mathfrak{sl}_2$ are of the form ${\rm ad}_v$ for a certain $v\in \mathfrak{sl}_2$ \cite{Jacobson}. In view of Proposition \ref{aut},  $\mathfrak{inn}(\mathfrak{sl}_2)=\mathfrak{der}(\mathfrak{sl}_2)=\mathfrak{aut}(\mathfrak{sl}_2)$. Hence, ${\rm Inn}(\mathfrak{sl}_2)={\rm Aut}_c(\mathfrak{sl}_2)$ and each orbit of the action of ${\rm Aut}(\mathfrak{sl}_2)$ on $\Lambda^2\mathfrak{sl}_2$ is the sum of some $\mathcal{O}_w$. As $\kappa_{\mathfrak{sl}_2}$ is invariant under the action of ${\rm Aut}(\mathfrak{sl}_2)$, i.e. it is $GL(\widehat{\rm ad})$-invariant, Corollary \ref{ExtInvRep} yields that  $\kappa_{\Lambda^2\mathfrak{sl}_2}$ is invariant under the action of ${\rm Aut}(\mathfrak{sl}_2)$ on $\Lambda^2\mathfrak{sl}_2$ and each of its orbits must be contained in a $S_k$. 

Then, the $T\in {\rm Aut}(\mathfrak{sl}_2)$ such that $T(e_1) := e_1, T(e_2) := -e_2, T(e_3) := -e_3$ can be extended to $\Lambda^2T$ giving rise to a map such that $\Lambda^2T(e_{12})=- e_{12},\Lambda^2T(e_{13})=- e_{13}, \Lambda^2T(e_{23})=e_{23},
$
which connects the two-sheeted hyperboloids within $S_k$ for each fixed $k> 0$. It also maps the two cones contained in $S_0$. Therefore, we have three types of non-zero $r$-matrices up to the action of ${\rm Aut}(\mathfrak{sl}_2)$. It is worth noting that all of them are solutions of the CYBE that can be almost fully derived via gradations as seen in Table \ref{tabela3w}.

Our result agrees with the findings given in \cite{Gomez}, but they do not match the work \cite{Farinati}. This is due to the fact that Farinati and coworkers assume that ${\rm Inn}(\mathfrak{sl}_2)={\rm Aut}(\mathfrak{sl}_2)$ (see \cite[p. 56]{Farinati}), which was proved to be wrong in Section \ref{sect:auto_alg}.

$\bullet$ {\bf Lie bialgebras on \texorpdfstring{$\mathfrak{su}_{2}$}{}}

The $\mathbb{Z}_2$-gradations of $\mathfrak{su}_2$ and their associated decompositions for $\Lambda^3\mathfrak{su}_2$  (see Table \ref{tabela3w} and Example \ref{Gradsu2}) show that the unique non-zero homogeneous space in $\Lambda^3\mathfrak{su}_2$ is invariant under $e_b,e_c$. Since such a homogeneous space is the same for each $\mathbb{Z}_2$-gradation but $e_b,e_c$ are arbitrary, $\Lambda^3\mathfrak{su}_3=(\Lambda^3\mathfrak{su}_2)^{\mathfrak{su}_2}$ and every $r$-matrix is a solution to the mCYBE. By Lemma \ref{LemgDe}, the space $(\Lambda^2\mathfrak{su}_2)^{\mathfrak{su}_2}$ is the linear combination of $\mathfrak{su}_2$-invariant elements within homogeneous spaces of $\Lambda^2\mathfrak{su}_2$. It is then simple to see that $(\Lambda^2\mathfrak{su}_2)^{\mathfrak{su}_2}=\{0\}$. Hence, every $r\in \Lambda^2\mathfrak{su}_2$ induces a different cocomutator and the classification of coboundary cocomutators of $\mathfrak{su}_2$  up to ${\rm Aut}(\mathfrak{su}_2)$ amounts to the classification of their corresponding $r$-matrices.

Let us study the equivalence of $r$-matrices under inner automorphisms by using $\mathfrak{su}_2$-invariant metrics on $\mathfrak{su}_2, \Lambda^2 \mathfrak{su}_2$, and $\Lambda^3 \mathfrak{su}_2$. The Killing metric of $\mathfrak{su}_2$ and its extensions are given by the matrices
$$
[\kappa_{\mathfrak{su}_{2}}] = -2\ \mathbb{I}_{3 \times 3}, \qquad
[\kappa_{\Lambda^2 \mathfrak{su}_{2}}] = 4\ \mathbb{I}_{3 \times 3},\qquad
[\kappa_{\Lambda^3\mathfrak{su}_{2}}]=-8\ \mathbb{I}_{1\times 1}
$$
in our standard bases. Due to Proposition \ref{proporb} and since  ${\rm Inn}(\mathfrak{su}_{2})$ is connected, the orbits of the action of ${\rm Inn}(\mathfrak{su}_{2})$ on $\Lambda^2\mathfrak{su}_{2}$ have a dimension given by 
${\rm Im}\,\Theta^2_w$: two for $w\in \Lambda^2\mathfrak{su}_2\backslash\{0\}$ and zero otherwise.  

The orbits of the action of ${\rm Inn}(\mathfrak{su}_2)$ on $\Lambda^2\mathfrak{su}_2$ are connected immersed submanifolds contained in the level sets, $S_k$, where the quadratic function $f_{\Lambda^2\mathfrak{su}_2}(r):=\kappa_{\Lambda^2\mathfrak{su}_{2}}(r, r)\!\!=4(x^2 + y^2 + z^2)$ takes value $k$. Since the orbits of ${\rm Inn}(\mathfrak{su}_2)$ must be open relative to the topology of each $S_k$ (with $k\geq 0$), which are connected, each orbit of ${\rm Inn}(\mathfrak{su}_2)$ must be the whole $S_k$ for each $k\geq 0$. 
Hence, non-equivalent $r \in \Lambda^2 \mathfrak{su}_2$, with respect to the action of ${\rm Inn}(\mathfrak{su}_2)$, are given by elements $r$ with different modulus, e.g. $r_a=a e_{12}$, with $a\geq 0$. Since the orbits of the action of ${\rm Aut}(\mathfrak{su}_2)$ on $\Lambda^2\mathfrak{su}_2$ are given by the sum of orbits of ${\rm Inn}(\mathfrak{su}_2)$ and they are contained in the surfaces $S_k$, the orbits of the action of ${\rm Aut}(\mathfrak{su}_2)$ in $\Lambda^2\mathfrak{su}_2$ are indeed the spheres $S_k$ with $k>0$ and the point $k=0$ (see Figure \ref{Sum}).

The above retrieves the results given in \cite{Farinati,Gomez} in a new and geometric fashion. 

\subsection{Nilpotent Lie algebras: The 3D-Heisenberg Lie algebra}\label{sect:heisen}
Let us consider the three-dimensional (3D) {\it Heisenberg algebra} $\mathfrak{h}$ \cite{Farinati} described in Table \ref{tabela3w}. This is the only, up to a Lie algebra isomorphism, three-dimensional nilpotent Lie algebra \cite{SW14}.

In view of Proposition \ref{Prop:Sym} and the fact that non-zero homogeneous spaces in $\mathfrak{h}$ are related to non-zero elements, one has that  $(\Lambda^3\mathfrak{h})^{\mathfrak{h}}=\Lambda^3\mathfrak{h}$ and every element of $\Lambda^2\mathfrak{h}$ is a solution to the mCYBE. Since $\mathfrak{g}$ admits a $\mathbb{Z}$-gradation, $(\Lambda^2\mathfrak{h})^\mathfrak{h}$ is the sum of $\mathfrak{h}$-invariant elements on each homogeneous space of $\Lambda^2\mathfrak{g}$, which is easily computable. This gives that $(\Lambda^2\mathfrak{h})^\mathfrak{h}=\langle e_{13},e_{12}\rangle$.  Figure \ref{Rh} depicts the equivalence classes of $\Lambda^2_R\mathfrak{h}$ in $\Lambda^2\mathfrak{h}$.

\begin{minipage}{0.3\textwidth}
	\begin{center}
		\includegraphics[scale=0.25]{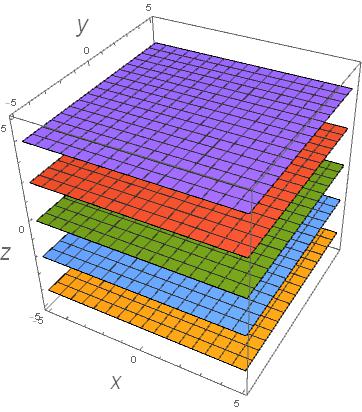}
		\captionof{figure}{Representative orbits of the action of ${\rm Inn}(\mathfrak{h})$ on $\Lambda^2 \mathfrak{h}$}\label{Rh}
	\end{center}
\end{minipage}$\quad$
\begin{minipage}{0.64\textwidth}
	Proposition \ref{prop:requiv} ensures that elements belonging to the same class of $\Lambda^2_R\mathfrak{h}$ give rise to the same Lie bialgebra. Then, to classify coboundary Lie bialgebras, one can restrict oneself to studying reduced $r$-matrices in $\Lambda^2_R\mathfrak{h}$. However, $\mathfrak{h}$ admits the automorphisms $T_\alpha$, with $\alpha \in \mathbb{R}\backslash\{0\}$, given by 
	$
	T_\alpha(e_1):=\alpha e_1, T(e_2):=e_2, T(e_3):= \alpha e_3.
	$ 
	Therefore, $\Lambda^2 T_\alpha (e_{12}) = \alpha e_{12}$ for any $\alpha\neq 0$. Since $(\Lambda^2\mathfrak{h})^\mathfrak{h}$ is invariant under the action of ${\rm Aut}(\mathfrak{h})$, it makes sense to consider the induced action of $\Lambda^2T$ on $\Lambda^2_R\mathfrak{h}$. Then, the induced action of ${\rm Aut}(\mathfrak{h})$ on $\Lambda^2_R\mathfrak{h}$ has two orbits $[0]$ and $[e_{12}]$. Thus, we have only one class of non-zero coboundary coproducts can be represented by the $r$-matrix $r:= e_{12}$. The space of non-equivalent $r$-matrices  is depicted in Figure \ref{Sum}.	Note that the gradation of $\mathfrak{h}$ and the induced decompositions in $\Lambda^2\mathfrak{h}$ and $\Lambda^3\mathfrak{h}$ give easily that $e_{12}$ is a solution to the mCYBE on $\mathfrak{h}$.
\end{minipage}

\subsection{Solvable non-nilpotent Lie algebras}

There exist six classes of solvable but not nilpotent three-dimensional real Lie algebras \cite{SW14}. The following subsections aim at classifying all Lie bialgebras on them.

\subsubsection{The Lie algebra \texorpdfstring{$\mathfrak{r}'_{3,0}$}{}}

Let us  analyse $(\Lambda^2\mathfrak{r}'_{3,0})^{\mathfrak{r}'_{3,0}}$ and $(\Lambda^3\mathfrak{r}'_{3,0})^{\mathfrak{r}'_{3,0}}$. In view of Proposition \ref{Prop:Sym}, the only non-zero homogeneous subspace of $(\Lambda^3\mathfrak{r}'_{3,0})^{\mathfrak{r}'_{3,0}}$ is invariant relative to $e_2,e_3$, which have non-zero degree. The invariance of $\Lambda^3\mathfrak{r}'_{3,0}$ relative to $e_1$ is immediate. Then $(\Lambda^3\mathfrak{r}'_{3,0})^{\mathfrak{r}'_{3,0}}=\Lambda^3\mathfrak{r}'_{3,0}$ and all elements of $\Lambda^2\mathfrak{r}'_{3,0}$ are $r$-matrices.

Recall that $(\Lambda^2\mathfrak{r}'_{3,0})^{\mathfrak{r}'_{3,0}}$ is the sum of homogeneous $\mathfrak{r}'_{3,0}$-invariant elements in $\Lambda^2\mathfrak{r}'_{3,0}$. It easily follows by using the gradations in $\mathfrak{r}'_{3,0}$ that $\langle e_{23}\rangle\subset (\Lambda^2\mathfrak{r}'_{3,0})^{(2)}$ is $\mathfrak{r}'_{3,0}$-invariant. To obtain the $\mathfrak{r}'_{3,0}$-invariant elements within $\Lambda^2(\mathfrak{r}'_{3,0})^{(1)}$, we consider an arbitrary element $e_1\wedge \lambda(e_2,e_3)$ of the space, where $\lambda(e_2,e_3)$ stands for a linear combination of $e_2$ and $e_3$. Then,
$$
[e_2,e_1\wedge \lambda(e_2,e_3)]_S=-e_3\wedge \lambda(e_2,e_3)=0,\qquad [e_3,e_1\wedge \lambda(e_2,e_3)]_S=e_2\wedge \lambda(e_2,e_3)=0.
$$
Hence, $\lambda(e_2,e_3)=0$ and $(\Lambda^2\mathfrak{r}'_{3,0})^{\mathfrak{r}'_{3,0}}=\langle e_{23}\rangle$. 

Let us classify the non-equivalent (up to inner Lie algebra automorphisms of $\mathfrak{r}'_{3,0}$) coboundary coproducts on $\mathfrak{r}'_{3,0}$ by using $\mathfrak{r}'_{3,0}$-invariant metrics on $\Lambda^2_R\mathfrak{r}'_{3,0}$. Let us discuss the existence of $\mathfrak{r}'_{3,0}$-invariant metrics on $\Lambda^2_R \mathfrak{r}'_{3,0}$. Consider the basis $\{[e_{12}], [e_{13}]\}$ of $\Lambda^2_R \mathfrak{r}'_{3,0}$. If $\Lambda^2_R{\rm ad}: v\in \mathfrak{r}'_{3,0} \mapsto [[v],\cdot]_R\in \mathfrak{gl}(\Lambda^2_R \mathfrak{r}'_{3,0})$, where $[\cdot,\cdot]_R$ is the bracket on $\Lambda^2_R \mathfrak{r}'_{3,0}$ induced by the algebraic bracket on $\Lambda^2 \mathfrak{r}'_{3,0}$, then
$$
{\rm Im}\,\Lambda^2_R{{\rm ad}}_{e_1}=\langle [e_{13}],[e_{12}]\rangle,\,\,\ker\,\Lambda^2_R{{\rm ad}}_{e_1}=\langle [0]\rangle,\,\, {\rm Im}\,\Lambda^2_R{{\rm ad}}_{e_2}=\langle [0]\rangle,\,\, \ker\,\Lambda^2_R{{\rm ad}}_{e_2}=\langle [e_{13}]\rangle,$$ 
$$
{\rm Im}\,\Lambda^2_R{{\rm ad}}_{e_3}=\langle [e_{23}]\rangle =\langle [0]\rangle,\qquad \ker \,\Lambda^2_R{{\rm ad}}_{e_3}=\langle [e_{12}]\rangle,
$$
\vskip -0.5cm
\begin{equation*}
\begin{gathered}
b^R_{\Lambda^2\mathfrak{r}'_{3,0}}([e_{12}], [e_{12}]) = b^R_{\Lambda^2\mathfrak{r}'_{3,0}}([[e_1], [e_{13}]]_R, [e_{12}]) = -b^R_{\Lambda^2\mathfrak{r}'_{3,0}}([e_{13}], [[e_1], [e_{12}]]_R)=
b^R_{\Lambda^2\mathfrak{r}'_{3,0}}([e_{13}], [e_{13}]),\\
b^R_{\Lambda^2\mathfrak{r}'_{3,0}}([e_{13}],[e_{12}]) = -b^R_{\Lambda^2\mathfrak{r}'_{3,0}}([[e_1], [e_{12}]]_R, [e_{12}]) = b^R_{\Lambda^2\mathfrak{r}'_{3,0}}([e_{12}],[[e_1],[e_{12}]]_R)=-b^R_{\Lambda^2\mathfrak{r}'_{3,0}}([e_{12}],[e_{13}]).
\end{gathered}
\end{equation*}

Therefore, using again Propositions \ref{prop:sym_form}--\ref{prop:form_cond}, we obtain that $b^R_{\Lambda^2\mathfrak{r}'_{3,0}}$  must be of the form
$$
[b_{\Lambda^2 \mathfrak{r}'_{3,0}}^R] = \left(\begin{array}{cc}
a_1&0\\
0&a_1
\end{array}\right),\qquad \forall a_1\in\mathbb{R}.
$$
Indeed, this is a $\mathfrak{r}'_{3,0}$-invariant metric. For simplicity, we here after assume that $a_1=1$.

\noindent\begin{minipage}{0.35\textwidth}
	\begin{center}
		\includegraphics[scale=0.30]{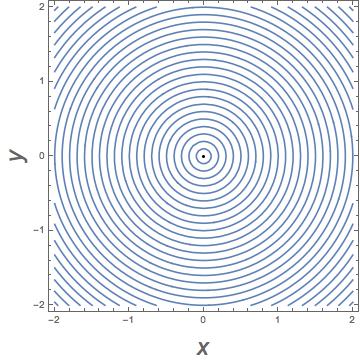}
		\captionof{figure}{Representative orbits of the action of ${\rm Inn}(\mathfrak{r}'_{3,0})$ on $\Lambda^2_R \mathfrak{r}'_{3,0}$}\label{FigSol}
	\end{center}
\end{minipage}$\qquad$
\begin{minipage}{0.60\textwidth} 
	Since all elements of $\Lambda^2\mathfrak{r}'_{3,0}$ give rise to  coboundary cocommutators, their study can be reduced to studying the equivalence classes of $\Lambda^2_R\mathfrak{r}'_{3,0}$. Let us study the equivalence of reduced $r$-matrices up to inner automorphisms of $\mathfrak{r}_{3,0}'$. The equivalence classes in $\Lambda^2_R\mathfrak{r}_{3,0}'$ can be written as $x [e_{12}] + y [e_{13}]$ in the basis $\{[e_{12}],[e_{13}]\}$. It follows that $b^R_{\Lambda^2 \mathfrak{r}'_{3,0}} ([r], [r]) = x^2 + y^2$. The image of $\Theta^2_r$ is one-dimensional for $x^2 +y^2 \neq 0$ and zero-dimensional otherwise. In consequence, the orbits in $\Lambda_R^2\mathfrak{r}_{3,0}'$ relative to the action of ${\rm Aut}(\mathfrak{r}'_{3,0})$ are circles and the central point. Therefore, there exists a nontrivial family of $r$-matrices $r = \mu e_{12}$, $\mu \in \mathbb{R}^+$, giving rise to different non-zero cocommutators, which are not equivalent up to elements  of ${\rm Inn}(\mathfrak{r}_{3,0}')$. 
\end{minipage}

Let us classify coboundary Lie bialgebras on $\mathfrak{r}_{3,0}'$ up to its  Lie algebra automorphisms. Consider the 
automorphisms $T_\alpha\in {\rm Aut}(\mathfrak{r}_{3,0}')$, with $\alpha \in \mathbb{R}\backslash\{0\}$, satisfying  $T_\alpha(e_1):=e_1, T_\alpha(e_2):=\alpha e_2, T_\alpha(e_3):=\alpha e_3.$
These automorphisms induce elements $\Lambda^2T_\alpha\in GL(\Lambda^2\mathfrak{r}_{3,0}')$ such that $\Lambda^2T_\alpha(e_{12})=\alpha e_{12}$. In turn, these automorphisms induce automorphisms $\Lambda^2_RT_\alpha=\alpha{\rm Id}_{\Lambda^2_R\mathfrak{r}'_{3,0}}$ on $\Lambda^2_R\mathfrak{r}'_{3,0}$. The $\Lambda^2_RT_\alpha$ map the circles with different positive radius among themselves. Hence, their sum forms the only orbit of ${\rm Aut}(\mathfrak{r}'_{3,0})$ on $\Lambda^2_R\mathfrak{r}'_{3,0}$ related to a non-zero coboundary coproduct. Hence, there is only one non-zero coboundary coproduct, up to the action of ${\rm Aut}(\mathfrak{r}_{3,0}')$, induced by an $r$-matrix $r = e_{12}$. Figure \ref{Sum} represents the orbits of the action of ${\rm Aut}(\mathfrak{r}_{3,0}')$ on $\Lambda^2_R\mathfrak{r}_{3,0}'$. This matches the results in \cite{Farinati}. Note that the gradation of $\mathfrak{r}'_{3,0}$ easily shows that $e_{12}$ is a solution to the mCYBE.

\subsubsection{The Lie algebra \texorpdfstring{$\mathfrak{r}_{3,-1}$}{}}\label{sect:r3-1}

Since $\mathfrak{r}_{3,-1}$ admits a root gradation (see Table \ref{tabela3w}), Proposition \ref{Prop:Sym} shows that $\Lambda^3\mathfrak{r}_{3,-1}=(\Lambda^3\mathfrak{r}_{3,-1})^{\mathfrak{r}_{3,-1}}$. The root decomposition of  $\mathfrak{r}_{3,-1}$ implies that $(\Lambda^2\mathfrak{r}_{3,-1})^{\mathfrak{r}_{3,-1}}\subset (\Lambda^2\mathfrak{r}_{3,-1})^{(0)}$. It is then immediate that $(\Lambda^2\mathfrak{r}_{3,-1})^{\mathfrak{r}_{3,-1}}=\langle e_{23}\rangle$ and $\Lambda^2_R\mathfrak{r}_{3,-1}=\langle [e_{13}],[e_{12}]\rangle$. 

Let us classify cocommutators on $\mathfrak{r}_{3,-1}$ via  $\mathfrak{r}_{3,-1}$-invariant metrics, $b^R_{\Lambda^2\mathfrak{r}_{3,-1}}$, on $\Lambda^2_R\mathfrak{r}_{3,-1}$. 
Define $\Lambda^2_R{\rm ad}:v\in \mathfrak{r}_{3,-1}\mapsto [[v],\cdot]_R\in \mathfrak{gl}(\Lambda^2_R\mathfrak{r}_{3,-1})$. In the basis $\{[e_{12}], [e_{13}]\}$ of $\Lambda^2_R\mathfrak{r}_{3,-1}$, one gets
$$
\begin{gathered}
b^R_{\Lambda^2\mathfrak{r}_{3,-1}}([[e_1],[e_{12}]]_R,[e_{12}])=b^R_{\Lambda^2\mathfrak{r}_{3,-1}}([e_{12}],[e_{12}]),\,\, b^R_{\Lambda^2\mathfrak{r}_{3,-1}}([[e_1],[e_{13}]]_R,e_{13})=-b^R_{\Lambda^2\mathfrak{r}_{3,-1}}([e_{13}],[e_{13}]).
\end{gathered}
$$
\noindent\begin{minipage}{0.35\textwidth}
	\begin{center}
		\includegraphics[scale=0.270]{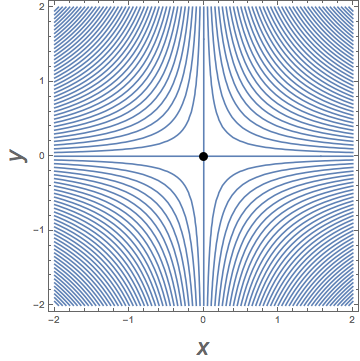}
		\captionof{figure}{Representative orbits of the action of ${\rm Inn}(\mathfrak{r}_{3,-1})$ on $\Lambda^2_R \mathfrak{r}_{3,-1}$}\label{14}
	\end{center}
\end{minipage}$\quad$
\begin{minipage}{0.65\textwidth}
	Then,  $b^R_{\Lambda^2\mathfrak{r}_{3,-1}}$ must be of the form
	\vskip-0.3cm
	$$
	[b^R_{\Lambda^2\mathfrak{r}_{3,-1}}] \!=\! \left(\begin{array}{cc}
	0&\beta\\
	\beta&0
	\end{array}\right), \qquad \beta\in \mathbb{R}.
	$$
	A short calculation shows that
	these are the $\mathfrak{r}_{3,-1}$-invariant metrics on $\Lambda^2_R\mathfrak{r}_{3,-1}$. Let $\{x,y\}$ be the coordinates associated with the basis $\{[e_{12}],[e_{13}]\}$ of $\Lambda^2_R\mathfrak{r}_{3,-1}$. Then $r_R = x [e_{12}] + y[ e_{13}]$ and the quadratic function related to $b^R_{\Lambda^2\mathfrak{r}_{3,-1}}$ reads  $f^R_{\Lambda^2 \mathfrak{r}_{3,-1}}(r_R) = 2xy$.
	The image of $\Theta^2_r$ is one-dimensional for $x^2 + y^2 \neq 0$ and zero-dimensional otherwise. 
	Hence, the orbits of the action of ${\rm Inn}(\mathfrak{r}_{3,-1})$ on $\Lambda^2_R\mathfrak{r}_{3,-1}$ have the representative form presented in Figure \ref{14}. 
\end{minipage}$\quad$

The representatives of inequivalent reduced $r$-matrices, up to the action of inner Lie algebra automorphisms, are given by: 
$$
r^{(\pm, \pm)} =a (\pm[e_{12}] \pm [e_{13}]),\quad  r_2^{(\pm)} = \pm b [e_{12}],\quad 
r_3^{(\pm)} = \pm b [e_{13}],\qquad r_0 = [e_{23}],\quad \forall a, b > 0.
$$
The Lie algebra $\mathfrak{r}_{3,-1}$ satisfies the conditions given in Proposition \ref{Aut3}. Hence, all automorphisms of $\mathfrak{r}_{3,-1}$ must match one of the following automorphisms
\begin{equation*}
\begin{gathered}
T_{\alpha,\beta}(e_1):=e_1+v,\qquad T_{\alpha,\beta}(e_2):=\alpha e_2,\qquad T_{\alpha,\beta}(e_3):=\beta e_3,\qquad \forall \alpha,\beta\in \mathbb{R}\backslash\{0\},\\
T_{\alpha,\beta}'(e_1):=-e_1+v, \qquad T_{\alpha,\beta}'(e_2):=\alpha e_3, \qquad T_{\alpha,\beta}'(e_3):=\beta e_2,\qquad \forall \alpha,\beta\in \mathbb{R}\backslash\{0\},
\end{gathered}
\end{equation*}
for certain $v\in \langle e_2,e_3\rangle$. The extensions $\Lambda^2T_{\alpha,\beta}$ and $\Lambda^2T'_{\alpha,\beta}$ can be restricted to $\Lambda_R^2\mathfrak{r}_{3,-1}$ giving rise to
\begin{equation*}
\begin{gathered}
\Lambda_R^2T_{\alpha,\beta}([e_{12}])=\alpha [e_{12}],\qquad \Lambda^2T_{\alpha,\beta}([e_{13}])=\beta [e_{13}],\qquad \forall \alpha,\beta\in \mathbb{R}\backslash\{0\},\\
\Lambda_R^2T'_{\alpha,\beta}([e_{12}])=-\alpha [e_{13}],\qquad \Lambda_R^2T'_{\alpha,\beta}([e_{13}])=-\beta [e_{12}],\qquad \forall \alpha,\beta\in \mathbb{R}\backslash\{0\}.
\end{gathered}
\end{equation*}

The above transformations do not preserve the connected components of the regions $S_k$ where the function  $f^R_{\Lambda^2\mathfrak{r}_{3,-1}}(r)$ takes a constant value equal to $k$. As a consequence, $T_{\alpha,\beta}$ and $T'_{\alpha,\beta}$ are not inner automorphisms and the non-equivalent non-zero coboundary coproducts on $\mathfrak{r}_{3,-1}$, relative to the action of ${\rm Aut}(\mathfrak{r}_{3,-1})$, are induced by the $r$-matrices: $r = e_{12}, r' = e_{12} - e_{13}$. Indeed, recall that $r_0=e_{23}$ gives rise to a zero coboundary coproduct. Figure \ref{Sum} depicts the orbits of the action of ${\rm Aut}(\mathfrak{r}_{3,-1})$ on $\Lambda^2_R\mathfrak{r}_{3,-1}$.

\subsubsection{The Lie algebra \texorpdfstring{$\mathfrak{r}_{3,1}$}{}}
Since $\mathfrak{r}_{3,1}$ admits a root decomposition and the unique homogeneous space in $\Lambda^3\mathfrak{r}_{3,1}$ is not related to the zero element of the group (see Table \ref{tabela3w}), the Proposition \ref{Prop:Sym} shows that $(\Lambda^3\mathfrak{r}_{3,1})^{\mathfrak{r}_{3,1}}=\{0\}$. Moreover, the root decomposition of $\mathfrak{r}_{3,1}$ tells us that $(\Lambda^2\mathfrak{r}_{3,1})^{\mathfrak{r}_{3,1}}\subset \Lambda^2(\mathfrak{r}_{3,1})^{(0)}$. It is then immediate that $(\Lambda^2\mathfrak{r}_{3,1})^{\mathfrak{r}_{3,1}}=0$.  Therefore, the determination of $r$-matrices demands solving the corresponding mCYBE and every $r$-matrix gives rise to a different coproduct.

 In the coordinates $\{x,y,z\}$  corresponding to the basis $\{e_{12},e_{13},e_{23}\}$ of $\Lambda^2\mathfrak{r}_{3,1}$, one has  $r=xe_{12}+ye_{13}+ze_{23}$ and $[r,r]_S=0$ for every $r\in \Lambda^2\mathfrak{r}_{3,1}$.
Hence, every element of $\Lambda^2\mathfrak{r}_{3,1}$ is an $r$-matrix giving rise to a coboundary coproduct. 

The fundamental vector fields of the action of ${\rm Inn}(\mathfrak{r}_{3,1})$ on $\Lambda^2\mathfrak{r}_{3,1}$ are spanned by
$$
X_1:=x\frac{\partial}{\partial x}+y\frac{\partial}{\partial y}+2z\frac{\partial }{\partial z},\quad X_2:=y\frac{\partial}{\partial z},\quad X_3:=x\frac{\partial}{\partial z}.
$$

\noindent\begin{minipage}{0.4\textwidth}
\begin{center}
\includegraphics[scale=0.30]{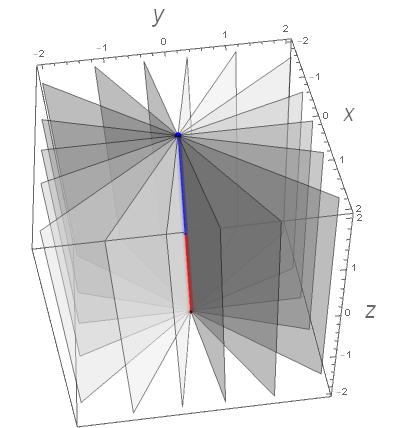}
\captionof{figure}{Representative orbits of the action of ${\rm Inn}(\mathfrak{r}_{3,1})$ on $\Lambda^2 \mathfrak{r}_{3,1}$}\label{Fig7}
\end{center}
\end{minipage}$\quad$
\begin{minipage}{0.58\textwidth}
They generate an integrable two-dimensional distribution off the line $x=y=0$ with integrals given by semi-planes of the form given in Figure \ref{Fig7}. The line $x=y=0$ can also be divided into three orbits of the action of ${\rm Inn}(\mathfrak{r}_{3,1})$ consisting of the points with the same sign of $z$. 

Let us study now the equivalence of $r$-matrices up to the action of ${\rm Aut}(\mathfrak{r}_{3,1})$. Elements of ${\rm Aut}(\mathfrak{r}_{3,1})$ leave the first derived ideal $[\mathfrak{r}_{3,1},\mathfrak{r}_{3,1}]=\langle e_2,e_3\rangle$ invariant. Then, the induced action of ${\rm Aut}(\mathfrak{r}_{3,1})$ on $\Lambda^2\mathfrak{r}_{3,1}$ must leave the subspace $\langle e_{23}\rangle$ invariant and every point within it must be contained in an orbit within $\langle e_{23}\rangle$. Obviously, the $r=0$ is an orbit of the action of ${\rm Aut}(\mathfrak{r}_{3,1})$ on $\Lambda^2\mathfrak{r}_{3,1}$.

\end{minipage}

Moreover, the Lie algebra automorphisms $T_{\alpha,\beta,\gamma,\delta}$ given by 
$$
T_{\alpha,\beta,\gamma,\delta}(e_1):=e_1,\quad T_{\alpha,\beta,\gamma,\delta}(e_2):= \alpha e_2+\beta e_3,\quad
T_{\alpha,\beta,\gamma,\delta}(e_3)=\gamma e_2+\delta e_3,\qquad \alpha\delta-\beta\gamma\neq 0,
$$
are such that the $\Lambda^2T_{\alpha,\beta,\gamma,\delta}$ connect different semi-planes in $\Lambda^2\mathfrak{r}_{3,1}$. Moreover, the above automorphisms connect the parts $z>0$ and $z<0$ of the line $x=y=0$. 
Hence, there exist two non-zero non-equivalent coboundaries induced by the $r$-matrices $r_1=e_{13}$ and $r_2=e_{23}$. It is remarkable that $r_2$ can be shown to be a solution of the CYBE through the gradations in $\mathfrak{r}_{3,1}$ and its induced decompositions in its Grassmann algebra.

\subsubsection{The Lie algebra \texorpdfstring{$\mathfrak{r}_3$}{}} 
In view of Table \ref{tabela3w} and Proposition \ref{Prop:Sym}, the unique non-zero homogeneous space of $\mathfrak{r}_3$ in $\Lambda^3\mathfrak{r}_3$ is invariant under $e_1,e_2$. Nevertheless, a short calculation shows that $[e_3,(\Lambda^3\mathfrak{r}_3)^{(2)}]_S\neq 0$. Hence, the description of coboundary Lie bialgebras on $\mathfrak{r}_3$ requires solving the mCYBE. Since the space of solutions to this equation, let us say YB, is invariant under the action of ${\rm Aut}(\mathfrak{r}_3)$, the classification of such Lie bialgebras reduces to studying of equivalent $r$-matrices in YB.

Let us determine the space $(\Lambda^2\mathfrak{r}_3)^{\mathfrak{r}_3}$ to 
know whether different $r$-matrices induce different coboundary coproducts. Since $\mathfrak{r}_3$ admits a $\mathbb{Z}$-gradation, $(\Lambda^2\mathfrak{r}_3)^{\mathfrak{r}_3}$ is the sum of the $\mathfrak{r}_3$-invariant elements on each homogeneous subspace of  $\Lambda^2\mathfrak{r}_3$. Using the gradation of $\mathfrak{r}_3$ one sees that $[v^{(\alpha)},w^{(\beta)}]_S=0$, for $v^{(\alpha)}\subset \mathfrak{r}_3^{(\alpha)}, w^{(\beta)}\in (\Lambda^2\mathfrak{r}_3)^{(\beta)}$ when $\alpha+\beta\neq 2$. Inspecting remaining commutators, one obtains $(\Lambda^2\mathfrak{r}_3)^{\mathfrak{r}_3}=\{0\}$ and every $r$-matrix induces a different coboundary coproduct.

Let $\{x,y,z\}$ be the coordinates on $\Lambda^2\mathfrak{r}_3$ induced by the basis $\{e_{12},e_{13},e_{23}\}$. The mCYBE, where $r=xe_{12}+ye_{13}+ze_{23}$, reads $[r,r]_S=-2z^2e_{123}$. Hence, $YB=\langle e_{12},e_{13}\rangle$ stands for the space of solutions to the mCYBE, which is presented in Figure \ref{18}. 

A long but simple calculation shows that $\Lambda^2\mathfrak{r}_3$ admits no non-zero $\mathfrak{r}_3$-invariant metrics. Nevertheless, one can still classify $r$-matrices up to the action of ${\rm Inn}(\mathfrak{r}_3)$. The fundamental vector fields of the action of ${\rm Inn}(\mathfrak{r}_3)$ on $\Lambda^2\mathfrak{r}_3$ are spanned by
$$
X_1:=z\frac{\partial}{\partial x},\quad X_2:=(-y+z)\frac{\partial}{\partial x}, \quad X_3:= 2x\frac{\partial}{\partial x} + (y+z)\frac{\partial}{\partial y} + z\frac{\partial}{\partial z}.
$$
Since ${\rm Inn}(\mathfrak{r}_3)$ maps solutions of mCYBE onto new solutions, the above vector fields are tangent to $YB$ and they take on $YB$ the form
$$
X_1|_{YB}=0,\quad X_2|_{YB}=-y\frac{\partial}{\partial x},\qquad X_3|_{YB}=2x\frac{\partial}{\partial x}+y\frac{\partial}{\partial y},
$$

\noindent\begin{minipage}{0.3\textwidth}
\begin{center}
\includegraphics[scale=0.35]{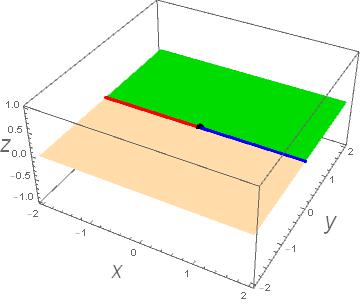}
\captionof{figure}{Orbits of the action of ${\rm Inn}(\mathfrak{r}_3)$ on $YB\subset \Lambda^2 \mathfrak{r}_3$}\label{18}
\end{center}
\end{minipage}
\begin{minipage}{0.7\textwidth}
which span the tangent space to $YB$ when $y \neq 0$;  they span $\langle \partial /\partial x\rangle $ for $y=0$ and $x\neq 0$; and they span a distribution of rank zero for $z=y=x=0$. In consequence, there exist five orbits of ${\rm Inn}(\mathfrak{r}_3)$  depicted in Figure \ref{18}.

Let us accomplish the classification of coboundary cocommutators up to the action of elements of ${\rm Aut}(\mathfrak{r}_{3})$ on $YB$. Since  $\mathfrak{r}_3$ obeys the assumptions of Proposition \ref{Aut3} and  $[r,r]_S$ satisfies the condition in  Proposition \ref{Aut2}, one has that all automorphisms must take the form
$$
T_{\alpha,\beta}(e_3)=e_3+v,\quad T_{\alpha,\beta}(e_1)=\alpha e_1,\qquad T_{\alpha,\beta}(e_2)=\alpha e_2+\beta e_1$$
for all $\alpha \in \mathbb{R}\backslash \{0\},\beta\in \mathbb{R},v\in \langle e_1,e_2\rangle.$ Therefore,

\end{minipage}
$$
\Lambda^2T_{\alpha,\beta}(e_{12})=\alpha^2e_{12},\quad \Lambda^2T_{\alpha,\beta}(e_{13})=\alpha e_{13},\qquad \Lambda^2T_{\alpha,\beta}(e_{23})=\alpha e_{23}+\beta e_{13}+T_{\alpha,\beta}(e_2)\wedge v, 
$$
for all $\alpha\in \mathbb{R}\backslash \{0\}, \beta \in\mathbb{R}, v\in \langle e_2,e_3\rangle
$.\\
\noindent
It was proven in Section \ref{sect:auto_alg} that $[\mathfrak{r}_3,\mathfrak{r}_3]=\langle e_1,e_2\rangle $ is invariant under the action of ${\rm Aut}(\mathfrak{r}_3)$. Thus, $\langle e_{12}\rangle$ is invariant under the action of ${\rm Aut}(\mathfrak{r}_3)$ on $\Lambda^2\mathfrak{r}_3$. In view of the $\Lambda^2T_{\alpha\beta}$, it follows that $\langle e_{12}\rangle$ has three orbits: the $0\in \Lambda^2\mathfrak{r}_3$ and the orbits of $\pm e_{12}$. 
Since $\langle e_{12}\rangle \subset $YB, it is clear that $\pm e_{12}$ are $r$-matrices giving rise to non-zero coproducts. 
Since there exist automorphisms on $\mathfrak{g}$ inverting the coordinate $y$ and leaving $x$ invariant, there exists only one equivalence class of non-zero solutions in YB without $y=z=0$ given by $r_1=e_{13}$.
Hence, we have the equivalence classes related to the $r$-matrices:
$$
r_0 = 0, \qquad r_{\pm}=\pm e_{12},\qquad r= e_{13},
$$
as depicted in color in Figure \ref{Sum}. It is remarkable that $\pm e_{12}$ can be seen to be solutions of the CYBE in view of the gradation of $\mathfrak{r}_3$ and the induced decompositions in its Grassmann algebra.

\subsubsection{The Lie algebra \texorpdfstring{$\mathfrak{r}_{3, \lambda}$}{} (\texorpdfstring{$\lambda \in (-1,1)$}{})}
In view of Proposition \ref{Prop:Sym} and the fact that  $\mathfrak{r}_{3,\lambda}$ admits a root decomposition and the unique non-zero homogeneous space in $\Lambda^3\mathfrak{r}_{3,\lambda}$ has degree three, one has that  $(\Lambda^3\mathfrak{r}_3)^{\mathfrak{r}_{3,\lambda}}=\{0\}$. If we write $r=xe_{12}+ye_{13}+ze_{23}$, then the mCYBE reads $[r,r]_S=2(\lambda-1)yz \,e_{123}$. Hence, the space of $r$-matrices, $YB$, consists of the sum of the plane of points with  $y=0$ and the plane of points with $z=0$. Since $\mathfrak{r}_{3,\lambda}$ admits a root decomposition, $(\Lambda^2\mathfrak{r}_{3,\lambda})^{\mathfrak{r}_{3,\lambda}}\subset (\Lambda^2\mathfrak{r}_{3,\lambda})^{(0)}=\{0\}$.

\noindent\begin{minipage}{0.42\textwidth}
\begin{center}
\includegraphics[scale=0.3]{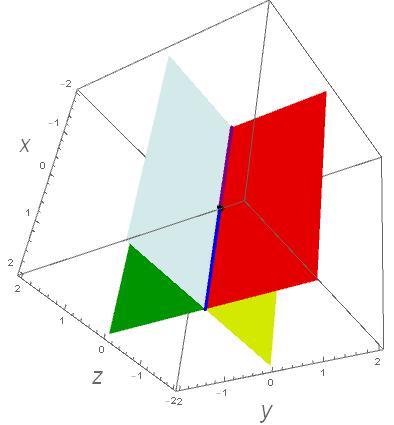}
\captionof{figure}{Orbits of the action ${\rm Inn}(\mathfrak{r}_{3,\lambda})$ on $YB\subset \Lambda^2 \mathfrak{r}_{3, \lambda}$}\label{20}
\end{center}
\end{minipage}$\quad$
\begin{minipage}{0.55\textwidth}
\noindent  As standard, we now accomplish the classification of the coboundary cocommutators up to inner automorphisms of $\mathfrak{r}_{3,\lambda}$. Since $(\Lambda^2\mathfrak{r}_{3,\lambda})^{\mathfrak{r}_{3,\lambda}}=0$, this demands to obtain classes of solutions of the mCYBE (equivalent up to inner automorphisms of $\mathfrak{r}_{3,\lambda}$). Let us now study the solutions to the mCYBE in three subcases: a) $y=0$ with $z\neq 0$, denoted by $YB_1$; b) $z=0$ with $y\neq 0$, denoted by $YB_2$, and c) the line $y=z=0$ denoted by $YB_3$. The desired classification can be achieved by analyzing the fundamental vector fields of the action of  ${\rm Inn}(\mathfrak{r}_{3,\lambda})$ on $\Lambda^2\mathfrak{r}_{3,\lambda}$. These are spanned by
\vskip -0.3cm
$$
Z_1 := z\frac{\partial}{\partial x}, \qquad Z_2:=-\lambda y\frac{\partial}{\partial x},$$
\vskip -0.5cm
$$
Z_3 := (1+\lambda)x\frac{\partial}{\partial x} +y\frac{\partial}{\partial y}+\lambda z\frac{\partial}{\partial z}.
$$
\end{minipage}
\vskip 0.1cm
Assume $\lambda\neq 0$. Let us consider $YB_3$. The distribution $\mathcal{D}$ spanned by $Z_1,Z_2,Z_3$ on $YB_3$ and $x\neq 0$ has rank one. Meanwhile, $\mathcal{D}$ has rank zero at $x=y=z=0$. Hence, $YB_3$ is divided into three orbits for points $(x,0,0)$ with $x>0$,  $x<0$, and $x=0$.

At points of $YB_1$, one has that 
$
Z_1\vert_{YB_1}=z\frac{\partial }{\partial x}, Z_2\vert_{YB_1}=0, Z_3\vert_{YB_1}=(1+\lambda)x\frac{\partial }{\partial x}+\lambda z\frac{\partial}{\partial z}$ 
span the tangent space to $YB_1$. Hence, this gives rise to two orbits of $YB_1$ for its points with $z<0$ and $z>0$, respectively.

The vector fields $Z_1,Z_2,Z_3$ on $YB_2$ read
$
Z_1|_{YB_2}=0, Z_2|_{YB_2}=-\lambda y\frac{\partial}{\partial x}, Z_3|_{YB_2}=(1+\lambda)x\frac{\partial}{\partial x}+y\frac{\partial}{\partial y}
$ and span the tangent space to $YB_2$. Then, we have two orbits of points in $YB_2$ with $y>0$ and $y<0$, correspondingly. Previous results are summarised in Figure \ref{20}.

Let us now classify coboundary coproducts up to the action of ${\rm Aut}(\mathfrak{r}_{3,\lambda})$. Since $[\mathfrak{r}_{3,\lambda},\mathfrak{r}_{3,\lambda}]=\langle e_1,e_2\rangle$ is invariant under ${\rm Aut}(\mathfrak{r}_{3,\lambda})$, the space $\langle e_{12}\rangle $ is also invariant relative to the action of ${\rm Aut}(\mathfrak{r}_{3,\lambda})$. Moreover, the automorphisms of the form
$
T_{\alpha,\beta}(e_1)= \beta e_1,T_{\alpha,\beta}(e_2)= \alpha e_2, T_{\alpha,\beta}(e_3)=e_3,$ for all  $\alpha\in \mathbb{R}\backslash\{0\},
$
are such that the induced $\Lambda^2T_{\alpha,\beta}$ enable us to obtain that $\langle e_{12}\rangle$ has only two equivalence classes: $0$ and $e_{12}$. This finishes the study of solutions with $y=z=0$.

Meanwhile, the $\Lambda^2T_{\alpha,\beta}$ change the sign of $y$ and $z$.  This maps the two semiplane orbits for ${\rm Inn}(\mathfrak{r}_{3,\lambda})$ for the $r$-matrices with $z=0$ and $y=0$. Therefore, we get three classes of inequivalent non-zero coboundary cocommutators (up to the action of ${\rm Aut}(\mathfrak{r}_{3,\lambda}$)) induced by the $r$-matrices: $r_0 = e_{12}$, $r_y = e_{23}$, and $r_z = e_{13}$. This is depicted in Figure \ref{Sum}.

Let us now tackle the case $\lambda=0$. The corresponding Lie algebra is denoted by $\mathfrak{r}_{3,0}$. The analysis of solutions to the mCYBE for aforesaid subcases a) and b) goes similarly as in the previous case. The fundamental vector fields of the action of  ${\rm Inn}(\mathfrak{r}_{3,0})$ read
 $
 Z_1 := z\frac{\partial}{\partial x},  Z_2:=0,Z_3 := x\frac{\partial}{\partial x} +y\frac{\partial}{\partial y}.
 $
 On $YB_3$, the distribution spanned by $Z_1,Z_2,Z_3$ has rank one for $x\neq 0$ and zero for $x=0$. Therefore, we obtain three orbits gathering those points with $z=y=0$ and equal sign of $x$.

Restricting to $YB_1$, we get
 $
 Z_1\vert_{YB_1}=z\frac{\partial }{\partial x}, Z_2\vert_{YB_1}=0, Z_3\vert_{YB_1}=x\frac{\partial }{\partial x},
 $
 which span $\langle \partial / \partial x \rangle$. Thus, the orbits of the action of ${\rm Inn}(\mathfrak{r}_{3,0})$ on this space are lines $(x,0,z_0)$ with a constant value $z_0\neq 0$. Restricting to $YB_2$, i.e. $z=0$ and $y\neq 0$, we get a unique non-zero restriction of $Z_1,Z_2,Z_3$ given by
 $
Z_3\vert_{YB_1}=x{\partial }/{\partial x}+y{\partial}/{\partial y},
 $
 which spans $\langle x\partial / \partial x +y\partial/\partial y\rangle$. Thus, the orbits of the action of ${\rm Inn}(\mathfrak{r}_{3,0})$ on this space are lines $(\mu x,\mu y,0)$ with $\mu>0$ and $y\neq 0$.

 The automorphisms $T_{\alpha,\beta,\gamma}$ such that $
 T_{\alpha,\beta,\gamma}(e_1) := \alpha e_1+\gamma e_2, T_{\alpha,\beta,\gamma}(e_2) := \beta e_2,T_{\alpha,\beta,\gamma}(e_3) := e_3, $ with $\alpha,\beta\in \mathbb{R}\backslash \{0\}$ and $\gamma\in \mathbb{R}$, are such that the $\Lambda^2T_{\alpha,\beta,\gamma}$ identify the lines $(x,0,z_0)$ and $(\mu x_0,\mu y_0,0)$ with different $z_0\neq 0$ and $x_0,y_0\neq0$ among themselves, respectively. Then, we get two $r$-matrices $r_y := e_{13}$ and $r_z:=e_{23}$.

 If $z=y=0$, the automorphisms $\Lambda^2T_{\alpha,\beta,\gamma}$ map points with positive and negative values of $x$. 

We get three classes of inequivalent non-zero coboundary coproducts up to Lie algebra automorphisms of $\mathfrak{r}_{3,0}$ induced by the $r$-matrices given by the non-zero $r$-matrices $r_0 = e_{12}$, $r_y = e_{23}$ and $r_z = e_{13}$, as shown in Figure \ref{Sum}. All of them are trivial solutions of the CYBE in view of the gradation in $\mathfrak{r}_{3,0}$ and the induced decompositions in $\Lambda \mathfrak{r}_{3,0}$.

\subsubsection{The Lie algebra \texorpdfstring{$\mathfrak{r}'_{3, \lambda} (\lambda\neq 0)$}{}}
It stems from Table \ref{tabela3w} that the unique non-zero homogeneous space in $\Lambda^3{\mathfrak{r}'_{3, \lambda}}$ is not invariant relative to the action of $e_3$ and hence $(\Lambda^3{\mathfrak{r}'_{3, \lambda}})^{\mathfrak{r}'_{3, \lambda}}=0$.  The corresponding mCYBE read $[r,r]_S=-2(y^2+z^2)e_{123}$. Hence, the only solutions have $y=z=0$. We denote this space of solutions by $YG$. 

The space $(\Lambda^2{\mathfrak{r}'_{3, \lambda}})^{\mathfrak{r}'_{3, \lambda}}$ can be easily determined as it is spanned by $\mathfrak{r}'_{3, \lambda}$-invariant elements within each homogeneous subspace in $\Lambda^2{\mathfrak{r}'_{3, \lambda}}$. By using the Table \ref{tabela3w} and, eventually, accomplishing easy calculations, one obtains, since $\lambda\neq 0$, that $(\Lambda^2\mathfrak{r}'_{3,\lambda})^{\mathfrak{r}'_{3,\lambda}}=0$.

The invariance of $\kappa_{\mathfrak{r}'_{3,\lambda}}$ under automorphisms and the Lie algebra structure of $\mathfrak{r}'_{3,\lambda}$ show that Proposition \ref{Aut3} applies and all automorphisms have the form $T_\alpha$ with
$
T_\alpha(e_1):=\alpha e_1,T_\alpha(e_2):=\alpha e_2,T_\alpha(e_3):=e_3$, for $\alpha\in \mathbb{R}\backslash\{0\}$. Then, $\Lambda^2T(e_{12})=\alpha^2 e_{12}$ and we obtain three coboundary cocommutators invariant under the action of ${\rm Aut}(\mathfrak{r}'_{3,\lambda})$ given by the $r$-matrices $\pm e_{12}$ and $0$. The result is summarised in Figure \ref{Sum}. All these $r$-matrices are solutions to the CYBE in view of the gradation of $\mathfrak{r}'_{3,\lambda}$ and the induced decompositions in $\Lambda \mathfrak{r}'_{3,\lambda}$.

\section{Conclusion and outlook}
This work has extended methods from Lie algebra theory, like  root decompositions and  $\mathfrak{g}$-invariant maps, to the realm of Grassmann algebras for general Lie algebras. This, along with the use of gradations for Lie algebras and their induced decompositions, opened new ways of determination of coboundary Lie bialgebras, their $\mathfrak{g}$-invariant multivectors, mCYBEs, and their classification up to Lie algebra automorphisms. 

Our techniques have been applied to the classification of coboundary Lie algebras, in general, and the three-dimensional real case has been studied in detail. Our approach simplifies needed calculations to accomplish the classification. For instance, we may skip the determination of all automorphisms of the underlying Lie algebra as in the previous literature \cite{Farinati}. It is remarkable that gradations in Lie algebras and their induced decompositions work well to obtain solutions of mCYBEs and CYBEs. Nevertheless, the gradations are not enough by themselves to analyse the equivalence of their related cocommutators.

Our techniques can be applied to the study of the structure and solutions of mCYBEs for higher-order coboundary and non-coboundary Lie algebras. This will be the goal of future works. 

\section{Acknowledgements}
J. de Lucas acknowledges partial financial support from the contract 1028 financed by the University of Warsaw. D. Wysocki acknowledges a doctoral grant financed by the University of Warsaw and the Kartezjusz program from the University of Warsaw and the Jagiellonian University.


\end{document}